\documentclass[a4paper,11pt]{article}

\usepackage{amsthm}
\usepackage{fullpage}%
\usepackage{graphicx,tikz,enumerate}
\usepackage[colorlinks=true,citecolor=blue,linkcolor=red]{hyperref}
\usepackage[title]{appendix}
\usepackage{amssymb, amsfonts,mathtools,stmaryrd}
\usepackage{xcolor}
\usepackage{enumitem}
\usepackage{etoolbox}

\usepackage[margin=0.97in]{geometry}

\usepackage{tikz}
\usetikzlibrary{arrows.meta,positioning,shapes.geometric}

\newenvironment{example}{
	\par\medskip
	\noindent\textbf{Example.}\ \normalfont
}{
	\par\medskip
}

\usepackage{mathtools}

\usepackage{algorithm}
\usepackage[noend]{algpseudocode}

\usepackage{dsfont}
\usepackage{bbm}
\newtheorem{definition}{\textbf{Definition}}
\newtheorem{theorem}[definition]{\textbf{Theorem}}
\newtheorem{lemma}[definition]{\textbf{Lemma}}
\newtheorem{corollary}[definition]{\textbf{Corollary}}
\newtheorem{proposition}[definition]{\textbf{Proposition}}
\newtheorem*{proof*}{\textbf{Proof}}
\newcommand{\Dist}{\ensuremath{\mathcal{D}}}
\newcommand{\Supp}{\mathsf{Supp} }
\newcommand{\Prb}{\ensuremath{\mathbb{P}}}
\newcommand{\Exp}{\ensuremath{\mathbb{E}}}
\newcommand{\MDP}{\ensuremath{\mathsf{G}}}
\newcommand{\MEMDP}{\ensuremath{\Gamma}}
\newcommand{\POMDP}{\ensuremath{\Lambda}}
\newcommand{\cyl}{\ensuremath{\msf{cyl}}}

\newcommand{\N}{\mathbb{N}}
\newcommand{\R}{\mathbb{R}}

\newcommand{\msf}[1]{\mathsf{#1}}

\newcommand{\head}[1]{\ensuremath{\msf{last}(#1)}}

\usepackage{bookmark}

\usepackage{pifont}
\usepackage{authblk}
\usepackage{graphicx}
\usepackage{multirow}

\author{Benjamin Bordais, Jean-François Raskin
	\and Université Libre de Bruxelles}
\date{}
\usepackage{array,booktabs}
\newcolumntype{M}[1]{>{\centering\arraybackslash}m{#1}}

\begin{document}
	\title{Multi-Environment MDPs with Prior and Universal Semantics}
	\maketitle	
	
	\begin{abstract}
		Multiple-environment Markov decision processes (MEMDPs) equip an MDP with several probabilistic transition functions (one per possible environment) so that the state is observable but the environment is not. Previous work studies two semantics: (i) the universal semantics, where an adversary picks the environment; and (ii) the prior semantics, where the environment is drawn once before execution from a fixed distribution. We clarify the relation between these semantics. For parity objectives, we show that the qualitative questions, i.e. value one, coincide, and we develop a new algorithm for the general value of MEMDP with prior semantics. In particular, we show that the prior value of an MEMDP with a parity objective can be approximated to any precision with a space efficient algorithm; equivalently, the associated gap problem is decidable in PSPACE when probabilities are given in unary (and in EXPSPACE otherwise). We then prove that the universal value equals the infimum of prior values over all beliefs. This yields a new algorithm for the universal gap problem with the same complexity (PSPACE for unary probabilities, EXPSPACE in general), improving on earlier doubly-exponential-space procedures. Finally, we observe that MEMDPs under the prior semantics form an important tractable subclass of POMDPs: our algorithms exploit the fact that belief entropy never increases, and we establish that any POMDP with this property reduces effectively to a prior-MEMDP, showing that prior-MEMDPs capture a broad and practically relevant subclass of POMDPs. 
	\end{abstract}

	\section{Introduction}

	Multiple-environment Markov decision processes (MEMDPs), first introduced in~\cite{FSTTCS14-MEMDP}, generalize classical MDPs. They model decision-making problems where the system state is perfectly observable, while the stochastic \emph{environment} is unknown but fixed and chosen or drawn before execution from a finite set of candidates.
	Formally, an MEMDP consists of a common state and action space together with a probabilistic transition function per possible environment.
	A single controller acts without being explicitely revealed which environment is operating. Intuitively, this model lies between fully observable MDPs and POMDPs: it captures a natural class of partial-information problems while avoiding many undecidability barriers of general POMDPs.
	
	Initial work introduced MEMDPs and showed that key synthesis questions that are undecidable for POMDPs become decidable (and sometimes efficiently so) in MEMDPs, thereby justifying the model as a tractable yet expressive {\em variant} of POMDPs~\cite{FSTTCS14-MEMDP}.
	Subsequent work further developed the theory for $\omega$-regular objectives and clarified the algorithmic landscape~\cite{ICALP25-MEMDP-Parity,CONCUR24-MEMDP-Rabin}.
	Beyond foundational interest, MEMDPs have been advocated as a practical modeling tool: their structure enables cheaper belief updates than in general POMDPs and supports applications such as contextual recommendation and parameter uncertainty in probabilistic models~\cite{ICAPS20-MEMDP}.
	
	\paragraph{\textbf{Universal and prior semantics.}}
	In previous works, MEMDPs have been studied under two different semantics: ~\cite{FSTTCS14-MEMDP,CONCUR24-MEMDP-Rabin,ICALP25-MEMDP-Parity} study the model under the {\em universal} semantics and~\cite{ICAPS20-MEMDP} studies the model under the {\em prior} semantics.
	In the \emph{universal} (adversarial) semantics, an adversary picks the environment that the fixed controller strategy has to face; the value of a strategy is its probability to satisfy the objective in the worst environment.
	In the \emph{prior} semantics, the environment is drawn initially at random from a fixed distribution (the prior); the value is the corresponding expectation.
	In both cases, the chosen environment is not revealed to the controller.
	We focus on parity objectives, that are a canonical way of expressing $\omega$-regular properties, and establish structural and algorithmic links between these semantics.
	
	Before presenting our main contributions, we note a \emph{qualitative} equivalence between the two semantics. The value~1 notions coincide in both the \emph{almost-sure} and \emph{limit-sure} senses: there exist strategies that achieve value~1 (resp. arbitrarily close to 1) in the universal semantics if and only if there exist strategies whose prior value is~1 (resp. arbitrarily close to 1). %This connects the two viewpoints at the qualitative level and lets us derive value~1 results for the prior semantics directly from the universal-semantics analysis (see Corollary~\ref{coro:limit-sure-almost-sure}).
	This correspondence lets us transfer value~1 results between the two semantics (see Corollary~\ref{coro:limit-sure-almost-sure}).
	
	\textbf{Contributions.}	Our main contributions can be summarized as follows:
	\begin{enumerate}
		%\item \textbf{Qualitative alignment.}
		%We prove that the qualitative (value~1) questions, both {\em almost sure} and {\em limit sure} winning, coincide under both semantics% for parity objectives
		%. 
		%For almost, we prove that there exists a strategy that achieves value~1 in the universal semantics if and only if there exists a strategy whose prior value is~1. The same property holds for limit sure winning: if there exists a family of strategies to get arbitrarily close to one in the prior semantics, there there is one such family in the universal semantics, and conversely.
		%We prove that there exist strategies that achieve value~1 (resp. arbitrarily close to 1) in the universal semantics if and only if there exist strategies whose prior value is~1 (resp. arbitrarily close to 1). This connects the two viewpoints at the qualitative level and lets us derive value~1 results for the prior semantics directly from the universal-semantics analysis (see Corollary~\ref{coro:limit-sure-almost-sure}).
		%
		\item \textbf{Approximating the prior value.}
		We give an algorithm that approximates the \emph{prior} value of an MEMDP %with a parity objective 
		to arbitrary precision. This is the most technically demanding part of our contribution. More precisely, we show how to solve algorithmically the $\varepsilon$-gap problem for the prior semantics. This problem asks, given an MEMDP $M$, a prior belief $b$ about the operating environment, a parity objective $W$, a threshold $0< \alpha <1$, and a precision $\varepsilon>0$, to answer \textsf{YES} if the {\em prior value} is at least $\alpha$, \textsf{NO} if this value is at most $\alpha-\epsilon$, with no requirement otherwise. Our algorithm runs in \textsc{PSPACE} when probabilities are given in unary, and in \textsc{EXPSPACE} otherwise (see Theorem~\ref{thm:complexity_deciding_gap_problem}). 
		
		\item \textbf{From prior to universal% via infima over beliefs
			.}
		We also show that the \emph{universal} value equals the infimum, over all prior beliefs, of the corresponding \emph{prior} values (see Theorem~\ref{thm:approx_value}).
		Using this result together with the 1-Lipschitz continuity of the prior value with respect to the prior belief, we obtain a new algorithm for the universal $\varepsilon$-gap problem with the same complexity as for the prior value (\textsc{PSPACE} for unary probabilities, \textsc{EXPSPACE} in general), see Theorem~\ref{thm:complexity_deciding_gap_problem_uni_val}, improving on earlier doubly-exponential-space procedures proposed in~\cite{ICALP25-MEMDP-Parity} .
		
		\item \textbf{MEMDPs with prior semantics as a tractable POMDP subclass.}
		A key reason for MEMDP tractability is that the hidden environment is fixed, which constrains belief dynamics: the (information theory) entropy of the belief about the operating environment is non-increasing.
		%Perhaps surprisingly, w
		We show that in fact any POMDP whose belief entropy is non-increasing can be %effectively 
		reduced to a prior-MEMDP (at an exponential cost). Thus, our algorithms apply beyond “pure” MEMDPs and capture a significant subclass of POMDPs, while avoiding classical undecidability phenomena for $\omega$-regular objectives (see Theorem~\ref{thm:Dirac-preserving-pomdp-are-memdps}).    
	\end{enumerate}
	
	\paragraph{\textbf{Related work.}}
	As recalled above, MEMDPs were introduced under the universal semantics by Raskin and Sankur in~\cite{FSTTCS14-MEMDP} as a formal model for decision making in scenarios where the environment is fixed yet unknown, and is %assumed to be 
	chosen from a finite set of candidate models. Within this framework, several qualitative $\omega$-regular synthesis problems become decidable, in contrast to the corresponding situation for partially observable Markov decision processes (POMDPs).  Subsequent work has refined the qualitative complexity landscape. For almost-sure satisfaction of Rabin objectives, Suilen et al.\ establish a PSPACE-complete decision procedure~\cite{CONCUR24-MEMDP-Rabin}. For parity objectives, Chatterjee et al.\ prove that the value-1 problem is PSPACE-complete in general (and solvable in {\sc PTime} when the number of environments is fixed), demonstrate that pure strategies suffice to achieve value~1, and present a double-exponential-space approximation scheme for computing the value~\cite{ICALP25-MEMDP-Parity}. Under the prior semantics, MEMDPs have further been proposed as a computationally tractable subclass of POMDPs with discounted payoffs, leveraging more efficient belief-state updates and a non-increasing belief-entropy property~\cite{ICAPS20-MEMDP}.
	
	For context, classical (fully observable) MDP qualitative problems are solvable in polynomial time~\cite{DBLP:books/daglib/0020348}, while analogous POMDP questions are much harder as most of them are undecidable~\cite{PT87-MDP-Complexity,AIJ03-Undec-Probabilistic-Planning}, and only bounded horizon questions are known to be decidable~\cite{PT87-MDP-Complexity}.
	So, in this context, MEMDPs occupy a particular place: they retain an interesting part of the partial-information expressiveness but avoid key undecidability barriers that arise in POMDPs. The positive results obtained in this paper add and refine those obtained in~\cite{FSTTCS14-MEMDP,ICAPS20-MEMDP,CONCUR24-MEMDP-Rabin,ICALP25-MEMDP-Parity}.

	\section{Definitions}
	\label{sec:Def}
	Consider a non-empty set $Q$. The \emph{support} $\Supp(d)$ of a function $d: Q \to [0,1]$ is the set $\Supp(d) := \{ q \in Q \mid d(q) > 0 \}$. A function $d: Q \to [0,1]$ is a \emph{distribution} over $Q$ if it has {\em countable} support and $\sum_{q \in Q} d(q) = 1$. We let $\Dist(Q)$ denote the set of all probability distributions over the set $Q$. A probability distribution $d \in \Dist(Q)$ is \emph{Dirac} if $|\Supp(b)| = 1$.
	For all sets $A$ and $\rho = q_0 \cdot (a_1,q_1) \cdots (a_n,q_n) \in Q \cdot (A \cdot Q)^*$, we let $\head{\rho} := q_n \in Q$% and $\hd{A}{\rho} := a_n \in Q$
	. 
	
	\paragraph{MDPs and strategies.}
	A \emph{Markov Decision Process} (MDP for short) $\MDP$ is a tuple $\MDP = (Q,A,\delta)$ where $Q$ is a non-empty finite set of \emph{states}, $A$ is a non-empty finite set of \emph{actions}, and $\delta\colon Q \times A \to \Dist(Q)$ is the transition function mapping each state-action pair to a probability distribution over successor states. An element in $Q \cdot (A \cdot Q)^*$ (resp. $(Q \cdot A)^\omega$) is called a finite (resp. infinite) \emph{run}, an element in $Q^*$ (resp. $Q^\omega$) is called a finite (resp. infinite) \emph{path}. A \emph{strategy} on $(Q,A)$ is a function $\sigma\colon Q \cdot (A \cdot Q)^* \to \Dist(A)$ mapping finite runs to probability distributions over actions. We let $\msf{Strat}(Q,A)$ denote the set of all strategies on $(Q,A)$. 
	
	\paragraph{Objectives.}
	Consider a non-empty finite set $Q$. An \emph{objective} $W \subseteq Q^\omega$ is a Borel set: $W \in \msf{Borel}(Q)$. We focus on parity objectives, i.e. objectives $W \in \msf{Borel}(Q)$ such that there is a labeling function $f: Q \to \N$ such that $W$ is the set of infinite paths whose highest label seen infinitely often is even: $W = \{ \rho \in Q^\omega \mid \max \{ n \in \N \mid \forall i \in \N,\; \exists j \geq i: f(\rho_j) = n \} \text{ is even}\}$. %Parity objectives are an example of \emph{prefix-independent} objectives, i.e. objectives $W \in \msf{Borel}(Q)$ such that, for all $\rho \in Q^\omega$ and $\pi \in Q^*$, we have $\rho \in W$ iff $\pi \cdot \rho \in W$. 
	
	Consider an MDP $\MDP = (Q,A,\delta)$. A strategy $\sigma\colon Q \cdot (A \cdot Q)^* \to \Dist(A)$ induces a probability measure on the set of finite runs, which can be canonically extended to the associated Borel $\sigma$-algebra over infinite runs. We first define the probability measure induced by a strategy after a fixed finite run $\rho \in Q \cdot (A \cdot Q)^*$. For $\pi \in Q \cdot (A \cdot Q)^*$, we define the value $\Prb_\rho^\sigma(\MDP,\pi) \in [0,1]$ by induction on the length of $\pi$ as follows. If $|\pi| \leq |\rho|$, then we put $\Prb_\rho^\sigma(\MDP,\pi) := 1$ if $\pi$ is a prefix of $\rho$, and $\Prb_\rho^\sigma(\MDP,\pi) := 0$ otherwise. For all $\pi \in Q \cdot (A \cdot Q)^*$ such that $\rho$ is a prefix of $\pi$ and for all $(a,t) \in A \times Q$, we define $\Prb_\rho^\sigma(\MDP,\pi \cdot (a,t)) := \Prb_{\rho}^\sigma(\MDP,\pi) \cdot \sigma(\pi)(a) \cdot \delta(\head{\pi},a)(t).$
	
	Then, for all $\pi \in Q \cdot (A \cdot Q)^*$, we define the probability $\Prb_\rho^\sigma[\MDP,\cyl(\pi)]$ of the cylinder set $\cyl(\pi) := \{\pi \cdot \rho' \mid \rho' \in (A \cdot Q)^\omega \} \in \msf{Borel}(Q \cdot A)$ by $\Prb_\rho^\sigma[\MDP,\cyl(\pi)] := \Prb_\rho^\sigma(\MDP,\pi) \in [0,1]$. We naturally extend this into a probability measure on Borel sets: $\Prb_\rho^\sigma[\MDP,\cdot]\colon \msf{Borel}(Q \cdot A) \to [0,1]$. We obtain the probability measure of Borel sets in $\msf{Borel}(Q)$ via projection.
	
	\paragraph{MEMDPs.} A \emph{Multi-Environment Markov Decision Process} (MEMDP for short) $\MEMDP$ is a tuple $\MEMDP = (Q,A,E,(\delta_e)_{e \in E})$ where $E$ is a non-empty finite set of environments, and for all $e \in E$, $(Q,A,\delta_e)$ is an MDP, which we denote $\MEMDP[e]$. %We assume that, for all $e \neq e' \in E$, we have $\delta_e \neq \delta_{e'}$. 
	Unless otherwise stated, an MEMDP $\MEMDP$ refers to the tuple $\MEMDP = (Q,A,E,(\delta_e)_{e \in E})$. For all $e \neq e' \in E$, a state-action pair $(q,a) \in Q \times A$ is $(e,e')$-\emph{distinguishing} if $\delta_e(q,a) \neq \delta_{e'}(q,a)$. We let $\msf{Dstg}(\MEMDP,e,e')$ denote the set of all $(e,e')$-distinguishing state-action pairs. We also let $\msf{Dstg}(\MEMDP) := \cup_{e \neq e' \in E} \msf{Dstg}(\MEMDP,e,e')$. 
	
	\paragraph{Values in MEMDPs.} Consider an MEMDP and a parity objective whose probability we seek to maximize. When synthesizing a strategy in an MEMDP, we do not know in which environment it executes. Thus, when defining the value of a synthesized strategy, we may require either that it performs well in \emph{all} environments or, given a prior belief on the operating environment, that it performs well on \emph{average}. The goal of this paper is to study the latter \textquotedblleft{}prior\textquotedblright{} value, and to link it to the former \textquotedblleft{}universal\textquotedblright{} value%, which has already been studied in the literature \cite{FSTTCS14-MEMDP,DBLP:conf/tacas/VegtJJ23,DBLP:conf/icalp/Chatterjee0RS25}
	. 
	\iffalse
	
	There are two natural settings: a \textquotedblleft{}universal\textquotedblright{} setting, where a strategy no prior on the environment in which we play; and a \textquotedblleft{}prior\textquotedblright{} setting, where we have a belief about the likelihood of the environment in which we play. In the \textquotedblleft{}universal\textquotedblright{} setting, the strategy value is the least probability of the objective over all possible environments. In the \textquotedblleft{}prior\textquotedblright{} setting, given a belief on the environments---formally described as a probability distribution over the environments---a strategy value is the expected value of the objective given that belief.

	We may have no prior knowledge about which is the current environment, in which we look for a strategy guaranteeing a probability of the objective against all environments. This is the \textquotedblleft{}adversarial\textquotedblright{} setting, studied in \cite{DBLP:conf/icalp/Chatterjee0RS25}. On the other hand, we may have some information about the current environment, which translates into a prior belied on (i.e. probability distribution over) the current environment. This is the \textquotedblleft{}prior\textquotedblright{} setting, in which we are looking for a strategy maximizing the expected value (w.r.t. the prior belief) of the probability of the objective. 
	
	We are thus interested in different values. 
	\fi 
	Specifically, given an MEMDP $\MEMDP% = (Q,A,E,(\delta_e)_{e \in E})
	$, a state $q \in Q$, and a parity objective $W \in \msf{Borel}(Q)$, we define $\msf{uni}$-values
	\begin{equation*}
		\msf{val}^{\msf{uni}}_q(\MEMDP,W) := \sup_{\sigma \in \msf{Strat}(Q,A)} \msf{val}^{\msf{uni}}_q(\MEMDP,\sigma,W) \text{ with } \msf{val}^{\msf{uni}}_q(\MEMDP,\sigma,W) := \min_{e \in E} \Prb_q^\sigma[\MEMDP[e],W] 
	\end{equation*}
	and $\msf{pr}$-values, given a prior belief $b \in \Dist(E)$
	\begin{equation*}
		\msf{val}^{\msf{pr}}_q(\MEMDP,b,W) := \sup_{\sigma \in \msf{Strat}(Q,A)} \msf{val}^{\msf{pr}}_q(\MEMDP,b,\sigma,W) \text{ with }\msf{val}^{\msf{pr}}_q(\MEMDP,b,\sigma,W) := \sum_{e \in E} b(e) \cdot \Prb_q^\sigma[\MEMDP[e],W]
	\end{equation*}
	
	\begin{example}
		\label{ex:cardgame}
		Let us consider Fig.~\ref{fig:duplicateddeck}, which represents an MEMDP. States, actions, and probabilistic successors are defined as in a classical MDP, with the key difference that here we consider multiple valuations of the parameters that label the probabilistic transitions (i.e., the parameters on the outgoing edges of the triangular nodes).
		
		As a first example, assume that we want to model a deck of cards in which card $1$ has two copies and card $2$ has one copy. In this case, the probability of drawing card $1$ is $\alpha_1=\frac{2}{3}$, while the probability of drawing card $2$ is $\alpha_2=\frac{1}{3}$. Suppose moreover that the player is always allowed to request another draw before making a guess; this is captured by setting $\alpha_0=0$.
		
		To model the fact that the duplicated card is card $1$, we assign the parameters $\beta_1=1$ and $\beta_2=0$, meaning that guessing card $1$ leads with certainty to the winning state $W$, while guessing card $2$ leads to the losing state $L$. Symmetrically, we set $\gamma_1=0$ and $\gamma_2=1$. We refer to this valuation of the parameters as environment $E_1$. We can define another environment, denoted $E_2$, that models the situation in which card $2$ is duplicated instead. In this case, the parameters are set to $\alpha_1=\frac{1}{3}$ and $\alpha_2=\frac{2}{3}$, together with $\beta_1=0$, $\beta_2=1$, $\gamma_1=1$, and $\gamma_2=0$.
		
		Finally, to encode the reachability objective (i.e., reaching state $W$), we use the following parity labeling: all states have label $1$, except the winning state $W$, which has label $2$. Under this labelling, the parity objective is satisfied if and only if the play eventually reaches $W$, i.e., the player correctly guesses the duplicated card.
		
		%We will use Fig.~\ref{fig:duplicateddeck} as a running example throughout the paper, by instantiating it with different valuations of its parameters, corresponding to different environments. We will consider two semantics for selecting the active environment: (i) the universal semantics, where an adversary chooses the environment, and (ii) the prior semantics, where the environment is drawn at random according to a fixed distribution.
	\end{example}
	
	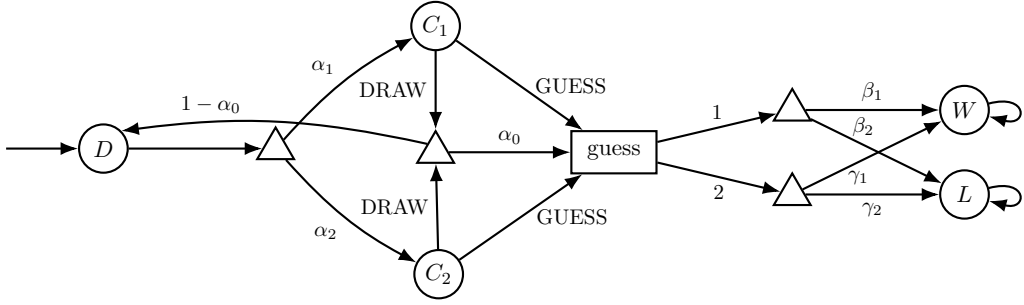
\begin{figure}[t!]
		\centering
		\begin{tikzpicture}[
			>=Latex,
			node distance=16mm and 22mm,
			state/.style={circle, draw, thick, minimum size=8mm, inner sep=1pt},
			prob/.style={regular polygon, regular polygon sides=3, draw, thick, inner sep=1.5pt, minimum size=7mm},
			ctrl/.style={rectangle, draw, thick, minimum width=14mm, minimum height=7mm, inner sep=2pt},
			lab/.style={font=\small},
			scale=0.8, transform shape]
			
			% --- Nodes (left part)
			\node[state] (D) {$D$};
			\node[prob, right=of D] (p0) {};
			\node[state, above right=of p0] (C1) {$C_1$};
			\node[state, below right=of p0] (C2) {$C_2$};
			
			% DRAW probabilistic node (right of C1/C2)
			\node[prob, below=13mm of C1] (pdraw) {};
			
			% Guess control node
			\node[ctrl, right=20mm of pdraw] (G) {guess};
			
			% --- Nodes (right part)
			\node[prob, right=20mm of G, yshift=7mm] (p1) {};
			\node[prob, right=20mm of G, yshift=-7mm] (p2) {};
			\node[state, right=22mm of p1] (W) {$W$};
			\node[state, right=22mm of p2] (L) {$L$};
			
			% --- Initial arrow
			\draw[->, thick] (-16mm,0) -- (D);
			
			% --- From D to initial probabilistic choice
			\draw[->, thick] (D) -- (p0);
			
			% --- Probabilistic split to C1/C2
			\draw[->, thick,bend left=10] (p0) to node[lab, above left] {$\alpha_1$} (C1);
			\draw[->, thick,bend right=10] (p0) to node[lab, below left] {$\alpha_2$} (C2);
			
			% --- DRAW transitions to probabilistic node
			\draw[->, thick] (C1) -- node[lab, left] {DRAW} (pdraw);
			\draw[->, thick] (C2) -- node[lab, left] {DRAW} (pdraw);
			
			% --- From C2: GUESS action directly to guess node
			\draw[->, thick] (C2) -- node[lab, midway, right] {~GUESS} (G);
			\draw[->, thick] (C1) -- node[lab, midway, right] {~GUESS} (G);
			
			% --- From DRAW probabilistic node: back to D or to guess node
			\draw[->, thick]
			(pdraw.north west) to[out=180, in=30,bend right=10]
			node[lab, above,xshift=-30pt] {$1-\alpha_0$}
			(D.north east);
			\draw[->, thick, bend right=0] (pdraw) to node[lab, above] {$\alpha_0$} (G);
			
			% --- From guess: two actions (1 and 2) to probabilistic nodes
			\draw[->, thick] (G) -- node[lab, above] {$1$} (p1);
			\draw[->, thick] (G) -- node[lab, below] {$2$} (p2);
			
			% --- Probabilistic outcomes to W/L
			\draw[->, thick] (p1) -- node[lab, above] {$\beta_1$} (W);
			\draw[->, thick] (p1) -- node[lab, pos=0.4,above] {$\beta_2$} (L);
			
			\draw[->, thick] (p2) -- node[lab, pos=0.4, below] {$\gamma_1$} (W);
			\draw[->, thick] (p2) -- node[lab, below] {$\gamma_2$} (L);
			
			% --- Absorbing self-loops
			\draw[->, thick, loop right] (W) to (W);
			\draw[->, thick, loop right] (L) to (L);
			
			% Optional: small blue annotations from the figure (remove if not needed)
			% \node[lab, blue, above=1mm of D] {1};
			% \node[lab, blue, above=1mm of C1] {1};
			% \node[lab, blue, above=1mm of C2] {1};
			% \node[lab, blue, below=1mm of G] {1};
			% \node[lab, blue, right=1mm of W] {2};
			% \node[lab, blue, right=1mm of L] {1};
			
		\end{tikzpicture}
		\caption{The figure depicts an MEMDP (inspired from~\cite{DBLP:conf/icalp/Chatterjee0RS25}) modelling a simple card game played with a deck containing two card types, $1$ and $2$. The deck composition is parameterized by $\alpha_1$ and $\alpha_2$, with $\alpha_1+\alpha_2=1$; for instance, if card $1$ appears twice as often as card $2$, then $\alpha_1=2\alpha_2$. At each turn, the player may either \emph{guess} which card type is in the majority or request an additional \emph{draw} from the deck; after a draw request, the game continues with probability $1-\alpha_0$, while with probability $\alpha_0$ the player is forced to guess immediately. The player’s objective is to reach the winning state $W$, which corresponds to correctly guessing the card type that has the larger number of copies in the deck; this is captured %in the MEMDP 
			by suitable choices of the %outcome 
			parameters (e.g., $\beta$ and $\gamma$) governing the transition from the guess action to $W$ or $L$.}
		\label{fig:duplicateddeck}
		
	\end{figure}
	
	\section{Qualitative and quantitative problem}
	\label{sec:solve_gap_problem}
	In this section, we focus on deciding the existence of strategies with good enough prior-values% with parity objectives
	. We first study the qualitative problems (value 1) and then turn to the  quantitative problem.
	
	\subsection{Qualitative problem}
	For value (arbitrarily close to) 1, the universal and prior settings coincide. 
	
	\begin{proposition}[Proof~\ref{proof:prop_value_one}]
		\label{prop:value_one}
		Consider an MEMDP $\MEMDP$, a state $q \in Q$, and a parity objective $W$. Let $b \in \Dist(E)$ such that $\Supp(b) = E$. Then $\msf{val}^{\msf{uni}}_q(\MEMDP,W) = 1$ if and only if $\msf{val}^{\msf{pr}}_q(\MEMDP,b,W) = 1$ and, 
		for all $\sigma \in \msf{Strat}(Q,A)$: $\msf{val}^{\msf{uni}}_q(\MEMDP,\sigma,W) = 1$ if and only if $\msf{val}^{\msf{pr}}_q(\MEMDP,b,\sigma,W) = 1$.
	\end{proposition}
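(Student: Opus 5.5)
The plan is to prove the strategy-wise equivalence first, and then deduce the equivalence of values from two inequalities between $\msf{val}^{\msf{uni}}$ and $\msf{val}^{\msf{pr}}$ that hold for every fixed strategy. Throughout, write $p_e(\sigma) := \Prb_q^\sigma[\MEMDP[e],W] \in [0,1]$ and $b_{\min} := \min_{e \in E} b(e)$. Since $\Supp(b) = E$ and $E$ is finite, we have $b_{\min} > 0$.

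\textbf{Strategy-wise statement.} Fix $\sigma \in \msf{Strat}(Q,A)$. The prior value $\sum_e b(e) p_e(\sigma)$ is a convex combination of the $p_e(\sigma)$ with all weights strictly positive. Because each $p_e(\sigma) \le 1$, this combination equals $1$ exactly when $p_e(\sigma) = 1$ for every $e$ with $b(e) > 0$, that is, for every $e \in E$. This in turn holds exactly when $\min_e p_e(\sigma) = 1$, i.e. $\msf{val}^{\msf{uni}}_q(\MEMDP,\sigma,W) = 1$. This settles the second part of the proposition.

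\textbf{Value statement.} I will establish two quantitative bounds valid for every $\sigma$.
\begin{itemize}
\item The first is $\msf{val}^{\msf{pr}}_q(\MEMDP,b,\sigma,W) \ge \msf{val}^{\msf{uni}}_q(\MEMDP,\sigma,W)$, which holds because an average dominates the minimum.
\item The second is $1 - \msf{val}^{\msf{uni}}_q(\MEMDP,\sigma,W) \le \frac{1}{b_{\min}}\bigl(1 - \msf{val}^{\msf{pr}}_q(\MEMDP,b,\sigma,W)\bigr)$. To see it, write $1 - \sum_e b(e)p_e(\sigma) = \sum_e b(e)(1-p_e(\sigma))$. Every term is nonnegative, so this sum is at least $b(e^*)(1-p_{e^*}(\sigma)) \ge b_{\min}\bigl(1-\min_e p_e(\sigma)\bigr)$, where $e^*$ attains the minimum.
\end{itemize}
Taking suprema over $\sigma$ then gives both directions. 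If $\msf{val}^{\msf{uni}} = 1$, the first bound forces $\msf{val}^{\msf{pr}} = 1$. Conversely, if $\msf{val}^{\msf{pr}} = 1$, then for each $\varepsilon > 0$ there is a $\sigma$ with prior value at least $1 - b_{\min}\varepsilon$. By the second bound, this $\sigma$ has universal value at least $1 - \varepsilon$, so $\msf{val}^{\msf{uni}} = 1$.

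\textbf{Main obstacle.} There is no real obstacle here; the argument is elementary. The one point that needs care is that the values are suprema, which may not be attained. The value equivalence therefore has to go through the uniform quantitative bound with constant $1/b_{\min}$, not through the strategy-wise statement alone, since that statement only covers strategies that achieve value exactly $1$. This is also exactly where the hypothesis $\Supp(b) = E$ is used.
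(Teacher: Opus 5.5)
Your proof is correct and follows essentially the same route as the paper. You use the average dominating the minimum for one direction, a strategy with prior value at least $1 - d\varepsilon$ where $d = \min_{e} b(e) > 0$ for the converse, and the strict positivity of all weights for the strategy-wise equivalence. The only difference is cosmetic: you state the converse as the direct bound $1 - \msf{val}^{\msf{uni}}_q(\MEMDP,\sigma,W) \le \frac{1}{d}\bigl(1 - \msf{val}^{\msf{pr}}_q(\MEMDP,b,\sigma,W)\bigr)$, whereas the paper reaches the same conclusion by contradiction.
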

	\begin{proof}[Proof sketch]
		If $\msf{val}^{\msf{uni}}_q(\MEMDP,W) = 1$, then there are strategies whose $\msf{uni}$-values are arbitrarily close to 1, thus their $\msf{pr}$-values are arbitrarily close to 1. Hence, $\msf{val}^{\msf{pr}}_q(\MEMDP,W) = 1$. Furthermore, if a strategy has a $\msf{pr}$-value, w.r.t. $b$, of at least $1 - d \cdot \varepsilon$ for some $\varepsilon > 0$, with $d := \min_{e \in E} b(e) > 0$, then its smallest value in any environment (since $\Supp(b) = E$) is at least $1 - \varepsilon$. Hence, if $\msf{val}^{\msf{pr}}_q(\MEMDP,W) = 1$ then %there are strategies whose average values are arbitrarily close to 1, and thus there are strategies whose values, in all environments, are arbitrarily close to 1. Hence, 
		$\msf{val}^{\msf{uni}}_q(\MEMDP,W) = 1$. The other equivalence is straightforward.
	\end{proof}
	
	The complexity of deciding the existence of almost-surely winning strategies (value 1) and of limit-surely winning strategies (values arbitrarily close to 1) is thus the same in MEMDPs with parity objectives in the universal and prior settings. 
	%The corollary below is thus a direct consequence of \cite[Theorems 4, 13]{DBLP:conf/icalp/Chatterjee0RS25}.
	%Since deciding the existence of a strategy of value 1 (\textquotedblleft{}almost-surely winning\textquotedblright{}) and of strategies of values arbitrarily close to 1 (\textquotedblleft{}limit-surely winning\textquotedblright{}) in the universal setting can be done in polynomial time if the set of environment is fixed, and in polynomial space otherwise \cite[Theorem 4, 13]{DBLP:conf/icalp/Chatterjee0RS25}, we obtain the corollary below.
	We deduce the corollary below.% for parity objectives.
	\begin{corollary}[of {\cite[Theorems 4, 13]{DBLP:conf/icalp/Chatterjee0RS25}}]
		\label{coro:limit-sure-almost-sure}
		In an MEMDP, given a prior belief and a parity objective, the problem of deciding the existence of strategies of $\msf{pr}$-value (arbitrarily close to) 1 %1 and of strategies of $\msf{pr}$-value  1 are both 
		is $\msf{PSPACE}$-complete. When the number of environments is fixed, it can be solved in polynomial time.
	\end{corollary}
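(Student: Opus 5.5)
The corollary will follow by combining Proposition~\ref{prop:value_one} with the known results for the universal semantics, \cite[Theorems 4, 13]{DBLP:conf/icalp/Chatterjee0RS25}. Those theorems say that, for parity objectives, deciding almost-sure winning and limit-sure winning in MEMDPs under the universal semantics is $\msf{PSPACE}$-complete, and is in polynomial time when the number of environments is fixed.

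\textbf{Reduction to full support.} First I remove the hypothesis $\Supp(b) = E$ needed by Proposition~\ref{prop:value_one}. Given an instance $(\MEMDP, b, q, W)$, let $\MEMDP_b$ be the MEMDP restricted to the environments in $\Supp(b)$. For every strategy $\sigma$, environments outside the support contribute $0$ to the sum, so $\msf{val}^{\msf{pr}}_q(\MEMDP,b,\sigma,W) = \msf{val}^{\msf{pr}}_q(\MEMDP_b,b,\sigma,W)$. Hence the prior values coincide as well, and $b$ has full support in $\MEMDP_b$. This restriction is computable in linear time and never increases the number of environments.

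\textbf{Transfer of upper bounds.} By Proposition~\ref{prop:value_one}, applied to $\MEMDP_b$:
\begin{itemize}[label={}]
\item a strategy of $\msf{pr}$-value $1$ exists iff a strategy of $\msf{uni}$-value $1$ exists in $\MEMDP_b$ (almost-sure case, by the strategywise equivalence);
\item $\msf{val}^{\msf{pr}}_q(\MEMDP,b,W) = 1$ iff $\msf{val}^{\msf{uni}}_q(\MEMDP_b,W) = 1$ (limit-sure case, by the equivalence of values).
\end{itemize}
So both prior problems reduce in polynomial time to the corresponding universal problems. The cited theorems then give $\msf{PSPACE}$ membership in general, and polynomial time for a fixed number of environments, since $|\Supp(b)| \leq |E|$.

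\textbf{Hardness.} For the lower bound I reduce in the other direction. Given a universal instance $(\MEMDP, q, W)$, pick the uniform belief $b$ on $E$, which has full support and is computable in polynomial time. Proposition~\ref{prop:value_one} says the universal almost-sure (resp.\ limit-sure) answer equals the prior one. $\msf{PSPACE}$-hardness therefore transfers from \cite{DBLP:conf/icalp/Chatterjee0RS25}.

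The only point needing care is checking that the hardness results there hold for both the almost-sure and the limit-sure problems. This is the main (mild) obstacle: if only one of the two is shown hard there, I would check that their hardness reduction yields instances where the two notions coincide, e.g.\ by reaching absorbing states. Apart from that, the argument is bookkeeping.
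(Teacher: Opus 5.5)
Your proposal is correct and follows the paper's route: Proposition~\ref{prop:value_one} transfers both the almost-sure and the limit-sure questions between the $\msf{pr}$- and $\msf{uni}$-semantics, and the complexity bounds (membership, hardness, and polynomial time for a fixed number of environments) are then inherited from \cite[Theorems 4, 13]{DBLP:conf/icalp/Chatterjee0RS25}. Your explicit restriction to the environments in $\Supp(b)$ and your uniform-belief reduction for hardness supply bookkeeping that the paper's one-line deduction leaves implicit; the restriction is needed because Proposition~\ref{prop:value_one} assumes $\Supp(b) = E$ while the corollary allows an arbitrary prior.
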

	
	\begin{example}
		%Let us illustrate Corrolary~\ref{coro:limit-sure-almost-sure} by considering 
		Consider again the MEMDP from Example~\ref{ex:cardgame}. It is easy to see that no strategy can guarantee winning with probability $1$. Indeed, after any finite number of draws, the empirical frequencies may still be atypical and hence suggest the wrong duplicated card.
		
		Nevertheless, for every $\varepsilon>0$, the player can request sufficiently many draws so that the probability of observing such a misleading sample drops below $\varepsilon$. By the law of large numbers, this implies that although almost-sure winning is impossible, the player can make the winning probability arbitrarily close to $1$ by choosing an appropriate family of strategies (indexed by $\varepsilon$).
		
		This reasoning applies both under the universal semantics and under the prior semantics whenever the prior distribution assigns positive probability to both environments. In either case there is no surely winning strategy, but there exists a family of strategies witnessing limit-sure winning; in particular, the construction works for any fixed prior over the two environments.
	\end{example}
	
	\subsection{Quantitative problem}
	Let us now turn to the quantitative problem.
	%, i.e. the problem of deciding, in an MEMDP $\MEMDP$, with a prior belief $b \in \Dist(E)$ and a parity objective $W$, given a threshold $\alpha \in [0,1]$, if $\msf{val}^{\msf{pr}}_q(\MEMDP,b,W) \geq \alpha$. In fact, as is done in \cite{DBLP:conf/icalp/Chatterjee0RS25}, we do not tackle exactly this problem, instead 
	We design an algorithm that, for any prescribed precision level $\varepsilon > 0$, computes an $\varepsilon$-approximation of the $\msf{pr}$-value $\msf{val}^{\msf{pr}}_q(\MEMDP,b,W)$. This approximation procedure is subsequently employed to solve the associated \emph{gap problem}, defined as follows: given a parameter $\varepsilon > 0$, one must output YES if $\msf{val}^{\msf{pr}}_q(\MEMDP,b,W) \geq \alpha$, NO if $\msf{val}^{\msf{pr}}_q(\MEMDP,b,W) \leq \alpha - \varepsilon$, and an arbitrary answer otherwise. Since there are inputs for which no specific output is required, the gap problem %This gap problem does not constitute 
	is not a classical decision problem; it is instead a promise problem %, since there exist inputs for which no specific output is required 
	\cite{DBLP:conf/birthday/Goldreich06a}.  
	
	%Similarly to what is done in \cite{DBLP:conf/icalp/Chatterjee0RS25}, we develop an approximation algorithm solving the \emph{gap problem} which, given some $\varepsilon > 0$, is supposed to answer: Yes if $\msf{val}^{\msf{pr}}_q(\MEMDP,b,W) \geq \alpha$; No if $\msf{val}^{\msf{pr}}_q(\MEMDP,b,W) \leq \alpha - \varepsilon$; and arbitrarily otherwise. An algorithm solving the gap problem 
	%\begin{itemize}
	%	\item Yes if $\msf{val}^{\msf{pr}}_q(\MEMDP,b,W) \geq \alpha$;
	%	\item No if $\msf{val}^{\msf{pr}}_q(\MEMDP,b,W) \leq \alpha - \varepsilon$;
	%	\item arbitrarily otherwise. 
	%\end{itemize}
	%This procedure can in turn be used as an approximation algorithm for solving the quantitative problem. An exact solution for the quantitative problem is left as future work.
	
	%The remainder of this section is dedicated to the design of an algorithm solving this gap problem. 
	Our goal is to compute an approximation of the $\msf{pr}$-value % of strategies, thus we start with 
	given a prior environment belief $b \in \Dist(E)$. When playing in an MEMDP, the environment belief is updated each time distinguishing state-action pairs are encountered: when we visit a transition $(q,a,q') \in Q \times A \times Q$ such that $\delta_e(q,a)(q') > \delta_{e'}(q,a)(q')$ for some $e \neq e' \in E$, the belief is updated to account for an increased likelihood of $e$ and a decreased likelihood in $e'$. 
	%
	%the fact that the belief that the current environment is environment $e$ is more likely than before compared to environment $e'$. 
	%
	%environment $e$ is more likely than compared %  before becomes more likely than before compared 
	%to the environment $e'$. 
	We formally define below this belief update. %Our definition actually corresponds to the belief update used in partially observable MDPs (POMDPs). 
	\begin{definition}[Belief update]
		\label{def:belief_env_update}
		Consider an MEMDP $\MEMDP = (Q,A,E,(\delta_e)_{e \in E})$. Let $b \in \Dist(E)$. For all $(q,a) \in Q \times A$, we let $p[b,q,a] \in \Dist(Q)$ be defined by, for all $q' \in Q$: $p[b,q,a](q') := \sum_{e \in E} b(e) \cdot \delta_e(q,a)(q')$. Then, for all $q' \in Q$, we let $\lambda[b,q,a,q'] \in \Dist(E)$ be an arbitrary distribution if $p[b,q,a](q') = 0$, and otherwise, for all $e \in E$: $\lambda[b,q,a,q'](e) := \frac{b(e) \cdot \delta_e(q,a)(q')}{p[b,q,a](q')}$.
		
		Given any initial state $q \in Q$ and prior belief $b \in \Dist(E)$, we define a (\textquotedblleft{}likelihood\textquotedblright{}) function $\msf{lk}_b: Q \cdot (A \cdot Q)^* \to \Dist(E)$ which iteratively updates the environment belief: for all $q \in Q$, $\msf{lk}_b(q) := b \in E$ and, for all $\rho \cdot (a,q') \in Q \cdot (A \cdot Q)^+$, we let $\msf{lk}_b(\rho \cdot (a,q')) := \lambda[\msf{lk}_b(\rho),\head{\rho},a,q']$.
	\end{definition}
	
	\begin{example}
		Let us illustrate how belief updates work on our running example. Assume that the prior over the two environments is given by the distribution $b = (\frac{1}{4},\frac{3}{4})$, so the initial belief is skewed towards environment $E_2$ (i.e., towards card $2$ being duplicated). Now suppose that after the first draw we observe card $2$, meaning that we reach state $2$. This observation should increase our confidence that the operating environment is indeed $E_2$. Applying the update rule above, the posterior belief becomes $\lambda[b,D,a,C_2](E_2) = \frac{3}{4} \cdot \frac{\frac{2}{3}}{\frac{1}{3} \cdot \frac{1}{4} + \frac{2}{3} \cdot \frac{3}{4}} = \frac{3}{4} \cdot \frac{24}{21} = \frac{6}{7}$, and $\lambda[b,D,a,C_2](E_1) = \frac{1}{7}$. Conversely, if the first draw yields card $1$, then the posterior belief becomes $\lambda[b,D,a,C_1](E_2) = \frac{3}{4} \cdot \frac{\frac{1}{3}}{\frac{2}{3} \cdot \frac{1}{4} + \frac{1}{3} \cdot \frac{3}{4}} = \frac{3}{4} \cdot \frac{12}{15} = \frac{3}{5}$, and $\lambda[b,D,a,C_1](E_1) = \frac{2}{5}$.
	\end{example}
	
	Our algorithm computing an approximation of the $\msf{pr}$-value %solving the gap problem 
	works recursively on the number of environments in the support of the belief. When there is a single environment in the support, we are actually in an MDP, and we can compute in polynomial time the values of all states \cite{DBLP:books/daglib/0020348}. Assume now that the support of the prior belief is of size at least 2. If %, along a run of the MEMDP, 
	the belief in an environment $e$ ever drops to 0, because we have visited a transition $(q,a,q')$ such that $\delta_e(q,a)(q') = 0$% that is not compatible with it
	, we can call the algorithm recursively with a belief of smaller support. 
	%As long as no distinguishing state-action pair is visited, the belief does not change and the MEMDP behaves as a regular MDP. 
	%When distinguishing state-action pairs are visited, the belief in an environment may drop to 0 if we have seen a transition that is not compatible with it, in which case we can call the algorithm recursively with a starting belief of smaller support. 
	Alternatively, the belief in an environment may be close to 0. In that case, it is almost harmless (in terms of value change) to truncate the belief into one with a smaller support, by 1-Lipschitz continuity of the $\msf{pr}$-value. 
	\begin{lemma}[Proof~\ref{proof:lem_small_belief_change_ok}]
		\label{lem:small_belief_change_ok}
		Consider an MEMDP $\MEMDP$ and a parity objective $W$. For all beliefs $b,b' \in \Dist(E)$, we have $|\msf{val}^{\msf{pr}}_q(\MEMDP,b,W) - \msf{val}^{\msf{pr}}_q(\MEMDP,b',W)| \leq \msf{Diff}(b,b')$, with $\msf{Diff}(b,b') := \frac{1}{2} \cdot \sum_{e \in E} |b(e) - b'(e)|$. 
		\iffalse
		\begin{equation*}
			|\msf{val}^{\msf{pr}}_q(\MEMDP,b,W) - \msf{val}^{\msf{pr}}_q(\MEMDP,b',W)| \leq \frac{1}{2} \cdot \sum_{e \in E} |b(e) - b'(e)| := \msf{Diff}(b,b')
		\end{equation*}
		\fi
		Furthermore, for all beliefs $b \in \Dist(E)$ such that $|\Supp(b)| \geq 2$, there is $\msf{Truncate}(b) \in \Dist(E)$ such that $\Supp(\msf{Truncate}(b)) \subsetneq \Supp(b)$ and $\msf{Diff}(b,\msf{Truncate}(b)) \leq \min_{e \in \Supp(b)} b(e)$. 
		%
		%. Let $\tilde{e} \in E$ be such that $b(\tilde{e}) = \min_{e \in E} b(e)$ and $b' \in \Dist(E)$ be such that $b'(\tilde{e}) = 0$, there is some $e' \in E$ such that $b'(e') = b(\tilde{e}) + b(e')$, and $b'(e) = b(e)$ for all $e \in E \setminus \{ \tilde{e},e' \}$. Then, for all objectives $W \in \msf{Borel}(Q)$, we have:
		%\begin{equation*}
		%	|\msf{val}^{\msf{pr}}_q(\MEMDP,b,W) - \msf{val}^{\msf{pr}}_q(\MEMDP,b',W)| \leq b(e)
		%\end{equation*}
	\end{lemma}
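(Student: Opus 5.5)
The plan is to compare the two $\msf{pr}$-values strategy by strategy. Fix an arbitrary strategy $\sigma \in \msf{Strat}(Q,A)$ and write $x_e := \Prb_q^\sigma[\MEMDP[e],W] \in [0,1]$ for each $e \in E$. Then $\msf{val}^{\msf{pr}}_q(\MEMDP,b,\sigma,W) - \msf{val}^{\msf{pr}}_q(\MEMDP,b',\sigma,W) = \sum_{e \in E} (b(e) - b'(e)) \cdot x_e$.

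Let $P := \{e \mid b(e) > b'(e)\}$ and $N := E \setminus P$. Since $\sum_e b(e) = \sum_e b'(e) = 1$, the positive and negative parts of $b - b'$ have the same mass, namely $\sum_{e \in P}(b(e)-b'(e)) = \sum_{e \in N}(b'(e)-b(e)) = \msf{Diff}(b,b')$. Because each $x_e \in [0,1]$, the difference of the two strategy values lies between $-\msf{Diff}(b,b')$ and $\msf{Diff}(b,b')$: the upper bound comes from taking $x_e = 1$ on $P$ and $x_e = 0$ on $N$, and the lower bound symmetrically.

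Hence $\msf{val}^{\msf{pr}}_q(\MEMDP,b,\sigma,W) \leq \msf{val}^{\msf{pr}}_q(\MEMDP,b',\sigma,W) + \msf{Diff}(b,b') \leq \msf{val}^{\msf{pr}}_q(\MEMDP,b',W) + \msf{Diff}(b,b')$. Taking the supremum over $\sigma$ on the left gives one inequality, and the other follows by symmetry. No property of parity objectives is needed beyond measurability; the only care required is passing the supremum through, which is routine.

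For the truncation, pick $\tilde e \in \Supp(b)$ minimizing $b$ over $\Supp(b)$, and pick any other $e' \in \Supp(b)$, which exists since $|\Supp(b)| \geq 2$. Define $\msf{Truncate}(b)$ by:
\begin{itemize}
\item $\msf{Truncate}(b)(\tilde e) := 0$;
\item $\msf{Truncate}(b)(e') := b(e') + b(\tilde e)$;
\item $\msf{Truncate}(b)(e) := b(e)$ for all other $e$.
\end{itemize}
This is a distribution with support $\Supp(b) \setminus \{\tilde e\} \subsetneq \Supp(b)$. Moreover, $\msf{Diff}(b,\msf{Truncate}(b)) = \frac{1}{2}(b(\tilde e) + b(\tilde e)) = b(\tilde e) = \min_{e \in \Supp(b)} b(e)$.

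Neither part has a real obstacle. The only point to state cleanly is the identity that $\sum_e (b(e)-b'(e))x_e$ is bounded by the total variation distance when $x \in [0,1]^E$, which uses that $b - b'$ sums to zero.
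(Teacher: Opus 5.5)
Your proof is correct and follows the paper's argument essentially verbatim. You bound $\sum_e (b(e)-b'(e))\,\Prb_q^\sigma[\MEMDP[e],W]$ by $\msf{Diff}(b,b')$ using that the positive and negative parts of $b-b'$ have equal mass, take the supremum over $\sigma$, conclude by symmetry, and build $\msf{Truncate}(b)$ by moving the smallest positive mass onto another element of the support. Taking the minimizer over $\Supp(b)$ is the right choice; the paper's proof literally minimizes over $E$, and such a minimizer could have zero mass, in which case the support would not shrink.
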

	
	%As long as no distinguishing state-action pair is visited, the belief does not change and the MEMDP behaves as a regular MDP. 
	%On the other hand, t
	The environment belief is updated each time a distinguishing state-action pairs is visited. However, even if many distinguishing state-action pairs are visited, it could be that there is no environment whose belief drops to 0. %In fact, this surely never happens if the transitions functions of all environments have the same support.
	%Overall, we obtain that if we play in the MEMDP and come across a belief whose minimal value in the support is small enough, we can truncate it into a belief with smaller support. 
	%Along runs visiting many distinguishing state-action pairs, it could be that no environment belief drops to O. In fact, this surely happens if the transitions functions of all environments have the same support. 
	Nonetheless, if enough distinguishing state-action pairs are visited, with high probability, there is an environment whose belief drops close to 0. %This constitutes the main technical result of this section, it is formally stated in the theorem below. 
	%This is formally stated in the theorem below%, which constitutes a crucial result of this paper
	%.
	\begin{theorem}[Proof~\ref{proof:thm_main_enough_actions_small_belief}]
		\label{thm:main_enough_actions_small_belief}
		Consider an MEMDP $\MEMDP = (Q,A,E,(\delta_e)_{e \in E})$ and a prior environment belief $b \in \Dist(E)$. Let $k := |E|$, and $l$ denote the maximum number of bits to write any probability occurring in $\MEMDP$. For all $\varepsilon > 0$, we let: $\msf{NeverSmallBelief}(b,\varepsilon) := \{ \rho \in Q \cdot (A \cdot Q)^* \mid \forall \pi %\in Q \cdot (A \cdot Q)^*,\; \pi 
		\sqsubseteq \rho,\; \forall  e \in E:\; \msf{lk}_b(\pi)(e) > \varepsilon \}$, with $\pi \sqsubseteq \rho$ meaning that $\pi$ is a prefix of $\rho$. For all $\rho \in %Q \cdot (A \cdot Q)^* \cup 
		(Q \cdot A)^\omega$, we let $\msf{nb}_{\msf{dstg}}(\rho) \in \N \cup \{\infty\}$ denote the number of times $\rho$ visits a distinguishing state-action pair.
		
		Then, for all $\varepsilon > 0$, letting $x := \lceil \log(\frac{1}{\varepsilon})\rceil$, there is $m(\MEMDP,\varepsilon) \in \N$ such that $m(\MEMDP,\varepsilon) = O(poly(k,2^l,x))$ ensuring that for all strategies $\sigma \in \msf{Strat}(Q,A)$ and environments $e \in E$:
		\begin{equation*}
			\Prb_q^\sigma(\MEMDP[e],\{ %\rho \in \msf{Hist}^\omega(Q,A) \mid \msf{nb}_{\msf{dstg}}(\rho) \geq m(\MEMDP,\varepsilon) 
			\msf{nb}_{\msf{dstg}} \geq m(\MEMDP,\varepsilon) \} \cap \msf{NeverSmallBelief}(b,\varepsilon)) \leq \varepsilon
		\end{equation*}
	\end{theorem}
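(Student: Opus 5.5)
Fix the strategy $\sigma$ and the true environment $e$. The approach is to track log-likelihood ratios $L_{e'}(\pi) := \log\frac{\msf{lk}_b(\pi)(e)}{\msf{lk}_b(\pi)(e')}$ for each $e' \neq e$. Each update of Definition~\ref{def:belief_env_update} at a transition $(q,a,q')$ adds to $L_{e'}$ the increment $\log\frac{\delta_e(q,a)(q')}{\delta_{e'}(q,a)(q')}$, and leaves it unchanged when $(q,a)$ is not distinguishing. We may stop runs once some belief drops to at most $\varepsilon$, since we only care about runs in $\msf{NeverSmallBelief}(b,\varepsilon)$. On such runs every $\msf{lk}_b(\pi)(e'') > \varepsilon$, hence $L_{e'}(\pi) \in (\log \varepsilon, \log\frac1\varepsilon)$ for all $e'$. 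A run staying in this set therefore keeps all $k-1$ log-ratios in a window of width $2x$.

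Next I control the drift. Under $\MEMDP[e]$, the conditional expected increment of $L_{e'}$ at a visit of $(q,a)$ is the KL divergence $D(\delta_e(q,a)\,\|\,\delta_{e'}(q,a)) \ge 0$. Since probabilities have at most $l$ bits, every nonzero probability is at least $2^{-l}$. By Pinsker, the divergence is $\ge \frac12\|\delta_e-\delta_{e'}\|_1^2 \ge 2^{-2l+1}$ whenever $(q,a) \in \msf{Dstg}(\MEMDP,e,e')$. Increments are bounded by $l$ in absolute value when the transition is possible under $e'$; if it is impossible, the $e'$-belief drops to $0 \le \varepsilon$ and the run leaves $\msf{NeverSmallBelief}$. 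The difficulty is that a pair distinguishing for some $(e'',e''')$ need not be $(e,e')$-distinguishing. So I would instead use the potential $S := \sum_{e'\neq e} L_{e'}$, or better $\Phi := -\log \msf{lk}_b(\pi)(e)$. At any distinguishing pair $(q,a)$, the expected change of $\Phi$ is minus the KL divergence between $\delta_e(q,a)$ and the mixture $p[\msf{lk},q,a]$. This is at least $c\cdot \msf{lk}(e')^2\, 2^{-2l}$ for some $e'$ with $\delta_{e'}(q,a)\ne\delta_e(q,a)$, when such an $e'$ exists. On $\msf{NeverSmallBelief}$ all beliefs exceed $\varepsilon$, which would give only a drift of order $\varepsilon^2 2^{-2l}$, and that is polynomial in $1/\varepsilon$ rather than in $x$. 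To fix this I return to pairwise ratios. At a distinguishing pair, some pair $(e_1,e_2)$ has positive drift of $L$ under $e_1$. Under the true $e$, every $L_{e'}$ is a submartingale, so I would work with the vector $(L_{e'})_{e'}$ and use that each distinguishing visit increases $\sum_{e'} L_{e'}$ in expectation by at least $2^{-2l+1}$, provided the pair is $(e,e')$-distinguishing for some $e'$.

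When $(q,a)$ distinguishes only environments $e_1,e_2 \ne e$, $\delta_e(q,a)$ equals neither of them or equals one of them, and in either case some $L_{e_i}$ gets positive drift. So indeed every distinguishing visit gives $S$ expected increase $\ge 2^{-2l+1}$, with increments bounded by $(k-1)l$. Then $S_n - n\mu$ with $\mu=2^{-2l+1}$, indexed by distinguishing visits, is a submartingale with bounded differences. On $\msf{NeverSmallBelief}$, $S$ stays below $(k-1)x$ while $S_0 \ge -(k-1)x$. Azuma--Hoeffding then gives that the probability that, after $m$ distinguishing visits, $S_m \le 2(k-1)x + S_0$ (which is required) is at most $\exp(-(m\mu-2kx)^2/(2m k^2l^2))$. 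Choosing $m = \Theta(k^2l^2 2^{4l}x + kx2^{2l})$ makes this $\le 2^{-x}\le\varepsilon$. This $m$ is $poly(k,2^l,x)$. Applying optional stopping at the $m$-th distinguishing visit handles strategies that never reach it.

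The main obstacle is the drift claim for pairs not distinguishing $e$ from others, together with the infinite increments. I would handle the latter by stopping at belief $\le\varepsilon$ and capping, and I would check the former carefully via the Pinsker bound per pair.
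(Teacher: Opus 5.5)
\textbf{Verdict: genuine gap, fixable with a cap on the infinite increments.}

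Your route differs from the paper's, and most of it is sound. The paper uses one valuation $v[e',e]$ per pair, a drift bound from the refined AM--GM inequality, a hand-made Hoeffding adaptation, a pigeonhole over $e'$ and a union bound. You instead use the single potential $S=\sum_{e'\neq e}L_{e'}$. You observe that every distinguishing pair is $(e,e')$-distinguishing for some $e'$, which is the same fact behind the paper's choice $m=|E|(n_1+n_3)$. Hence $S$ gains expected KL divergence at least $\mu$ at every distinguishing visit. KL plus Pinsker replaces the AM--GM bound, and standard Azuma replaces Lemma~\ref{lem:small_proba_far_from_expected_value}. This is cleaner and also dispenses with the union bound over $e'$.

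\textbf{The gap.} It is exactly the one you flag, and the fix you sketch does not work as stated. Take a pair $(q,a)$ with $\Supp(\delta_e(q,a))\not\subseteq\Supp(\delta_{e'}(q,a))$ (an $e'/e$-revealing pair). There the increment of $L_{e'}$ is $+\infty$ with probability at least $p_{\min}$. So your two claims, ``increments bounded by $(k-1)l$'' and ``drift at least $\mu$'', cannot hold simultaneously there.

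Stopping the process when it leaves $\msf{NeverSmallBelief}(b,\varepsilon)$, i.e.\ freezing the increment on the exit transition, destroys the submartingale property. Take two environments with $\delta_e(q,a)=(\frac12,\frac12)$ and $\delta_{e'}(q,a)=(1,0)$. The frozen increment has expectation $\frac12\log\frac12=-\frac12$, so Azuma no longer applies.

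\textbf{The fix.} Cap the infinite increments at a value $C$ of order $1/p_{\min}$. Let $\alpha$ be the $\delta_e(q,a)$-mass of $\Supp(\delta_{e'}(q,a))$.
\begin{itemize}
\item By the log-sum inequality, the finite part contributes $\sum_{q'\in\Supp(\delta_{e'}(q,a))}\delta_e(q,a)(q')\log\frac{\delta_e(q,a)(q')}{\delta_{e'}(q,a)(q')}\geq\alpha\log\alpha\geq-\frac{\log e}{e}$.
\item The capped part contributes $(1-\alpha)C\geq p_{\min}C$.
\item So $C=2/p_{\min}$ keeps every component's drift nonnegative and the relevant one at least $\mu$.
\item The capped transitions are exactly those that set the belief in $e'$ to $0$. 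So the capped process coincides with $S$ on $\msf{NeverSmallBelief}(b,\varepsilon)$.
\end{itemize}
The price is that increments are now $O(k2^l)$ rather than $O(kl)$. Hence $m=\Theta(k^2 2^{2l}x/\mu^2)$, which is still polynomial in $k$, $2^l$ and $x$. Alternatively, you can treat revealing pairs separately as the paper does. It bounds their number via Lemma~\ref{lem:too_many_revealing} and pays the factor $\msf{ratio}_{\max}(\MEMDP)^{\msf{nb}_{\msf{rev}}}$ in Lemma~\ref{lem:sum_v_small_likelyhood}.

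\textbf{Two smaller points.}
\begin{itemize}
\item If probabilities are $l$-bit rationals, two distinct ones may differ by only about $2^{-2l}$. Pinsker then gives $\mu\geq 2^{1-4l}$, not $2^{1-2l}$; this is harmless for the polynomial bound.
\item The ``optional stopping'' for runs with fewer than $m$ distinguishing visits needs an explicit device. For example, continue the visit-indexed process with deterministic increments $\mu$, so that Azuma is applied to a process defined for all $n\leq m$. The $-M$ padding in Lemma~\ref{lem:inequality_expected_value_hoeffding} plays this role in the paper.
\end{itemize}
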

	This theorem constitutes a crucial result of this paper, its proof is rather intricate. %the main technical result of this section with a rather intricate proof. 
	Below, we only describe the two main ingredients of that proof; all the details can be found in Appendix~\ref{appen:actual_proof}, with the definition of the quantity $m(\MEMDP,\varepsilon)$ provided in Appendix~\ref{appen:def_bound}. 
	
	\begin{proof}[Proof sketch]
		Assume that some $e \in \Supp(b)$ is the operating environment. Consider another environment $e' \neq e \in \Supp(b)$ and assume, for simplicity, that $\Supp(\delta_e(q,a)) = \Supp(\delta_{e'}(q,a))$ for all $(q,a) \in Q \times A$. % all transition functions have the same support for $e$ and $e'$. 
		Our goal is to show that, regardless of the strategy $\sigma$, with high probability, runs $\rho$ that visit enough $(e,e')$-distinguishing state-action pairs are such that $\msf{lk}_b(\rho)(e')$ is arbitrarily small. We focus on the ratio of the beliefs in the environments $e$ and $e'$, which initially is equal to $\frac{b(e')}{b(e)}$. After some run $\rho \in Q \cdot (A \cdot Q)^*$, it is equal to $\frac{\msf{lk}_b(\rho)(e')}{\msf{lk}_b(\rho)(e)} = \frac{b(e')}{b(e)} \cdot \prod_{\pi \cdot (a,q) \sqsubseteq \rho} \frac{\delta_{e'}(\head{\pi},a)(q)}{\delta_{e}(\head{\pi},a)(q)}$. Since the sum of random variables is easier to analyze than their product, we aim at showing ($\ast$): \textquotedblleft{}Regardless of the strategy $\sigma$, with high probability, if $\rho$ visits enough $(e,e')$-distinguishing state-action pairs, the sum $\sum_{\pi \cdot (a,q) \sqsubseteq \rho} \log(\frac{\delta_{e'}(\head{\pi},a)(q)}{\delta_{e}(\head{\pi},a)(q)})$ is low enough\textquotedblright{}. An instrumental result in the proof of this fact is that there is some $\eta < 0$ such that, for all $(q,a) \in \msf{Dstg}(\MEMDP,e,e')$:% $(e,e')$-distinguishing state-action pairs $(q,a)$:%, the one step expected value, in the environment $e$, of $\log(\frac{\delta_{e'}(q,a)(\cdot)}{\delta_{e}(q,a)(\cdot)})$ is at most $\eta$::
		$\sum_{q' \in Q} \delta_{e}(q,a)(q') \cdot \log(\frac{\delta_{e'}(q,a)(q')}{\delta_{e}(q,a)(q')}) \leq \eta$. This is a consequence of the difference between the arithmetic and geometric means.
		
		In order to show ($\ast$), we want to use Hoeffding's inequality \cite{hoeffding1963probability} (which was also a central tool used in \cite{DBLP:conf/icalp/Chatterjee0RS25}). Informally, this inequality entails that if we consider $n \in \N$ independent real-valued random variables, the probability that their sum is higher than their expected value decreases exponentially with $n$. In our case, we could consider the random variables $X_k$, for $k \in \N$, that map the $k$-th transition $(q,a,q') \in Q \times A \times Q$ visited for which $(q,a) \in \msf{Dstg}(\MEMDP,e,e')$ % is a $(e,e')$-distinguishing state-action pair 
		to the real value $\log(\frac{\delta_{e'}(q,a)(q')}{\delta_{e}(q,a)(q')}) \in \R$. However, these random variables are clearly not independent and thus Hoeffding's inequality cannot be applied. Nonetheless, since we can show that the expected value of all of these random variables is at most $\eta < 0$, we can adapt the proof of Hoeffding's inequality to show that, regardless of the strategy $\sigma$, with probability exponentially small in $n$, the sum of the $n$ first random variables $X_1,\ldots,X_n$ is not much higher than $n \cdot \eta < 0$.
		
		Overall, if we appropriately choose $n$, we obtain that, in the environment $e$, for all strategies $\sigma$, the probability to visit at least $n$ $(e,e')$-distinguishing state-action pairs while having the belief in the environment $e'$ never dropping below $\varepsilon$ to be at most $\varepsilon > 0$. The quantity $m(\MEMDP,\varepsilon)$ is chosen such that if at least $m(\MEMDP,\varepsilon)$ distinguishing state-action pairs are visited, for all environments $e \in E$, there is an environment $e' \in E$ such that at least $n$ $(e,e')$-distinguishing state-action pairs are visited. The theorem follows.
	\end{proof}

	\begin{example}
		Let us consider our running example with environments $E_1$ and $E_2$ as above, except that we consider a small $\alpha_0 > 0$, so that with high probability, the strategy can gather many observations before having to make a guess. As we have seen in the above example on the update of the belief, visiting state $C_2$ increases the belief in environment $E_2$, and visiting state $C_1$ increases the belief in environment $E_1$. A consequence of the above theorem is that, with high probability, the number of visits to the states revealing a card is massively skewed towards either $C_1$ (and the belief in $E_2$ drops close to 0) or $C_2$ (and the belief in $E_1$ drops close to 0). Once this occurs, we can almost harmlessly truncate the belief (recall Lemma~\ref{lem:small_belief_change_ok}) and obtain a regular MDP, in which we can compute the value in polynomial time.
		%, and we assume the uniform prior $(\frac{1}{2},\frac{1}{2})$ over the two environments. Since $\alpha_0=1$, the player is forced to guess the duplicated card after a single draw, and therefore can no longer achieve a winning probability arbitrarily close to $1$.
		
		%Under the uniform prior, the optimal decision rule is immediate: if the observed card is $1$, the player should guess $1$, and if the observed card is $2$, the player should guess $2$. This strategy is optimal, and its value is equal to $...$.
	\end{example}
	
	%The detailed proof of Theorem~\ref{thm:main_enough_actions_small_belief}. Note that we start with an adaptation of ...
	
	\paragraph{Algorithm solving the gap problem.} %Theorem~\ref{thm:main_enough_actions_small_belief} gives us that if we see enough distinguishing state-action pairs, then with high probability there is an environment whose belief drops near 0; in which case, we can truncate the belief (in $\Dist(E)$) into a new environment belief with smaller support, while only slightly changing the $\msf{pr}$-value (by Lemma~\ref{lem:small_belief_change_ok}). 
	
	%Building on this idea, w
	We have designed Algorithm~\ref{algo:MEMDP_parity} that, given an MEMDP $\MEMDP$, a parity objective $W$, a prior environment belief $b \in \Dist(E)$ and some $\gamma > 0$, computes, for every state $q \in Q$, a value $v_q$ such that $|\msf{val}^{\msf{pr}}_q(\MEMDP,b,W) - v_q| \leq \gamma$. The algorithm takes as additional argument an integer $n \in \N$ which bounds the maximum number of distinguishing state-action pairs that can be visited before the environment belief is truncated. The algorithm should be called with $n := m(\MEMDP,\frac{\gamma}{3|E|})$\footnote{We use $\frac{\gamma}{3|E|}$ instead of $\gamma$ because the algorithm performs several approximations by truncating environment beliefs, the sum of all these approximations should total at most $\gamma$.}. This algorithm works recursively on the number of environments in the support of the belief; the base case is handled in Line 1: in an MDP with parity objectives, we can compute in polynomial time the values of all states. Then, if we have visited enough distinguishing state-action pairs ($n = 0$) or the smallest positive belief is sufficiently close to 0, we recursively call the algorithm on a truncated environment belief with smaller support (Lines 2-4). Otherwise, for each distinguishing state-action pair $(q,a)$, and states $q' \in Q$ such that $p[b,q,a](q') > 0$ (recall Definition~\ref{def:belief_env_update}), we compute the value $v_{q,a,q'} \in [0,1]$ of that state $q'$ with an updated belief $\lambda[b,q,a,q'] \in \Dist(E)$ by recursively calling the algorithm with a bound $n'$ that is either reset to $m(\MEMDP,\frac{\gamma}{3|E|})$ if the support of the belief has shrunk, or equal to $n-1$ otherwise (Lines 5-8). We then transform the MEMDP into an MDP by adding two fresh sink states $q_{\msf{win}}$ and $q_{\msf{lose}}$ redirecting all transitions $(q,a,q') \in Q \times A \times Q$ such that $(q,a)$ is a distinguishing state-action pair to the state $q_{\msf{win}}$ with probability $v_{q,a,q'}$ and to the state $q_{\msf{lose}}$ with probability $1 - v_{q,a,q'}$; we also modify the parity objective $W$ into a parity objective $W'$ for which looping on $q_{\msf{win}}$ is winning and looping on $q_{\msf{lose}}$ is losing (Line 9). (This transformation is formally described in Appendix~\ref{appen:def-MDP-from-MEMDP}.) We can finally compute the value of the states in that MDP with the objective $W'$ (Line 10). They correspond to the values of the MEMDP $\MEMDP$.
	
	\begin{algorithm}[t]
		\caption{$\textsf{MEMDP-Prior-Parity}(\MEMDP,W,b,n,\gamma)$}
		\label{algo:MEMDP_parity}
		\textbf{Input}: MEMDP $\MEMDP = (Q,A,E,(\delta_e)_{e \in E})$, parity objective $W$, belief $b \in \Dist(E)$, $n \in \N$, $\gamma > 0$
		\begin{algorithmic}[1]
			\If{$|\Supp(b)| =1$}
			\Return $\textsf{MDP-Parity}(\MEMDP,W): Q \to [0,1]$
			\EndIf
			\If{$n=0$ \textsf{or }$\min_{e \in \Supp(b)} b(e) \leq \frac{\gamma}{3|E|}$}
			\State $b' \gets \msf{Truncate}(b)$ \Comment{As in Lemma~\ref{lem:small_belief_change_ok}: $\Supp(b') \subsetneq \Supp(b)$}
			\State
			\Return $\textsf{MEMDP-Prior-Parity}(\MEMDP,W,b',m(\MEMDP,\frac{\gamma}{3|E|}),\gamma)$
			\EndIf
			\For{$(q,a) \in \msf{Dstg}(\MEMDP)$ \textsf{and }$q' \in \Supp(p[b,q,a])$}
			\State $b' \gets \lambda[b,q,a,q']$ \Comment{As in Definition~\ref{def:belief_env_update}}
			\If{$\Supp(b') \subsetneq \Supp(b)$} $n' \gets \frac{\gamma}{3|E|}$ \textbf{else} $n' \gets n-1$
			%\State $x_{q,a,q'} \gets \textsf{MEMDP-Prior-Parity}(\MEMDP,W,b',m,m)(q')$
			\EndIf
			\State $v_{q,a,q'} \gets \textsf{MEMDP-Prior-Parity}(\MEMDP,W,b',n',m)(q')$
			\EndFor
			\State $(\MDP,W') \gets \msf{MDP}\text{-}\msf{from}\text{-}\msf{MEMDP}(\MEMDP,W,b,v)$
			\State
			\Return $\textsf{MDP-Parity}(\MDP,W'): Q \to [0,1]$
		\end{algorithmic}
	\end{algorithm}
	
	Consider now the complexity of Algorithm~\ref{algo:MEMDP_parity}. First, note that, whenever it recursively calls itself, either the support of the belief has shrunk, or the bound (which resets to $m(\MEMDP,\frac{\gamma}{3|E|})$) on the number of visited distinguishing state-action pairs has decreased. Thus, the recursion depth is bounded by $|E| \times m(\MEMDP,\frac{\gamma}{3|E|})$. %Furthermore, each recursive step takes polynomial time. 
	In addition, the number of bits used to describe the environment beliefs grows linearly with the number of updates and truncations. The space taken by the algorithm is in fact in $O(poly(|Q|,|A|,|E|,|\log(\gamma)|,x_b,m(\MEMDP,\frac{\gamma}{3|E|})))$, where $x_b$ is the number of bits to represent %the probabilities in 
	the prior belief $b$. All details on %the correction/complexity of 
	Algorithm~\ref{algo:MEMDP_parity} %(with the formal definition of the above MDP) 
	are given in Appendix~\ref{appen:approximation_algorithm}.
	
	The gap problem can be solved by calling Algorithm~\ref{algo:MEMDP_parity} with $\gamma := \varepsilon/3$, and comparing the result with $\alpha$. Given the bound on $m(\MEMDP,\frac{\gamma}{3|E|})$ from Theorem~\ref{thm:main_enough_actions_small_belief}, we obtain the theorem below.
	\begin{theorem}
		\label{thm:complexity_deciding_gap_problem}
		In an MEMDP $\MEMDP% = (Q,A,E,(\delta_e)_{e \in E})
		$, given a prior belief $b \in \Dist(E)$ (given in binary), a parity objective, a threshold $0 < \alpha < 1$ (given in binary), and a precision $\varepsilon > 0$ (given in binary), the gap problem with $\msf{pr}$-values can be decided in $\msf{EXPSPACE}$. If the probabilities involved in the MEMDP are given in unary, the gap problem can be decided in $\msf{PSPACE}$.
	\end{theorem}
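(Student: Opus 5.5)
We reduce the gap problem to one call of Algorithm~\ref{algo:MEMDP_parity} with $\gamma := \varepsilon/3$ and $n := m(\MEMDP,\frac{\gamma}{3|E|})$. The call returns $v_q$ with $|\msf{val}^{\msf{pr}}_q(\MEMDP,b,W) - v_q| \leq \gamma$. We answer YES iff $v_q \geq \alpha - \varepsilon/2$. If the value is at least $\alpha$, then $v_q \geq \alpha - \varepsilon/3$, so we answer YES. If the value is at most $\alpha - \varepsilon$, then $v_q \leq \alpha - 2\varepsilon/3$, so we answer NO. The proof therefore has two parts: correctness of the algorithm, and a space bound.

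\textbf{Correctness.} We argue by induction on the pair (size of the support of $b$, $n$), ordered lexicographically. This induction follows the recursion.

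The truncation step (Lines 2--4) loses at most $\frac{\gamma}{3|E|}$, by Lemma~\ref{lem:small_belief_change_ok}. This holds when it is triggered by a small minimum belief. When it is triggered by $n=0$, the error is charged through Theorem~\ref{thm:main_enough_actions_small_belief}, as described below.

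For the MDP built at Line 9, we show that its value equals the exact $\msf{pr}$-value of the game in which play stops at the first distinguishing transition and collects that continuation value. The key facts are these. Between distinguishing pairs the belief does not change, and all environments behave identically there. Parity is prefix-independent. By Bayes' rule, the prior value decomposes as an expectation, over the first distinguishing transition $(q,a,q')$, of the $\msf{pr}$-value at $q'$ under belief $\lambda[b,q,a,q']$.

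The delicate point is the errors. A naive sum over the recursion depth, about $|E|\cdot m$ levels, would blow up. Instead we account for errors per support size. Within a fixed support, consider the unrolled play. Forced truncations occur only on runs that have visited $m$ distinguishing pairs while keeping every belief above $\frac{\gamma}{3|E|}$. By Theorem~\ref{thm:main_enough_actions_small_belief}, applied in every environment and hence averaged over $b$, these runs have probability at most $\frac{\gamma}{3|E|}$, uniformly over strategies. Truncations caused by a small minimum belief cost at most $\frac{\gamma}{3|E|}$. So each support size contributes at most $\frac{2\gamma}{3|E|}$. Summing over at most $|E|$ support sizes gives an error of at most $\gamma$.

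\textbf{Main obstacle.} Making this per-level charging rigorous is the main obstacle. We plan to handle it by defining, for each support size, an auxiliary value of a stopped game. We would compare the algorithm's output with that auxiliary value and with the true value by coupling the recursive tree with runs of $\MEMDP$. One point needs care: since the approximations at the leaves are monotone and 1-Lipschitz, the errors at the leaves propagate through the MDP value at Line 10 without amplification.

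\textbf{Complexity.} The recursion depth is at most $|E|\cdot(m+1)$. Each frame stores polynomially many values together with the current belief. Each belief update multiplies rationals whose bit-size is $O(l)$, so after $d$ updates the bit-size is $O(x_b + d\,l)$. A truncation only moves mass between entries, so it keeps the bit-size linear. Each value $v_{q,a,q'}$ is the value of an MDP with rational data, and it can be kept exact with polynomially many bits in the size of its inputs. However, a nested composition of such values could blow up the bit-size, so instead we round each stored $v_{q,a,q'}$ to $O(|\log \gamma| + \log(|E| m))$ bits. The additional rounding error, taken over at most $|E|m$ levels, is absorbed into $\gamma$.

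The MDP parity solver runs in polynomial time, hence in polynomial space. Reusing space across the loop iterations gives total space $poly(|Q|,|A|,|E|,|\log\gamma|,x_b,m)$. By Theorem~\ref{thm:main_enough_actions_small_belief}, $m = poly(k,2^l,\log(1/\varepsilon))$. When probabilities are written in binary, $2^l$ is exponential in the input size, which gives EXPSPACE. When they are written in unary, $2^l$ is polynomial, which gives PSPACE.
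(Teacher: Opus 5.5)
Your proposal is correct and follows the paper's proof essentially step for step. Both use a single call to Algorithm~\ref{algo:MEMDP_parity} with $\gamma=\varepsilon/3$ and the Line~8 values rounded. Correctness goes by induction on the support size and then on the counter $n$, charging each support level the truncation cost of Lemma~\ref{lem:small_belief_change_ok}, the bad-event probability bounded uniformly over strategies by Theorem~\ref{thm:main_enough_actions_small_belief}, and the accumulated rounding. The space bound comes from recursion depth $|E|\cdot m$, linearly growing belief encodings, and polynomial-time MDP parity solving.

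The ``stopped game'' comparison you plan for the main obstacle is what the paper makes rigorous with its $(n,\gamma,\eta,b,q)$-approximation invariant (Definition~\ref{def:approximation}, Lemmas~\ref{lem:prop-MDP-from-MEMDP} and~\ref{lem:induction-step}). Its key feature is the one you anticipate: the MDP value's error is bounded by a supremum over strategies of the expected leaf error, not by the sup-norm of leaf errors, which is how the $O(1)$ errors at forced truncations are absorbed.
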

	
	\section{Relating the $\msf{pr}$- and $\msf{uni}$-values}
	\label{sec:relating-the-values}
	The procedure described in \cite{DBLP:conf/icalp/Chatterjee0RS25} that solves the gap problem with $\msf{uni}$-values executes in space doubly exponential in $|Q|$ (the number of states)%, even if the probabilities are given in unary or if the number of environments is fixed (the space taken is doubly exponential in the number of states)
	. On the other hand, we have exhibited an algorithm solving the gap problem with $\msf{pr}$-values in polynomial space when the probabilities are given in unary, in exponential space otherwise. Our goal now is to link the $\msf{pr}$- and $\msf{uni}$-values so that we can use our algorithm %of Section~\ref{sec:solve_gap_problem} 
	to solve more efficiently the gap problem with $\msf{uni}$-values. 
	
	It is clear that the $\msf{uni}$-value is lower than or equal to the $\msf{pr}$-value for any prior belief. In fact, the $\msf{uni}$-value is actually equal to the infimum $\msf{pr}$-value over all possible prior beliefs. %This is formally stated in the theorem below.
	\begin{theorem}%[Proof~\ref{appen:mixed_strategies}]
		\label{thm:approx_value}
		Consider an MEMDP $\MEMDP = (Q,A,E,(\delta_e)_{e \in E})$, a state $q \in Q$, and a parity objective $W \subseteq \msf{Borel}(Q)$. We have: $\msf{val}_q^{\msf{uni}}(\MEMDP,W) = \inf_{b \in \Dist(E)} \msf{val}_q^{\msf{pr}}(\MEMDP,b,W)$. 
		%\begin{equation*}
		%	\msf{val}_q^{\msf{uni}}(\MEMDP,W) = \inf_{b \in \Dist(E)} \msf{val}_q^{\msf{pr}}(\MEMDP,b,W)
		%\end{equation*}
	\end{theorem}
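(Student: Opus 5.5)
The plan is a minimax argument. The inequality $\msf{val}_q^{\msf{uni}}(\MEMDP,W) \leq \inf_{b} \msf{val}_q^{\msf{pr}}(\MEMDP,b,W)$ is immediate. For every strategy $\sigma$ and belief $b$ we have $\min_{e} \Prb_q^\sigma[\MEMDP[e],W] \leq \sum_e b(e)\Prb_q^\sigma[\MEMDP[e],W]$. Taking the supremum over $\sigma$ on both sides and then the infimum over $b$ gives the claim.

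For the converse, define the payoff $F(\sigma,b) := \sum_{e} b(e)\,\Prb_q^\sigma[\MEMDP[e],W]$. Then
\[
\msf{val}^{\msf{uni}}_q(\MEMDP,W) = \sup_\sigma \min_{b \in \Dist(E)} F(\sigma,b), \qquad \inf_b \msf{val}^{\msf{pr}}_q(\MEMDP,b,W) = \inf_b \sup_\sigma F(\sigma,b),
\]
where the first equality holds because a linear function of $b$ attains its minimum over the simplex at a Dirac belief. So it suffices to exchange $\sup$ and $\inf$. The belief space $\Dist(E)$ is a compact convex simplex, and $F$ is linear, hence continuous, in $b$.

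I will apply a minimax theorem that needs compactness on only one side, such as Sion's theorem or the Ky Fan / Kneser theorem for concave-like functions. This requires $F$ to be concave-like in $\sigma$: for any $\sigma_1,\sigma_2$ and $t \in [0,1]$, there must be a strategy $\sigma$ with $F(\sigma,b) \geq tF(\sigma_1,b)+(1-t)F(\sigma_2,b)$ for all $b$. Such a $\sigma$ is obtained by mixing. By Kuhn's theorem, or by an explicit construction, the mixture of $\sigma_1$ and $\sigma_2$ with weights $t,1-t$ is realized by a behavioural strategy $\sigma$ such that, in every environment $e$, $\Prb_q^\sigma[\MEMDP[e],\cdot] = t\Prb_q^{\sigma_1}[\MEMDP[e],\cdot]+(1-t)\Prb_q^{\sigma_2}[\MEMDP[e],\cdot]$.

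The explicit construction sets
\[
\sigma(\rho)(a) := \frac{t\,w_1(\rho)\sigma_1(\rho)(a) + (1-t)\,w_2(\rho)\sigma_2(\rho)(a)}{t\,w_1(\rho)+(1-t)\,w_2(\rho)},
\]
where $w_i(\rho)$ is the product of $\sigma_i$'s action probabilities along $\rho$. The transition probabilities cancel in the ratio, so the same $\sigma$ works for all environments simultaneously. This environment-independence is the key point. Once it holds, the value is linear in the mixture and $F$ is affine in $\sigma$ under mixing. The minimax theorem then gives $\inf_b \sup_\sigma F = \sup_\sigma \min_b F$, which concludes.

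I expect the main obstacle to be justifying the minimax step rigorously, since the strategy space carries no convenient topology. I will avoid topology on strategies by using a purely algebraic argument. Suppose $\sup_\sigma \min_e F(\sigma,\mathbb{1}_e) < v < \inf_b \sup_\sigma F(\sigma,b)$. Consider the convex set $C := \{x \in [0,1]^E \mid x \leq (\Prb_q^\sigma[\MEMDP[e],W])_e \text{ for some } \sigma\}$, which is convex because of the mixing closure above. By assumption $C$ is disjoint from the convex set $\{x \mid \forall e,\ x_e \geq v\}$. Separating the two in $\R^E$ yields a nonnegative normal vector, which normalizes to a belief $b$ with $\sup_\sigma F(\sigma,b) \leq v$. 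This contradicts the choice of $v$ and completes the proof.
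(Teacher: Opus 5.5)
Your proof is correct and follows the paper's route. It uses the trivial inequality, a minimax exchange between beliefs and strategies, and the environment-independent Kuhn-style construction that realizes a mixture of strategies by a single behavioural strategy (Lemma~\ref{lem:approx_value} in the paper, where the transition probabilities cancel exactly as in your formula). The only difference is how the exchange is justified: the paper invokes a Sion/von Neumann-type minimax theorem on countably supported mixed strategies and removes the mixing afterwards, whereas you use binary mixing up front to make the set of achievable value vectors convex and then separate it from $\{x \mid \forall e,\ x_e \geq v\}$ in $\R^E$, which is self-contained and avoids any topology on $\msf{Strat}(Q,A)$.
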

	
	\begin{example}
		We reconsider our running example and modify the two environments $E_1$ and $E_2$ as follows. In both environments, we set \(\alpha_0 = 1\), so the player is required to guess the duplicated card immediately after a single draw. If all other aspects of the environments remain unchanged, the \(\msf{uni}\)-value is \(\frac{2}{3}\); it is attained by the strategy that guesses the environment corresponding to the single observed card. Under a uniform prior over the two environments, the \(\msf{pr}\)-value coincides with this \(\frac{2}{3}\) value.
		We now introduce asymmetric environments. In \(E_1\), card 1 is duplicated twice and card 2 is duplicated once (hence \(\alpha_1 = \frac{3}{5}\)), while 
		in \(E_2\), card 2 is duplicated twice (hence \(\alpha_1 = \frac{1}{4}\)). In this setting, the \(\msf{uni}\)-value remains equal to \(\frac{2}{3}\); however, the infimum of the \(\msf{pr}\)-values is now achieved under a prior \(b\) that is slightly biased towards the less favorable environment, namely \(b(E_1) := \frac{5}{9}\), rather than under the uniform prior over environments.
		%
		%winning probabilities: if the duplicated card is $1$ and the player guesses correctly, then the probability of reaching $W$ is $\beta_1=\frac{1}{2}$; if the duplicated card is $2$ and the player guesses correctly, then the probability of reaching $W$ is $\beta_2=\frac{2}{4}$.
		%
		%Thus, unlike the previous setting, the player can no longer gather additional information and must decide after one observation. Under these assumptions, the prior that minimizes the value under the prior semantics is $(\dots,\dots)$.
	\end{example}
	
	To establish Theorem~\ref{thm:approx_value}, we introduce the notion of mixed strategies. Given a set of states $Q$ and set of actions $A$, a mixed strategy on $(Q,A)$ is a probability distribution $\tau \in \Dist(\msf{Strat}(Q,A))$ over the strategies (even though the set $\msf{Strat}(Q,A)$ is uncountable, the probability distributions that we consider have a countable support). We naturally define, in an MDP $\MDP = (Q,A,\delta)$, the probability measure $\Prb_\rho^\tau[\MDP,\cdot]: \msf{Borel}(Q) \to [0,1]$ induced by a mixed strategy $\tau \in \Dist(\msf{Strat}(Q,A))$: $\Prb_\rho^\tau[\MDP,\cdot] := \sum_{\sigma \in \msf{Strat}(Q,A)} \tau(\sigma) \cdot \Prb_\rho^\sigma[\MDP,\cdot]$. For all Borel objectives $W \in \msf{Borel}(Q)$, we naturally extend to mixed strategies $\tau \in \Dist(\msf{Strat}(Q,A))$ the $\msf{uni}$-values $\msf{val}^{\msf{uni}}_q(\MEMDP,\tau,W) \in [0,1]$ and $\msf{pr}$-values $\msf{val}^{\msf{uni}}_q(\MEMDP,b,\tau,W) \in [0,1]$ given a prior belief $b \in \Dist(E)$. Then, %
	%
	%As a first step towards proving Theorem~\ref{thm:approx_value}, 
	we can relate the supremum $\msf{uni}$-values that mixed strategies can achieve with the infimum $\msf{pr}$-values over all prior beliefs. %This is a direct consequence of a standard generalization of von Neuman's minimax theorem. This is formally stated in the lemma below.
	\begin{lemma}[Proof~\ref{proof:lem_vonBeuman}]
		\label{lem:vonNeuman}
		Consider an MEMDP $\MEMDP$, a state $q \in Q$, and a parity objective $W$. We have: $\sup_{\tau \in \Dist(\msf{Strat}(Q,A))} \inf_{b \in \Dist(E)} \msf{val}^{\msf{pr}}_q(\MEMDP,b,\tau,W) = \inf_{b \in \Dist(E)} \sup_{\tau \in \Dist(\msf{Strat}(Q,A))}$ $\msf{val}^{\msf{pr}}_q(\MEMDP,b,\tau,W)$. Therefore: $ \sup_{\tau \in \Dist(\msf{Strat}(Q,A))} \msf{val}^{\msf{uni}}_q(\MEMDP,\tau,W) = \inf_{b \in \Dist(E)} \msf{val}^{\msf{pr}}_q(\MEMDP,b,W)$.
	\end{lemma}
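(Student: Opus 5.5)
The plan is to apply a minimax theorem to the bilinear payoff $(\tau,b) \mapsto \msf{val}^{\msf{pr}}_q(\MEMDP,b,\tau,W) = \sum_{e \in E} b(e) \cdot \Prb_q^\tau[\MEMDP[e],W]$, where one side ranges over the compact convex simplex $\Dist(E)$. The payoff is affine in $b$. It is also affine in $\tau$: for a convex combination of mixed strategies (again countably supported), $\Prb_q^\tau[\MEMDP[e],W]$ is linear by definition. Hence the set of achievable payoff vectors
\begin{equation*}
C := \{ (\Prb_q^\tau[\MEMDP[e],W])_{e \in E} \mid \tau \in \Dist(\msf{Strat}(Q,A)) \} \subseteq [0,1]^E
\end{equation*}
is convex, and the payoff equals $\langle b, c \rangle$ with $c \in C$.

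First, I will prove the equality $\sup_\tau \inf_b = \inf_b \sup_\tau$ using only finite-dimensional convexity, so that no topology on strategy space is needed. I will work with the closure $\bar C$, which is compact and convex in $\R^E$. Replacing $C$ by $\bar C$ changes neither side: $\inf_b \langle b,\cdot\rangle$ is continuous (1-Lipschitz in sup-norm), and $\sup$ over $C$ and over $\bar C$ agree. Then von Neumann/Sion's minimax theorem for the bilinear form on the compact convex sets $\Dist(E) \times \bar C$ gives the swap. Alternatively, I can argue directly by separating hyperplanes. Let $v := \inf_b \sup_{c \in C} \langle b,c\rangle$. If no point of $\bar C$ had all coordinates $\geq v - \eta$, the convex set $\bar C$ would be disjoint from the convex set $\{x \mid x \geq (v-\eta)\mathbf{1}\}$. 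A separating vector can be taken nonnegative, since the second set is upward closed, and normalized to some $b \in \Dist(E)$. This would give $\sup_{c}\langle b,c\rangle < v$, a contradiction. The inequality $\sup\inf \leq \inf\sup$ is trivial.

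Second, I will deduce the ``therefore'' part. For fixed $\tau$, the map $b \mapsto \sum_e b(e)\Prb^\tau_q[\MEMDP[e],W]$ is linear, so its infimum over the simplex is attained at a Dirac belief. This infimum is therefore $\min_e \Prb^\tau_q[\MEMDP[e],W] = \msf{val}^{\msf{uni}}_q(\MEMDP,\tau,W)$, which identifies the left-hand side. For fixed $b$, $\sup_\tau \msf{val}^{\msf{pr}}_q(\MEMDP,b,\tau,W) = \sup_\sigma \msf{val}^{\msf{pr}}_q(\MEMDP,b,\sigma,W) = \msf{val}^{\msf{pr}}_q(\MEMDP,b,W)$. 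Here $\geq$ holds because Dirac mixed strategies are pure strategies. For $\leq$, a mixed strategy's value is a countable convex combination of pure values, each at most the supremum.

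The main obstacle is the minimax step. A direct application of Sion's theorem would require a topology on $\Dist(\msf{Strat}(Q,A))$ making it compact, which is awkward given the countable-support restriction. Passing to the finite-dimensional vector set $\bar C$ sidesteps this. The one detail to check is that convexity of $C$ holds under countable supports, which is immediate since a finite mixture of countably supported distributions is countably supported.
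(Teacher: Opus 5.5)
Your proof is correct, and its overall structure matches the paper's. You swap $\sup_\tau$ and $\inf_b$ by a minimax argument. You then identify $\inf_b$ of the $\tau$-payoff with $\min_e$ (attained at Dirac beliefs), and $\sup_\tau$ with the supremum over pure strategies (countable convex combinations), exactly as the paper does.

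The difference lies only in how the minimax step is justified. The paper invokes a general statement: for a $[0,1]$-valued payoff on $X \times Y$, extended bilinearly to countably supported distributions, $\sup\inf=\inf\sup$ as soon as $X$ or $Y$ is finite. It cites von Neumann for the finite case and Sion for the infinite one. You instead project onto the set $C\subseteq[0,1]^E$ of achievable outcome vectors, which is convex. You pass to its compact closure without changing either side, then use the finite-dimensional minimax theorem or a separating hyperplane with nonnegative normal.

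Your route is self-contained. It proves exactly the instance of the ``standard generalization'' that is needed, and the closure handles cleanly the fact that $\sup_\tau$ need not be attained. The paper's route is shorter but leaves the reader to check Sion's hypotheses.

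One remark on your stated obstacle: Sion's theorem needs compactness on only one side, and here the simplex $\Dist(E)$ provides it. The strategy side only needs to be a convex subset of a topological vector space on which the payoff is upper semicontinuous. For instance, view countably supported distributions inside $\ell^1(\msf{Strat}(Q,A))$, where the payoff is $1$-Lipschitz. So a direct application of Sion is less awkward than you suggest, though your reduction avoids even this.
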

	\begin{proof}[Proof sktech]
		The first equality is a direct consequence of a standard generalization of von Neuman's minimax theorem \cite{von1947theory}, which holds because the set of environments is finite \cite{sion1958general}. Furthermore, for all non-empty sets $X$ and $f\colon X \to [0,1]$, we have $\sup_{d \in \Dist(X)} \sum_{x \in X} d(x) \cdot f(x) = \sup_{x \in X} f(x)$ and $\inf_{d \in \Dist(X)} \sum_{x \in X} d(x) \cdot f(x) = \inf_{x \in X} f(x)$. %(This also holds for infimum instead of supremum.) 
		Therefore, we have that for all $\tau \in \Dist(\msf{Strat}(Q,A))$, $\inf_{b \in \Dist(E)} \msf{val}^{\msf{pr}}_q(\MEMDP,b,\tau,W) = \msf{val}^{\msf{uni}}_q(\MEMDP,\tau,W)$; and for all $b \in \Dist(E)$, $\sup_{\tau \in \Dist(\msf{Strat}(Q,A))}$ $\msf{val}^{\msf{pr}}_q(\MEMDP,b,\tau,W) = \msf{val}^{\msf{pr}}_q(\MEMDP,b,W)$.
	\end{proof}
	
	To establish Theorem~\ref{thm:approx_value}, it is now sufficient to show that mixed strategies in $\Dist(\msf{Strat}(Q,A))$ do not achieve higher $\msf{uni}$-value than strategies in $\msf{Strat}(Q,A)$, as stated in the lemma below.
	\begin{lemma}[Proof~\ref{appen:mixed_strategies}]
		\label{lem:approx_value}
		Consider an MEMDP $\MEMDP$, a state $q \in Q$, and a parity objective $W \in \msf{Borel}(Q)$. We have: $ \sup_{\tau \in \Dist(\msf{Strat}(Q,A))} \msf{val}^{\msf{uni}}_q(\MEMDP,\tau,W) = \msf{val}_q^{\msf{uni}}(\MEMDP,W)$.
	\end{lemma}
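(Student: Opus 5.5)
The inequality $\geq$ holds because every pure strategy $\sigma$ is the Dirac mixed strategy on $\sigma$. For $\leq$, I fix a mixed strategy $\tau$ with countable support $\{\sigma_1,\sigma_2,\ldots\}$ and weights $\tau(\sigma_i)$, together with $\varepsilon>0$. The goal is a behavioural strategy $\sigma^\ast \in \msf{Strat}(Q,A)$ with $\msf{val}^{\msf{uni}}_q(\MEMDP,\sigma^\ast,W) \geq \msf{val}^{\msf{uni}}_q(\MEMDP,\tau,W) - \varepsilon$. Since the support may be infinite, I first truncate: pick $N$ with $\sum_{i>N}\tau(\sigma_i)\le \varepsilon$, and renormalise to a finite mixed strategy $\tau_N$. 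For each $e$, $\Prb^{\tau_N}_q[\MEMDP[e],W]\ge \Prb^{\tau}_q[\MEMDP[e],W]-\varepsilon$, so the universal value drops by at most $\varepsilon$. It therefore suffices to treat finite mixtures.

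The naive Kuhn-style idea does not work directly. In a single MDP, a mixture of strategies is realisation-equivalent to a behavioural strategy: play at run $\rho$ the action distribution $\sum_i w_i(\rho)\sigma_i(\rho)$, where $w_i(\rho)$ is the posterior weight of $\sigma_i$ given $\rho$. Here, however, that posterior depends on the environment only through factors $\delta_e(\cdot)$, which are common to all $\sigma_i$ and cancel. Indeed $w_i(\rho)\propto \tau(\sigma_i)\prod_{\pi\cdot(a,t)\sqsubseteq\rho}\sigma_i(\pi)(a)$, because the transition probabilities multiply every $i$ identically. Hence one behavioural strategy $\sigma^\ast$, defined independently of $e$, satisfies $\Prb^{\sigma^\ast}_q[\MEMDP[e],\cdot]=\Prb^{\tau_N}_q[\MEMDP[e],\cdot]$ simultaneously for every environment $e$.

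I would prove this equality by induction on cylinders, showing $\Prb^{\sigma^\ast}_q(\MEMDP[e],\pi)=\sum_i \tau(\sigma_i)\Prb^{\sigma_i}_q(\MEMDP[e],\pi)$ for all finite runs $\pi$. The inductive step unfolds the definition of $\Prb$ and uses the posterior formula; when the denominator is $0$, the run has probability $0$ under every $\sigma_i$ and $\sigma^\ast$ may be defined arbitrarily. The equality then extends from cylinders to all Borel sets by uniqueness of measure extension. This argument actually works for a countable mixture directly, so the truncation step is optional. It also makes the statement hold for every Borel $W$, not only parity objectives.

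The main obstacle is making sure the posterior weights are genuinely environment-independent, i.e. that the factor $\delta_e(\head{\pi},a)(t)$ cancels in numerator and denominator. I would handle this with an explicit lemma stating that $\Prb^{\sigma_i}_q(\MEMDP[e],\pi)=S_i(\pi)\cdot T_e(\pi)$, where $S_i$ is the product of the strategy choices along $\pi$ and $T_e$ is the product of the transition probabilities along $\pi$. The equality of measures then follows immediately for each $e$. Taking the minimum over $e$ gives $\msf{val}^{\msf{uni}}_q(\MEMDP,\sigma^\ast,W)=\msf{val}^{\msf{uni}}_q(\MEMDP,\tau,W)$, and hence the lemma.
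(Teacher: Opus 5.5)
Your proposal is correct and follows essentially the same route as the paper. Both define a behavioural strategy whose action distribution at $\rho$ is the posterior-weighted mixture of the $\sigma_i(\rho)$, observe that the transition factors cancel so the weights do not depend on the environment, and conclude by induction on cylinders plus Carath\'eodory's extension. The only cosmetic difference is that you define the weights directly from the strategy-only products, so no environment needs to be chosen and, as you note, the truncation step is superfluous. The paper instead writes the weights as $p[\sigma',e,\rho]/p[\tau,e,\rho]$ for some $e$ with $p[\tau,e,\rho]\neq 0$ and then checks that this ratio is independent of $e$.
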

	\begin{proof}[Proof sketch]
		Given any $\tau \in \Dist(\msf{Strat}(Q,A))$, we define %---as in the proof of \cite[Lemma 19]{DBLP:conf/icalp/Chatterjee0RS25}--- 
		$\sigma \in \msf{Strat}(Q,A)$ such that, for all $e \in E$ and Borel objectives $W \in \msf{Borel}(Q)$: $\Prb_q^\tau[\MEMDP[e],W] = \Prb_q^\sigma[\MEMDP[e],W]$. Thus, $\msf{val}^{\msf{uni}}_q(\MEMDP,\tau,W) = \msf{val}_q^{\msf{uni}}(\MEMDP,\sigma,W) \leq \msf{val}_q^{\msf{uni}}(\MEMDP,W)$. The lemma follows.
	\end{proof}
	
	\paragraph{How to use Theorem~\ref{thm:approx_value}.} Lemmas~\ref{lem:vonNeuman} and~\ref{lem:approx_value} together imply Theorem~\ref{thm:approx_value}. This theorem gives us that computing the $\msf{uni}$-value amounts to computing the infimum of $\msf{pr}$-values over all prior beliefs. Furthermore, Lemma~\ref{lem:small_belief_change_ok} gives us that $\msf{pr}$-values induced by two close beliefs are not far-off. Therefore, we can obtain an $\varepsilon$-approximation of the $\msf{uni}$-value by computing the $\msf{pr}$-value for enough prior beliefs that tightly cover the set of all beliefs. %given some $\varepsilon > 0$, if we compute $\varepsilon/2$-approximations of $\msf{pr}$-values w.r.t. beliefs in a set $S_{\varepsilon/2}$ %that $\varepsilon/2$-covers the set of all beliefs (i.e. such that for all beliefs $b$, there is a belief $b' \in S_{\varepsilon/2}$ such that $\msf{Diff}(b,b') \leq \varepsilon/2$, we obtain an $\varepsilon$-approximation of the $\msf{uni}$-value. 
	We deduce the theorem below that significantly improves the %(unconditional) 
	doubly-exponential-space complexity established in \cite{DBLP:conf/icalp/Chatterjee0RS25}. 
	\begin{theorem}[Proof~\ref{proof:thm_complexity_deciding_gap_problem_uni_val}]
		\label{thm:complexity_deciding_gap_problem_uni_val}
		In an MEMDP given a parity objective, a threshold $0 < \alpha < 1$ (in binary), and a precision $\varepsilon > 0$ (in binary), the gap problem with $\msf{uni}$-values can be decided in $\msf{EXPSPACE}$. If the probabilities are given in unary, the gap problem can be decided in $\msf{PSPACE}$.
	\end{theorem}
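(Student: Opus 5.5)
We reduce the $\msf{uni}$-gap problem to polynomially (or exponentially) many calls of the $\msf{pr}$-approximation procedure, Algorithm~\ref{algo:MEMDP_parity}, over a finite grid of prior beliefs.

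\textbf{Step 1: the grid.} Fix $\gamma := \varepsilon/4$. Let $N := \lceil |E|/(2\gamma) \rceil$ and
\[
S_N := \{ b \in \Dist(E) \mid N \cdot b(e) \in \N \text{ for all } e \in E \}.
\]
For every $b \in \Dist(E)$ there is $b' \in S_N$ with $|b(e) - b'(e)| \leq 1/N$ for all $e$. One way to get it is to round down all coordinates but one and put the remaining mass on a single coordinate; this yields $\msf{Diff}(b,b') \leq |E|/N \leq 2\gamma$ up to constants, and we adjust $N$ so that $\msf{Diff}(b,b') \leq \gamma$. By Lemma~\ref{lem:small_belief_change_ok},
\[
\Bigl|\inf_{b \in \Dist(E)} \msf{val}^{\msf{pr}}_q(\MEMDP,b,W) - \min_{b' \in S_N} \msf{val}^{\msf{pr}}_q(\MEMDP,b',W)\Bigr| \leq \gamma .
\]
By Theorem~\ref{thm:approx_value}, the infimum on the left equals $\msf{val}^{\msf{uni}}_q(\MEMDP,W)$.

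\textbf{Step 2: the algorithm.} Enumerate the beliefs $b' \in S_N$ one at a time, reusing space. For each, run Algorithm~\ref{algo:MEMDP_parity} with precision $\gamma$ to obtain $v_{b'}$ with $|v_{b'} - \msf{val}^{\msf{pr}}_q(\MEMDP,b',W)| \leq \gamma$, and keep the running minimum $v := \min_{b'} v_{b'}$. Then $|v - \msf{val}^{\msf{uni}}_q(\MEMDP,W)| \leq 2\gamma = \varepsilon/2$. Answer YES if $v \geq \alpha - \varepsilon/2$ and NO otherwise.
\begin{itemize}
\item If the value is at least $\alpha$, then $v \geq \alpha - \varepsilon/2$, so we answer YES.
\item If the value is at most $\alpha - \varepsilon$, then $v \leq \alpha - \varepsilon/2$; tightening to $\gamma := \varepsilon/5$ makes this inequality strict, so we answer NO.
\end{itemize}

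\textbf{Step 3: complexity.} Each $b' \in S_N$ has entries that are multiples of $1/N$, with $N = O(|E|/\varepsilon)$. Since $\varepsilon$ is given in binary, $x_{b'} = O(|E| \log N)$ is polynomial. The enumeration counter uses $O(|E| \log N)$ bits, which is also polynomial, even though $|S_N|$ itself is exponential. By the space bound of Algorithm~\ref{algo:MEMDP_parity}, one call uses
\[
O\bigl(poly(|Q|,|A|,|E|,|\log\gamma|,x_{b'},m(\MEMDP,\tfrac{\gamma}{3|E|}))\bigr)
\]
space. By Theorem~\ref{thm:main_enough_actions_small_belief}, $m = O(poly(|E|,2^l,\log(1/\gamma)))$. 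This is polynomial when probabilities are in unary, since then $2^l$ is polynomial in the input size, and exponential when they are in binary. Space is reused across iterations, so the whole procedure runs in $\msf{PSPACE}$ or $\msf{EXPSPACE}$ respectively, as for Theorem~\ref{thm:complexity_deciding_gap_problem}.

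\textbf{Main obstacle.} The main difficulty is the bookkeeping that keeps everything within the space bound. The grid must be fine enough for the Lipschitz bound, yet its elements must have polynomial bit size, so that $x_{b'}$ does not blow up and the enumeration fits in polynomial space. The error budget across Steps 1 and 2 must also be arranged so that the promise cases are strictly separated.
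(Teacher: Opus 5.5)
Your proposal is correct and is essentially the paper's proof. Both discretize $\Dist(E)$ into a grid of beliefs with polynomially many bits, fine enough that every belief is $\msf{Diff}$-close to a grid point, and use Lemma~\ref{lem:small_belief_change_ok} with Theorem~\ref{thm:approx_value} to get that the minimum $\msf{pr}$-value over the grid approximates the $\msf{uni}$-value. Both then enumerate the grid in polynomial space, running Algorithm~\ref{algo:MEMDP_parity} on each point and keeping the running minimum.

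The differences are cosmetic. Your grid has denominator $O(|E|/\varepsilon)$ rather than $|E|\cdot 2^{N+1}$, and your explicit error budget ($\gamma := \varepsilon/5$ with threshold $\alpha - \varepsilon/2$) states the separation of the two promise cases more explicitly than the paper, which stops at an $\varepsilon$-approximation of the $\msf{uni}$-value.
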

	\begin{proof}[Proof sketch.]
		Let $N := \lceil \log(\frac{1}{\varepsilon}) \rceil$, $k := |E|$, and $S := \{ b \in \Dist(E) \mid \forall e \in E,\; b(e) \in \{ \frac{x}{k \cdot 2^{N+1}} \mid 0 \leq x \leq k \cdot 2^{N+1}\} \}$. We have $|S| = O(2^{k^2 \cdot (N+1)})$ and, for all $b \in \Dist(E)$, there is $b' \in S$ such that $\msf{Diff}(b,b') \leq \frac{\varepsilon}{2}$. Enumerating all beliefs in $S$ can be done in space polynomial in $k$ and $N$. Furthermore, all the beliefs in $S$ are described with a number of bits polynomial in $k$ and $N$. Therefore, executing Algorithm~\ref{algo:MEMDP_parity} on any belief in $S$ with $\gamma := \varepsilon/2$ takes space exponential in the input (resp. polynomial in the input, if the probabilities are given in unary).
	\end{proof}
	%
	%The complexity of Theorem~\ref{thm:complexity_deciding_gap_problem_uni_val} significantly improves the (unconditional) doubly-exponential-space complexity established in \cite{DBLP:conf/icalp/Chatterjee0RS25}. 
	
	Our goal is now to use Theorem~\ref{thm:approx_value} to derive a lower bound on the complexity of approximating the $\msf{pr}$-value with probabilities written in unary, we have already established a polynomial space upper bound. It was shown in \cite[Theorem 26]{DBLP:journals/corr/RaskinS14} that the gap problem with $\msf{uni}$-values and reachability objectives in two-environment MEMDPs with probabilities written in unary is $\msf{NP}$-hard. Thus, we focus below on how to approximate $\msf{uni}$-values in MEMDPs with two-environments. %In fact, %we are able to establish the lemma below.%show that 
	As argued below, this can be done with only polynomially many calls to an oracle approximating the $\msf{pr}$-value, which allows the transfer of the above $\msf{NP}$-hardness result.
	%The proof of Theorem~\ref{thm:complexity_deciding_gap_problem_uni_val} relies on an enumerative search over exponentially many beliefs. In fact, it is possible to use a binary-search-like approach to more efficiently find, among these exponentially many beliefs, one that minimizes the $\msf{pr}$-value. This approach is still exponential in the number of environments, but is polynomial in the precision. This is particularly interesting when the number of environments is fixed.
	%
	%As stated in the proof sketch of Theorem~\ref{thm:complexity_deciding_gap_problem_uni_val}, we can find a $\varepsilon$-approximation of the $\msf{uni}$-value with exponentially many calls Algorithm~\ref{algo:MEMDP_parity}. In fact, when the number of environments is fixed, we can find an approximate more efficiently, with only polynomially calls to Algorithm~\ref{algo:MEMDP_parity}. This is formally stated in the lemma below.
	\begin{proposition}[Proof~\ref{proof:lem_binary_search}]
		\label{prop:binary_search}
		In an MEMDP $\MEMDP$ with two environments, given a parity objective $W$, and a precision $\varepsilon > 0$, letting $N := \lceil \log(\frac{1}{\varepsilon}) \rceil$, we can compute a $\varepsilon$-approximation of the $\msf{uni}$-value in time polynomial in $N$, with $O(N)$ calls to an oracle %Algorithm~\ref{algo:MEMDP_parity} with $\gamma := \varepsilon/2$ and beliefs described with a number of bits polynomial in $N$.
		computing $\gamma$-approximation of the $\msf{pr}$-values on beliefs and $\gamma$ described with a number of bits polynomial in $N$.
	\end{proposition}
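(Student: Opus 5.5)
With two environments $E=\{e_1,e_2\}$, a belief is determined by $t := b(e_1)\in[0,1]$. Write $f(t) := \msf{val}^{\msf{pr}}_q(\MEMDP,b_t,W)$, where $b_t(e_1)=t$ and $b_t(e_2)=1-t$. By Theorem~\ref{thm:approx_value}, the $\msf{uni}$-value equals $\inf_{t\in[0,1]} f(t)$, so the task is to minimise a one-variable function approximately using $O(N)$ oracle calls. The plan is to exploit two structural properties of $f$. The first is convexity: for every strategy $\sigma$, the map $t \mapsto t\cdot\Prb_q^\sigma[\MEMDP[e_1],W] + (1-t)\cdot \Prb_q^\sigma[\MEMDP[e_2],W]$ is affine, so $f$, as a supremum of affine functions, is convex. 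The second is Lipschitz continuity: by Lemma~\ref{lem:small_belief_change_ok}, $|f(t)-f(t')|\le \msf{Diff}(b_t,b_{t'}) = |t-t'|$, so $f$ is 1-Lipschitz and hence continuous, and its infimum is a minimum attained on $[0,1]$.

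The algorithm is an approximate ternary search on $[0,1]$ using a $\gamma$-approximation oracle $\tilde f$ with $|\tilde f - f|\le\gamma$, where $\gamma$ is about $\varepsilon/4$. Maintain an interval $[l,r]$ containing a minimiser $t^*$. At each step, choose the points $m_1 = l + (r-l)/3$ and $m_2 = r-(r-l)/3$, query $\tilde f(m_1)$ and $\tilde f(m_2)$, and discard $[m_2,r]$ if $\tilde f(m_1)\le \tilde f(m_2)$, or $[l,m_1]$ otherwise. After $O(N)$ rounds the interval has length at most $\varepsilon/4$. The output is the minimum of all oracle answers observed, possibly together with $\tilde f$ at the final midpoint. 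All query points are dyadic-like rationals with $O(N)$ bits once each interval is rounded to a grid of mesh $2^{-(N+3)}$; the rounding costs at most $2^{-(N+3)}$ in value by the Lipschitz property. Running time is polynomial in $N$ apart from the oracle calls.

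The main obstacle is correctness of the search under approximate evaluations, because a comparison with error $2\gamma$ may discard the region that contains $t^*$. I would handle this with the following convexity argument. Suppose we discard $[m_2,r]$ but $t^*\in(m_2,r]$. Then $f$ is nonincreasing on $[l,t^*]$, so $f(m_1)\ge f(m_2)$. Since the test gave $\tilde f(m_1)\le\tilde f(m_2)$, we get $f(m_2)\le f(m_1)\le f(m_2)+2\gamma$. Convexity then bounds the drop of $f$ beyond $m_2$ by the slope between $m_1$ and $m_2$, which is at most $2\gamma/(m_2-m_1)$ in absolute value, so $f(m_2)-f(t^*) \le 2\gamma\cdot\frac{r-m_2}{m_2-m_1} = 2\gamma$. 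Consequently, a wrong discard costs at most $O(\gamma)$, and after that step the already-queried point $m_2$ is within $O(\gamma)$ of the optimum. Because the output takes the minimum over all queried values, later errors cannot destroy this witness. Each oracle value overestimates some true value $f(t)\ge \min f$ by at most $\gamma$, so the output is at least $\min f - \gamma$. In the other direction, either no wrong discard ever occurs, in which case the final small interval contains $t^*$ and Lipschitz continuity gives a query within $\varepsilon/4+\gamma$ of the minimum, or the first wrong discard leaves a witness within $3\gamma$. Choosing $\gamma=\varepsilon/4$, with the constants adjusted so that all errors sum to at most $\varepsilon$, yields an $\varepsilon$-approximation using $O(N)$ oracle calls on beliefs and a $\gamma$ described with polynomially many bits in $N$.
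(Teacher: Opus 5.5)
Your proof is correct, but it takes a different route from the paper.

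\textbf{The paper's argument.} The paper uses only quasi-convexity, $f(y)\le\max(f(x),f(z))$, to prune, and it prunes only when this is provably safe.
\begin{itemize}
\item It queries three points $y<u<z$: the two thirds of the interval and their midpoint, each with oracle precision $\varepsilon^2/48$.
\item It discards a third of the interval only when two approximate values differ by more than $\varepsilon^2/24$. This maintains the exact invariant that the infimum over the current interval equals $\inf f$.
\item Otherwise it stops and returns $a_u$. This is justified by a separate flatness lemma: if $u$ is the midpoint of $[y,z]$ and $|f(u)-f(y)|,|f(u)-f(z)|\le\eta$, then $f(u)-\inf f\le 2\eta/(z-y)$. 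The lemma is proved by taking a near-optimal strategy at $u$ and bounding the slope of its affine payoff line.
\item Since $z-y$ can be as small as about $\varepsilon/6$, this forces the quadratic precision $\varepsilon^2/48$.
\end{itemize}

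\textbf{Your argument.} You use full convexity of $f$ as a supremum of affine maps, and you let the ternary search make mistakes. You bound the cost of the first wrong discard by $2\gamma$ via monotonicity of chord slopes, using $r-m_2=m_2-m_1$. You make the output robust by returning the minimum of all observed oracle values.

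\textbf{Comparison.} Your version needs only precision $\Theta(\varepsilon)$, two oracle calls per round, and no termination certificate. Its analysis is a one-line convexity inequality instead of a case analysis over three points. The paper's version keeps an exact invariant, which gives a clean induction, at the price of coarser pruning and finer oracle precision. Both yield $O(N)$ oracle calls on beliefs with $O(N)$ bits.

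\textbf{Two minor points.}
\begin{itemize}
\item Rounding to a grid of mesh $2^{-(N+3)}$ is unnecessary. Without rounding, every query point has the form $k/3^j$ or $k/(2\cdot 3^j)$ with $j=O(N)$, so it already has $O(N)$ bits. Your justification for rounding also targets the wrong issue. The oracle is evaluated at the actual rounded points, so no value is lost. What rounding does break is the equality $r-m_2=m_2-m_1$ that your wrong-discard bound relies on, and in the final rounds the thirds are comparable to the mesh. Simply drop the rounding.
\item To guarantee that $\gamma$ has $O(N)$ bits however $\varepsilon$ is written, take $\gamma:=2^{-(N+2)}\le\varepsilon/4$ instead of $\varepsilon/4$.
\end{itemize}
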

	\begin{proof}[Proof sketch]
		Let $E = \{e_1,e_2\}$. We consider $f\colon [0,1] \to [0,1]$ such that, for all $x \in [0,1]$, $f(x) := \msf{val}^{\msf{pr}}_q(\MEMDP,b_x,W)$, with $b_x \in \Dist(E)$ such that $b_x(e_1) := e$. By Theorem~\ref{thm:approx_value}, the infimum $f$-value is the $\msf{uni}$-value. The function $f$ is quasi-convex: for all $x < y < z \in [0,1]$: $f(y) \leq \max(f(x),f(z))$. This implies that, for all $x < y < z < t$, if $f(z) < f(y)$, then $\inf_{u \in [x,t]} f(u) = \inf_{u \in [y,t]} f(u)$, and  if $f(y) < f(z)$, $\inf_{u \in [x,t]} f(u) = \inf_{u \in [x,z]} f(u)$. We design a binary-search-like procedure based on this observation that finds an approximation of the minimal $f$-value by searching in intervals $[x,t]$ of decreasing length until that length becomes small enough (which is sufficient because $f$ is 1-Lipschitz continuous% by Lemma~\ref{lem:small_belief_change_ok}
		). %As we have only access to an approximation of $f$, we have to carefully design this binary-search-like procedure.
	\end{proof}
	
	%Since our algorithm computing approximated $\msf{pr}$-values already executes in the best case in polynomial space, Proposition~\ref{prop:binary_search} cannot yet be used to improve the complexity established in Theorem~\ref{thm:complexity_deciding_gap_problem_uni_val}. However it can straightforwardly be used to translate an $\msf{NP}$-hardness result. %of solving the gap problem with $\msf{uni}$-values to the hardness of solving the gap problem with $\msf{pr}$-values 
	%We obtain the corollary below.
	\begin{corollary}[Proof~\ref{proof:cor_complexity_deciding_gap_problem_uni_val_hardness}]
		\label{coro:complexity_deciding_gap_problem_uni_val_hardness}
		In two-environment MEMDPs with probabilities in unary, deciding the $\msf{pr}$-value gap problem with parity objectives % in MEMDPs with two environments and probability distributions written in unary
		cannot be done in polynomial time unless $\msf{P} = \msf{NP}$.
	\end{corollary}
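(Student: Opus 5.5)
The argument is a polynomial-time Turing reduction from the NP-hard problem recalled above, \cite[Theorem 26]{DBLP:journals/corr/RaskinS14}: the gap problem with $\msf{uni}$-values and reachability objectives in two-environment MEMDPs with unary probabilities. Suppose, towards a contradiction, that some algorithm $\mathcal{A}$ decides the $\msf{pr}$-value gap problem in polynomial time on two-environment MEMDPs with unary probabilities and parity objectives. Reachability of a set of states is a parity objective via the labeling used in Example~\ref{ex:cardgame}: first make the target states absorbing, then give them label $2$ and all other states label $1$. So it suffices to solve the $\msf{uni}$-gap problem for such parity objectives in polynomial time.

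\textbf{Step 1: from a gap decider to an approximation oracle.} I will turn $\mathcal{A}$ into an oracle returning a $\gamma$-approximation of $\msf{val}^{\msf{pr}}_q(\MEMDP,b,W)$. The method is a binary search on the threshold $\alpha$, using $O(\log(1/\gamma))$ calls to $\mathcal{A}$ with precision $\gamma/2$. The invariant is an interval $[\ell,u]$ with $\msf{val}^{\msf{pr}} \in [\ell-\gamma/2,\, u]$ up to the allowed slack. At each step I query $\alpha = (\ell+u)/2$. On YES, the value exceeds $\alpha - \gamma/2$; on NO, the value is below $\alpha$. The answer is unreliable only inside the gap, and the $\gamma/2$ slack absorbs this. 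Since $\mathcal{A}$ requires $0<\alpha<1$, I clip thresholds to $[2^{-N-2}, 1-2^{-N-2}]$, which costs only $O(\gamma)$ accuracy.

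The beliefs and $\alpha$ are described with $\mathrm{poly}(N)$ bits, and the MEMDP itself is unchanged, so its probabilities stay in unary. Only $b$, $\alpha$ and $\varepsilon$ are in binary, exactly as in Theorem~\ref{thm:complexity_deciding_gap_problem}. Each call therefore runs in time polynomial in the input plus $N$.

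\textbf{Step 2: invoke Proposition~\ref{prop:binary_search}.} With this oracle, Proposition~\ref{prop:binary_search} computes an $\varepsilon/3$-approximation $v$ of $\msf{val}^{\msf{uni}}_q(\MEMDP,W)$. This takes $O(N)$ oracle calls on beliefs of polynomial bit-size, plus polynomial extra time. The total running time is polynomial.

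\textbf{Step 3: answer the $\msf{uni}$-gap instance.} Given an instance $(\alpha,\varepsilon)$, I answer YES iff $v \geq \alpha - \varepsilon/2$. If $\msf{val}^{\msf{uni}} \geq \alpha$, then $v \geq \alpha - \varepsilon/3$, so the answer is YES. If $\msf{val}^{\msf{uni}} \leq \alpha - \varepsilon$, then $v \leq \alpha - 2\varepsilon/3$, so the answer is NO.

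This gives a polynomial-time algorithm for an NP-hard promise problem, hence $\msf{P}=\msf{NP}$.

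\textbf{Main obstacle.} The delicate point is checking that the NP-hardness of \cite{DBLP:journals/corr/RaskinS14} holds for precisions $\varepsilon$ whose bit-size is polynomial; otherwise $N$ is not polynomial. I would check that the reduction uses a fixed or inverse-polynomial gap, or at least one writable in polynomially many bits, so that $N$ is polynomial in the input size. A second point is that the reduction is a Turing reduction rather than a many-one reduction. It still yields $\msf{P}=\msf{NP}$, because a polynomial-time decision procedure for an NP-hard promise problem gives one for SAT.
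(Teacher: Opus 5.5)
Your proposal is correct and is essentially the paper's proof. A binary search on the threshold turns a polynomial-time $\msf{pr}$-gap decider into a $\gamma$-approximation oracle, Proposition~\ref{prop:binary_search} turns that oracle into an approximation of the $\msf{uni}$-value with $O(N)$ calls, and this solves the $\msf{NP}$-hard $\msf{uni}$-gap problem of \cite[Theorem 26]{DBLP:journals/corr/RaskinS14}. The point you flag as the main obstacle resolves itself: $\varepsilon$ is part of the input in binary, so $N = \lceil \log(1/\varepsilon) \rceil$ is bounded by the input size.
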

	%\begin{proof}[Proof sketch]
	%	It was shown in \cite[Theorem 26]{DBLP:journals/corr/RaskinS14} that the gap problem with $\msf{uni}$-values and reachability objectives in two-environment MEMDPs with probabilities written in unary, is $\msf{NP}$-hard. By Proposition~\ref{prop:binary_search}, this gap problem could be solved in polynomial time if we could solve in polynomial time the gap problem with $\msf{pr}$-values and parity objectives in two-environment MEMDPs with probabilities written in unary. The corollary follows.
	%\end{proof}
	
	\section{Characterization of MEMDPs with entropy}
	\label{sec:entropy}
	When playing in an MEMDP, we have only partial information about where we are: we know the current state, but we do not know the operating environment. In fact, MEMDPs are a special kind of Partially Observable MDPs (POMDP for short), i.e. MDPs in which we play on an underlying set of states to which we have only indirect access via an observation that may be identical for different states. Formally, a POMDP is a tuple $\POMDP = (Q,A,\delta,\Omega,O)$ where $(Q,A,\delta)$ is an MDP, $\Omega$ is a non-empty set of observations, and $O\colon Q \to \Omega$. An MEMDP $\MEMDP$ naturally induces the POMDP $\POMDP(\MEMDP) = (Q',A,\delta,\Omega,O)$ such that $Q' := Q \times E$; for all $(q,e,q',a) \in Q \times E \times Q \times A$, $\delta((q,e),a)((q',e)) := \delta_e(q,a)(q')$; $\Omega := Q$; and for all $(q,e) \in Q \times E$, $O(q,e) := q$. Strategies in POMDPs are functions in $\msf{Strat}(\Omega,A)$. Given any strategy $\sigma \in \msf{Strat}(\Omega,A)$, for any state $q \in Q$, we naturally define the probability measure $\Prb^{\sigma}_q[\POMDP,\cdot]\colon \msf{Borel}(Q) \to [0,1]$. Given $b \in \Dist(Q)$ and $W \in \msf{Borel}(Q)$, we let $\msf{val}(\POMDP,b,W) := \sup_{\sigma \in \msf{Strat}(O,A)} \sum_{q \in Q} b(q) \cdot \Prb_q^\sigma[\POMDP,W]$.
	
	The goal of this section is to characterize MEMDPs among POMDPs. To do so, we study the belief in PODMPs. When playing in a POMDP, we start with an initial belief about the current state\footnote{In POMDPs, we always have an initial belief about the current state. That is why, although MEMDPs in the \textquotedblleft{}prior\textquotedblright{} semantics are special kinds POMDPs, it is not the case of MEMDPs in the \textquotedblleft{}universal\textquotedblright{} semantics.% These two models are incomparable.
	}; that belief is updated according to the actions played and observations gathered. %Note that, contrary to the environment belief update in MEMDPs,  (i.e. ), and it is crucial to update that belief along the play as a function of the  (similarly to Definition~\ref{def:belief_env_update}).
	
	\begin{definition}[Belief in POMDP]
		Consider a POMDP $\POMDP = (Q,A,\delta,\Omega,O)$. A \emph{belief} is a probability distribution $b \in \Dist(Q)$. Given $b \in \Dist(Q)$, $a \in A$, and $o \in \Omega$, we let $p(b,a,o) \in [0,1]$ denote the likelihood of $o$ given $(b,a)$: $p(b,a,o) := \sum_{q \in O^{-1}(o)} \sum_{q' \in Q} b(q') \cdot \delta(q',a)(q)$. 
		
		We let $\msf{Comp}(b,a) := \{ o \in \Omega \mid p(b,a,o) > 0 \} \neq \emptyset$ denote of the set of observations \emph{compatible} with $(b,a)$. For all $o \in \msf{Comp}(b,a)$, we let $\lambda[b,a,o] \in \Dist(Q)$ denote the updated belief, defined by for $q \in Q$: $\lambda[b,a,o](q) := 0$ if $O(q) \neq o$; $\lambda[b,a,o](q) := \frac{\sum_{q' \in Q} b(q') \cdot \delta(q',a)(q)}{p(b,a,o)}$ otherwise.
	\end{definition}
	
	When playing in an MEMDP, if we ever know for sure the current environment (i.e. the environment belief is Dirac), then this will never change. POMDPs induced by MEMDPs are said to be \emph{Dirac-preserving}, i.e. for all $(q,a) \in Q \times A$ (with $q$ seen as a Dirac belief), we have: $\forall o \in \msf{Comp}(q,a):\; |\Supp(\lambda[q,a,o])| = 1$.
	
	%Arbitrary POMDPs are not Dirac-preserving. 
	As pointed out in \cite{DBLP:conf/aips/ChatterjeeCK0R20}, there is an alternative way to capture the behavior of POMDPs induced by MEMDPs via the (information theory) notion of entropy \cite{DBLP:journals/sigmobile/Shannon01} of a belief. Formally, the entropy $H(b)$ of a belief $b \in \Dist(Q)$ is defined by $H(b) := - \sum_{q \in Q} b(q) \log(b(q)) \geq 0$. The entropy of a belief is a measure of the amount of information carried by that belief: the higher the entropy, the less carried information. A belief is Dirac if and only if its entropy is null; the maximal value of the entropy is $\log(|Q|)$, it is achieved by uniform probability distributions. 
	
	In MEMDPs, the amount of information we have about the current environment never shrinks: the more distinguishing state-action pairs we visit, the better we know in which environment we are playing. As established in \cite{DBLP:conf/aips/ChatterjeeCK0R20}, this corresponds to the fact that POMDPs induced by MEMDPs have \emph{non-increasing entropy} i.e. they are such that, for all $b \in \Dist(Q)$ and $a \in A$: $H(b) \geq \sum_{o \in \msf{Comp}(b,a)} p(b,a,o) \cdot H(\lambda[q,a,o])$.
	
	Clearly, POMDPs with non-increasing entropy are Dirac-preserving% (it suffices to apply the above inequality to Dirac beliefs)
	. In fact, the converse is also true: all Dirac-preserving POMDPs have non-increasing entropy. (This is harder to show.)
	\begin{proposition}[Proof~\ref{proof:prop_entropy}]
		\label{prop:entropy}
		A POMDP has non-increasing entropy iff it is Dirac-preserving. 
	\end{proposition}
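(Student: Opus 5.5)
The plan is to prove the two implications separately. The forward direction is immediate from Dirac beliefs, and the converse follows from the chain rule for conditional entropy.

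\emph{Non-increasing entropy implies Dirac-preserving.} Let $q \in Q$ and $a \in A$, and view $q$ as the Dirac belief on $q$, so that $H(q) = 0$. The non-increasing entropy inequality gives $0 \geq \sum_{o \in \msf{Comp}(q,a)} p(q,a,o) \cdot H(\lambda[q,a,o])$. Each term in this sum is non-negative, and each weight $p(q,a,o)$ is positive for $o \in \msf{Comp}(q,a)$. Hence $H(\lambda[q,a,o]) = 0$ for every compatible $o$. A belief has null entropy exactly when it is Dirac, so $|\Supp(\lambda[q,a,o])| = 1$, which is Dirac-preservation.

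\emph{Dirac-preserving implies non-increasing entropy.} Fix $b \in \Dist(Q)$ and $a \in A$. I would introduce a joint probability space with three random variables:
\begin{itemize}
\item $X \sim b$, the current state;
\item $Y$, the next state, drawn according to $\delta(X,a)$;
\item $O := O(Y)$, the observation.
\end{itemize}
I would then record two identifications, both checked by unfolding the definitions:
\begin{itemize}
\item $\Prb(O = o) = p(b,a,o)$;
\item the conditional law of $Y$ given $O = o$ is exactly $\lambda[b,a,o]$.
\end{itemize}
Together these give $\sum_{o \in \msf{Comp}(b,a)} p(b,a,o) \cdot H(\lambda[b,a,o]) = H(Y \mid O)$.

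The key consequence of Dirac-preservation is that $Y$ is almost surely a function of $(X,O)$. Indeed, suppose $X = q$ and $O = o$ occur with positive probability. Then $b(q) > 0$ and $o \in \msf{Comp}(q,a)$. By Dirac-preservation, the support of $\lambda[q,a,o]$ is a single state $s(q,a,o)$, and this is the only state of $O^{-1}(o)$ reachable from $q$ with $a$. Hence $Y = s(X,O)$ almost surely, and so $H(Y \mid X, O) = 0$.

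I would then chain standard entropy inequalities:
\[
H(Y \mid O) \leq H(X,Y \mid O) = H(X \mid O) + H(Y \mid X,O) = H(X \mid O) \leq H(X) = H(b).
\]
The first step holds because adding a variable cannot decrease entropy, the middle equality is the chain rule, and the last inequality holds because conditioning reduces entropy. Combined with the identification of $H(Y \mid O)$ above, this yields the non-increasing entropy inequality.

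The main obstacle is making the identification of $\lambda[b,a,o]$ with the conditional law of $Y$ given $O = o$ rigorous, together with the almost-sure functional dependence $Y = s(X,O)$. Null-probability events must be handled carefully there: $s$ only needs to be defined when $b(q) > 0$ and $o \in \msf{Comp}(q,a)$. Once this is set up, the chain of inequalities is routine. The whole argument is finite-dimensional because $Q$ is finite, so all entropies involved are finite.
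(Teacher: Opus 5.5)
Your proof is correct and follows essentially the same route as the paper. The paper's auxiliary belief $\lambda'(b,a,o)(q) = b(q)\,p(q,a,o)/p(b,a,o)$ is exactly the conditional law of your $X$ given $O = o$, and its two displayed inequalities are your steps $H(Y \mid O) \leq H(X \mid O)$ and $H(X \mid O) \leq H(X)$. The paper obtains the first pointwise in $o$, by viewing $\lambda(b,a,o)$ as the image of $\lambda'(b,a,o)$ under $q' \mapsto \msf{sc}(q',a,o)$, and derives the second by hand via Jensen's inequality, where you instead invoke the standard chain rule and the fact that conditioning reduces entropy.
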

	
	The benefit of the above proposition is twofold. First, the notion of POMDPs with non-increasing entropy is natural, as this corresponds to POMDPs where the knowledge about the current state is monotonous. However, it is not clear from the definition of POMDPs with non-increasing entropy how to effectively decide if a POMDP satisfies this property. It is now apparent that this can be done in polynomial time, since checking that a POMDP is Dirac-preserving can be done by enumerating all triplets of states, actions and observations. 
	Second, given any Dirac-preserving POMDP with an observation-compatible parity objective\footnote{Observation-compatible parity objectives, or visible objectives, are a common assumption in POMDPs. They define objectives that can be observed by the player: after playing and observing the sequence of observations traversed during the play, the player can determine whether it has won or not; see for instance~\cite{DBLP:conf/mfcs/ChatterjeeDH10}.}, that is, a parity objective induced by a labelling function $f : Q \rightarrow \mathbb{N}$ such that $f(q_1)=f(q_2)$ whenever $O(q_1)=O(q_2)$, and given an initial belief, we can construct an exponentially larger MEMDP, together with a parity objective and an prior environment belief, such that the value in the POMDP coincides with the $\msf{pr}$-value in the MEMDP. This is formally stated below. %This shows that, up to an exponential blow-up, MEMDPs are exactly POMDPs with non-increasing entropy.
	\begin{theorem}[Proof~\ref{proof:thm-Dirac-preserving-pomdp-are-memdps}]
		\label{thm:Dirac-preserving-pomdp-are-memdps}
		Consider a Dirac-preserving POMDP $\POMDP = (Q,A,\delta,\Omega,O)$, an observation-compatible parity objective $W \in \msf{Borel}(Q)$, and an initial belief $b \in \Dist(Q)$. Then, we can compute in exponential time an MEMDP $\MEMDP = (Q',A,E,(\delta_e)_{e \in E})$, a parity objective $W' \in \msf{Borel}(Q')$, and an environment belief $b' \in \Dist(E)$ such that there is a distinguished state $q' \in Q'$ for which $\msf{val}(\POMDP,b,W) = \msf{val}^{\msf{pr}}_{q'}(\MEMDP,b',W')$.
	\end{theorem}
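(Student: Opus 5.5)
The plan is to take the environments to be the possible ``pure state-trajectories'' induced by the Dirac-preserving structure. Dirac-preservation says that from any state $q$ and action $a$, each compatible observation $o \in \msf{Comp}(q,a)$ leads to a \emph{unique} state, which we denote $s(q,a,o)$, with $O(s(q,a,o)) = o$. Hence, once the current hidden state is known, the next hidden state is a deterministic function of the action and the observation. The uncertainty is therefore only about which ``branch'' we are in, and a branch can be encoded as a function. Concretely, let $E$ be the set of functions $e\colon \Omega \to Q$ with $O(e(o)) = o$ for all $o$ in the image of $O$ (for $o$ outside the image, fix anything). There are at most $|Q|^{|\Omega|}$ of these, which is where the exponential blow-up comes from. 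Intuitively, $e$ says ``whenever you see observation $o$, the true state is $e(o)$''. This alone is not quite right, since different true states may share an observation at different times. So instead I would use environments that are functions $e\colon Q' \to Q$ mapping each \emph{MEMDP state} to its hidden POMDP state, and take the MEMDP states to be the observations.

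Set $Q' := \Omega \cup \{q'_0\}$, where $q'_0$ is a fresh initial state. The belief over hidden states at each moment is then a belief over which $e$ is true, evaluated at the current observation. Since the belief evolves by Bayes' rule, I would verify that the POMDP belief at observation history $h$ ending in $o$ equals the pushforward, under $e \mapsto e(o)$, of the MEMDP environment belief $\msf{lk}_{b'}(h)$.

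The transitions are defined as follows. For $o \in \Omega$, $a \in A$, $o' \in \Omega$, set $\delta_e(o,a)(o') := \delta(e(o),a)(e(o'))$ if $e(o') = s(e(o),a,o')$, and $0$ otherwise. This is not obviously a distribution: if $e(o') \neq s(e(o),a,o')$ for some compatible $o'$, mass is lost. This is the main obstacle. The environments must be \emph{consistent}, meaning they must be closed under the successor map, and a single function $o \mapsto q$ cannot simultaneously be consistent for all pairs $(o,a)$ when the same observation is reached from different states.

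My fix is to make the environment a full \emph{strategy-independent transducer}. Let $E := Q^{Q \times A \times \Omega}$ restricted to the true successor function $s$, together with a choice of the hidden state, and keep the hidden state inside the MEMDP state. That is, take MEMDP states $Q' := Q \times \ldots$. But the hidden state must remain unobservable, so this collapses back to the POMDP.

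The alternative I would actually pursue uses the set $P$ of sets of states that can jointly be in the support of some reachable belief; the MEMDP state is the support $\Supp(\text{belief})$ together with the current observation. The key fact is that under Dirac-preservation, the support evolves deterministically given $(a,o)$, and the map from old support to new support, restricted to the states of the old support that are compatible with $o$, is injective and partial. Enumerate the elements of each reachable support by an index $i \in \{1,\ldots,|Q|\}$, and let the environment $e = i$ mean ``the true state is the $i$-th element''. By injectivity, the index can be consistently carried along using a canonical relabeling fixed in advance per transition; indices of states that become incompatible get probability $0$, which is allowed. Then $\delta_i((S,o),a)((S',o')) := \delta(S[i],a)(S'[\pi(i)])$. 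The residual difficulty is that the relabeling $\pi$ must be the identity on indices, which I would enforce by indexing states of $S'$ by the index of their unique predecessor in $S$; injectivity from Dirac-preservation makes this well defined.

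Finally, the remaining pieces are straightforward. The prior is $b'(i) := b(S_0[i])$, with $S_0 := \Supp(b)$. The parity labels are inherited from the observation, which is well defined by observation-compatibility. The states $(S,o)$ are reachable supports, of which there are at most exponentially many, and they can be computed by graph exploration in exponential time. Value equality then follows by showing that, for every strategy on observation histories, the measure on observation sequences coincides in both models, by induction on the history length.
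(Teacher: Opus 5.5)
There is a genuine gap: your final construction depends on the map $q \mapsto \msf{sc}(q,a,o)$ from the current support to the next one being injective, and it is not.

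Dirac-preservation is only forward determinism: each $(q,a,o)$ with $o \in \msf{Comp}(q,a)$ has a unique successor. Nothing prevents two distinct states from having the same successor, i.e.\ $|\msf{Prec}(q',a,o)| \geq 2$ for the common successor $q'$. This merging is exactly what the paper must handle in the proof of Proposition~\ref{prop:entropy}. A minimal example:
\begin{itemize}
\item take $O(q_1) = O(q_2) = o_0 \neq o_1 = O(q_3)$ and a single action $a$ with $\delta(q_1,a)(q_3) = \delta(q_2,a)(q_3) = \delta(q_3,a)(q_3) = 1$;
\item from support $\{q_1,q_2\}$ both indices land on $q_3$;
\item so your state $(\{q_3\},o_1)$ has one element for two environments, and $\delta_2$ is undefined there.
\end{itemize}
More fundamentally, the environment of an MEMDP is fixed for the whole run, and $\delta_i$ may depend only on the current MEMDP state. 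Hence the state must determine, for every index $i$, the current hidden state under hypothesis $i$. A set-valued support, even with a canonical enumeration, does not do this, for two reasons: branches merge, and the index-to-state assignment depends on the history.

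The repair is as follows.
\begin{itemize}
\item Keep the index set equal to the initial support $S = \Supp(b)$. This is where your prior $b'(i) = b(S_0[i])$ naturally lives.
\item Take as MEMDP states the tuples $(t_s)_{s \in S} \in (Q \cup \{\bot\})^S$ sharing a common observation.
\item Update componentwise by $t[s] \mapsto \msf{sc}(t[s],a,o)$. Repeated entries are allowed when branches merge, and $t[s]$ becomes $\bot$ when $o \notin \msf{Comp}(t[s],a)$.
\item In environment $s$, the move from $t$ to $\msf{sc}(t,a,o)$ has probability $p[t[s],a,o]$.
\item States with $t[s] = \bot$ are unreachable in environment $s$. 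Send them to a sink of odd priority so that $\delta_s$ is a distribution everywhere. Your remark that dropped indices simply get probability $0$ does not address this.
\end{itemize}
This is precisely the paper's $T_S$ construction, with at most $(|Q|+1)^{|Q|}$ states.

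With that state space the rest of your plan goes through: prior $b$ on $E = S$, priorities read off the observation (well defined by observation-compatibility), and a strategy correspondence. The correspondence holds because, on histories of positive probability, the observation history and the $T_S$-history determine each other.
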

	%We could translate Dirac-preserving POMDPs with arbitrary parity objectives into MEMDPs, but this would come at the cost of defining a parity objective per environment in the MEMDP, which is a different (albeit interesting) setting than the one we have studied in this paper. 
	We obtain that, up to an exponential blow-up, MEMDPs are exactly POMDPs with non-increasing entropy. Furthermore, note that the limit-sure \cite{DBLP:conf/icalp/GimbertO10} and gap \cite{DBLP:journals/ai/MadaniHC03} decision problems on arbitrary POMDPs with observation-compatible parity objectives are undecidable. Thus, Dirac-preserving POMDPs constitute a particularly well-behaved subclass of POMDPs. 
	
	\newpage
	\bibliography{ref}
	\bibliographystyle{plain}

	\newpage
	\appendix
	\section{Complements on Section~\ref{sec:solve_gap_problem}}
	\label{appen:complement_section_3}
	
	\subsection{Proof of Proposition~\ref{prop:value_one}}
	\label{proof:prop_value_one}
	\begin{proof}
		If $\msf{val}^{\msf{uni}}_q(\MEMDP,W) = 1$, then for all $\varepsilon > 0$, there is $\sigma \in \msf{Strat}(Q,A)$ such that, for all $e \in E$, $\Prb_q^\sigma[\MEMDP[e],W] \geq 1 - \varepsilon$. We obtain:
		\begin{equation*}
			\msf{val}^{\msf{pr}}_q(\MEMDP,b,W) \geq \sum_{e \in E} b(e) \cdot \Prb_q^\sigma[\MEMDP[e],W] \geq \sum_{e \in E} b(e) \cdot (1 - \varepsilon) = 1 - \varepsilon
		\end{equation*}
		Thus, $\msf{val}^{\msf{pr}}_q(\MEMDP,b,W) = 1$. Assume now that $\msf{val}^{\msf{pr}}_q(\MEMDP,b,W) = 1$. Let $\varepsilon > 0$. Let $d := \min_{e \in E} b(e) > 0$. There is some $\sigma \in \msf{Strat}(Q,A)$ such that $\sum_{e \in E} b(e) \cdot \Prb_q^\sigma[\MEMDP[e],W] \geq 1 - d \cdot \varepsilon$. Assume towards a contradiction that there is some $e_m \in E$ such that $\Prb_q^\sigma[\MEMDP[e_m],W] < 1 - \varepsilon$. In that case:
		\begin{equation*}
			\sum_{e \in E} b(e) \cdot \Prb_q^\sigma[\MEMDP[e],W] < \sum_{e \in E \setminus \{e_m\}} b(e)+ b(e_m) \cdot (1 - \varepsilon) = 1 - b(e_m) \cdot \varepsilon \leq 1 - d \cdot \varepsilon 
		\end{equation*}
		Hence the contradiction. In fact, for all $e \in E$, we have $\Prb_q^\sigma[\MEMDP[e_m],W] \geq 1 - \varepsilon$. The existence of such a strategy $\sigma \in \msf{Strat}(Q,A)$ is ensured for all $\varepsilon > 0$, therefore we have $\msf{val}^{\msf{uni}}_q(\MEMDP,W) = 1$.
		
		Consider now a strategy $\sigma \in \msf{Strat}(Q,A)$. We have:
		\begin{align*}
			\msf{val}^{\msf{uni}}_q(\MEMDP,\sigma,W) = 1 & \Leftrightarrow \forall e \in E,\; \Prb_q^\sigma[\MEMDP[e],W] = 1 \Leftrightarrow \forall e \in E,\; b(e) \cdot \Prb_q^\sigma[\MEMDP[e],W] = b(e) \\
			& \Leftrightarrow \sum_{e \in E} b(e) \cdot \Prb_q^\sigma[\MEMDP[e],W] = 1 \Leftrightarrow \msf{val}^{\msf{pr}}_q(\MEMDP,b,\sigma,W) = 1
		\end{align*}
	\end{proof}
	
	\subsection{Proof of Lemma~\ref{lem:small_belief_change_ok}}
	\label{proof:lem_small_belief_change_ok}
	\begin{proof}
		First of all, note that:
		\begin{equation*}
			\sum_{\substack{e \in E \\ b(e) \geq b'(e)}} (b(e) - b'(e)) = \sum_{\substack{e \in E}} (b(e) - b'(e)) - \sum_{\substack{e \in E \\ b'(e) \geq b(e)}} (b(e) - b'(e)) = \sum_{\substack{e \in E \\ b'(e) \geq b(e)}} (b'(e) - b(e))
		\end{equation*}
		Hence:
		\begin{align*}
			\msf{Diff}(b,b') & = \frac{1}{2} \sum_{e \in E} |b(e) - b'(e)| = \frac{1}{2} \left(\sum_{\substack{e \in E \\ b(e) \geq b'(e)}} (b(e) - b'(e)) + \sum_{\substack{e \in E \\ b'(e) \geq b(e)}} (b'(e) - b(e))\right) \\
			& = \sum_{\substack{e \in E \\ b(e) \geq b'(e)}} (b(e) - b'(e)) = \sum_{\substack{e \in E \\ b'(e) \geq b(e)}} (b'(e) - b(e))
		\end{align*}
		
		Now, consider a strategy $\sigma \in \msf{Strat}(Q,A)$. We have:
		\begin{align*}
			\sum_{e \in E} b(e) \cdot \Prb_{q}^{\sigma}[\MEMDP[e],W] & = \sum_{e \in E} b'(e) \cdot \Prb_{q}^{\sigma}[\MEMDP[e],W] + \sum_{e \in E} (b(e) - b'(e)) \cdot \Prb_{q}^{\sigma}[\MEMDP[e],W] \\
			& \leq \msf{val}^{\msf{pr}}_q(\MEMDP,b',W) + \msf{Diff}(b,b') %= \msf{SymDiff}(b',b)
		\end{align*}
		As this holds for all strategies $\sigma \in \msf{Strat}(Q,A)$, we obtain: $ \msf{val}^{\msf{pr}}_q(\MEMDP,b,W) \leq \msf{val}^{\msf{pr}}_q(\MEMDP,b',W) + \msf{Diff}(b,b')$. By symmetry, we have: $\msf{val}^{\msf{pr}}_q(\MEMDP,b',W) \leq \msf{val}^{\msf{pr}}_q(\MEMDP,b,W) + \msf{Diff}(b,b')$. 
		
		Let $b \in \Dist(E)$ such that $|\Supp(b)| \geq 2$. Let $e \in E$ such that $\min_{e' \in E} b(e') = b(e)$. There is some $e' \in \Supp(b) \setminus \{ e \}$. We define $\msf{Truncate}(b) \in \Dist(E)$ as follows: $\msf{Truncate}(b)(e) := 0$; $\msf{Truncate}(b)(e') := b(e) + b(e')$, and for all $e'' \in E \setminus \{ e,e' \}$, $\msf{Truncate}(b)(e'') := b(e'')$. Then:
		\begin{align*}
			\msf{Diff}(b,\msf{Truncate}(b)) & = \frac{1}{2} \cdot \sum_{e'' \in E} |b(e'') - \msf{Truncate}(b)(e'')| \\
			& = \frac{1}{2} \cdot \left(|b(e) - \msf{Truncate}(b)(e)| + |b(e') - \msf{Truncate}(b)(e')|\right) \\
			& = b(e)
		\end{align*}
	\end{proof}
	
	\subsection{Proof of Theorem~\ref{thm:main_enough_actions_small_belief}}
	\label{proof:thm_main_enough_actions_small_belief}
	%For the proof of this theorem, we assume w.l.o.g. (up to merging identical environments) that for all $e \neq e' \in E$, we have $\delta_e \neq \delta_{e'}$. 
	
	\subsubsection{Definition of $m(\MEMDP,\varepsilon)$}
	\label{appen:def_bound}
	Let us introduce some notations to express the quantity $m(\MEMDP,\varepsilon)$. 
	\begin{definition}
		Consider an MEMDP $\MEMDP = (Q,A,E,(\delta_e)_{e \in E})$. For $m \in \{\max,\min\}$, we let:
		\begin{equation*}
			\msf{ratio}_{m}(\MEMDP) := m \left\{ \frac{\delta_{e}(q,a)(q')}{\delta_{e'}(q,a)(q')} \mid e,e' \in E,\; (q,a,q') \in Q \times A \times Q,\; \delta_{e'}(q,a)(q') > 0 \right\}
		\end{equation*}
		If the above set is empty, then we set $\msf{ratio}_{\min}(\MEMDP) = 1 = \msf{ratio}_{\max}(\MEMDP)$. If that is not case, we have $\msf{ratio}_{\min}(\MEMDP) < 1 < \msf{ratio}_{\max}(\MEMDP)$. We also let:
		\begin{equation*}
			\msf{ratio}_{\min}^{> 1}(\MEMDP) := \min \left\{ \frac{\delta_{e}(q,a)(q')}{\delta_{e'}(q,a)(q')} \mid e,e' \in E,\; (q,a,q') \in Q \times A \times Q,\; \delta_{e}(q,a)(q') > \delta_{e'}(q,a)(q') > 0 \right\} 
		\end{equation*}
		If the above set is empty, then we let $\msf{ratio}_{\min}^{> 1}(\MEMDP)  := 1$. Otherwise, we have $\msf{ratio}_{\min}^{> 1}(\MEMDP) > 1$.
		
		We also let 
		\begin{equation*}
			p_{\msf{min}}(\MEMDP) := \min \{ \delta_{e}(q,a)(q') \mid (q,a,q') \in Q \times A \times Q,\; \delta_{e}(t,a)(t') > 0 \} > 0
		\end{equation*}
		
		We let:
		\begin{equation*}
			\iota(\MEMDP) := \min\left(\left(\sqrt{\msf{ratio}_{\min}^{> 1}(\MEMDP)} - 1\right)^2,1\right) \text{ and }\eta(\MEMDP) := \log(1 - p_{\msf{min}(\MEMDP)} \cdot \iota(\MEMDP))
		\end{equation*}
		Note that, if $\msf{ratio}_{\min}^{> 1}(\MEMDP) = 1$, then $\iota(\MEMDP) = 0 = \eta(\MEMDP)$; otherwise $\iota(\MEMDP) > 0$ and $\eta(\MEMDP) <0$.
		
		Finally, we let:
		\begin{itemize}
			\item $n_1(\MEMDP,\varepsilon) := \left\lceil \frac{\log\left(\varepsilon/(2 \cdot |E|)\right)}{\log(1 - p_{\min}(\MEMDP))}\right\rceil$
			\item $n_2(\MEMDP,\varepsilon) := \left\lceil 2 |\log(\varepsilon)| + n_1(\MEMDP,\varepsilon) \cdot \log(\msf{ratio}_{\max}(\MEMDP))\right\rceil$
			\item If $\eta(\MEMDP) = 0$, then we let $n_3(\MEMDP,\varepsilon) := 0$, otherwise:
			\begin{equation*}
				n_3(\MEMDP,\varepsilon) := \left\lceil \frac{2 \cdot |\log(\varepsilon/(2 \cdot |E|))| \cdot  (\log(\frac{\msf{ratio}_{\max}(\MEMDP)}{\msf{ratio}_{\min}(\MEMDP)}))^2}{\eta(\MEMDP)^2} + \frac{2 \cdot n_2(\MEMDP,\varepsilon)}{|\eta(\MEMDP)|}\right\rceil
			\end{equation*}
			\item $m(\MEMDP,\varepsilon) := |E| \cdot ( n_1(\MEMDP,\varepsilon) + n_3(\MEMDP,\varepsilon))$
		\end{itemize}
	\end{definition}
	
	Let us show that $m(\MEMDP,\varepsilon)$ satisfies the equality stated in the theorem.
	\begin{lemma}
		Consider an MEMDP $\MEMDP = (Q,A,E,(\delta_e)_{e \in E})$. Let $k := |E|$, $l$ denote the maximum number of bits to write any probability occurring in $\MEMDP$, and $x := \lceil \log(\frac{1}{\varepsilon})\rceil$. Then, $m(\MEMDP,\varepsilon)$ defined above is such that $m(\MEMDP,\varepsilon) = O(poly(k,2^l,x))$.
	\end{lemma}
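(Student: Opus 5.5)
The plan is to bound each ingredient of $m(\MEMDP,\varepsilon)$ separately and then combine them. The bounds will be polynomial in $k$, in $2^l$ and in $x$. The basic observation is that every positive probability occurring in $\MEMDP$ is a rational written with at most $l$ bits. It is therefore at least $2^{-l}$ and at most $1$. Also, two distinct such probabilities differ by at least roughly $2^{-2l}$, since they are rationals whose denominators are at most $2^l$.

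\textbf{Step 1: bounds on the basic quantities.} From the basic observation:
\begin{itemize}
\item $p_{\min}(\MEMDP) \geq 2^{-l}$;
\item $\msf{ratio}_{\max}(\MEMDP) \leq 2^l$ and $\msf{ratio}_{\min}(\MEMDP) \geq 2^{-l}$, so $\log(\msf{ratio}_{\max}/\msf{ratio}_{\min}) \leq 2l$;
\item if $\delta_e(q,a)(q') > \delta_{e'}(q,a)(q') > 0$, then $\frac{\delta_e}{\delta_{e'}} - 1 = \frac{\delta_e - \delta_{e'}}{\delta_{e'}} \geq 2^{-2l}$, hence $\msf{ratio}^{>1}_{\min}(\MEMDP) \geq 1 + 2^{-2l}$.
\end{itemize}
Next, $\sqrt{1+y}-1 \geq y/3$ for $y \in (0,1]$. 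This gives $\iota(\MEMDP) \geq 2^{-4l}/9$ in the case $\eta \neq 0$.

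\textbf{Step 2: bound on $|\eta|$.} We have $|\eta(\MEMDP)| = |\log(1-p_{\min}\iota)| \geq p_{\min}\iota \geq 2^{-5l}/9$, using $-\log(1-z)\geq z$ (up to the constant factor of the logarithm base). Hence $1/|\eta|$ and $1/\eta^2$ are $O(2^{10l})$, which is polynomial in $2^l$.

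\textbf{Step 3: bounds on $n_1$, $n_2$, $n_3$.} For $n_1$: using $|\log(1-p_{\min})| \geq p_{\min} \geq 2^{-l}$, we get $n_1 \leq 2^l \cdot (\log(2k) + x) + 1$, which is $O(2^l(x + \log k))$. For $n_2$: we get $n_2 \leq 2x + n_1 \cdot l + 1$. Since $l \leq 2^l$, this is polynomial in $(k,2^l,x)$. For $n_3$: its first term is at most $2(x+\log(2k)) \cdot 4l^2 \cdot O(2^{10l})$, and its second term is at most $2 n_2 \cdot O(2^{5l})$. Both are polynomial.

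\textbf{Step 4: conclusion.} Finally, $m = k(n_1+n_3)$ is polynomial in $k$, $2^l$ and $x$. The edge cases are harmless: if $\eta = 0$ then $n_3 = 0$, and if the ratio sets are empty then $\log \msf{ratio} = 0$.

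The main obstacle is Step 1's lower bound on $\msf{ratio}^{>1}_{\min}-1$, which is needed to control $\iota$ and hence $1/|\eta|$. This relies on the convention that an $l$-bit probability is a fraction with numerator and denominator below $2^l$. I would state that convention explicitly. Then the difference of two distinct such fractions is at least $1/(2^l\cdot 2^l)$, which yields the claimed bound.
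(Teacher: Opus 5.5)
Your proposal is correct and follows essentially the same route as the paper's proof. You bound $p_{\min}(\MEMDP) \geq 2^{-l}$ and $\log(\msf{ratio}_{\max}(\MEMDP)) \leq l$, then show $\msf{ratio}^{>1}_{\min}(\MEMDP) - 1 \geq 2^{-2l}$, so that $\iota(\MEMDP) \geq 2^{-4l}/9$ and $|\eta(\MEMDP)| \geq 2^{-5l}/9$, and finally plug these bounds into $n_1$, $n_2$, $n_3$ and $m$.

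The differences are only cosmetic:
\begin{itemize}
\item You state the bit-size convention explicitly, which the paper leaves implicit.
\item You state the bound $\log(\msf{ratio}_{\max}/\msf{ratio}_{\min}) \leq 2l$, which the paper uses tacitly. Both of you read $\msf{ratio}_{\min}$ as ranging over pairs of positive probabilities; under the paper's literal definition the numerator may be $0$, so $\msf{ratio}_{\min}$ could be $0$.
\item You apply $\sqrt{1+y}-1 \geq y/3$ only for $y \leq 1$, whereas the paper splits on whether $\msf{ratio}^{>1}_{\min} \geq 4$. Your version is harmless, since $\sqrt{1+y}-1$ is increasing in $y$, so the bound at $y = 2^{-2l}$ covers all larger ratios.
\end{itemize}
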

	%To simplify, if $l$ is the number of bits to write a probability, then it may use $l$ bits to encode the numerator, and $l$ bits to encode the denominator. 
	\begin{proof}
		First, we have $%1 - p_{\msf{min}}(\MEMDP),
		p_{\msf{min}}(\MEMDP) \geq \frac{1}{2^l}$. Thus, $|\log(1 - p_{\min}(\MEMDP))| \geq p_{\min}(\MEMDP) \geq \frac{1}{2^l}$. Hence, $n_1(\MEMDP,\varepsilon) = O(2^l \cdot (x + k))$. Furthermore, $\log(\msf{ratio}_{\max}(\MEMDP)) \leq l$, thus $n_2(\MEMDP,\varepsilon) = O(2^l \cdot l \cdot (x + k))$.
		
		In addition, assume that $\eta(\MEMDP) > 0$, and thus $\msf{ratio}_{\min}^{> 1}(\MEMDP) > 1$. If $\msf{ratio}_{\min}^{> 1}(\MEMDP) \geq 4$, then $\iota(\MEMDP) = 1$; otherwise we have $\sqrt{\msf{ratio}_{\min}^{> 1}(\MEMDP)} - 1 = \frac{\msf{ratio}_{\min}^{> 1}(\MEMDP) - 1}{\sqrt{\msf{ratio}_{\min}^{> 1}(\MEMDP)} + 1} \geq \frac{1}{3}(\msf{ratio}_{\min}^{> 1}(\MEMDP) - 1)$, thus $\left(\sqrt{\msf{ratio}_{\min}^{> 1}(\MEMDP)} - 1\right)^2 \geq = %\msf{ratio}_{\min}^{> 1}(\MEMDP) - 2\sqrt{\msf{ratio}_{\min}^{> 1}(\MEMDP)} + 1 \geq \msf{ratio}_{\min}^{> 1}(\MEMDP) - 1
		\frac{1}{9}(\msf{ratio}_{\min}^{> 1}(\MEMDP) - 1)^2$. Since $\msf{ratio}_{\min}^{> 1}(\MEMDP)$ is equal to a ratio of two probabilities that can be encoded with $l$ bits, we have $\msf{ratio}_{\min}^{> 1}(\MEMDP) - 1 \geq \frac{1}{2^{2l}}$. Therefore, $|\eta(\MEMDP)| = |\log(1 - p_{\msf{min}(\MEMDP)} \cdot \iota(\MEMDP))| \geq p_{\msf{min}(\MEMDP)} \cdot \iota(\MEMDP) \geq \frac{1}{9} \cdot \frac{1}{2^{5l}}$. Hence, we have $n_3(\MEMDP,\varepsilon) = O(2^{10l} \cdot (4l)^2 \cdot (x + k) + 2^{6l} \cdot l \cdot (x + k))$. The lemma follows.
	\end{proof}
	
	\subsubsection{Proof of the theorem}
	\label{appen:actual_proof}
	We define the notion of run valuations (i.e. functions mapping runs to real values) in MDPs. Specifically, we proceed in two steps. First, we show that 
	if a history valuation is of a specific kind, then the probability that the sum of visited history values is far from the expected value of this sum is small, which is done by adapting to our setting Hoeffding's inequality. Second, we show that the result of Theorem~\ref{thm:main_enough_actions_small_belief} can be understood as a statement about this kind of history valuations. Let us start by formally introducing the notion of history valuations.
	
	\begin{definition}
		Consider an MDP $\MDP = (Q,A,\delta)$. Let $a,b \in \R$ such that $a < 0 < b$. An $[a,b]$-valuation is a function $v: Q \cdot (A \cdot Q)^* \to [a,b]$. Given $a \leq \eta \leq 0$, an $[a,b]$-valuation $v: Q \cdot (A \cdot Q)^* \to [a,b]$ is $\eta$-bounded if for all $\rho \in q \cdot (A \times Q)^*$ and $\alpha \in A$:
		\begin{itemize}
			\item either for all $q' \in Q$, we have $v(\rho \cdot (\alpha,q')) = 0$;
			\item or $\sum_{q' \in Q} \delta(q,\alpha)(q') \cdot v(\rho \cdot (\alpha,q')) \leq \eta$; we let $\msf{NZ}_v \subseteq Q \times A$ (\textquotedblleft{}Not-Zero\textquotedblright{}) denote the set of such state-action pairs $(q,\alpha)$.
		\end{itemize}
		We also introduce several other useful notations. First, we let:
		\begin{equation*}
			\msf{NZ}^{+}_v := \{ \pi = (q_0,a_0) \cdots (q_n,a_n) \in (Q \cdot A)^+ \mid (q_n,a_n) \in \msf{NZ}_v \text{ and }\forall i \leq n-1,\; (q_i,a_i) \notin \msf{NZ}_v \} 
		\end{equation*} 
		
		Furthermore, for all $\rho = (q_0,a_0) \cdots \in (Q \cdot A)^\omega$, we let:
		\begin{equation*}
			\msf{IndNZ}_v(\rho) := \{ i \in \N \mid (q_i,a_i) \in \msf{NZ}_v \}
		\end{equation*}
		For all $n \in \N$, we let $\msf{IndNZ}_v^n(\rho) \subseteq \msf{IndNZ}_v(\rho)$ denote the set of the $\min(n,|\msf{IndNZ}_v(\rho)|)$ smallest indices in $\msf{IndNZ}_v(\rho)$. %(Thus if $|\msf{IndNZ}_v(\rho)| \geq n$, then $|\msf{IndNZ}^n_v(\rho)| = |\msf{IndNZ}_v(\rho)|$.) 
		We also let $\msf{AZ}_v^\omega := \{ \rho \mid \msf{IndNZ}_v(\rho) = \emptyset \}$ (\textquotedblleft{}Always Zero\textquotedblright{}).
		
		Finally, for all $n \in \N$, we define the function $\msf{s}^n_v: (Q \cdot A)^\omega \to \R$ and $\msf{c}^n_v: (Q \cdot A)^\omega \to \N$ respectively summing the $n$ first values occurring after $\msf{NZ}_v$-actions and counting the number of visited $\msf{NZ}_v$-actions. Specifically, for all $\rho \in (Q \cdot A)^\omega$, we let:
		\begin{align*}
			\msf{s}_v^n(\rho) & := \sum_{i \in \msf{IndNZ}^n_v(\rho)} v(\rho_{\leq i}) \in [n \cdot a,n \cdot b] \\
			\msf{c}_v^n(\rho) & := |\msf{IndNZ}^n_v(\rho)| \leq n
		\end{align*}
		\iffalse
		\begin{align*}
			f_v^n(\rho) := \begin{cases}
				\sum_{i \in \msf{IndNZ}^n_v(\rho)} v(\rho_{\leq i}) & \text{ if }|\msf{Pos}^n_v(\rho)| = n \\
				- \infty & \text{ otherwise }  
			\end{cases}
		\end{align*}
		\fi
	\end{definition}
	
	\subsection{Adapting Hoeffding's lemma}
	\label{subproof:hoeffding}
	Now that we have introduced run valuations, our first goal is to establish the lemma below.
	\begin{lemma}
		\label{lem:small_proba_far_from_expected_value}
		Consider an MDP $\MDP = (Q,A,\delta)$ and a state $q \in Q$. Let $a < 0 < b \in \R$ and let $a \leq \eta \leq 0$. Consider an $[a,b]$-valuation $v: Q \cdot (A \cdot Q)^* \to [a,b]$ that is $\eta$-bounded. For all $n \in \N$, for all $t > 0$, for all strategies $\sigma \in \msf{Strat}(Q,A)$, we have:
		\begin{equation*}
			\Prb_q^\sigma[\MDP,\{ \msf{s}_v^n \geq n \cdot \eta + t \} \cap \{ \msf{c}_v^n = n \}] \leq \exp\left(-\frac{2t^2}{n (b-a)^2}\right)
		\end{equation*}
	\end{lemma}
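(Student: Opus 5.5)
The plan is to follow the Chernoff/exponential-moment proof of Hoeffding's inequality, with a supermartingale argument replacing independence. Fix $n$, $t>0$, $\sigma$, and a parameter $s>0$ to be optimized at the end. Markov's inequality on the exponential gives
\begin{equation*}
\Prb_q^\sigma[\MDP,\{ \msf{s}_v^n \geq n\eta + t\} \cap \{\msf{c}_v^n = n\}] \leq e^{-s(n\eta+t)}\, \Exp_q^\sigma\big[e^{s\,\msf{s}_v^n}\,\mathbb{1}_{\{\msf{c}_v^n = n\}}\big].
\end{equation*}
It then suffices to show $\Exp_q^\sigma[e^{s(\msf{s}_v^n - \eta\,\msf{c}_v^n)}\mathbb{1}_{\{\msf{c}_v^n=n\}}] \leq e^{n s^2 (b-a)^2/8}$. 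The usual choice $s = 4t/(n(b-a)^2)$ then yields the claimed bound $\exp(-2t^2/(n(b-a)^2))$.

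To bound this expectation I would apply Hoeffding's lemma conditionally, one step at a time. Fix a finite run $\rho$ ending in $q'$ and an action $\alpha$ with $(q',\alpha)\in\msf{NZ}_v$. Consider the random variable $X = v(\rho\cdot(\alpha,\cdot))$ under $\delta(q',\alpha)$. It takes values in $[a,b]$ and, by $\eta$-boundedness, has mean $\mu\leq\eta$. Hoeffding's lemma gives $\Exp[e^{s(X-\mu)}]\leq e^{s^2(b-a)^2/8}$, and hence $\Exp[e^{s(X-\eta)}]\leq e^{s^2(b-a)^2/8}$ since $s>0$ and $\mu\le\eta$. One point to handle carefully is that the definition writes $(q,a)\in\msf{NZ}_v$ as if it depended only on the last state and action, whereas $\eta$-boundedness is stated per history. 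I would read $\msf{NZ}_v$ as the set of pairs at which the non-zero alternative holds, and use only that, along any history, the step value is either identically zero or has conditional mean at most $\eta$. The counting $\msf{c}_v^n$ uses exactly the steps of the second kind.

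Next I would define the process $M_k := \exp\big(s\,(\msf{s}^{k}_v - \eta\,\msf{c}^k_v) - \msf{c}^k_v\, s^2(b-a)^2/8\big)$, or more concretely the analogous quantity evaluated along run prefixes, where the sum and count are truncated at the $n$-th $\msf{NZ}_v$-step. Conditioning on the history, each step either contributes nothing, because it is not an $\msf{NZ}_v$ step or $n$ such steps have already been counted, or multiplies $M$ by a factor with conditional expectation at most $1$. Here the strategy's randomization over actions only mixes these conditional bounds, which is why the argument works for every $\sigma$. So $(M_k)$ is a nonnegative supermartingale with $M_0=1$. Letting the horizon tend to infinity and applying Fatou's lemma gives $\Exp[M_\infty]\leq 1$. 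On $\{\msf{c}_v^n=n\}$ we have $M_\infty = e^{s(\msf{s}_v^n - n\eta) - n s^2(b-a)^2/8}$, which yields the desired moment bound. Dropping the indicator only helps, since $M_\infty\geq 0$ elsewhere.

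The main obstacle I expect is making the supermartingale rigorous on infinite runs in which the number of $\msf{NZ}_v$ steps is unbounded in time but possibly fewer than $n$. I would handle this by working with the finite-horizon truncations, for which there are finitely many steps, and bounding $\Exp[M_{T}]\le 1$ by backward induction on the finite run tree. The key step is to condition on the last step, apply Hoeffding's lemma there, and use the induction hypothesis. I would then pass to $T\to\infty$ using that $\{\msf{c}_v^n=n\}$ is the increasing union of the events that $n$ such steps occur before time $T$, together with monotone convergence on the indicator-weighted quantity. The rest is the routine optimization in $s$.
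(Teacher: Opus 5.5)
Your proof is correct. It uses the same ingredients as the paper: the exponential Markov bound, Hoeffding's lemma at each $\msf{NZ}_v$-step via $\mu\le\eta$, and the choice $s=\frac{4t}{n(b-a)^2}$. The bookkeeping is different, though.

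\emph{The paper's route.} It avoids indicators. It sets $f_v^{n,M}:=\msf{s}_v^n$ on $\{\msf{c}_v^n=n\}$ and $f_v^{n,M}:=-M$ elsewhere. It then proves by induction on $n$ that $\Exp_q^\sigma[\exp(s(f_v^{n,M}-n\eta))]$ is at most $\exp(n\frac{s^2(b-a)^2}{8})$ plus an error term $n\exp(s(-M+n\max(|a|,|b|))+n\frac{s^2(b-a)^2}{8})$. The induction splits runs at their \emph{first} $\msf{NZ}_v$-step (the countable family $\msf{NZ}_v^+$ together with $\msf{AZ}_v^\omega$), recurses on the suffix with the shifted strategy and the shifted penalty $M+v(\rho\cdot q')$, and finally lets $M\to\infty$ after Markov's inequality. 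This handles infinite runs through a countable decomposition, with no limit in time. The price is carrying the penalty term and applying the induction hypothesis to shifted strategies, and implicitly to shifted valuations, since $v$ depends on the whole history.

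\emph{Your route.} Your capped, time-indexed supermartingale replaces the penalty with the indicator and nonnegativity of $M_\infty$. It also makes the key inequality local: condition on the current history, average over the successor using Hoeffding's lemma, then average over the action chosen by $\sigma$. The cost is a passage from finite to infinite horizon. You handle this correctly: the capped sum and count change at most $n$ times along any run, and $\{\msf{c}_v^n=n\}$ is the increasing union over $T$ of the events that the $n$-th step occurs before time $T$.

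Two minor remarks:
\begin{itemize}
\item State $M_k$ directly in the time-indexed capped form you describe afterwards. Your first formula, with $\msf{s}^k_v,\msf{c}^k_v$, indexes by the number of $\msf{NZ}_v$-steps rather than by time.
\item The property you isolate, conditional mean at most $\eta$ after every history at a counted step, is exactly what the paper's proof relies on when it applies Hoeffding's lemma at each $\rho\in\msf{NZ}_v^+$.
\end{itemize}
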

	Since $\eta$ is an upper bound on the expected value of the valuation $v$ after visiting any $\msf{NZ}_v$-action, the event $\{ \msf{s}_v^n \geq n \cdot \eta + t \}$ corresponds to those infinite paths for which the sum of the $n$ first $v$-values occurring after $\msf{NZ}_v$-actions (computed by $\msf{s}_v^n$) is separated by at least $t$ from the expected value. 
	
	This lemma can be seen as an adaptation of Hoeffding's inequality to our setting of $\eta$-bounded run valuations. However, a priori, Hoeffding's inequality cannot be applied right away because the random variables describing the stochastic evolution of $\msf{s}_v^n$ are not independent. Nonetheless the proof of this lemma follows steps very similar to those taken to establish Hoeffding's inequality; in particular, it relies on Hoeffding's lemma, recalled below.
	\begin{lemma}[Hoeffding's lemma]
		\label{lem:hoeffding_s_lemma}
		Consider a finite set $Q$, some $a \leq b \in \R$, a valuation $v\colon Q \to [a,b]$ and a distribution $d \in \Dist(Q)$. Let $\mu := \sum_{q \in Q} d(q) \cdot v(q)$. For all $s > 0$, we have:
		\begin{equation*}
			\sum_{q \in Q} d(q) \cdot \exp(s(v(q) - \mu)) \leq \exp\left(\frac{s^2 (b-a)^2}{8}\right)
		\end{equation*}
	\end{lemma}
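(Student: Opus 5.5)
This is the classical Hoeffding's lemma, which involves no MEMDP structure. I would prove it with the standard convexity argument followed by a second-order Taylor bound on a log-moment generating function.

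First, the degenerate cases. If $a = b$, then $v(q) = \mu$ on $\Supp(d)$ and the left-hand side equals $1$, so the claim holds. Assume now $a < b$. Set $a' := a - \mu \le 0$ and $b' := b - \mu \ge 0$, so that $b' - a' = b - a$. Set $X(q) := v(q) - \mu \in [a',b']$. Then $\sum_q d(q) X(q) = 0$, and we must bound $\sum_q d(q) e^{sX(q)}$.

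The first step uses convexity of $x \mapsto e^{sx}$ on $[a',b']$. Writing each $x$ in that interval as a convex combination of the endpoints gives
\begin{equation*}
e^{sx} \le \frac{b' - x}{b' - a'} e^{s a'} + \frac{x - a'}{b'-a'} e^{s b'}.
\end{equation*}
Apply this with $x = X(q)$, multiply by $d(q)$, sum, and use $\sum_q d(q) X(q) = 0$. This yields
\begin{equation*}
\sum_q d(q) e^{sX(q)} \le \frac{b' e^{sa'} - a' e^{sb'}}{b'-a'}.
\end{equation*}
Here we use only that $d$ has countable support and total mass $1$, so all sums are legitimate.

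The second step rewrites this bound as $e^{L(h)}$. Let $p := -a'/(b-a) \in [0,1]$ and $h := s(b-a)$, and define
\begin{equation*}
L(h) := -hp + \log(1 - p + p e^{h}).
\end{equation*}
Then the right-hand side above equals $e^{L(h)}$; this is a direct check using $s a' = -hp$. Note that $1 - p + p e^h > 0$ for all $h$, so $L$ is well defined and smooth.

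The third step, which is the only one needing care, shows $L(h) \le h^2/8$. We have $L(0) = 0$. Next, $L'(h) = -p + u(h)$ with $u(h) := \frac{p e^h}{1-p+pe^h} \in [0,1]$, so $L'(0) = 0$. Finally, $L''(h) = u(h)(1-u(h)) \le \frac14$. Taylor's theorem with Lagrange remainder at $0$ then gives $L(h) \le \frac{h^2}{8} = \frac{s^2(b-a)^2}{8}$.

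Exponentiating concludes the proof. The main obstacle, such as it is, is recognising the right parametrisation $(p,h)$ so that the second derivative takes the form $u(1-u)$. Once that form appears, the constant $1/8$ follows immediately. Nothing in the argument uses $s > 0$ beyond $h$ being a real number.
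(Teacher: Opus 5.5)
Your proof is correct. It is the standard convexity-plus-Taylor argument for Hoeffding's lemma, and the paper does not prove this statement at all: it simply recalls the classical result of Hoeffding (cited in the main text), so there is no alternative route to compare against. One cosmetic point if you splice it in here: write $\ln$ instead of $\log$ in $L(h)$, since in this paper $\log$ denotes the base-$2$ logarithm, while your identity $e^{L(h)} = (1-p)e^{-hp} + p e^{h-hp}$ and your derivative computations need the natural logarithm.
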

	
	We now establish the lemma below, by induction by invoking the above lemma at each induction step.
	\begin{lemma}
		\label{lem:inequality_expected_value_hoeffding}
		Consider an MDP $\MDP = (Q,A,\delta)$ and a state $q \in Q$. Let $a < 0 < b \in \R$ and let $a \leq \eta \leq 0$. Consider an $[a,b]$-valuation $v: Q \cdot (A \times Q)^* \to [a,b]$ that is $\eta$-bounded. For all $n \in \N$ and $M \geq 0$, we define the function $f_v^{n,M}: (Q \cdot A)^\omega \to \R$ as follows, for all $\rho \in (Q \cdot A)^\omega$:
		\begin{align*}
			f_v^{n,M}(\rho) := \begin{cases}
				\msf{s}_v^n(\rho) & \text{ if }%|\msf{Pos}^n_v(\rho)| = n \\
					\msf{c}_v^n(\rho) = n \\
					%|\msf{IndNZ}_v(\rho)| = n \\
				- M & \text{ otherwise }  
			\end{cases}
		\end{align*}
		\iffalse
		\begin{align*}
			f_v^{n,M}(\rho) := \sum_{i \in \msf{IndNZ}^n_v(\rho)} v(\rho_{\leq i}) - M \cdot (n - |\msf{IndNZ}^n_v(\rho)|)
		\end{align*}
		\fi
		Let $s > 0$. For all $n \in \N$, for all $M \geq 0$, for all strategies $\sigma \in \msf{Strat}(Q,A)$, we have:
		\begin{equation*}
			\Exp_q^\sigma[\MDP,\exp(s(f_v^{n,M} - n \cdot \eta))] \leq \exp\left(n\frac{s^2 (b-a)^2}{8}\right) + n \cdot \exp\left(s\left(-M + n \cdot \max(|a|,|b|)\right) + n\frac{s^2 (b-a)^2}{8}\right)
		\end{equation*}
	\end{lemma}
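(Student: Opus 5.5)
We prove the bound by induction on $n$, simultaneously for all states $q$, all strategies $\sigma$, and all $\eta$-bounded valuations. To make the induction go through, we strengthen the statement to arbitrary starting histories: for any finite run $\rho$ and any shifted valuation $v_\rho(\pi) := v(\rho \cdot \pi)$, which is again $\eta$-bounded, the bound holds. The residual strategy $\sigma_\rho$ after $\rho$ is again a strategy, so the inductive hypothesis applies to it.

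\emph{Base case $n = 0$.} We have $f_v^{0,M} = \msf{s}_v^0 = 0$, since $\msf{c}_v^0 = 0$ always holds. The expectation therefore equals $1$, which is at most the right-hand side.

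\emph{Inductive step $n \geq 1$.} We decompose each infinite run according to its first visit to an $\msf{NZ}_v$-pair, i.e. according to the unique prefix in $\msf{NZ}_v^+$ if it exists.

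On runs in $\msf{AZ}_v^\omega$ we have $\msf{c}_v^n = 0 < n$, so $f_v^{n,M} = -M$. The integrand is then $\exp(s(-M - n\eta)) \leq \exp(s(-M + n|a|))$, using $\eta \geq a$. This quantity is dominated by one copy of the error term $\exp(s(-M + n\max(|a|,|b|)) + n s^2(b-a)^2/8)$.

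On runs whose first $\msf{NZ}_v$-pair is reached after a prefix $\pi$ ending in $(q_j,\alpha)$ with $(q_j,\alpha) \in \msf{NZ}_v$, we condition on $\pi$ and on the next state $q'$. Write $\rho'$ for the continuation from $\pi \cdot (\alpha,q')$. Two cases arise. If $\rho'$ visits at least $n-1$ further $\msf{NZ}$-pairs, then
\[
f_v^{n,M}(\rho) - n\eta = \bigl(v(\pi\cdot(\alpha,q')) - \eta\bigr) + \bigl(f_{v'}^{n-1,M}(\rho') - (n-1)\eta\bigr),
\]
where $v'$ is the shifted valuation. Otherwise $f_v^{n,M}(\rho) = -M$, and we bound this part separately using $-M - n\eta \leq -M + (n-1)|a| + |a|$.

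The main step is handling the first case. We first take the conditional expectation over the continuation, which by induction is at most $B_{n-1} := \exp((n-1)s^2(b-a)^2/8) + (n-1)\cdot\mathrm{err}_{n-1}$. We then need to control the remaining one-step factor
\[
\sum_{q'} \delta(q_j,\alpha)(q') \exp\bigl(s(v(\pi\cdot(\alpha,q')) - \eta)\bigr) \leq \sum_{q'} \delta(q_j,\alpha)(q') \exp\bigl(s(v(\pi\cdot(\alpha,q')) - \mu)\bigr),
\]
where $\mu := \sum_{q'} \delta(q_j,\alpha)(q') \, v(\pi\cdot(\alpha,q')) \leq \eta$. The inequality holds because $s > 0$ and $\mu \leq \eta$. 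By Hoeffding's lemma (Lemma~\ref{lem:hoeffding_s_lemma}), the right-hand side is at most $\exp(s^2(b-a)^2/8)$. Multiplying gives
\[
\exp\bigl(n s^2(b-a)^2/8\bigr) + (n-1)\cdot \mathrm{err}_n .
\]
This uses that $\mathrm{err}_{n-1}\exp(s^2(b-a)^2/8) \leq \mathrm{err}_n$, since the exponent $n\max(|a|,|b|)$ grows with $n$.

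\emph{Assembling the bound.} The main obstacle is the bookkeeping in the second case, where the continuation has fewer than $n-1$ further $\msf{NZ}$-pairs. There the induction hypothesis is applied with the $-M$ convention, but the first term $v(\pi\cdot(\alpha,q'))$ has been dropped. I would handle this by applying the inductive hypothesis to the combined function, i.e. with $f_{v'}^{n-1,M}$ equal to $-M$ on such runs. This introduces a discrepancy between $-M - n\eta$ and $v(\cdot) - \eta - M - (n-1)\eta$ of at most $|v(\cdot)| \leq \max(|a|,|b|)$. That discrepancy multiplies the error by $\exp(s\max(|a|,|b|))$, and it only affects terms that are already error terms; it is absorbed by the growth from $(n-1)\max(|a|,|b|)$ to $n\max(|a|,|b|)$.

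Together with the single extra error term coming from the $\msf{AZ}_v^\omega$ runs, this yields at most $n$ error terms in total. Summing over the disjoint prefixes in $\msf{NZ}_v^+$ with their probabilities, which total at most $1$, gives the claimed bound.
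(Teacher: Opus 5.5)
Your skeleton matches the paper's proof: induction on $n$, a split of runs at the first $\msf{NZ}_v$-pair, Hoeffding's lemma applied to the one-step factor averaged over $q'$, and the $\msf{AZ}_v^\omega$-runs charged to an error term. The gap is in the step you flag as the main obstacle. In your ``Assembling'' paragraph you apply the induction hypothesis to the combined function. You then assert that the discrepancy factor $\exp(-s\,v(\pi\cdot(\alpha,q'))) \le \exp(s\max(|a|,|b|))$ on short-continuation runs ``only affects terms that are already error terms''. The induction hypothesis does not support this. It only bounds the total $\Exp[\exp(s(f_{v'}^{n-1,M}-(n-1)\eta))]$, and says nothing about how much of $\exp((n-1)s^2(b-a)^2/8)+(n-1)\,\mathrm{err}_{n-1}$ comes from the runs where $f_{v'}^{n-1,M}=-M$. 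Here $\mathrm{err}_k := \exp(s(-M+k\max(|a|,|b|)) + k s^2(b-a)^2/8)$. When $v(\pi\cdot(\alpha,q'))<0$, the only legitimate use of that factor multiplies the whole bound, main term included, and this does not give the claimed inequality. Your final count of $n$ error terms therefore rests on an unjustified step.

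The repair uses the direct bound you already wrote in the ``two cases'' paragraph, plus one observation.

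1. On runs whose continuation has at least $n-1$ further $\msf{NZ}$-pairs, drop the indicator (the integrand is nonnegative). Then apply the induction hypothesis and Hoeffding's lemma as you do. Since the probabilities of the prefixes in $\msf{NZ}_v^+$ sum to at most $1$, this part contributes at most $\exp(ns^2(b-a)^2/8)+(n-1)\,\mathrm{err}_n$.
2. All remaining runs, namely the $\msf{AZ}_v^\omega$-runs together with the short-continuation runs, form exactly the event $\{\msf{c}_v^n<n\}$. On that event the integrand is the constant $\exp(s(-M-n\eta)) \le \exp(s(-M+n\max(|a|,|b|)))$. The event has probability at most $1$, so it costs a single error term, giving $n$ in total. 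Bounding the two pieces separately would give $n+1$.

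The paper avoids the case split altogether with the identity $f_v^{n,M}(\pi\cdot(\alpha,q')\cdot\rho') = v(\pi\cdot(\alpha,q')) + f_{v'}^{n-1,\,M+v(\pi\cdot(\alpha,q'))}(\rho')$. This holds on every run, because the $-M$ branch is shifted by exactly $v$. The induction hypothesis is then applied with parameter $M+v$, which is fine since the argument works for all real $M$. In the error term, the $-sv$ cancels the $+sv$ of the one-step factor, leaving $\exp(-s\eta)\le\exp(s\max(|a|,|b|))$.

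Your route does have one advantage. Your inductive bound is uniform in $q'$, so Hoeffding's lemma is cleanly applied to the average over $q'$. In the paper the bound depends on $q'$, so the main term must be averaged over $q'$ before Hoeffding's lemma applies. It cannot be bounded for each fixed $q'$, as the paper's displayed chain does.
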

	\begin{proof}
		%We prove that, for all $\rho \in Q \cdot (A \cdot Q)^*$, $\Exp_\rho^\sigma[\MDP,\exp(f_v^{n,M} - n \cdot \eta)] \leq \exp\left(\frac{s^2 n (b-a)^2}{8}\right) + n \cdot \exp(-M - n \cdot a)$ by induction on $n$.
		We prove the result by induction on $n$. The case $n = 0$ is straightforward since $f_v^{0,M} = 0$, for all $M \geq 0$. Assume now that the property holds for some $n \in \N$. Let $M \geq 0$. Note that, for all $\pi \in \msf{NZ}_v^{+}$, for all $q' \in Q$, and for all $\rho \in q' \cdot A \cdot (Q \cdot A)^\omega$, we have:
		\begin{equation*}
			f_v^{n+1,M}(\pi \cdot \rho) = v(\pi \cdot q') + f_v^{n,M+v(\pi \cdot q')}(\rho)
		\end{equation*}
		
		For all $\rho = (q_0,a_0) \cdots (q_n,a_n) \in \msf{NZ}_v^{+}$, we let $\delta(\rho) := \delta(q_n,a_n)$. Then, by Hoeffding's lemma% (recalled in Lemma~\ref{lem:hoeffding_s_lemma})
		, for all $\rho \in \msf{NZ}_v^+$, since $\mu := \sum_{q' \in Q}  \delta(\rho)(q') \cdot v(\rho \cdot q') \leq \eta \leq 0$, we have:
		\begin{equation*}
			\sum_{q' \in Q}  \delta(\rho)(q') \cdot \exp(s(v(\rho \cdot q')) - \eta) \leq \sum_{q' \in Q}  \delta(\rho)(q') \cdot \exp(s(v(\rho \cdot q')) - \mu) \leq  \exp\left(\frac{s^2 (b-a)^2}{8}\right)
		\end{equation*}
		
		Let $c := \max(|a|,|b|)$. Note that since $a \leq \eta \leq b$, we have $-\eta \leq c$. For all $\rho \in \msf{NZ}_v^{\msf{run}}$, we let $\head{\rho} \in Q$ denote the last state visited by $\rho$; we also let $\sigma_{\rho} \in \msf{Strat}(Q,A)$ be defined by, for all $\pi \in Q \cdot (A \cdot Q)^*$, $\sigma_{\rho}(\pi) := \sigma(\rho \cdot \pi)$. First, we have:
		\begin{align*}
			\Exp_q^{\sigma}[\MDP,\exp(s(f_v^{n+1,M} - (n+1) \cdot \eta))] & = \sum_{\rho \in \msf{NZ}_v^+} \Prb_q^{\sigma}[\MDP,\rho] \cdot \Exp_{\head{\rho}}^{\sigma_{\rho}}[\MDP,\exp(s(f_v^{n+1,M} - (n+1) \cdot \eta))] \\
			& + \Prb_q^{\sigma}[\MDP,\msf{AZ}_v^\omega] \cdot \exp(s (-M - (n+1) \cdot \eta)) 
		\end{align*}
		with 
		\begin{equation*}
			\Prb_q^{\sigma}[\MDP,\msf{AZ}_v^\omega] \cdot \exp(s (-M - (n+1) \cdot \eta)) \leq \exp\left(s(-M + (n+1) \cdot c) + (n+1) \cdot \frac{s^2 (b-a)^2}{8}\right)
		\end{equation*}
		Let $\rho \in \msf{NZ}_v^+$. We have:
		\begin{align*}
			\Exp_{\head{\rho}}^{\sigma_{\rho}}[\MDP,\exp(s&(f_v^{n+1,M} - (n+1) \cdot \eta))] = \sum_{q' \in Q}  \delta(\rho)(q') \cdot \Exp_{q'}^{\sigma_{\rho}}[\MDP,\exp(s(f_v^{n,M+v(\rho \cdot q')} - n \cdot \eta + v(\rho \cdot q') - \eta))]
		\end{align*}
		Now, let $q' \in Q$ and $x := \Exp_{q'}^{\sigma_{\rho}}[\MDP,\exp(s(f_v^{n,M+v(\rho \cdot q')} - n \cdot \eta + v(\rho \cdot q') - \eta))]$. We have:
		\begin{align*}
			x & = \exp(s(v(\rho \cdot q') - \eta)) \cdot \Exp_{q'}^{\sigma}[\MDP,\exp(s(f_v^{n,M+v(\rho \cdot q')} - n \cdot \eta))] \\
			& \leq \exp(s(v(\rho \cdot q') - \eta)) \cdot
			%\exp\left(\frac{s^2 (b-a)^2}{8}\right)%\\
			%& \phantom{\leq \sum_{\rho \in \msf{NZ}_v(q)} \Prb_q^{\sigma}[\MDP,\rho] } 
			%\cdot 
			\left[\exp\left(n\frac{s^2 (b-a)^2}{8}\right) + n\exp\left(s\left(- M - v(\rho \cdot q') + n \cdot c + n\frac{s^2 (b-a)^2}{8}\right)\right) \right] \\
			& \leq \exp\left(\frac{s^2 (b-a)^2}{8}\right) \cdot \left[\exp\left(n\frac{s^2 (b-a)^2}{8}\right) + n\exp(s(- M - v(\rho \cdot q') + n \cdot c + n\frac{s^2 (b-a)^2}{8})) \right] \\
			& \leq \exp\left((n+1) \frac{s^2 (b-a)^2}{8}\right) + n \exp\left(s(- M + (n+1) \cdot c) + (n+1) \frac{s^2 (b-a)^2}{8}\right) %+ d \\
			%& = \exp\left((n+1) \frac{s^2 (b-a)^2}{8}\right) + (n+1) \cdot \exp(s(-M + (n+1) \cdot c) + (n+1) \frac{s^2 (b-a)^2}{8})
		\end{align*}
		This holds for all $q' \in Q$, therefore, we have:
		\begin{align*}
			\Exp_{\head{\rho}}^{\sigma_{\rho}}[\MDP,\exp(s(f_v^{n+1,M} - (n+1) \cdot \eta))] & \leq \exp\left((n+1) \frac{s^2 (b-a)^2}{8}\right) \\
			& + n \exp\left(s(- M + (n+1) \cdot c) + (n+1) \frac{s^2 (b-a)^2}{8}\right)
		\end{align*}
		This holds for all $\rho \in \msf{NZ}_v^+$, thus:
		\begin{align*}
			\Exp_q^{\sigma}[\MDP,\exp(s(f_v^{n+1,M} - (n+1) \cdot \eta))] & \leq \exp\left((n+1) \frac{s^2 (b-a)^2}{8}\right) \\
			& + n \exp\left(s(- M + (n+1) \cdot c) + (n+1) \frac{s^2 (b-a)^2}{8}\right) \\
			& + \exp\left(s(-M + (n+1) \cdot c) + (n+1)\frac{s^2 (b-a)^2}{8}\right) \\
			& \leq \exp\left((n+1) \frac{s^2 (b-a)^2}{8}\right) \\
			& + (n+1) \exp\left(s(- M + (n+1) \cdot c) + (n+1) \frac{s^2 (b-a)^2}{8}\right)
		\end{align*}
		This concludes the induction.
	\end{proof}
	
	We can proceed to the proof of Lemma~\ref{lem:small_proba_far_from_expected_value}.
	\begin{proof}
		Let $t > 0$. For all $M > -n  \cdot \eta - t$, we have $\{ \msf{s}_v^n \geq n \cdot \eta + t \} \cap \{ \msf{c}_v^n = n \} = \{ f_v^{n,M} \geq n \cdot \eta + t \}$. Furthermore, for all $s > 0$, by Markov's inequality, we have:
		\begin{align*}
			\Prb_q^\sigma[\MDP,\{f_v^{n,M} \geq n \cdot \eta + t\}] & = \Prb_q^{\sigma}[\MDP,\{s(f_v^{n,M} - n \cdot \eta) \geq s \cdot t\}] \\
			& = \Prb_q^{\sigma}[\MDP,\{\exp(s(f_v^{n,M} - n \cdot \eta)) \geq \exp(s \cdot t)\}] \\
			& \leq \exp(-s \cdot t) \cdot \Exp_q^{\sigma}[\MDP,\exp(s(f_v^{n,M} - n \cdot \eta))]
		\end{align*}
		Since $n \cdot \exp(-s(M + n \cdot \max(|a|,|b|)) + n\frac{s^2 (b-a)^2}{8}) \to_{M \to \infty} 0$, it follows by Lemma~\ref{lem:inequality_expected_value_hoeffding}, that:
		\begin{equation*}
			\Prb_q^\sigma[\MDP,\{ \msf{s}_v^n \geq n \cdot \eta + t \} \cap \{ \msf{c}_v^n \geq n \}] \leq \exp(-s \cdot t) \cdot \exp\left(n\frac{s^2 (b-a)^2}{8}\right) = \exp\left(-st + n\frac{s^2 (b-a)^2}{8}\right)
		\end{equation*}
		This holds for all $s > 0$, in particular for $s = \frac{4t}{n(b-a)^2}$, for which we obtain:
		\begin{equation*}
			\Prb_q^\sigma[\MDP,\{ \msf{s}_v^n \geq n \cdot \eta + t \} \cap \{ \msf{c}_v^n \geq n \}] \leq \exp\left(-\frac{4t^2}{n(b-a)^2} + \frac{16t^2}{8n(b-a)^2}\right) = \exp\left(-\frac{2t^2}{n(b-a)^2}\right)
		\end{equation*} 
	\end{proof}	
	
	\subsection{Valuation from beliefs update}
	\label{subproof:valuation}
	As a preliminary result, let us show that when playing in an environment, the belief in that environment cannot drop to 0.
	\begin{lemma}
		\label{lem:never_z_belief}
		Let $\MEMDP = (Q,A,E,(\delta_e)_{e \in E})$ be an MEMDP. Consider a state $q \in Q$ and a strategy $\sigma \in \msf{Strat}(Q,A)$. For all environments $e \in E$ and beliefs $b \in \Dist(E)$ such that $b(e) > 0$, for all $\rho \in Q \cdot (A \cdot Q)^*$ such that $\Prb_q^\sigma(\Gamma[e],\rho) > 0$, we have: $\msf{lk}_b(\rho)(e) > 0$.
	\end{lemma}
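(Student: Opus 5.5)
The plan is an induction on the length of the run $\rho$, proving the stronger statement $\msf{lk}_b(\pi)(e) > 0$ for every prefix $\pi$ of $\rho$. Positivity of $\Prb_q^\sigma(\MEMDP[e],\rho)$ passes to prefixes, because the inductive definition of $\Prb_q^\sigma(\MEMDP[e],\cdot)$ writes the probability of $\pi \cdot (a,t)$ as the probability of $\pi$ times nonnegative factors.

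\emph{Base case.} The run is $\rho = q$. By definition $\msf{lk}_b(q) = b$, and $b(e) > 0$ by assumption.

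\emph{Inductive step.} Let $\rho = \pi \cdot (a,q')$ with $\Prb_q^\sigma(\MEMDP[e],\rho) > 0$. By the definition of the measure,
\[
\Prb_q^\sigma(\MEMDP[e],\rho) = \Prb_q^\sigma(\MEMDP[e],\pi) \cdot \sigma(\pi)(a) \cdot \delta_e(\head{\pi},a)(q').
\]
Each factor is therefore positive. In particular $\Prb_q^\sigma(\MEMDP[e],\pi) > 0$, so the induction hypothesis gives $b_\pi := \msf{lk}_b(\pi)$ with $b_\pi(e) > 0$. We also have $\delta_e(\head{\pi},a)(q') > 0$.

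It follows that
\[
p[b_\pi,\head{\pi},a](q') = \sum_{e'} b_\pi(e')\,\delta_{e'}(\head{\pi},a)(q') \geq b_\pi(e)\,\delta_e(\head{\pi},a)(q') > 0.
\]
So the update $\lambda[b_\pi,\head{\pi},a,q']$ is given by the explicit formula of Definition~\ref{def:belief_env_update}, not by the arbitrary fallback. That formula yields
\[
\msf{lk}_b(\rho)(e) = \frac{b_\pi(e)\,\delta_e(\head{\pi},a)(q')}{p[b_\pi,\head{\pi},a](q')} > 0.
\]

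The only point needing care is ensuring we never land in the case $p = 0$, where the update is arbitrary. This is exactly what positivity of $\delta_e(\head{\pi},a)(q')$, combined with the induction hypothesis, rules out. Nothing else in the argument is delicate.
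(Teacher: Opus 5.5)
Your proof is correct and follows essentially the same route as the paper: induction on the run, factoring $\Prb_q^\sigma(\MEMDP[e],\pi\cdot(a,q'))$ to get $\Prb_q^\sigma(\MEMDP[e],\pi)>0$ and $\delta_e(\head{\pi},a)(q')>0$, then applying the update formula. Your explicit check that $p[\msf{lk}_b(\pi),\head{\pi},a](q')>0$, so the arbitrary-fallback case of Definition~\ref{def:belief_env_update} never arises, is a detail the paper's proof leaves implicit.
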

	\begin{proof}
		Let us prove the result by induction on $\rho \in Q \cdot (A \cdot Q)^*$. This clearly holds if $\rho \in Q$. Now, assume that this holds for some $\rho \in Q \cdot (A \cdot Q)^*$, and let $(a,t) \in A_{\head{\rho}} \times Q$. If $\Prb_q^\sigma(\Gamma[e],\rho \cdot (a,t)) > 0$, then $\Prb_q^\sigma(\Gamma[e],\rho) \cdot \sigma(\rho)(a) \cdot \delta_e(\head{\rho},a)(t) > 0$, and thus $\msf{lk}_b(\rho)(e) > 0$. Hence: $\msf{lk}_b(\rho \cdot (a,t))(e) = \frac{1}{p[\msf{lk}_b(\rho),\head{\rho},a](t)} \cdot \msf{lk}_b(\rho)(e) \cdot \delta_e(\head{\rho},a)(t) > 0$.
	\end{proof}

	Our goal is now to apply Lemma~\ref{lem:small_proba_far_from_expected_value} to establish Theorem~\ref{thm:main_enough_actions_small_belief}. To do so, we describe bounded valuations in MDPs that can be obtained from belief update in MEMDPs. Before we define this valuation, let us first introduce the notion of revealing state-action pair.
	\begin{definition}
		Consider an MEMDP $\MEMDP = (Q,A,E,(\delta_e)_{e \in E})$. Let $e \neq e' \in E$. We say that a state-action pair $(q,a) \in Q \times A$ is $e'/e$-revealing if there is some state $q' \in Q$ such that $\delta_{e'}(q,a)(q') = 0$ and $\delta_{e}(q,a)(q') > 0$.
		
		For all $\rho \in Q \cdot (A \cdot Q)^*$, we let $\msf{nb}_{\msf{rev}}(e',e,\rho) \in \N$ denote the number of occurrences of $e'/e$-revealing state-action pairs in $\rho$.
	\end{definition}
	
	We can now define the history valuations that we consider.
	\begin{definition}
		Consider an MEMDP $\MEMDP = (Q,A,E,(\delta_e)_{e \in E})$. Let $q \in Q$ and $e \neq e' \in E$. We define the history valuation $v[e',e]: Q \cdot (A \cdot Q)^* \to \R$ as follows: $v[e',e](q) := 0$ and for all $\rho = q_0 \cdot (a_1,q_1) \cdots (a_n,q_n) \in Q \cdot (A \cdot Q)^*$:
		\begin{align*}
			v[e',e](\rho) := \begin{cases}
				0 & \text{ if }a_n \text{ is }e'/e\text{-revealing} \\
				0 & \text{ if }\delta_{e}(q_{n-1},a_n)(q_n) = 0 \\
				\log(\frac{\delta_{e'}(q_{n-1},a_n)(q_n)}{\delta_{e}(q_{n-1},a_n)(q_n)}) & \text{ otherwise}
			\end{cases}
		\end{align*}
	\end{definition}
	
	By definition, the exponential of the sum of the $v[e',e]$-values visited on a path $\rho$ occurring with positive probability in the MDP $\MEMDP[e]$ (given any strategy) is closely related to the ratio of the beliefs of the environments $e'$ and $e$ at $\rho$. This is formally stated below.
	\begin{lemma}
		\label{lem:sum_v_small_likelyhood}
		Let $\MEMDP = (Q,A,E,(\delta_e)_{e \in E})$ be an MEMDP. Consider two environments $e \neq e' \in E$, and a prior belief $b \in \Dist(E)$ with $b(e) > 0$. Consider any strategy $\sigma \in \msf{Strat}(Q,A)$ and some $\rho \in Q \cdot (A \cdot Q)^*$ such that $\Prb_q^\sigma[\MEMDP[e],\rho] > 0$. Then:%, for all $k \geq |\rho|$ and $\theta \in (Q \cdot A)^\omega$ such that $\rho \sqsubset \theta$, we have:
		\begin{equation*}
			\frac{b(e')}{b(e)} \cdot 2^{\left(\sum_{\pi \sqsubseteq \rho} v[e',e](\pi) \right)} \cdot \msf{ratio}_{\max}(\MEMDP)^{\msf{nb}_{\msf{rev}}(e',e,\rho)} \geq \frac{\msf{lk}_b(\rho)(e')}{\msf{lk}_b(\rho)(e)}
		\end{equation*}
		where we write $\pi \sqsubseteq \rho$ for runs $\pi \in Q \cdot (A \cdot Q)^*$ that are prefixes of $\rho$.
	\end{lemma}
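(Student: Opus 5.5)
We argue by induction on the length of $\rho$, tracking the ratio $\msf{lk}_b(\rho)(e')/\msf{lk}_b(\rho)(e)$ one transition at a time. Lemma~\ref{lem:never_z_belief} ensures $\msf{lk}_b(\pi)(e) > 0$ for every prefix $\pi \sqsubseteq \rho$, since $\Prb_q^\sigma[\MEMDP[e],\pi] > 0$. Hence the ratio is well-defined along the whole run.

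For the base case $\rho = q$, the left-hand side equals $\frac{b(e')}{b(e)} \cdot 2^{v[e',e](q)} \cdot \msf{ratio}_{\max}(\MEMDP)^0 = \frac{b(e')}{b(e)}$, since $v[e',e](q) = 0$ and a single state visits no revealing pair. The right-hand side is also $\frac{b(e')}{b(e)}$, because $\msf{lk}_b(q) = b$, so equality holds.

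For the inductive step, write $\rho' = \rho \cdot (a,t)$ with $\Prb_q^\sigma[\MEMDP[e],\rho'] > 0$, so that $\Prb_q^\sigma[\MEMDP[e],\rho] > 0$ and $\delta_e(\head{\rho},a)(t) > 0$. By Definition~\ref{def:belief_env_update}, the normalizing constant $p[\msf{lk}_b(\rho),\head{\rho},a](t)$ is positive (it is at least $\msf{lk}_b(\rho)(e)\,\delta_e(\head{\rho},a)(t) > 0$) and cancels in the ratio. This gives
\begin{equation*}
\frac{\msf{lk}_b(\rho')(e')}{\msf{lk}_b(\rho')(e)} = \frac{\msf{lk}_b(\rho)(e')}{\msf{lk}_b(\rho)(e)} \cdot \frac{\delta_{e'}(\head{\rho},a)(t)}{\delta_e(\head{\rho},a)(t)}.
\end{equation*}
It therefore suffices to show that the multiplicative factor by which the left-hand side grows from $\rho$ to $\rho'$ is at least $\frac{\delta_{e'}(\head{\rho},a)(t)}{\delta_e(\head{\rho},a)(t)}$. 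That growth factor is $2^{v[e',e](\rho')}$ times $\msf{ratio}_{\max}(\MEMDP)$ if the pair $(\head{\rho},a)$ is $e'/e$-revealing, and $2^{v[e',e](\rho')}$ otherwise. The definition of $v[e',e]$ refers to "$a_n$ revealing"; we read this as the pair $(q_{n-1},a_n)$ being revealing.

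We split into cases according to the definition of $v[e',e]$. The case $\delta_e(\head{\rho},a)(t) = 0$ cannot occur, by the positivity established above. If $(\head{\rho},a)$ is not $e'/e$-revealing, then $v[e',e](\rho') = \log\frac{\delta_{e'}}{\delta_e}$ and $2^{v[e',e](\rho')}$ equals the update factor exactly. This uses that $\log$ is base $2$, consistent with the $2^{(\cdot)}$ in the statement.

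If $(\head{\rho},a)$ is $e'/e$-revealing, then $v[e',e](\rho') = 0$ and the growth factor is $\msf{ratio}_{\max}(\MEMDP)$. When $\delta_{e'}(\head{\rho},a)(t) = 0$, the update factor is $0 \leq \msf{ratio}_{\max}(\MEMDP)$. When $\delta_{e'}(\head{\rho},a)(t) > 0$, the update factor $\frac{\delta_{e'}}{\delta_e}$ lies in the set defining $\msf{ratio}_{\max}(\MEMDP)$ (taking the roles $e \leftarrow e'$, $e' \leftarrow e$, which is allowed since the denominator $\delta_e(\head{\rho},a)(t)$ is positive). Hence it is at most $\msf{ratio}_{\max}(\MEMDP)$.

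In every case we multiply the inductive inequality by nonnegative quantities, and the left-hand side is multiplied by something at least as large as the right-hand side's factor; this requires the right-hand side ratio to be nonnegative, which it is. This closes the induction. The only delicate point is the revealing case, where the bookkeeping must use the $\msf{ratio}_{\max}$ bound rather than the valuation; the remaining steps are routine.
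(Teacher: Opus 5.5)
Your proposal is correct and follows the paper's proof essentially step for step. Both argue by induction on $\rho$, cancel the normalizing constant to get the one-step multiplicative update of the ratio, and then split on whether $(\head{\rho},a)$ is $e'/e$-revealing (the factor is bounded by $\msf{ratio}_{\max}(\MEMDP)$) or not (the factor equals $2^{v[e',e](\rho')}$ exactly). You are also slightly more careful than the paper: you check that $p[\msf{lk}_b(\rho),\head{\rho},a](t)>0$, so the update uses the formula rather than the arbitrary branch of Definition~\ref{def:belief_env_update}, and you handle the subcase $\delta_{e'}(\head{\rho},a)(t)=0$ explicitly.
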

	\begin{proof}
		First of all, for all $\rho \in Q \cdot (A \cdot Q)^*$ such that $\Prb_q^\sigma[\MEMDP[e],\rho] > 0$, we have $\msf{lk}_b(\rho)(e) > 0$, by Lemma~\ref{lem:never_z_belief}. 
		
		Then, we show the result by induction on $\rho \in Q \cdot (A \cdot Q)^*$ such that $\Prb_q^\sigma[\MEMDP[e],\rho] > 0$. This straightforwardly holds for $\rho = q$, since $v[e',e](q) = 0$, $\msf{nb}_{\msf{rev}}(e',e,q) = 0$, and $\msf{lk}_b(q)(f) = b(f)$ for all $f \in \{e,e'\}$. 
		
		Assume now that this property holds for some $\rho \in Q \cdot (A \cdot Q)^*$, and let $(a,q) \in A \times Q$ such that $\rho \cdot (a,q) \in Q \cdot (A \cdot Q)^*$ and $\Prb_q^\sigma[\MEMDP[e],\rho \cdot (a,q)] > 0$, thus $\delta_e(\head{\rho},a)(q) > 0$. We have:
		\begin{equation*}
			\frac{\msf{lk}_b(\rho \cdot (a,q))(e')}{\msf{lk}_b(\rho \cdot (a,q))(e)} = \frac{\frac{1}{\lambda[\msf{lk}_b(\rho),\head{\rho},a,q]} \cdot \msf{lk}_b(\rho)(e') \cdot \delta_{e'}(\head{\rho},a)(q)}{\frac{1}{\lambda[\msf{lk}_b(\rho),\head{\rho},a,q]} \cdot \msf{lk}_b(\rho)(e) \cdot \delta_e(\head{\rho},a)(q)} = \frac{\msf{lk}_b(\rho)(e')}{\msf{lk}_b(\rho)(e)} \cdot \frac{\delta_{e'}(\head{\rho},a)(q)}{\delta_e(\head{\rho},a)(q)}
		\end{equation*}
		
		Then, there are two cases:
		\begin{itemize}
			\item If $(\head{\rho},a)$ is $e'/e$-revealing, then $\msf{nb}_{\msf{rev}}(e',e,\rho \cdot (a,q)) = \msf{nb}_{\msf{rev}}(e',e,\rho) + 1$ and $v[e',e](\rho \cdot (a,q)) = 0$. Furthermore, $\frac{\delta_{e'}(\head{\rho},a)(q)}{\delta_e(\head{\rho},a)(q)} \leq \msf{ratio}_{\max}(\MEMDP)$. Thus: 
			\begin{align*}
				\frac{\msf{lk}_b(\rho \cdot (a,q))(e')}{\msf{lk}_b(\rho \cdot (a,q))(e)} & \leq  \frac{\msf{lk}_b(\rho)(e')}{\msf{lk}_b(\rho)(e)} \cdot \msf{ratio}_{\max}(\MEMDP) \leq \frac{b(e')}{b(e)} \cdot 2^{\left(\sum_{\pi \sqsubseteq \rho} v[e',e](\pi) \right)} \cdot \msf{ratio}_{\max}(\MEMDP)^{\msf{nb}_{\msf{rev}}(e',e,\rho)+1} \\
				& = \frac{b(e')}{b(e)} \cdot 2^{\left(\sum_{\pi \sqsubseteq \rho \cdot (a,q)} v[e',e](\pi) \right)} \cdot \msf{ratio}_{\max}(\MEMDP)^{\msf{nb}_{\msf{rev}}(e',e,\rho \cdot (a,q))}
			\end{align*}
			\item If $a$ is not $e'/e$-revealing, then $\msf{nb}_{\msf{rev}}(e',e,\rho \cdot (a,q)) = \msf{nb}_{\msf{rev}}(e',e,\rho)$ and $v[e',e](\rho \cdot (a,q)) = \log(\frac{\delta_{e'}(\head{\rho},a)(q)}{\delta_{e}(\head{\rho},a)(q)})$, thus $2^{v[e',e](\rho \cdot (a,q))} = \frac{\delta_{e'}(\head{\rho},a)(q)}{\delta_{e}(\head{\rho},a)(q)}$. Therefore:
			\begin{align*}
				\frac{\msf{lk}_b(\rho \cdot (a,q))(e')}{\msf{lk}_b(\rho \cdot (a,q))(e)} & = \frac{\msf{lk}_b(\rho)(e')}{\msf{lk}_b(\rho)(e)} \cdot 2^{v[e',e](\rho \cdot (a,q))} \leq \frac{b(e')}{b(e)} \cdot 2^{\left(\sum_{\pi \sqsubseteq \rho \cdot (a,q)} v[e',e](\pi) \right)} \cdot \msf{ratio}_{\max}(\MEMDP)^{\msf{nb}_{\msf{rev}}(e',e,\rho \cdot (a,q))}
			\end{align*}
		\end{itemize}
		This concludes the induction.
	\end{proof}
	
	In addition, this valuation $v[e',e]$ is $\eta(\MEMDP)$-bounded.
	\begin{lemma}
		\label{lem:valuation_eta_bounded}
		Consider an MEMDP $\MEMDP = (Q,A,E,(\delta_e)_{e \in E})$, and $e \neq e' \in E$. In the MDP $\MEMDP[e]$, $v[e',e]: Q \cdot (A \cdot Q)^* \to \R$ is a $[\log(\msf{ratio}_{\min}(\MEMDP)),\log(\msf{ratio}_{\max}(\MEMDP))]$-valuation that is $\eta(\MEMDP)$-bounded.
	\end{lemma}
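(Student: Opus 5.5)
The plan is to use the fact that $v[e',e](\rho)$ depends only on the last transition $(q_{n-1},a_n,q_n)$ of $\rho$. The statement then reduces to two local facts about a single pair of distributions: $p := \delta_e(q,a)$, the distribution of the operating environment $e$, and $r := \delta_{e'}(q,a)$.

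\emph{Range.} I would check that every value lies in $[\log(\msf{ratio}_{\min}(\MEMDP)),\log(\msf{ratio}_{\max}(\MEMDP))]$. This interval contains $0$, since $\msf{ratio}_{\min}(\MEMDP) \leq 1 \leq \msf{ratio}_{\max}(\MEMDP)$. Consider a value that is not set to $0$. Then $p(q_n) > 0$, and $(q_{n-1},a_n)$ is not $e'/e$-revealing, so $r(q_n) > 0$ as well. Hence $\frac{r(q_n)}{p(q_n)}$ is one of the ratios in the definition of $\msf{ratio}_{\min}$ and $\msf{ratio}_{\max}$, and its logarithm lies in the interval.

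\emph{$\eta$-boundedness.} Fix $\rho$ ending in $q$ and an action $a$, and split into cases.
\begin{itemize}
\item If $(q,a)$ is $e'/e$-revealing, all successor values are $0$.
\item If $p = r$, every successor value is either $0$ or $\log 1 = 0$.
\item Otherwise $\Supp(p) \subseteq \Supp(r)$ and $p \neq r$. The expected value under $\MEMDP[e]$ is $\sum_{q' \in \Supp(p)} p(q') \log\frac{r(q')}{p(q')}$, and I must show it is at most $\eta(\MEMDP)$.
\end{itemize}

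In the third case I would write the sum as $2\sum p\log\sqrt{r/p}$ and apply Jensen's inequality (concavity of $\log$). This gives the bound $2\log \sum_{q'\in\Supp(p)} \sqrt{p(q')r(q')}$. Let $S := \sum_{q' \in Q} (\sqrt{p(q')}-\sqrt{r(q')})^2$, summed over all of $Q$. Because $\Supp(p)\subseteq\Supp(r)$, the Bhattacharyya identity gives $\sum_{\Supp(p)}\sqrt{pr} = 1 - \frac{S}{2}$. Since $\log(1-S/2)\le 0$, we get $2\log(1-S/2) \le \log(1-S/2)$. It therefore suffices to show $S \geq 2\,p_{\min}(\MEMDP)\,\iota(\MEMDP)$, which yields the bound $\log(1-p_{\min}\iota) = \eta(\MEMDP)$.

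\emph{Main step: lower bound on $S$.} Since $\sum p = \sum r$ and $p\neq r$, there exist $q_1$ with $p(q_1) > r(q_1)$ and $q_2$ with $r(q_2) > p(q_2)$.
\begin{itemize}
\item For $q_1$, the inclusion $\Supp(p)\subseteq\Supp(r)$ gives $r(q_1) > 0$. Hence $(\sqrt{p}-\sqrt{r})^2 = r(q_1)\,\bigl(\sqrt{p(q_1)/r(q_1)}-1\bigr)^2 \geq p_{\min}\,\bigl(\sqrt{\msf{ratio}^{>1}_{\min}(\MEMDP)}-1\bigr)^2 \geq p_{\min}\,\iota$.
\item For $q_2$, either $p(q_2) > 0$, and the same computation applies with the roles of $p$ and $r$ swapped. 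Or $p(q_2) = 0$, in which case the term equals $r(q_2) \geq p_{\min} \geq p_{\min}\,\iota$, using $\iota \leq 1$.
\end{itemize}
Summing the two terms gives $S \geq 2p_{\min}\iota$.

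The degenerate case $\msf{ratio}^{>1}_{\min}(\MEMDP) = 1$ needs no separate treatment. There $\iota = \eta = 0$, and the sum is already $\le 0$ by Gibbs' inequality (Jensen's inequality directly). The main obstacle is getting the strict gap $\eta$ in this last part. The key point is that $\Supp(p)\subseteq\Supp(r)$ forces two well-separated mismatched coordinates, and the case $p(q_2)=0$ is exactly where $\iota \leq 1$ is needed.
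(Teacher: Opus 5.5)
Your proof is correct, and it is close to the paper's but organized differently.

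The paper writes the expected value as $\log\prod_{s}(r_s/p_s)^{p_s}$ over $s \in \Supp(p)$. It applies Aldaz's refined AM--GM inequality $\prod_i x_i^{\alpha_i} \leq \sum_i \alpha_i x_i - \sum_i \alpha_i\bigl(\sqrt{x_i} - \sum_k \alpha_k \sqrt{x_k}\bigr)^2$ and then splits into two cases:
\begin{itemize}
\item If $\sum_{s \in \Supp(p)} r_s < 1$, the mass of $r$ outside $\Supp(p)$ is at least $p_{\min}(\MEMDP)$, and the first term alone yields $1 - p_{\min}(\MEMDP) \leq 1 - p_{\min}(\MEMDP)\iota(\MEMDP)$.
\item If $\sum_{s \in \Supp(p)} r_s = 1$, it takes a single coordinate $t$ with $r_t > p_t$. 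It uses Jensen to get $\sum_s p_s \sqrt{r_s/p_s} \leq 1$ and bounds the variance term below by $p_t(\sqrt{r_t/p_t} - 1)^2$.
\end{itemize}

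The right-hand side of Aldaz's inequality simplifies to $\bigl(\sum_k \alpha_k \sqrt{x_k}\bigr)^2$. So the paper's first step is exactly your Jensen step. The real difference is how the Bhattacharyya coefficient $\mu := \sum \sqrt{pr}$ is bounded.

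You pass to the squared Hellinger distance and extract two mismatched coordinates. This avoids both the external citation and the case split. Your subcase $p(q_2) = 0$ plays the role of the paper's mass-leak case, and both arguments use $\iota(\MEMDP) \leq 1$ at that point.

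Your route also gives more. You show $\mu \leq 1 - p_{\min}(\MEMDP)\iota(\MEMDP)$, whereas the paper only obtains $\mu^2 \leq 1 - p_{\min}(\MEMDP)\iota(\MEMDP)$. If you keep the factor $2$ instead of discarding it, you get the sharper bound $2\eta(\MEMDP)$.

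One small remark: the identity $\sum_{\Supp(p)} \sqrt{pr} = 1 - S/2$ holds without the support inclusion. The inclusion is only needed to ensure $1 - S/2 > 0$, so that the logarithms are finite.
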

	
	The proof of this lemma relies on the lower bound, recalled below, on the difference between the arithmetic and geometric mean.
	\begin{lemma}[Theorem 1 in \cite{aldaz2008selfimprovemvent}]
		\label{lem:lower_bound_airthmetic_geometric}
		Let $1 \leq k \in \N$. For all $1 \leq i \leq k$, consider some $0 \leq x_i \in \R$ and $0 < \alpha_i \in \R$, and assume that $\sum_{i = 1}^k \alpha_i = 1$. Then, we have:
		\begin{equation*}
			\prod_{i = 1}^k x_i^{\alpha_i} \leq \sum_{i = 1}^k \alpha_i \cdot x_i - \sum_{i = 1}^k \alpha_i \left(\sqrt{x_i} - \sum_{i = 1}^k \alpha_k \sqrt{x_k} \right)^2
		\end{equation*}
	\end{lemma}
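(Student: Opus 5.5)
The plan is to expand the correction term as a variance, which reduces the statement to the ordinary weighted AM--GM inequality applied to the square roots $\sqrt{x_i}$.

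First, set $y_i := \sqrt{x_i} \geq 0$ and $m := \sum_{k=1}^k \alpha_k y_k$; the inner sum in the statement is exactly $m$, despite the reused index. Since $\sum_i \alpha_i = 1$, expanding the square gives
\begin{equation*}
\sum_{i=1}^k \alpha_i (y_i - m)^2 = \sum_{i=1}^k \alpha_i y_i^2 - 2m \sum_{i=1}^k \alpha_i y_i + m^2 \sum_{i=1}^k \alpha_i = \sum_{i=1}^k \alpha_i x_i - m^2 .
\end{equation*}
Substituting this into the statement, the right-hand side collapses to $\sum_i \alpha_i x_i - \left(\sum_i \alpha_i x_i - m^2\right) = m^2 = \left(\sum_i \alpha_i \sqrt{x_i}\right)^2$. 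The claim is therefore equivalent to
\begin{equation*}
\prod_{i=1}^k x_i^{\alpha_i} \leq \left(\sum_{i=1}^k \alpha_i \sqrt{x_i}\right)^2 .
\end{equation*}
Both sides are nonnegative, and the left-hand side equals $\left(\prod_i y_i^{\alpha_i}\right)^2$. Taking square roots, it suffices to show $\prod_i y_i^{\alpha_i} \leq \sum_i \alpha_i y_i$, which is the weighted AM--GM inequality for the nonnegative numbers $y_i$ with positive weights $\alpha_i$ summing to $1$.

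To prove weighted AM--GM, I would split into two cases.
\begin{itemize}
\item If some $y_i = 0$, then, since $\alpha_i > 0$, the product is $0$. The right-hand side is a sum of nonnegative terms, so the inequality holds.
\item If all $y_i > 0$, apply Jensen's inequality to the concave function $\log$: $\sum_i \alpha_i \log y_i \leq \log\left(\sum_i \alpha_i y_i\right)$. Exponentiating gives the inequality.
\end{itemize}

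There is no real obstacle. The only points needing care are the variance identity, which uses $\sum_i \alpha_i = 1$, and the zero case. The zero case is exactly why the hypothesis $\alpha_i > 0$ matters: it ensures $0^{\alpha_i} = 0$ rather than leaving an ambiguous $0^0$.
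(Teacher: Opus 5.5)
Your proof is correct and complete, including the degenerate cases ($k=1$, or some $x_i=0$).

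The paper gives no argument for this lemma; it imports it verbatim from \cite{aldaz2008selfimprovemvent}. So your derivation adds a self-contained proof rather than taking a different route. It is essentially the argument behind the cited result: the identity $\sum_i \alpha_i(\sqrt{x_i}-m)^2=\sum_i\alpha_i x_i-m^2$ turns the claimed bound into $\prod_i x_i^{\alpha_i}\le\bigl(\sum_i\alpha_i\sqrt{x_i}\bigr)^2$. That is just weighted AM--GM for the square roots, squared.

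What your formulation buys is transparency. The statement as cited looks like a refined AM--GM with a variance correction, but your rewriting shows the right-hand side is simply the square of an arithmetic mean. In the paper's only use of the lemma (the proof of Lemma~\ref{lem:valuation_eta_bounded}, with $x_s=r_s/p_s$ and weights $p_s$), the bound becomes $\prod_s (r_s/p_s)^{p_s}\le\bigl(\sum_s\sqrt{p_s r_s}\bigr)^2$. This is the squared Bhattacharyya coefficient of $\delta_e(q,a)$ and $\delta_{e'}(q,a)$, which could be bounded away from $1$ directly, for instance via the Hellinger distance.

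One cosmetic point: write $m:=\sum_{j=1}^k\alpha_j y_j$ instead of reusing $k$ as both the summation index and the upper limit.
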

	
	Let us now proceed to the proof of Lemma~\ref{lem:valuation_eta_bounded}.
	\begin{proof}
		For all $(q,a,q') \in Q \times A \times Q$, if $\delta_{e}(q,a)(q') > 0$, we have:
		\begin{equation*}
			\msf{ratio}_{\min}(\MEMDP) \leq \frac{\delta_{e'}(q,a)(q')}{\delta_{e}(q,a)(q')} \leq \msf{ratio}_{\max}(\MEMDP)
		\end{equation*}
		It follows that $v[e',e]$ is a $[\log(\msf{ratio}_{\min}(\MEMDP)),\log(\msf{ratio}_{\max}(\MEMDP))]$-valuation. 
		
		Consider now some $\rho \in Q \cdot (A \cdot Q)^*$ and $a \in A$ such that $(\head{\rho},a) \in \msf{NZ}_{v[e',e]}$. By definition of $v[e',e]$, the pair $(\head{\rho},a)$ is not $e'/e$-revealing and we have $\delta_e(q,a) \neq \delta_{e'}(q,a)$. Let $S := \Supp(\delta_e(q,a)) \subseteq Q$. For all $s \in S$, we let $p_{s} := \delta_e(\head{\rho},a)(s) > 0$ and $r_{s} := \delta_{e'}(\head{\rho},a)(s) > 0$ (since $a$ is not $e'/e$-revealing). By definition, we have $\sum_{q' \in Q} \delta_e(\head{\rho},a)(q') \cdot v(\rho \cdot (a,q')) = \sum_{s \in S} p_{s} \cdot v(\rho \cdot (a,s))$. Furthermore:
		\begin{equation*}
			\sum_{s \in S} p_{s} \cdot v(\rho \cdot (a,s)) =  \sum_{s \in S} p_{s} \cdot \log\left(\frac{r_{s}}{p_{s}}\right) = \log\left( \prod_{s \in S} \left(\frac{r_{s}}{p_{s}}\right)^{p_{s}} \right)
		\end{equation*}
		Furthermore, by Lemma~\ref{lem:lower_bound_airthmetic_geometric}, we have:
		\begin{equation*}
			\prod_{s \in S} \left(\frac{r_{s}}{p_{s}}\right)^{p_{s}} \leq \sum_{s \in S} p_{s} \cdot \frac{r_{s}}{p_{s}} - \sum_{s \in S} p_{s} \left(\sqrt{\frac{r_{s}}{p_{s}}} - \sum_{s' \in S} p_{s'} \sqrt{\frac{r_{s'}}{p_{s'}}} \right)^2 %\leq 1 - \sum_{s \in S} p_{s} \left(\sqrt{\frac{r_{s}}{p_{s}}} - \sum_{s' \in S} p_{s'} \sqrt{\frac{r_{s'}}{p_{s'}}} \right)^2
		\end{equation*}
		If $\sum_{s \in S} r_{s} < 1$, then $\sum_{s \in S} r_{s} \leq 1 - p_{\min}(\MEMDP) \leq 1 - p_{\min}(\MEMDP) \cdot \iota(\MEMDP)$, thus $\prod_{s \in S} \left(\frac{r_{s}}{p_{s}}\right)^{p_{s}} \leq 1 - p_{\min}(\MEMDP) \cdot \iota(\MEMDP)$% and $\sum_{s \in S} p_{s} \cdot v(\rho \cdot (a,s)) \leq \eta(\MEMDP)$
		.
		
		Assume now that $\sum_{s \in S} r_{s} = 1$. Then, since $\delta_e(q,a) \neq \delta_{e'}(q,a)$, it follows that there some $t \in S$ such that $\delta_{e'}(q,a)(t) = r_t > p_t = \delta_e(q,a)(t)$, and thus $\frac{r_t}{p_t} \geq \msf{ratio}_{\min}^{> 1}(\MEMDP) > 1$, by definition of $\msf{ratio}_{\min}^{> 1}(\MEMDP)$. Furthermore, by concavity of the square root function, and by Jensen's inequality, we have:
		\begin{equation*}
			\sum_{s' \in S} p_{s'} \sqrt{\frac{r_{s'}}{p_{s'}}} \leq \sqrt{\sum_{s' \in S} p_{s'} \frac{r_{s'}}{p_{s'}}} = \sqrt{\sum_{s' \in S} r_{s'}} = 1
		\end{equation*} 
		Therefore:
		\begin{align*}
			\sum_{s \in S} p_{s} \left(\sqrt{\frac{r_{s}}{p_{s}}} - \sum_{s' \in S} p_{s'} \sqrt{\frac{r_{s'}}{p_{s'}}} \right)^2 \geq p_{t} \left(\sqrt{\frac{r_{t}}{p_{t}}} - \sum_{s' \in S} p_{s'} \sqrt{\frac{r_{s'}}{p_{s'}}} \right)^2 \geq p_{\min}(\MEMDP) \left(\sqrt{\msf{ratio}_{\min}^{> 1}(\MEMDP)} - 1 \right)^2
		\end{align*}
		Hence:
		\begin{equation*}
			\prod_{s \in S} \left(\frac{r_{s}}{p_{s}}\right)^{p_{s}} \leq 1 - p_{\min}(\MEMDP) \left(\sqrt{\msf{ratio}_{\min}^{> 1}(\MEMDP)} - 1 \right)^2 \leq 1 - p_{\min}(\MEMDP) \cdot \iota(\MEMDP)
		\end{equation*}
		
		Overall, we obtain:
		\begin{equation*}
			\sum_{q' \in Q} \delta_e(\head{\rho},a)(q') \cdot v(\rho \cdot (a,q')) = \sum_{s \in S} p_{s} \cdot v(\rho \cdot (a,s)) =   \log\left(\prod_{s \in S}\left(\frac{r_{s}}{p_{s}}\right)^{p_{s}}\right)  \leq \log(1 - p_{\min}(\MEMDP) \cdot \iota(\MEMDP)) = \eta(\MEMDP)
		\end{equation*}
		Thus, $v[e',e]$ is $\eta(\MEMDP)$-bounded.
	\end{proof}
	
	Thanks to the above lemma, we will be able to apply the results of Lemma~\ref{lem:small_proba_far_from_expected_value} to the valuation $v[e',e]$; and in so doing, we will show that if enough distinguishing, and not $e'/e$-revealing, actions are visited, then with high probability the ratio of the beliefs in two environments is small enough (this is handled next page). There remains to handle $e'/e$-revealing actions. This is done in the lemma below.
	\begin{lemma}
		\label{lem:too_many_revealing}
		Consider an MEMDP $\MEMDP = (Q,A,E,(\delta_e)_{e \in E})$, and $e \neq e' \in E$. For all $k \in \N$, and prior beliefs $b \in \Dist(E)$, we let:
		\begin{align*}
			\msf{Reveal}^k(\MEMDP,b,e',e) := & \{ \rho \in Q \cdot (A \cdot Q)^* \mid \msf{nb}_{\msf{rev}}(e',e,\rho) \geq k %\exists 1 \leq i_1 < \ldots < i_k \leq n:\; \forall 1 \leq j \leq k,\; a_{i_j} \text{ is }e'/e\text{-revealing} 
			\text{ and } \msf{lk}_b(\rho)(e') \neq 0 \}
		\end{align*}
		
		%Let $p^e_{\msf{min}} := \min \{ \delta_{e}(t,a)(t') \mid t,t' \in Q,\; a \in A_t,\; \delta_{e}(t,a)(t') > 0 \}$. 
		Then, for all $n \in \N$, for all $q \in Q$, for all strategies $\sigma \in \msf{Strat}(Q,A)$, we have:
		\begin{equation*}
			\Prb_q^\sigma(\MEMDP[e],\msf{Reveal}^n(\MEMDP,b,e',e)) \leq (1-p_{\min}(\MEMDP))^n
		\end{equation*}
	\end{lemma}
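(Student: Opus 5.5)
The idea is that each visit to an $e'/e$-revealing pair, while playing in $\MEMDP[e]$, carries a probability at least $p_{\min}(\MEMDP)$ of moving to a state $q'$ with $\delta_{e'}(q,a)(q') = 0$ and $\delta_e(q,a)(q') > 0$. Once such a transition is taken, the belief in $e'$ becomes $0$ and stays $0$ forever, because every later update multiplies $\msf{lk}_b(\cdot)(e')$ by a transition probability. So a run staying in $\msf{Reveal}^n(\MEMDP,b,e',e)$ must dodge this ``killing'' outcome at each of its first $n$ revealing visits. Each dodge happens with conditional probability at most $1 - p_{\min}(\MEMDP)$, whatever the strategy. I read $\msf{Reveal}^n$ as the event, over infinite runs, that some finite prefix lies in the set, which is how it is used later.

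I would prove, by induction on $n$, the stronger statement: for all $q$, all $\sigma$ and all beliefs $b$,
\[
\Prb_q^\sigma(\MEMDP[e],\{\rho \mid \exists \pi \sqsubseteq \rho,\ \msf{nb}_{\msf{rev}}(e',e,\pi) \geq n,\ \msf{lk}_b(\pi)(e') \neq 0\}) \leq (1-p_{\min}(\MEMDP))^n .
\]
The case $n=0$ is trivial. For the inductive step, I decompose according to the first visit to an $e'/e$-revealing pair, mirroring the decomposition over $\msf{NZ}^+_v$ used in Lemma~\ref{lem:inequality_expected_value_hoeffding}. Concretely, I sum over finite runs $\rho$ ending just before an action $a$ with $(\head{\rho},a)$ revealing, and with no earlier revealing pair. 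Runs that never visit such a pair contribute nothing when $n \geq 1$.

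Fix such a $\rho$ and $a$. Let $K := \{q' \mid \delta_{e'}(\head{\rho},a)(q') = 0 < \delta_e(\head{\rho},a)(q')\}$, which is non-empty by the definition of revealing. Then $\delta_e(\head{\rho},a)(K) \geq p_{\min}(\MEMDP)$.

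Successors $q' \in K$ give $\msf{lk}_b(\rho\cdot(a,q'))(e') = 0$, and by induction on the continuation the belief in $e'$ stays $0$ afterwards. So no prefix of these runs with at least $n$ revealing visits can have positive belief in $e'$, and they contribute $0$. Successors $q' \notin K$ carry total mass at most $1-p_{\min}(\MEMDP)$. From $\rho\cdot(a,q')$ on, the run must still make $n-1$ further revealing visits with positive belief in $e'$. I apply the induction hypothesis to the shifted strategy $\sigma_{\rho\cdot(a,q')}$, starting from state $q'$ with belief $\msf{lk}_b(\rho\cdot(a,q'))$; this is why the hypothesis must hold uniformly over beliefs. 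The shift is legitimate because $\msf{lk}$ factorises along concatenation: $\msf{lk}_b(\rho\cdot\pi) = \msf{lk}_{\msf{lk}_b(\rho)}(\pi)$ after the obvious re-rooting. Hence each such branch contributes at most $(1-p_{\min}(\MEMDP))^{n-1}$.

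Summing, the contribution of the pair $(\rho,a)$ is at most $\Prb_q^\sigma(\rho)\,\sigma(\rho)(a)\,(1-p_{\min}(\MEMDP))^n$. These events are disjoint over $(\rho,a)$, so their total weight is at most $1$, which yields the bound.

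The main obstacle is bookkeeping rather than an idea. First, the event is defined on finite runs, so I need the ``exists a prefix'' formulation to get a genuinely measurable, monotone event. Second, the revealing count after the shift must equal the original count minus one; this holds because the first revealing visit is exactly the one just consumed. Third, the case $p[\cdot](q')=0$, where $\lambda$ is arbitrary, must be excluded; it cannot arise on runs with positive probability in $\MEMDP[e]$, by Lemma~\ref{lem:never_z_belief}.
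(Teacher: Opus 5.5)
Your proposal is correct and follows the paper's proof: induction on $n$, decomposition at the first $e'/e$-revealing visit, and discarding the killing successor(s), whose $\delta_e$-mass is at least $p_{\min}(\MEMDP)$ (the paper picks a single $q_\rho$ where you take the whole set $K$). Like the paper, you then apply the induction hypothesis to the shifted strategy with the updated belief $\msf{lk}_b(\rho\cdot(a,q'))$, which is why both arguments need uniformity over beliefs. One small correction: state the strengthened hypothesis only for beliefs with $b(e)>0$, because for $b(e)=0$ the killing transition has $p[b,\cdot,\cdot](q')=0$, $\lambda$ is arbitrary there, and the bound can fail; with $b(e)>0$, which is preserved along positive-probability runs of $\MEMDP[e]$ by Lemma~\ref{lem:never_z_belief}, your third bookkeeping point goes through, and the paper's proof silently relies on the same restriction (which does hold where the lemma is applied).
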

	\begin{proof}
		We show the result by induction on $n \in \N$. This straightforwardly holds for $n = 0$. Assume now that this property holds for some $n \in \N$. Let $q \in Q$. We let:
		\begin{equation*}
			\hspace*{-0.8cm}
			\msf{FirstRev} := \{ \pi = (q_0,a_0) \cdots (q_n,a_n) \in (Q \cdot A)^* \mid (q_n,a_n) \text{ is revealing and }\forall i \leq n-1,\; (q_i,a_i) \text{ is not revealing} \}
		\end{equation*}
		%$\msf{Paths} := \{ \rho \in \msf{Reveal}^{1}_q(\MEMDP,b,e',e) \mid \forall \pi \sqsubset \rho: \msf{nb}_{\msf{rev}}(e',e,\pi) = 0 \}$. 
		For all $\rho \in \msf{FirstRev}$, we let $\sigma_{\rho} \in \msf{Strat}(Q,A)$ be defined by, for all $\pi \in Q \cdot (A \cdot Q)^*$, $\sigma_{\rho}(\pi) := \sigma(\rho \cdot \pi)$ (similarly to the proof of Lemma~\ref{lem:inequality_expected_value_hoeffding}). For all $\rho \in \msf{FirstRev}$, we consider some $q_\rho \in Q$ such that $\delta_{e'}(\rho)(q_\rho) = 0$ and $\delta_{e}(\rho)(q_\rho) > 0$. Note that $\sum_{q' \in Q \setminus \{q_\rho\}} \delta_{e}(\rho)(q) = 1 - \delta_{e}(\rho)(q_\rho) \leq 1 - p_{\min}(\MEMDP)$. By definition of $\msf{lk}_b$, we have that for all prefixes $\pi$ of some run $\rho \in Q \cdot (A \cdot Q)^*$, if $\msf{lk}_b(\pi)(e') = 0$, then $\msf{lk}_b(\rho)(e') = 0$. Therefore, letting $x := \Prb_q^\sigma(\MEMDP[e],\msf{Reveal}^{n+1}(\MEMDP,b,e',e))$, we have:
		\begin{align*}
			x & = \sum_{\rho \in \msf{FirstRev}} \Prb_q^\sigma(\MEMDP[e],\rho) \cdot \sum_{q' \in Q} \delta_e(\rho)(q') \cdot  \Prb_{q'}^{\sigma_\rho}(\MEMDP[e],\msf{Reveal}^{n}(\MEMDP,\msf{lk}_b(\rho \cdot q')),e',e) \\
			& = \sum_{\rho \in \msf{FirstRev}} \Prb_q^\sigma(\MEMDP[e],\rho) \cdot \sum_{q' \in Q \setminus \{q_\rho\}} \delta_e(\rho)(q') \cdot  \Prb_{q'}^{\sigma_\rho}(\MEMDP[e],\msf{Reveal}^{n}(\MEMDP,\msf{lk}_b(\rho \cdot q')),e',e) \\
			& \leq \sum_{\rho \in \msf{FirstRev}} \Prb_q^\sigma(\MEMDP[e],\rho) \cdot \sum_{q' \in Q \setminus \{q_\rho\}} \delta_e(\rho)(q') \cdot  (1 - p_{\min}(\MEMDP))^n \\
			& \leq \sum_{\rho \in \msf{FirstRev}} \Prb_q^\sigma(\MEMDP[e],\rho) \cdot (1 - p_{\min}(\MEMDP))^{n+1} \leq (1 - p_{\min}(\MEMDP))^{n+1}
		\end{align*}
		This concludes the induction.
	\end{proof}
	
	\paragraph{Proof of Theorem~\ref{thm:main_enough_actions_small_belief}.}
	\label{subproof:proof_thm}
	We now have all the ingredients to establish Theorem~\ref{thm:main_enough_actions_small_belief}.
	\begin{proof}
		Consider and environment $e \in E$, a prior belief $b \in \Dist(E)$, and a strategy $\sigma \in \msf{Strat}(Q,A)$. If there is some environment $e \in E$ such that $b(e) \leq \varepsilon$, then $\msf{NeverSmallBelief}(b,\varepsilon) = \emptyset$. Hence, let us assume that this is not the case. Consider some $e' \neq e \in E$.
		
		We start by realizing that $\{ \msf{nb}_{\msf{rev}}(e',e) \geq n_1(\MEMDP,\varepsilon)\} \cap \msf{NeverSmallBelief}(b,\varepsilon) \subseteq \msf{Reveal}^{n_1(\MEMDP,\varepsilon)}_q(\MEMDP,b,e',e)$, thus, by Lemma~\ref{lem:too_many_revealing}, we have:
		\begin{equation*}
			%\label{eqn:n1}
			\Prb_q^\sigma(\MEMDP[e],\{ \msf{nb}_{\msf{rev}}(e',e) \geq n_1(\MEMDP,\varepsilon)\} \cap \msf{NeverSmallBelief}(b,\varepsilon)) \leq (1 - p_{\min}(\MEMDP))^{n_1(\MEMDP,\varepsilon)} \leq \frac{\varepsilon }{3|E|}
		\end{equation*}
		
		Note now that there two cases. Assume first that $\eta(\MEMDP) = 0$. This is equivalent to $\iota(\MEMDP) = 0$ and $n_3(\MEMDP,\varepsilon) = 0$. This also implies that there are no distinguishing state-action pairs that are not revealing. Therefore:
		\begin{equation*}
			\left\{ \msf{nb}_{\msf{dstg}}(e',e) \geq \frac{m(\MEMDP,\varepsilon)}{|E|} \right\} \subseteq \{ \msf{nb}_{\msf{rev}}(e',e) \geq n_1(\MEMDP,\varepsilon) \}
		\end{equation*}
		
		Hence, we have:
		\begin{align*}
			\Prb_q^\sigma&\left(\MEMDP[e],\left\{ \msf{nb}_{\msf{dstg}}(e',e) \geq \frac{m(\MEMDP,\varepsilon)}{|E|} \right\} \cap \msf{NeverSmallBelief}(b,\varepsilon)\right) \leq \frac{\varepsilon }{|E|}
		\end{align*}
		
		Assume now that $\eta(\MEMDP) < 0$, $\iota(\MEMDP) > 0$ and $n_3(\MEMDP,\varepsilon) \geq 1$. Then, by definition, we have:
		\begin{align*}
			\left\{ \msf{nb}_{\msf{dstg}}(e',e) \geq \frac{m(\MEMDP,\varepsilon)}{|E|} \right\} & \subseteq (\{ \msf{nb}_{\msf{dstg}}(e',e) - \msf{nb}_{\msf{rev}}(e',e) \geq n_3(\MEMDP,\varepsilon) \} \cap \{ \msf{nb}_{\msf{rev}}(e',e) < n_1(\MEMDP,\varepsilon) \}) \\
			& \bigcup \; \{ \msf{nb}_{\msf{rev}}(e',e) \geq n_1(\MEMDP,\varepsilon) \}
		\end{align*}
		
		Consider any $\rho \in Q \cdot (A \cdot Q)^*$ such that $\msf{nb}_{\msf{rev}}(e',e,\rho) \leq n_1(\MEMDP,\varepsilon)$, $\Prb_q^\sigma(\MEMDP[e],\rho) > 0$ and $\sum_{\pi \sqsubseteq \rho} v[e',e](\pi) \leq -n_2(\MEMDP,\varepsilon)$. By Lemma~\ref{lem:sum_v_small_likelyhood}, we have:
		\begin{align*}
			\frac{\msf{lk}_b(\rho)(e')}{\msf{lk}_b(\rho)(e)} & \leq \frac{b(e')}{b(e)} \cdot 2^{\left(\sum_{\pi \sqsubseteq \rho} v[e',e](\pi) \right)} \cdot \msf{ratio}_{\max}(\MEMDP)^{\msf{nb}_{\msf{rev}}(e',e,\rho)} \\
			& \leq \frac{1}{\varepsilon} \cdot 2^{-n_2(\MEMDP,\varepsilon)} \cdot \msf{ratio}_{\max}(\MEMDP)^{\msf{nb}_{\msf{rev}}(e',e,\rho)} \leq \frac{1}{\varepsilon} \cdot \varepsilon^2 = \varepsilon
		\end{align*}
		It follows that, for all $k \in \N$:
		\begin{equation*}
			\Prb_q^\sigma(\MEMDP[e],\{ \msf{s}_{v[e',e]}^k \leq -n_2(\MEMDP,\varepsilon) \} \cap \{ \msf{nb}_{\msf{rev}}(e',e) < n_1(\MEMDP,\varepsilon)\} \cap  \msf{NeverSmallBelief}(b,\varepsilon)) = 0
		\end{equation*}
		%, for all $\rho \in (Q \cdot A)^\omega$ such that $\msf{s}_{v[e',e]}^k(\rho) \leq -n_2$ for some $k \in \N$, then $\rho \notin \msf{NeverSmallBelief}(b,\varepsilon)$.
		%\begin{equation*}
		%	\Prb_q^\sigma(\MEMDP[e],\{ \msf{s}_{v[e',e]}^n \leq -n_2 \} \cap \msf{NeverSmallBelief}(b,\varepsilon)) = 0
		%\end{equation*}
		
		Furthermore, by Lemma~\ref{lem:valuation_eta_bounded}, in the MDP $\MEMDP[e]$, $v[e',e]$ is a $[\log(\msf{ratio}_{\min}(\MEMDP)),\log(\msf{ratio}_{\max}(\MEMDP))]$-valuation that is $\eta(\MEMDP)$-bounded. Hence, by Lemma~\ref{lem:small_proba_far_from_expected_value}, for all $n \in \N$ and $t > 0$, we have:
		\begin{equation*}
			\Prb_q^\sigma[\MEMDP[E],\{ \msf{s}_{v[e',e]}^n \geq n \cdot \eta(\MEMDP) + t \} \cap \{ \msf{c}_v^n \geq n \}] \leq \exp\left(-\frac{2t^2}{n \cdot  (\log(\frac{\msf{ratio}_{\max}(\MEMDP))}{\msf{ratio}_{\min}(\MEMDP))})^2}\right)
		\end{equation*}
		
		Let $n := n_3(\MEMDP,\varepsilon) \in \N$ and $t := -\frac{n \cdot \eta(\MEMDP)}{2} > 0$. Since $n \geq \frac{2 \cdot n_2(\MEMDP,\varepsilon)}{|\eta(\MEMDP)|}$, we have:
		\begin{equation*}
			n \cdot \eta(\MEMDP) + t = \frac{n \cdot \eta(\MEMDP)}{2} \leq -n_2(\MEMDP,\varepsilon)
		\end{equation*}
		Thus, $\{ \msf{s}_{v[e',e]}^n \geq -n_2(\MEMDP,\varepsilon) \} \subseteq \{ \msf{s}_{v[e',e]}^n \geq n \cdot \eta(\MEMDP) + t \}$. Furthermore, since $n \geq 2 \cdot |\log(\varepsilon)| \cdot  (\log(\frac{\msf{ratio}_{\max}(\MEMDP)}{\msf{ratio}_{\min}(\MEMDP)}))^2 \cdot \frac{1}{\eta(\MEMDP)^2}$, we have:
		\begin{equation*}
			-\frac{2t^2}{n \cdot  (\log(\frac{\msf{ratio}_{\max}(\MEMDP))}{\msf{ratio}_{\min}(\MEMDP))})^2} = - \frac{n \cdot \eta(\MEMDP)^2}{2 \cdot  (\log(\frac{\msf{ratio}_{\max}(\MEMDP))}{\msf{ratio}_{\min}(\MEMDP))})^2} \leq \log(\varepsilon) \leq \ln(\varepsilon)
		\end{equation*}
		
		Overall, we obtain:
		\begin{equation*}
			\Prb_q^\sigma[\MEMDP[E],\{ \msf{s}_{v[e',e]}^{n_3(\MEMDP,\varepsilon)} \geq -n_2(\MEMDP,\varepsilon) \} \cap \{ \msf{c}_v^{n_3(\MEMDP,\varepsilon)} \geq n_3(\MEMDP,\varepsilon) \}] \leq \frac{\varepsilon}{2 \cdot |E|}
		\end{equation*}
		
		Hence, letting $X_1 := \{ \msf{nb}_{\msf{rev}}(e',e) < n_1(\MEMDP,\varepsilon) \}$:
		\begin{align*}
			\Prb_q^\sigma&[\MEMDP[E],\{ \msf{c}_v^{n_3(\MEMDP,\varepsilon)} \geq n_3(\MEMDP,\varepsilon) \} \cap X_1 \cap \msf{NeverSmallBelief}(b,\varepsilon)] \\
			& \leq \Prb_q^\sigma[\MEMDP[E],\{ \msf{s}_{v[e',e]}^{n_3(\MEMDP,\varepsilon)} \geq -n_2(\MEMDP,\varepsilon) \} \cap \{ \msf{c}_v^{n_3(\MEMDP,\varepsilon)} \geq n_3(\MEMDP,\varepsilon) \} \cap X_1 \cap \msf{NeverSmallBelief}(b,\varepsilon)] \\
			& + \Prb_q^\sigma[\MEMDP[E],\{ \msf{s}_{v[e',e]}^{n_3(\MEMDP,\varepsilon)} \leq -n_2(\MEMDP,\varepsilon) \} \cap \{ \msf{c}_v^{n_3(\MEMDP,\varepsilon)} \geq n_3(\MEMDP,\varepsilon) \} \cap X_1 \cap \msf{NeverSmallBelief}(b,\varepsilon)] \\
			& \leq \Prb_q^\sigma[\MEMDP[E],\{ \msf{s}_{v[e',e]}^{n_3(\MEMDP,\varepsilon)} \geq -n_2(\MEMDP,\varepsilon) \} \cap \{ \msf{c}_v^{n_3(\MEMDP,\varepsilon)} \geq n_3(\MEMDP,\varepsilon) \}] \\
			& + \Prb_q^\sigma[\MEMDP[E],\{ \msf{s}_{v[e',e]}^{n_3(\MEMDP,\varepsilon)} \leq -n_2(\MEMDP,\varepsilon) \} \cap X_1 \cap  \msf{NeverSmallBelief}(b,\varepsilon)] \\
			& \leq \frac{\varepsilon}{2 \cdot |E|}
		\end{align*}
		
		Finally, by definition of $v[e',e]$, a state-action pair $(q,a) \in Q \times A$ is in $\msf{NZ}_{v[e',e]}$ if and only if it is $(e',e)$-distinguishing and not $e'/e$-revealing. Therefore, for all $\rho \in Q \cdot (A \cdot Q)^*$, we have:
		\begin{equation*}
			\{ \msf{nb}_{\msf{dstg}}(e',e,\rho) - \msf{nb}_{\msf{rev}}(e',e,\rho) \geq n_3(\MEMDP,\varepsilon) \} = \{ \msf{c}_v^{n_3(\MEMDP,\varepsilon)}(\rho) \geq n_3(\MEMDP,\varepsilon) \} %= \{ \msf{c}_v^{n_3(\MEMDP,\varepsilon)} = n_3(\MEMDP,\varepsilon)\}
		\end{equation*}
		
		Therefore, we have:
		\begin{equation*}
			\Prb_q^\sigma(\MEMDP[e],\{ \msf{nb}_{\msf{dstg}}(e',e) - \msf{nb}_{\msf{rev}}(e',e) \geq n_3(\MEMDP,\varepsilon) \} \cap X_1 \cap  \msf{NeverSmallBelief}(b,\varepsilon)) \leq \frac{\varepsilon}{2 \cdot |E|}
		\end{equation*}
		
		We obtain:
		\begin{align*}
			\Prb_q^\sigma&\left(\MEMDP[e],\left\{ \msf{nb}_{\msf{dstg}}(e',e) \geq \frac{m(\MEMDP,\varepsilon)}{|E|} \right\} \cap X_1 \cap \msf{NeverSmallBelief}(b,\varepsilon)\right) \\ 
			& \leq \Prb_q^\sigma(\MEMDP[e],\{ \msf{nb}_{\msf{rev}}(e',e) \geq n_1(\MEMDP,\varepsilon) \} \cap \msf{NeverSmallBelief}(b,\varepsilon)) \\
			& + \Prb_q^\sigma(\MEMDP[e],\{ \msf{nb}_{\msf{dstg}}(e',e) - \msf{nb}_{\msf{rev}}(e',e) \geq n_3(\MEMDP,\varepsilon) \} \cap X_1 \cap  \msf{NeverSmallBelief}(b,\varepsilon)) \\
			& \leq \frac{\varepsilon}{2 \cdot |E|} + \frac{\varepsilon}{2 \cdot |E|} = \frac{\varepsilon }{|E|}
		\end{align*}
		
		This holds for all $e' \neq e \in E$. Furthermore, we have:
		\begin{equation*}
			\{ \msf{nb}_{\msf{dstg}} \geq m(\MEMDP,\varepsilon) \} \subseteq \bigcup_{e' \in E \setminus \{e\}} \left\{ \msf{nb}_{\msf{dstg}}(e,e') \geq \frac{m(\MEMDP,\varepsilon)}{|E|} \right\}
		\end{equation*}
		Therefore:
		\begin{align*}
			\Prb_q^\sigma&(\MEMDP[e],\{ \msf{nb}_{\msf{dstg}} \geq m(\MEMDP,\varepsilon) \} \cap \msf{NeverSmallBelief}(b,\varepsilon)) \\ 
			& \leq \sum_{e' \in E \setminus \{e\}} 	\Prb_q^\sigma\left(\MEMDP[e],\left\{ \msf{nb}_{\msf{dstg}}(e',e) \geq \frac{m(\MEMDP,\varepsilon)}{|E|} \right\} \cap \msf{NeverSmallBelief}(b,\varepsilon)\right) \\
			& \leq \sum_{e' \in E \setminus \{e\}} \frac{\varepsilon }{|E|} \leq \varepsilon 
		\end{align*}
	\end{proof}
	
	\subsection{Complements on Algorithm~\ref{algo:MEMDP_parity}}
	\label{appen:approximation_algorithm}

	\subsubsection{Definition of an MDP from an MEMDP}
	\label{appen:def-MDP-from-MEMDP} 
	We formally define below the MDP constructed in Line 9 of Algorithm~\ref{algo:MEMDP_parity}.
	\begin{definition}
		\label{def:mdp-from-memdp}
		Consider an MEMDP $\MEMDP$, a parity objective $W$, and an environment belief $b \in \Dist(E)$. We let $S(\MEMDP,b) := \{ (q,a,q') \in Q \times A \times Q \mid (q,a) \in \msf{Dstg}(\MEMDP),\; q' \in \Supp(p[b,q,a]) \}$. Consider a valuation $v\colon S(\MEMDP,b) \to [0,1]$. We let $w_{v,b}\colon \msf{Dstg}(\MEMDP) \to [0,1]$ be defined by, for all $(q,a) \in Q \times A$:
		\begin{equation*}
			w_{v,b}(q,a) := \sum_{\substack{q' \in \Supp(p[b,q,a])}} \sum_{e \in E} b(e) \cdot \delta_e(q,a)(q') \cdot v(q,a,q')
		\end{equation*}
		%with $v(q,a,q')$ set to 0 if $v$ is not defined on $(q,a,q')$ (i.e. $(q,a,q') \notin S(\MEMDP,b)$).
		
		Then, we define the
		MDP $\MDP(\MEMDP,W,b,v)$ and parity objective $W'$ (which, together, are denoted $\msf{MDP}\text{-}\msf{from}\text{-}\msf{MEMDP}(\MEMDP,W,b,v)$ in Algorithm~\ref{algo:MEMDP_parity}) as follows:
		\begin{itemize}
			\item $\MDP(\MEMDP,W,b,x) := (Q',A,\delta)$ with $Q' := Q \cup \{ q_{\msf{win}},q_{\msf{lose}} \}$, with $q_{\msf{win}}$ and $q_{\msf{lose}}$ two fresh states not in $Q$; and $\delta: Q' \times A \to \Dist(A)$ such that:
			\begin{itemize}
				\item for all $a \in A$, $\delta(q_{\msf{win}},a)(q_{\msf{win}}) := 1$ and $\delta(q_{\msf{lose}},a)(q_{\msf{lose}}) := 1$;
				\item for all $(q,a) \in Q \times A \setminus \msf{Dstg}(\MEMDP)$, we let $\delta(q,a) := \delta_e(q,a) \in \Dist(Q)$, for any $e \in E$ (they all coincide since $(q,a)$ is a non-distinguishing state-action pair).
				\item for all $(q,a) \in Q \times \msf{Dstg}(\MEMDP)$, we let:
				\begin{align*}
					\delta(q,a)(q_{\msf{win}}) & := w_{v,b}(q,a) \\
					\delta(q,a)(q_{\msf{lose}}) & := 1 - w_{v,b}(q,a)
				\end{align*}
			\end{itemize}
			\item Consider the function $f\colon Q \to \N$ defining the parity objective $W$. The parity objective $W'$ is defined by the function $g\colon Q' \to \N$ that coincides with $f$ on $Q$ and such that $g(q_{\msf{win}}) := 0$ and $g(q_{\msf{lose}}) := 1$; that way, we have $(q_{\msf{win}})^\omega \in W'$ and $(q_{\msf{lose}})^\omega \notin W'$.
		\end{itemize}
	\end{definition}
	
	Before we establish below the property that this construction ensures, we first introduce several notations.
	\begin{definition}
		Consider an MEMDP $\MEMDP$, a parity objective $W$, an environment belief $b \in \Dist(E)$, and a valuation $v\colon S(\MEMDP,b) \to [0,1]$. Let $\msf{FirstD} := \{ \pi = (q_0,a_0) \cdots (q_n,a_n) \in (Q \times A)^+ \mid (q_n,a_n) \in \msf{Dstg}(\MEMDP),\; \forall i \leq n,\; (q_i,a_i) \notin \msf{Dstg}(\MEMDP)\} \subseteq (Q \times A)^+$.	For all $\pi \in \msf{FirstD}$, letting $(q,a) \in Q \times A$ denote the %we let $\hd{Q}{\pi} \in Q$ and $\hd{A}{\pi} \in A$ denote the 
		last state-action pair visited by $\pi$, we let for all $e \in E$: $\delta_e(\pi) := \delta_e(q,a) \in \Dist(Q)$; $\delta(\pi) := \delta(q,a) \in \Dist(Q')$, $w_{v,b}(\pi) := w_{v,b}(q,a) \in [0,1]$, $p[b,\pi] := p[b,q,a] \in \Dist(e)$, and for all $q' \in Q$: $v(\pi,q') := v(q,a,q')$ and $\lambda(b,\pi,q') := \lambda(b,q,a,q')$.
		
		For all strategies $\sigma \in \msf{Strat}(Q,A)$ and $q \in Q$, we let $x(\MEMDP,q,\sigma,b,v) \in [0,1]$ be defined by: 
		\begin{equation*}
			x(\MEMDP,q,\sigma,b,v) := \sum_{\pi \in \msf{FirstD}} \sum_{q' \in \Supp(p[b,\pi])} \sum_{e \in E} \Prb_q^\sigma[\MEMDP[e],\pi] \cdot b(e) \cdot  \delta_{e}(\pi)(q') \cdot |v(\pi,q') - \msf{val}^{\msf{pr}}_{q'}(\MEMDP,\lambda(b,\pi,q'),W)|
		\end{equation*}
	\end{definition}
	
	We establish below the property that this construction ensures.
	\begin{lemma}
		\label{lem:prop-MDP-from-MEMDP}
		Consider an MEMDP $\MEMDP$, a parity objective $W$, an environment belief $b \in \Dist(E)$, and a valuation $v\colon S(\MEMDP,b) \to [0,1]$. For $\MDP := \MDP(\MEMDP,W,b,x)$ and $W'$ the MDP and the parity objective from Definition~\ref{def:mdp-from-memdp}, for all $q \in Q$, we have:
		\begin{equation*}
			|\msf{val}^{\msf{pr}}_{q}(\MEMDP,b,W) - \msf{val}_{q}(\MDP,W')| \leq \sup_{\sigma \in \msf{Strat}(Q,A)} x(\MEMDP,q,\sigma,b)
		\end{equation*}
		with $\msf{val}_{q}(\MDP,W') := \sup_{\sigma \in \msf{Strat}(Q',A)} \Prb_q^\sigma[\MDP,W']$.
	\end{lemma}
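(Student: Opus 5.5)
The plan is to compare both values through a common intermediate object. That object is the behaviour of a strategy up to (and including) the first distinguishing state-action pair.

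The key observation is that, before any pair in $\msf{Dstg}(\MEMDP)$ is visited, all environments share the same transitions. So for every $\pi \in \msf{FirstD}$ (more precisely, every finite run ending with a first distinguishing pair) the probability $\Prb_q^\sigma[\MEMDP[e],\pi]$ does not depend on $e$. Likewise, on $\msf{AZ}$-type runs (never visiting a distinguishing pair) the measure is common to all environments and coincides with the measure in $\MDP$. The same holds for runs in $\MDP$ that never enter $q_{\msf{win}}$ or $q_{\msf{lose}}$. Also, the belief stays equal to $b$ until the first distinguishing pair, so the updated belief after $\pi\cdot q'$ is exactly $\lambda(b,\pi,q')$. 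Strategies on $(Q,A)$ and on $(Q',A)$ can be identified on runs staying in $Q$, since the behaviour after reaching a sink is irrelevant.

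\textbf{Decomposition of the $\msf{pr}$-value of $\sigma$.} Fix $\sigma$. Split $W$-probability into the never-distinguishing part $P_0(\sigma)$, which is common to $\MEMDP$ and $\MDP$, and the sum over $\pi\in\msf{FirstD}$ and successors $q'$. By prefix-independence of parity objectives, the contribution of $(\pi,q')$ is $\sum_e b(e)\Prb_q^\sigma[\MEMDP[e],\pi]\,\delta_e(\pi)(q')\,\Prb_{q'}^{\sigma_{\pi q'}}[\MEMDP[e],W]$. Using $\Prb_q^\sigma[\MEMDP[e],\pi]=P(\pi)$ independent of $e$, and writing $b(e)\delta_e(\pi)(q') = p[b,\pi](q')\,\lambda(b,\pi,q')(e)$, this contribution equals $P(\pi)\,p[b,\pi](q')\,\msf{val}^{\msf{pr}}_{q'}(\MEMDP,\lambda(b,\pi,q'),\sigma_{\pi q'},W)$. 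Similarly, in $\MDP$ the probability of $W'$ under $\sigma$ is $P_0(\sigma)+\sum_\pi P(\pi)\,w_{v,b}(\pi)$, and $w_{v,b}(\pi)=\sum_{q'}p[b,\pi](q')\,v(\pi,q')$.

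\textbf{Upper bound on $\msf{val}^{\msf{pr}}$.} For any $\sigma$, bound $\msf{val}^{\msf{pr}}_{q'}(\MEMDP,\lambda,\sigma_{\pi q'},W)\le \msf{val}^{\msf{pr}}_{q'}(\MEMDP,\lambda,W)\le v(\pi,q')+|v(\pi,q')-\msf{val}^{\msf{pr}}_{q'}(\cdot)|$. Summing, the $\msf{pr}$-value of $\sigma$ is at most $\Prb_q^\sigma[\MDP,W'] + x(\MEMDP,q,\sigma,b,v)$. Taking the supremum gives $\msf{val}^{\msf{pr}}\le\msf{val}(\MDP,W')+\sup_\sigma x$.

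\textbf{Lower bound on $\msf{val}^{\msf{pr}}$.} Conversely, fix $\sigma$ and $\varepsilon>0$. For each of the countably many $(\pi,q')$, pick $\sigma^{\pi,q'}$ that is $\varepsilon$-optimal for $\msf{val}^{\msf{pr}}_{q'}(\MEMDP,\lambda(b,\pi,q'),W)$. Glue them: define $\sigma'$ to follow $\sigma$ until the first distinguishing pair, then switch to $\sigma^{\pi,q'}$. Before the switch $\sigma'$ agrees with $\sigma$, so $P(\pi)$ and $P_0$ are unchanged, and hence $x(\MEMDP,q,\sigma',b,v)=x(\MEMDP,q,\sigma,b,v)$. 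The decomposition then gives $\msf{val}^{\msf{pr}}(\sigma')\ge \Prb_q^\sigma[\MDP,W']-x(\MEMDP,q,\sigma,b,v)-\varepsilon$. Taking suprema over $\sigma$ and letting $\varepsilon\to0$ yields the other inequality.

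I expect the main obstacle to be making the decomposition rigorous. One must justify the measure-theoretic split over first-hit prefixes and the resulting equality with shifted strategies. One must also check that the gluing produces a legitimate strategy: it is well defined since first-hit prefixes are prefix-incomparable. Everything else is bookkeeping with the identities above.
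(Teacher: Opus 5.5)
Your proposal is correct and follows essentially the same route as the paper. It uses the same split of the $W$-probability into never-distinguishing runs and first-hit prefixes in $\msf{FirstD}$, and the same upper bound via strategies that agree before the first distinguishing pair. The lower bound is the same too: glue $\varepsilon$-optimal continuations after each first-hit prefix. The only difference is that you note $x$ is unchanged by the gluing, whereas the paper simply bounds $x$ of the glued strategy by the supremum; this is harmless.
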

	\begin{proof}
		We let $\msf{NoD} := \{ \rho = (q_0,a_0) \cdots (Q \times A)^\omega \mid \forall i \in \N,\; (q_i,a_i) \notin \msf{Dstg}(\MEMDP)\} \subseteq (Q \times A)^\omega$. 
				
		Let $q \in Q$. Consider a strategy $\sigma_Q \in \msf{Strat}(Q,A)$. For all $\pi \in \msf{FirstD}$, we let $\sigma^{\pi}_Q \in \msf{Strat}(Q,A)$ be such that, for all $\pi' \in Q \cdot (A \cdot Q)^*$, $\sigma_Q^{\pi}(\pi') := \sigma_Q(\pi \cdot \pi')$. We have:
		\begin{equation*}
			\forall e \in E,\; \Prb_q^{\sigma_Q}[\MEMDP[e],W] = \Prb_q^{\sigma_Q}[\MEMDP[e],W \cap \msf{NoD}] + \sum_{\pi \in \msf{FirstD}} \sum_{q' \in \Supp(p[b,\pi])} \Prb_q^{\sigma_Q}[\MEMDP[e],\pi] \cdot \delta_e(\pi)(q') \cdot \Prb_{q'}^{\sigma^{\pi}_Q}[\MEMDP[e],W]
		\end{equation*}
		
		Similarly, consider a strategy $\sigma_{Q'} \in \msf{Strat}(Q',A)$. We have:
		\begin{align*}
			\Prb_q^{\sigma_{Q'}}[\MDP,W'] & = \Prb_q^{\sigma_{Q'}}[\MDP,W' \cap \msf{NoD}] + \sum_{\pi \in \msf{FirstD}} \Prb_q^{\sigma_Q}[\MDP,\pi] \cdot w_{v,b}(\pi) \\
			& = \Prb_q^{\sigma_{Q'}}[\MDP,W' \cap \msf{NoD}] + \sum_{\pi \in \msf{FirstD}} \Prb_q^{\sigma_Q}[\MDP,\pi] \cdot \sum_{q' \in \Supp(p[b,\pi])} \sum_{e \in E} b(e) \cdot \delta_e(\pi)(q') \cdot v(\pi,q')
		\end{align*}
		
		Assume now that the strategies $\sigma_Q$ and $\sigma_{Q'}$ are such that, for all prefixes $\pi \in Q \cdot (A \cdot Q)^*$ of runs in $\msf{FirstD}$, we have $\sigma_Q(\pi) = \sigma_{Q'}(\pi)$. In that case, we have:
		\begin{equation*}
			\forall e \in E,\; \Prb_q^{\sigma_Q}[\MEMDP[e],W \cap \msf{NoD}] = \Prb_q^{\sigma_{Q'}}[\MDP,W' \cap \msf{NoD}]
		\end{equation*}
		and, for all $\pi \in \msf{FirstD}$:
		\begin{equation*}
			\forall e \in E,\; \Prb_q^{\sigma_Q}[\MEMDP[e],\pi] = \Prb_q^{\sigma_{Q'}}[\MDP,\pi]
		\end{equation*}
		
		Therefore, we have:
		\begin{equation*}
			\hspace*{-0.8cm}
			\msf{val}^{\msf{pr}}_{q}(\MEMDP,b,\sigma_Q,W) - \Prb_q^{\sigma_{Q'}}[\MDP,W'] %& = \sum_{\pi \in \msf{FirstD}} \Prb_q^{\sigma_Q}[\MDP,\pi] \cdot \left( \sum_{q' \in Q}\sum_{e \in E} b(e) \cdot \delta_e(\pi)(q') \cdot \Prb_{q'}^{\sigma^{\pi}_Q}[\MEMDP[e],W] - w_{v,b}(\pi) \right) \\
			= \sum_{\pi \in \msf{FirstD}} \sum_{q' \in \Supp(p[b,\pi])}\sum_{e \in E} \Prb_q^{\sigma_Q}[\MDP,\pi] \cdot \left( b(e) \cdot \delta_e(\pi)(q') \cdot (\Prb_{q'}^{\sigma^{\pi}_Q}[\MEMDP[e],W] - v(\pi,q')) \right) \\
		\end{equation*}
		
		Now, consider any strategy $\sigma_Q \in \msf{Strat}(Q,A)$ and consider a strategy $\sigma_{Q'} \in \msf{Strat}(Q',A)$ such that for all prefixes $\pi \in Q \cdot (A \cdot Q)^*$ of runs in $\msf{FirstD}$, we have $\sigma_Q(\pi) = \sigma_{Q'}(\pi)$. Note that for all $\pi \in \msf{FirstD}$ and $q' \in \Supp(p[b,\pi])$, we have:
		\begin{equation*}
			\sum_{e \in E} \lambda(b,\pi,q')(e) \cdot \Prb_{q'}^{\sigma_Q^{\pi}}[\MEMDP[e],W] \leq \msf{val}^{\msf{pr}}_{q'}(\MEMDP,\lambda(b,\pi,q'),W)
		\end{equation*}
		with, for all $e \in E$:
		\begin{equation*}
			\lambda(b,t,a,q')(e) = \frac{b(e) \cdot \delta_e(\pi)(q')}{\sum_{e' \in E} b(e') \cdot \delta_{e'}(\pi)(q')}
		\end{equation*}
		Thus:
		\begin{equation*}
			\sum_{e \in E} b(e) \cdot \delta_e(\pi)(q') \cdot \Prb_{q'}^{\sigma_Q^{\pi}}[\MEMDP[e],W] \leq \sum_{e \in E} b(e) \cdot \delta_e(\pi)(q') \cdot \msf{val}^{\msf{pr}}_{q'}(\MEMDP,\lambda(b,\pi,q'),W)
		\end{equation*}
		
		Therefore, we have:
		\begin{align*}
			\msf{val}^{\msf{pr}}_{q}(\MEMDP,b,\sigma_Q,W) & = \sum_{\pi \in \msf{FirstD}} \sum_{q' \in \Supp(p[b,\pi])}\sum_{e \in E} \Prb_q^{\sigma_Q}[\MDP,\pi] \cdot \left( b(e) \cdot \delta_e(\pi)(q') \cdot (\Prb_{q'}^{\sigma^{\pi}_Q}[\MEMDP[e],W] - v(\pi,q')) \right) \\
			& \phantom{ \leq \sum_{\pi \in \msf{FirstD}} \sum_{q' \in \Supp(p[b,\pi])}\sum_{e \in E} \Prb_q^{\sigma_Q}[\MDP,\pi] \cdot } + \Prb_q^{\sigma_{Q'}}[\MDP,W'] \\
			& \leq \sum_{\pi \in \msf{FirstD}} \sum_{q' \in \Supp(p[b,\pi])}\sum_{e \in E} \Prb_q^{\sigma_Q}[\MDP,\pi] \cdot \left(b(e) \cdot \delta_e(\pi)(q') \cdot (\msf{val}^{\msf{pr}}_{q'}(\MEMDP,\lambda(b,\pi,q'),W) - v(\pi,q')) \right) \\
			& \phantom{ \leq \sum_{\pi \in \msf{FirstD}} \sum_{q' \in \Supp(p[b,\pi])}\sum_{e \in E} \Prb_q^{\sigma_Q}[\MDP,\pi] \cdot } + \Prb_q^{\sigma_{Q'}}[\MDP,W'] \\
			& \leq x(\MEMDP,q,\sigma_Q,b) + \Prb_q^{\sigma_{Q'}}[\MDP,W'] \leq \sup_{\sigma \in \msf{Strat}(Q,A)} x(\MEMDP,q,\sigma,b) + \msf{val}_{q}(\MDP,W')
		\end{align*}
		As this holds for all $\sigma_Q \in \msf{Strat}(Q,A)$, we have: $(\msf{val}^{\msf{pr}}_{q}(\MEMDP,b,W) - \msf{val}_{q}(\MDP,W')) \leq \sup_{\sigma \in \msf{Strat}(Q,A)} x(\MEMDP,q,\sigma,b,v)$. 
		
		Consider now any strategy $\sigma_{Q'} \in \msf{Strat}(Q',A)$. Let $\varepsilon > 0$. Let $\sigma_Q \in \msf{Strat}(Q,A)$ be a strategy in $\MEMDP$ that: coincides with the strategy $\sigma_{Q'}$ all prefixes in $Q \cdot (A \cdot Q)^*$ of runs in $\msf{FirstD}$; furthermore, for all $\pi \in \msf{FirstD}$, the strategy $\sigma_Q$ is such that the strategy $\sigma_Q^{\pi} \in \msf{Strat}(Q,A)$ ensures, for all $q' \in \Supp(p[b,\pi])$: 
		\begin{equation*}
			\sum_{e \in E} \lambda(b,\pi,q')(e) \cdot \Prb_{q'}^{\sigma_Q^{\pi}}[\MEMDP[e],W] \geq \msf{val}^{\msf{pr}}_{q'}(\MEMDP,\lambda(b,\pi,q'),W) - \varepsilon
		\end{equation*}
		and thus
		\begin{equation*}
			\sum_{e \in E} b(e) \cdot \delta_e(\pi)(q') \cdot \Prb_{q'}^{\sigma_Q^{\pi}}[\MEMDP[e],W] \geq \sum_{e \in E} b(e) \cdot \delta_e(\pi)(q') \cdot \msf{val}^{\msf{pr}}_{q'}(\MEMDP,\lambda(b,\pi,q'),W) - \varepsilon
		\end{equation*}
		In that case, we have:
		\begin{align*}
			\Prb_q^{\sigma_{Q'}}[\MDP,W'] & = \sum_{\pi \in \msf{FirstD}} \sum_{q' \in \Supp(p[b,\pi])}\sum_{e \in E} \Prb_q^{\sigma_Q}[\MDP,\pi] \cdot \left( b(e) \cdot \delta_e(\pi)(q') \cdot (v(\pi,q') - \Prb_{q'}^{\sigma^{\pi}_Q}[\MEMDP[e],W]) \right) \\
			& \phantom{ \leq \sum_{\pi \in \msf{FirstD}} \sum_{q' \in \Supp(p[b,\pi])}\sum_{e \in E} \Prb_q^{\sigma_Q}[\MDP,\pi] \cdot } + \msf{val}^{\msf{pr}}_{q}(\MEMDP,b,\sigma_Q,W) \\
			& \leq \sum_{\pi \in \msf{FirstD}} \sum_{q' \in \Supp(p[b,\pi])}\sum_{e \in E} \Prb_q^{\sigma_Q}[\MDP,\pi] \cdot \left(b(e) \cdot \delta_e(\pi)(q') \cdot (v(\pi,q') - \msf{val}^{\msf{pr}}_{q'}(\MEMDP,\lambda(b,\pi,q'),W)) \right) \\
			& \phantom{ \leq \sum_{\pi \in \msf{FirstD}} \sum_{q' \in \Supp(p[b,\pi])}\sum_{e \in E} \Prb_q^{\sigma_Q}[\MDP,\pi] \cdot } + \msf{val}^{\msf{pr}}_{q}(\MEMDP,b,\sigma_Q,W) + \varepsilon \\
			& \leq x(\MEMDP,q,\sigma_Q,b) + \msf{val}^{\msf{pr}}_{q}(\MEMDP,b,\sigma_Q,W) + \varepsilon \leq \sup_{\sigma \in \msf{Strat}(Q,A)} x(\MEMDP,q,\sigma,b,v) + \msf{val}^{\msf{pr}}_{q}(\MEMDP,b,W) + \varepsilon
		\end{align*}
		As this holds for all $\sigma_{Q'} \in \msf{Strat}(Q',A)$ and $\varepsilon > 0$, we have: $(\msf{val}_{q}(\MDP,W') - \msf{val}^{\msf{pr}}_{q}(\MEMDP,b,W)) \leq \sup_{\sigma \in \msf{Strat}(Q,A)} x(\MEMDP,q,\sigma,b)$. That concludes the proof.
	\end{proof}
	
	Building on the above lemma, we now establish a result that will be instrumental in the correction proof of Algorithm~\ref{algo:MEMDP_parity} (an induction step). Let us first introduce the central notion used to express the property of interest. 
	\begin{definition}
		\label{def:approximation}
		Consider an MEMDP $\MEMDP$, a parity objective $W$, and an environment belief $b \in \Dist(E)$. For all $n \geq 1$ and $\gamma \in [0,1]$, we let $\msf{N}(\MEMDP,b,n,\gamma) := \{ \msf{nb}_{\msf{dstg}} \geq n \} \cap \msf{NeverSmallBelief}(b,\gamma)$. Then, for all states $q \in Q$ and $\eta \geq 0$, we say that a real value $x_q \in [0,1]$ is a $(n,\gamma,\eta,b,q)$-approximation if:
		\begin{equation*}
			|\msf{val}^{\msf{pr}}_{q}(\MEMDP,b,W) - x_q| \leq \eta + \msf{val}^{\msf{pr}}_{q}(\MEMDP,b,\msf{N}(\MEMDP,b,n,\gamma))
		\end{equation*}
		\iffalse
		\begin{itemize}
			\item $|\msf{val}^{\msf{pr}}_{q}(\MEMDP,b,W) - x_q| \leq \gamma + \msf{val}^{\msf{pr}}_{q}(\MEMDP,b,\{ \msf{nb}_{\msf{dstg}} \geq n \} \cap \msf{NeverSmallBelief}(b,\gamma))$; and
			\item For all strategies $\sigma \in \msf{Strat}(Q,A)$, we have:
			\begin{equation*}
				\msf{val}^{\msf{pr}}_{q}(\MEMDP,b,\sigma,\{ \msf{nb}_{\msf{dstg}} \geq n \} \cap \msf{NeverSmallBelief}(b,\gamma)) \geq |\msf{val}^{\msf{pr}}_{q}(\MEMDP,b,\sigma,W) - x_q| - \gamma
			\end{equation*} 		
		\end{itemize}
		\fi
	\end{definition}
	
	We can state and prove the desired lemma.
	\begin{lemma}
		\label{lem:induction-step}
		Consider an MEMDP $\MEMDP$, a parity objective $W$, and an environment belief $b \in \Dist(E)$. Let $n \in \N$ and $\gamma \in [0,1]$ such that $\min_{e \in \Supp(b)} b(e) > \gamma$. Let $\eta \geq 0$. Consider a valuation $v\colon S(\MEMDP,b) \to [0,1]$ and assume that, for all $(q,a,q') \in S(\MEMDP,b)$, the value $v(q,a,q')$ is a $(n,\gamma,\eta,\lambda(b,q,a,q'),q')$-approximation. Then, for $\MDP := \MDP(\MEMDP,W,b,x)$ and $W'$ the MDP and the parity objective from Definition~\ref{def:mdp-from-memdp}, for all $q \in Q$, the value $\msf{val}_{q}(\MDP,W')$ is a $(n+1,\gamma,\eta,b,q)$-approximation.
	\end{lemma}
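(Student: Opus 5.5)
Lemma~\ref{lem:prop-MDP-from-MEMDP} gives
\[
|\msf{val}^{\msf{pr}}_{q}(\MEMDP,b,W) - \msf{val}_{q}(\MDP,W')| \leq \sup_{\sigma} x(\MEMDP,q,\sigma,b,v).
\]
So it suffices to show that, for every strategy $\sigma$,
\[
x(\MEMDP,q,\sigma,b,v) \leq \eta + \msf{val}^{\msf{pr}}_{q}(\MEMDP,b,\msf{N}(\MEMDP,b,n+1,\gamma)).
\]
The idea is to bound each term $|v(\pi,q') - \msf{val}^{\msf{pr}}_{q'}(\MEMDP,\lambda(b,\pi,q'),W)|$ in the definition of $x$ via the hypothesis. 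For $\pi \in \msf{FirstD}$ and $q' \in \Supp(p[b,\pi])$, write $b_{\pi,q'} := \lambda(b,\pi,q')$. The hypothesis gives
\[
|v(\pi,q') - \msf{val}^{\msf{pr}}_{q'}(\MEMDP,b_{\pi,q'},W)| \leq \eta + \msf{val}^{\msf{pr}}_{q'}(\MEMDP,b_{\pi,q'},\msf{N}(\MEMDP,b_{\pi,q'},n,\gamma)).
\]

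Plug this into $x$. The $\eta$ parts contribute at most $\eta$, since the weights $\sum_{\pi,q',e}\Prb_q^\sigma[\MEMDP[e],\pi]\, b(e)\,\delta_e(\pi)(q')$ sum to at most $1$. For the remaining parts, use the identity $b(e)\,\delta_e(\pi)(q') = p[b,\pi](q')\cdot b_{\pi,q'}(e)$. This identity needs that the belief does not move along $\pi$ before its last pair, which holds because all earlier pairs are non-distinguishing, so $\Prb_q^\sigma[\MEMDP[e],\pi]$ is independent of $e$. The remaining contribution is then
\[
\sum_{\pi,q'} \Prb_q^\sigma[\pi]\, p[b,\pi](q')\, \msf{val}^{\msf{pr}}_{q'}\big(\MEMDP,b_{\pi,q'},\msf{N}(\MEMDP,b_{\pi,q'},n,\gamma)\big).
\]
For each $(\pi,q')$ I pick an $\varepsilon$-optimal strategy $\sigma_{\pi,q'}$ for the inner value. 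I glue these into one strategy $\sigma'$ that follows $\sigma$ on prefixes of $\msf{FirstD}$ runs and plays $\sigma_{\pi,q'}$ after $\pi\cdot q'$. Strategies are arbitrary functions of histories and the continuations are disjoint, so this gluing is legitimate. The sum is then at most $\msf{val}^{\msf{pr}}_q(\MEMDP,b,\sigma',Z)+\varepsilon$, where
\[
Z := \bigcup_{\pi,q'} \pi\cdot q'\cdot \msf{N}(\MEMDP,b_{\pi,q'},n,\gamma),
\]
by decomposing the $b$-weighted probability over first distinguishing visits, exactly as in the proof of Lemma~\ref{lem:prop-MDP-from-MEMDP}.

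It remains to show $Z \subseteq \msf{N}(\MEMDP,b,n+1,\gamma)$, which is the key combinatorial step. Take a run $\pi\cdot q'\cdot\rho$ with $\rho \in \msf{N}(\MEMDP,b_{\pi,q'},n,\gamma)$.
\begin{itemize}
\item It visits at least $n+1$ distinguishing pairs: the last pair of $\pi$, plus at least $n$ in $\rho$.
\item Along it, the belief is $b$ on all prefixes up to $\pi$, because the earlier pairs of $\pi$ are non-distinguishing. By assumption $\min_{e\in\Supp(b)} b(e) > \gamma$.
\item Afterwards, the belief equals $\msf{lk}_{b_{\pi,q'}}$ of the suffix, by the multiplicative structure of the update. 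On all prefixes of $\rho$ this stays above $\gamma$ by $\msf{NeverSmallBelief}(b_{\pi,q'},\gamma)$.
\end{itemize}
Hence the run lies in $\msf{N}(\MEMDP,b,n+1,\gamma)$. Letting $\varepsilon\to 0$ concludes.

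The main obstacle is bookkeeping rather than ideas. First, $\msf{NeverSmallBelief}$ quantifies over all $e\in E$, while beliefs vanish outside the support. I would read the condition as ranging over the support, which is how it is used; otherwise $\msf{N}$ is empty and the bound is trivial. Second, the shifted indexing of $\msf{lk}$ under concatenation must be checked carefully. Finally, the sup over strategies must be moved outside the sum via the gluing construction rather than bounded termwise, since a termwise bound would not match the single value $\msf{val}^{\msf{pr}}_{q}(\MEMDP,b,\msf{N}(\MEMDP,b,n+1,\gamma))$.
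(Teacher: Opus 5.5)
Your proposal is correct and follows the paper's proof essentially step for step. Like the paper, you apply Lemma~\ref{lem:prop-MDP-from-MEMDP}, bound each term of $x(\MEMDP,q,\sigma,b,v)$ by the hypothesis, and glue $\varepsilon$-optimal continuations after each first distinguishing visit. You then use $\msf{lk}_b(\pi\cdot q') = \lambda(b,\pi,q')$ together with $\min_{e\in\Supp(b)} b(e) > \gamma$ to land in $\msf{N}(\MEMDP,b,n+1,\gamma)$; the only cosmetic difference is that you prove the inclusion $Z \subseteq \msf{N}(\MEMDP,b,n+1,\gamma)$ where the paper writes an equality, and the inclusion is all that is needed.
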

	\begin{proof}
		By Lemma~\ref{lem:prop-MDP-from-MEMDP}, we have:
		\begin{equation*}
			|\msf{val}^{\msf{pr}}_{q}(\MEMDP,b,W) - \msf{val}_{q}(\MDP,W')| \leq \sup_{\sigma \in \msf{Strat}(Q,A)} x(\MEMDP,q,\sigma,b,v)
		\end{equation*}
		Let $\sigma \in \msf{Strat}(Q,A)$. %Our goal is to show that $x(\MEMDP,q,\sigma,b,v) \leq \gamma + \msf{val}^{\msf{pr}}_{q}(\MEMDP,b,\msf{N}(\MEMDP,b,n+1,\gamma))$. 
		For all $\pi \in \msf{FirstD}$ and $q' \in \Supp(p[b,\pi])$, $v(\pi,q')$ is a $(n,\gamma,\eta,\lambda(b,\pi,q'),q')$-approximation, thus: 
		\begin{align*}
			\hspace*{-1cm}
			x(\MEMDP,q,\sigma,b,v) & = \sum_{\pi \in \msf{FirstD}} \sum_{q' \in \Supp(p[b,\pi])} \sum_{e \in E} \Prb_q^\sigma[\MEMDP[e],\pi] \cdot b(e) \cdot  \delta_{e}(\pi)(q') \cdot \underbrace{|v(\pi,q') - \msf{val}^{\msf{pr}}_{q'}(\MEMDP,\lambda(b,\pi,q'),W)|}_{\leq \eta +  \msf{val}^{\msf{pr}}_{q}(\MEMDP,\lambda(b,\pi,q'),\msf{N}(\MEMDP,\lambda(b,\pi,q'),n,\gamma))} \\
			& \leq \eta + \sum_{\pi \in \msf{FirstD}} \sum_{q' \in \Supp(p[b,\pi])} \sum_{e \in E} \Prb_q^\sigma[\MEMDP[e],\pi] \cdot b(e) \cdot  \delta_{e}(\pi)(q') \cdot \msf{val}^{\msf{pr}}_{q}(\MEMDP,\lambda(b,\pi,q'),\msf{N}(\MEMDP,\lambda(b,\pi,q'),n,\gamma))
		\end{align*}
		
		Let $\varepsilon > 0$. We consider a strategy $\sigma_a \in \msf{Strat}(Q,A)$ such that $\sigma_a$ and $\sigma$ coincide on prefixes of all runs in $\msf{FirstD}$; and for all $\pi \in \msf{FirstD}$, the strategy $\sigma_a$ is such that, for all $q' \in \Supp(p[b,\pi])$, $\sigma_{s}^{\pi}$ ensures that:
		\begin{equation*}
			%\sum_{e \in E} \lambda(b,\pi,q')(e) \cdot \Prb_{q'}^{\sigma_a^{\pi}}[\MEMDP[e],\msf{N}(\MEMDP,\lambda(b,\pi,q'),n,\gamma)] = 
			\msf{val}^{\msf{pr}}_{q}(\MEMDP,\lambda(b,\pi,q'),\sigma^{\pi}_a,\msf{N}(\MEMDP,\lambda(b,\pi,q'),n,\gamma)) \geq \msf{val}^{\msf{pr}}_{q}(\MEMDP,\lambda(b,\pi,q'),\msf{N}(\MEMDP,\lambda(b,\pi,q'),n,\gamma)) - \varepsilon
		\end{equation*}
		That is:
		\begin{equation*}
			\sum_{e \in E} \lambda(b,\pi,q')(e) \cdot \Prb_{q'}^{\sigma_a^{\pi}}[\MEMDP[e],\msf{N}(\MEMDP,\lambda(b,\pi,q'),n,\gamma)]\geq \msf{val}^{\msf{pr}}_{q}(\MEMDP,\lambda(b,\pi,q'),\msf{N}(\MEMDP,\lambda(b,\pi,q'),n,\gamma)) - \varepsilon
		\end{equation*}
		By definition of $\lambda(b,\pi,q')$, this implies:
		\begin{align*}
			\sum_{e \in E} b(e) \cdot \delta_e(\pi,q') \cdot & \Prb_{q'}^{\sigma_a^{\pi}}[\MEMDP[e],\msf{N}(\MEMDP,\lambda(b,\pi,q'),n,\gamma)]+ \varepsilon \geq \\
			& \sum_{e \in E} b(e) \cdot \delta_e(\pi,q') \cdot \msf{val}^{\msf{pr}}_{q}(\MEMDP,\lambda(b,\pi,q'),\msf{N}(\MEMDP,\lambda(b,\pi,q'),n,\gamma))
		\end{align*}
		Furthermore, by definition of $\msf{FirstD}$, for all $\pi \in \msf{FirstD}$, the probability $\Prb_q^\sigma[\MEMDP[e],\pi] \in [0,1]$ does not depend on the environment $e \in E$. Therefore, we have:
		\begin{align*}
			\hspace*{-1cm}
			x(\MEMDP,q,\sigma,b,v) & \leq \eta + \sum_{\pi \in \msf{FirstD}} \sum_{q' \in \Supp(p[b,\pi])} \sum_{e \in E} \Prb_q^\sigma[\MEMDP[e],\pi] \cdot b(e) \cdot  \delta_{e}(\pi)(q') \cdot \msf{val}^{\msf{pr}}_{q}(\MEMDP,\lambda(b,\pi,q'),\msf{N}(\MEMDP,\lambda(b,\pi,q'),n,\gamma)) \\
			& \leq \eta + \sum_{\pi \in \msf{FirstD}} \sum_{q' \in \Supp(p[b,\pi])} \sum_{e \in E} \Prb_q^\sigma[\MEMDP[e],\pi] \cdot b(e) \cdot  \delta_{e}(\pi)(q') \cdot \Prb_{q'}^{\sigma_a^{\pi}}[\MEMDP[e],\msf{N}(\MEMDP,\lambda(b,\pi,q'),n,\gamma)]+ \varepsilon \\
			& = \eta + \sum_{\pi \in \msf{FirstD}} \sum_{q' \in \Supp(p[b,\pi])} \sum_{e \in E} \Prb_q^{\sigma_a}[\MEMDP[e],\pi] \cdot b(e) \cdot  \delta_{e}(\pi)(q') \cdot \Prb_{q'}^{\sigma_a^{\pi}}[\MEMDP[e],\msf{N}(\MEMDP,\lambda(b,\pi,q'),n,\gamma)]+ \varepsilon \\
			& = \eta + \sum_{e \in E} b(e) \cdot \Prb_{q}^{\sigma_a}[\MEMDP[e],\msf{N}(\MEMDP,b,n+1,\gamma)] + \varepsilon \\
			& \leq \eta + \msf{val}^{\msf{pr}}_{q}(\MEMDP,b,\msf{N}(\MEMDP,b,n+1,\gamma)) + \varepsilon
		\end{align*}
		The second equality comes from the fact that $\min_{e \in \Supp(b)} b(e) > \gamma$, and, by definition of $\msf{FirstD}$, for all $\pi \in \msf{FirstD}$ and $q' \in \Supp(p[b,\pi])$, we have $\msf{lk}_b(\pi \cdot q') = \lambda(b,\pi,q')$. 
		
		As the inequality holds for all strategies $\sigma \in \msf{Strat}(Q,A)$ and $\varepsilon > 0$, we deduce that:
		\begin{equation*}
			\sup_{\sigma \in \msf{Strat}(Q,A)} x(\MEMDP,q,\sigma,b,v) \leq \eta + \msf{val}^{\msf{pr}}_{q}(\MEMDP,b,\msf{N}(\MEMDP,b,n+1,\gamma))
		\end{equation*}
		The lemma then follows from Lemma~\ref{lem:prop-MDP-from-MEMDP}.
	\end{proof}
	
	\subsubsection{Correction of Algorithm~\ref{algo:MEMDP_parity}}
	\label{appen:correction-algorithm} 
	We consider a slightly modified version of Algorithm~\ref{algo:MEMDP_parity} where, in Line 8, the value $v_{q,a,q'}$ that we consider is truncated, with a $(\frac{\varepsilon}{3|E|} \cdot \frac{1}{m(\frac{\varepsilon}{3|E|})})$-approximation, so that it can be written on $\lceil |\log(\frac{\varepsilon}{3|E|} \cdot \frac{1}{m(\frac{\varepsilon}{3|E|})})| \rceil$ bits. 
	
	Let us establish the correction of this slightly modified version of  Algorithm~\ref{algo:MEMDP_parity}, stated below. 
	\begin{theorem}
		\label{thm:correection-algo}
		Consider an MEMDP $\MEMDP$, a parity objective $W$, and an environment belief $b \in \Dist(E)$. For all $\varepsilon > 0$, Algorithm~\ref{algo:MEMDP_parity} (with a truncation on Line 8) executed on $n := m(\MEMDP,\frac{\varepsilon}{3|E|})$ and $\gamma := \varepsilon$ outputs $f\colon Q \to [0,1]$ such that, for all $q \in Q$, we have $|\msf{val}^{\msf{pr}}_{q}(\MEMDP,b,W) - f(q)| \leq \varepsilon$.
	\end{theorem}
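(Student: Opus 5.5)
I would prove the theorem by strong induction on the recursion, organised lexicographically by the pair $(|\Supp(b)|, n)$. Write $\theta := \frac{\varepsilon}{3|E|}$ for the truncation threshold and $m := m(\MEMDP,\theta)$. The invariant I would carry through the induction is: a call on belief $b$ with counter $n$ returns, for every $q$, a value $f(q)$ with
\begin{equation*}
|\msf{val}^{\msf{pr}}_q(\MEMDP,b,W) - f(q)| \leq \theta\cdot(|\Supp(b)|-1)\cdot 3 \;+\; (m-n)\cdot \frac{\theta}{m}.
\end{equation*}
The first term accounts for truncations and Theorem~\ref{thm:main_enough_actions_small_belief} errors, at most $2\theta$ per support level plus slack. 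The second term accounts for the per-level rounding introduced in Line 8. At the top call, $(m-m)=0$ and $|\Supp(b)|-1 \le |E|-1$, so the total is at most $3\theta(|E|-1)+\theta\le\varepsilon$.

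The cases follow the structure of Algorithm~\ref{algo:MEMDP_parity}.
\begin{itemize}
\item \emph{Base case $|\Supp(b)|=1$.} The MEMDP is exactly the MDP $\MEMDP[e]$, and $\msf{MDP}\text{-}\msf{Parity}$ is exact, so the error is $0$.
\item \emph{Truncation case (Lines 2--4), when $\min_{e\in\Supp(b)} b(e)\le\theta$.} Lemma~\ref{lem:small_belief_change_ok} bounds the change in value by $\theta$. The recursive call has strictly smaller support and counter reset to $m$, so the induction hypothesis applies.
\item \emph{Truncation case with $n=0$ but $\min b > \theta$.} The change in value is $\msf{Diff}(b,\msf{Truncate}(b))=\min b$, which need not be small. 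So I would not argue this case directly; it is handled through the approximation notion below.
\end{itemize}

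To handle the $n=0$ case and the main loop, I would prove a stronger, path-dependent statement using Definition~\ref{def:approximation}. Along one chain of recursive calls without support shrinkage, starting from a belief $b_0$ with counter $m$, I would show that a call with counter $n$ returns an $(n,\theta,\eta_n,b,q)$-approximation. Here $\eta_n$ collects the lower-level errors plus $(m-n)$ rounding errors of size $\theta/m$.
\begin{itemize}
\item For $n=0$, this is trivial, since $\msf{N}(\MEMDP,b,0,\theta)=\msf{NeverSmallBelief}(b,\theta)$. I need to check that this event carries the full mass, so that the bound is vacuous; this forces the framing to be relative.
\item For the inductive step from $n$ to $n+1$, I would apply Lemma~\ref{lem:induction-step} to the valuation $v$ computed in Lines 5--8. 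The rounding error of Line 8 shifts each $v$ by at most $\theta/m$ in absolute value. Since $w_{v,b}$ is a convex combination of the values $v$, this adds at most $\theta/m$ to $\eta$.
\item For children whose support shrank, the induction hypothesis on smaller support gives a plain error bound, which is in particular an approximation with the same $\eta$.
\end{itemize}
At the root of the chain, the approximation has slack $\msf{val}^{\msf{pr}}_q(\MEMDP,b_0,\msf{N}(\MEMDP,b_0,m,\theta))$. This is at most $\theta$ by Theorem~\ref{thm:main_enough_actions_small_belief}, applied in each environment and averaged over $b_0$.

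The main obstacle is aligning the counters of Definition~\ref{def:approximation} with the algorithm. The belief along a chain is $\msf{lk}_{b_0}$ of the history, and the event $\msf{N}$ must refer to the root prior $b_0$ rather than the current $b$. I would therefore restate Lemma~\ref{lem:induction-step} with the events measured from $b_0$, using $\msf{lk}_b(\pi\cdot q')=\lambda(b,\pi,q')$ as in its proof. A second issue is the footnote discrepancy between $\gamma$ and $\theta$ in the recursive calls. I would fix $\theta$ throughout and use $\gamma=\varepsilon$ only to set $\theta$. With these in place, summing the per-level errors ($\theta$ for the Theorem~\ref{thm:main_enough_actions_small_belief} slack, $\theta$ for truncation, $\theta$ for rounding) over at most $|E|$ support levels yields $\varepsilon$.
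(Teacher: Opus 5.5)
Your architecture is the paper's. It has an outer induction on $|\Supp(b)|$ and an inner induction on the counter, uses Lemma~\ref{lem:small_belief_change_ok} for truncations and Lemma~\ref{lem:induction-step} for Lines 5--10, absorbs the $n=0$ truncation into a slack equal to $1$, and uses Theorem~\ref{thm:main_enough_actions_small_belief} to bound the slack by $\theta$ once the counter is $m$.

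\textbf{The gap: your rounding bookkeeping is inverted, and as written the inner induction step fails.} Errors propagate bottom-up. Lemma~\ref{lem:induction-step} turns children that are $(n,\theta,\eta,\lambda(b,q,a,q'),q')$-approximations into a parent that is an $(n+1,\theta,\eta,b,q)$-approximation. The Line~8 rounding of the children's values costs an extra $\theta/m$, so the parent's $\eta$ is the children's plus $\theta/m$. Hence $\eta_n$ must be $c+n\theta/m$ (with $c$ the contribution of smaller supports), not $c+(m-n)\theta/m$. With your choice, the step from counter $n$ to $n+1$ would need $\eta_{n+1}=\eta_n-\theta/m$, which the lemma cannot give. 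Your claim that the rounding term vanishes at the top call is also false: the call with counter $m$ carries $m\cdot\theta/m=\theta$ of accumulated rounding. Your final tally of ``$\theta$ for rounding'' does reflect this, so the fix is to replace $(m-n)$ by $n$. The invariant that works is the paper's: for $|\Supp(b)|=j+1$ and counter $t$,
$|\msf{val}^{\msf{pr}}_q(\MEMDP,b,W)-f(q)|\le 3\theta(j+1)-2\theta+t\theta/m+\msf{val}^{\msf{pr}}_q(\MEMDP,b,\msf{N}(\MEMDP,b,t,\theta))$.
It carries the slack term at every counter value, so your first, absolute invariant should be discarded rather than patched.

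\textbf{The $n=0$ ``full mass'' check is immediate} once $\msf{N}(\MEMDP,b,n,\theta)$ is read as intended. It is the event that the $n$-th distinguishing visit occurs before any belief in the support drops to at most $\theta$. For $n=0$ this only requires $\min_{e\in\Supp(b)}b(e)>\theta$, so the value is $1$.

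\textbf{You do not need to restate Lemma~\ref{lem:induction-step} with events measured from the chain root $b_0$.} Definition~\ref{def:approximation} measures the slack from each call's own belief. The lemma's proof already glues $\msf{N}(\MEMDP,\lambda(b,\pi,q'),n,\theta)$ after the first distinguishing step to obtain $\msf{N}(\MEMDP,b,n+1,\theta)$, using $\msf{lk}_b(\pi\cdot q')=\lambda(b,\pi,q')$ and $\min_{e\in\Supp(b)}b(e)>\theta$. The paper applies the lemma verbatim, and invokes Theorem~\ref{thm:main_enough_actions_small_belief} with the belief of whichever call has counter $m$.
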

	\begin{proof}
		We prove by induction on $1 \leq |\Supp(b)| \leq E$ that the output $f\colon Q \to [0,1]$ of Algorithm~\ref{algo:MEMDP_parity} is such that, for all $q \in Q$, we have $|\msf{val}^{\msf{pr}}_{q}(\MEMDP,b,W) - f(q)| \leq |\Supp(b)| \cdot \frac{\varepsilon}{|E|}$.
		
		This clearly holds when $|\Supp(b)| = 1$, since Algorithm~\ref{algo:MEMDP_parity} does not make any approximation in that case. Assume now that the result holds for all $1 \leq i \leq j$ for some $j \leq |E|-1$. Let us show by induction on $t \in \N$ that for all beliefs $b \in \Dist(E)$ such that $|\Supp(b)| = j+1$, the output $f\colon Q \to [0,1]$ of Algorithm~\ref{algo:MEMDP_parity} executed on $b$, with $n := t$ and $\gamma := \varepsilon$ is such that, for all $q \in Q$: $|\msf{val}^{\msf{pr}}_{q}(\MEMDP,b,W) - f(q)| \leq |\Supp(b)| \cdot \frac{\varepsilon}{|E|} - \frac{2\varepsilon}{3|E|} + t \cdot \frac{\varepsilon}{3|E|} \cdot \frac{1}{m(\frac{\varepsilon}{3|E|})} + \msf{val}^{\msf{pr}}_{q}(\MEMDP,b,\msf{N}(\MEMDP,b,t,\frac{\varepsilon}{3|E|}))$.
		
		Let us first consider the case where $t = 0$ and let $b \in \Dist(E)$ such that $|\Supp(b)| = j+1$. Let $q \in Q$. If $\min_{e \in \Supp(b)} b(e) > \frac{\varepsilon}{3|E|}$, then %$\msf{N}(\MEMDP,b,t,\frac{\varepsilon}{3|E|}) = (Q \cdot A)^\omega$ and 
		$\msf{val}^{\msf{pr}}_{q}(\MEMDP,b,\msf{N}(\MEMDP,b,t,\frac{\varepsilon}{3|E|})) = 1$, thus the inequality holds. Otherwise, for $b' := \msf{Truncate}(b)$, we have $\msf{Diff}(b,b') \leq \frac{\varepsilon}{3|E|}$, thus by Lemma~\ref{lem:small_belief_change_ok}: $|\msf{val}^{\msf{pr}}_{q}(\MEMDP,b,W) - \msf{val}^{\msf{pr}}_{q}(\MEMDP,b',W)| \leq \frac{\varepsilon}{3|E|}$. Furthermore, $|\Supp(b')| < |\Supp(b)|$, thus by our induction hypothesis, we have the value $x_q \in [0,1]$ computed by the algorithm called recursively on $b'$ with $n := m(\MEMDP,\frac{\varepsilon}{3|E|})$ and $\gamma := \varepsilon$ for the state $q$ (Line 4) such that $|\msf{val}^{\msf{pr}}_{q}(\MEMDP,b',W) - x_q| \leq |\Supp(b')| \cdot \frac{\varepsilon}{|E|}$. Therefore, we have: $|\msf{val}^{\msf{pr}}_{q}(\MEMDP,b,W) - x_q| \leq |\Supp(b')| \cdot \frac{\varepsilon}{|E|} + \frac{\varepsilon}{3|E|} \leq |\Supp(b)| \cdot \frac{\varepsilon}{|E|} - \frac{2\varepsilon}{3|E|}$.
		
		Assume now that the induction hypothesis holds for some $t \in \N$ and let $b \in \Dist(E)$ such that $|\Supp(b)| = j+1$. Let $q \in Q$.  We consider the case where the algorithm is executed on $b$ with $n := t+1$ and $\gamma = \varepsilon$. If $\min_{e \in \Supp(b)} b(e) \leq \frac{\varepsilon}{3|E|}$, then this is as above: the value $x_q \in [0,1]$ computed by the algorithm called recursively on $b' := \msf{Truncate}(b)$ with $n := m(\MEMDP,\frac{\varepsilon}{3|E|})$ and $\gamma := \varepsilon$ for the state $q$ (Line 4) is such that $|\msf{val}^{\msf{pr}}_{q}(\MEMDP,b,W) - x_q| \leq |\Supp(b)| \cdot \frac{\varepsilon}{|E|} - \frac{2\varepsilon}{3|E|}$.
		
		Assume now that $\min_{e \in \Supp(b)} b(e) > \frac{\varepsilon}{3|E|}$, and thus the for loop on Line 5 is entered. For all $(q,a,q') \in S(\MEMDP,b)$, a value $v_{q,a,q'} \in [0,1]$ is computed by calling Algorithm~\ref{algo:MEMDP_parity} recursively (on Line 8). Note that the value 
		$v_{q,a,q'} \in [0,1]$ results from a $(\frac{\varepsilon}{3|E|} \cdot \frac{1}{m(\frac{\varepsilon}{3|E|})})$-approximation after calling Algorithm~\ref{algo:MEMDP_parity}. Considering $b' := \lambda[b,q,a,q'] \in \Dist(E)$, there are two cases:
		\begin{itemize}
			\item If $\Supp(b') \subsetneq \Supp(b)$, then $v_{q,a,q'}$ results from the call of the algorithm executed on $b'$ with $n := m(\MEMDP,\frac{\varepsilon}{3|E|})$ and $\gamma := \varepsilon$. By our induction hypothesis (on the size of the support), we have $|\msf{val}^{\msf{pr}}_{q'}(\MEMDP,b',W) - v_{q,a,q'}| \leq |\Supp(b')| \cdot \frac{\varepsilon}{|E|} + \frac{\varepsilon}{3|E|} \cdot \frac{1}{m(\frac{\varepsilon}{3|E|})} \leq |\Supp(b)| \cdot \frac{\varepsilon}{|E|} - \frac{2\varepsilon}{3|E|} + \frac{\varepsilon}{3|E|} \cdot \frac{1}{m(\frac{\varepsilon}{3|E|})}$.
			\item Otherwise, $v_{q,a,q'}$ results from the call of the algorithm executed on $b'$ with $n := t$ and $\gamma := \varepsilon$. By our induction hypothesis (on $n$), we have $|\msf{val}^{\msf{pr}}_{q'}(\MEMDP,b',W) - v_{q,a,q'}| \leq |\Supp(b)| \cdot \frac{\varepsilon}{|E|} - \frac{2\varepsilon}{3|E|} + t \cdot \frac{\varepsilon}{3|E|} \cdot \frac{1}{m(\frac{\varepsilon}{3|E|})} + \msf{val}^{\msf{pr}}_{q'}(\MEMDP,b',\msf{N}(\MEMDP,b',t,\frac{\varepsilon}{3|E|})) + \frac{\varepsilon}{3|E|} \cdot \frac{1}{m(\frac{\varepsilon}{3|E|})} = |\Supp(b)| \cdot \frac{\varepsilon}{|E|} - \frac{2\varepsilon}{3|E|} + (t+1) \cdot \frac{\varepsilon}{3|E|} \cdot \frac{1}{m(\frac{\varepsilon}{3|E|})} + \msf{val}^{\msf{pr}}_{q'}(\MEMDP,b',\msf{N}(\MEMDP,b',t,\frac{\varepsilon}{3|E|}))$.
		\end{itemize}   
		This holds for all $(q,a,q') \in S(\MEMDP,b)$. Therefore, by Lemma~\ref{lem:induction-step} (for $\gamma = \frac{\varepsilon}{3|E|}$ and $\eta = |\Supp(b)| \cdot \frac{\varepsilon}{|E|} - \frac{2\varepsilon}{3|E|} + (t+1) \cdot \frac{\varepsilon}{3|E|} \cdot \frac{1}{m(\frac{\varepsilon}{3|E|})}$), the MDP $\MDP$ and parity objective $W'$ computed in Line 9 by the algorithm executed on $b$ with $n := t+1$ and $\gamma = \varepsilon$ is such that $|\msf{val}^{\msf{pr}}_{q}(\MEMDP,b,W) - \msf{val}_{q}(\MDP,W')| \leq |\Supp(b)| \cdot \frac{\varepsilon}{|E|} -  \frac{2\varepsilon}{3|E|} + (t+1) \cdot \frac{\varepsilon}{3|E|} \cdot \frac{1}{m(\frac{\varepsilon}{3|E|})} + \msf{val}^{\msf{pr}}_{q}(\MEMDP,b,\msf{N}(\MEMDP,b,t+1,\frac{\varepsilon}{3|E|}))$. The induction hypothesis for $n := t+1$ follows, since the algorithm $\msf{MDP}$-$\msf{Parity}$ computes exactly the value $\msf{val}_{q}(\MDP,W')$.%does not make any approximation.
		
		In fact, the induction hypothesis (on $n$) holds for all $t \in \N$. In particular, the output $f\colon Q \to [0,1]$ of Algorithm~\ref{algo:MEMDP_parity} executed on $b$, with $n := m(\MEMDP,\frac{\varepsilon}{3|E|})$ and $\gamma := \varepsilon$ is such that, for all $q \in Q$: $|\msf{val}^{\msf{pr}}_{q}(\MEMDP,b,W) - f(q)| \leq |\Supp(b)| \cdot \frac{\varepsilon}{|E|} - \frac{\varepsilon}{3|E|} + \msf{val}^{\msf{pr}}_{q}(\MEMDP,b,\msf{N}(\MEMDP,b,m(\MEMDP,\frac{\varepsilon}{3|E|}),\frac{\varepsilon}{3|E|}))$. In addition, Theorem~\ref{thm:main_enough_actions_small_belief} gives that, for all $q \in Q$: $\msf{val}^{\msf{pr}}_{q}(\MEMDP,b,\msf{N}(\MEMDP,b,m(\MEMDP,\frac{\varepsilon}{3|E|}),\frac{\varepsilon}{3|E|})) \leq \frac{\varepsilon}{3|E|}$. Therefore, for all $q \in Q$, we have that $|\msf{val}^{\msf{pr}}_{q}(\MEMDP,b,W) - f(q)| \leq |\Supp(b)| \cdot \frac{\varepsilon}{|E|}$. 
		
		Thus, the induction hypothesis (on the support of beliefs) holds for belief support size equal to $j+1$. In fact, it holds for all $1 \leq j \leq |E|$. Therefore, for all beliefs $b$, for all $\varepsilon > 0$, Algorithm~\ref{algo:MEMDP_parity} executed on $n := m(\MEMDP,\frac{\varepsilon}{3|E|})$ and $\gamma := \varepsilon$ outputs a function $f\colon Q \to [0,1]$ such that, for all $q \in Q$, we have $|\msf{val}^{\msf{pr}}_{q}(\MEMDP,b,W) - f(q)| \leq |\Supp(b)| \cdot \frac{\varepsilon}{|E|} \leq \varepsilon$.
	\end{proof}                                                                                                                      
	\subsubsection{Complexity of Algorithm~\ref{algo:MEMDP_parity}}
	\label{appen:complexity-algorithm} 
	Let us first state a lemma about the number of bits to represent an updated belief.
	\begin{lemma}
		\label{lem:updated-belief-cenominator}
		Consider an MEMDP $\MEMDP$ and a belief $b \in \Dist(E)$. Assume that there is $N \in \N$ such that, for all $e \in E$, $b(e) = \frac{x_e}{N}$ for some $0 \leq x_e \leq N$. Assume also that there is $T \in \N$ such that, for all $e \in E$, and $(q,a,q') \in Q \times A \times Q$, we have $\delta_e(q,a)(q') = \frac{y_{q,a,q',e}}{T}$ for some $0 \leq y_{q,a,q',e} \leq T$. Then, for all $(q,a,q') \in Q \times A \times Q$ such that $p[b,q,a,q'] > 0$, there is $N' \in \N$ such that $\log(N') \leq \log(N) + \log(T) + \log(|E|)$ and, for all $e \in E$, there is $0 \leq z \leq N'$ such that $\lambda[b,q,a,q'](e) = \frac{z}{N'}$.
	\end{lemma}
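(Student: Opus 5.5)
The plan is to compute the updated belief explicitly and choose a common denominator. Fix $(q,a,q')$ with $p[b,q,a](q') > 0$. By Definition~\ref{def:belief_env_update}, for every $e \in E$,
\begin{equation*}
	\lambda[b,q,a,q'](e) = \frac{b(e)\cdot\delta_e(q,a)(q')}{\sum_{e' \in E} b(e')\cdot\delta_{e'}(q,a)(q')} = \frac{\frac{x_e}{N}\cdot\frac{y_{q,a,q',e}}{T}}{\sum_{e'\in E}\frac{x_{e'}}{N}\cdot\frac{y_{q,a,q',e'}}{T}} = \frac{x_e\, y_{q,a,q',e}}{\sum_{e'\in E} x_{e'}\, y_{q,a,q',e'}}.
\end{equation*}
The factors $\frac{1}{NT}$ cancel between numerator and denominator. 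This cancellation is the key observation.

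I would then set $N' := \sum_{e' \in E} x_{e'}\, y_{q,a,q',e'}$. This is a positive integer, since $p[b,q,a](q') > 0$ means some term of the sum is nonzero. For each $e$, take $z := x_e\, y_{q,a,q',e}$. This is a nonnegative integer and a summand of $N'$, so $0 \le z \le N'$, and $\lambda[b,q,a,q'](e) = \frac{z}{N'}$ by the display above.

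It remains to bound $N'$. Each summand satisfies $x_{e'}\, y_{q,a,q',e'} \le N\cdot T$, so $N' \le |E|\cdot N\cdot T$. Taking logarithms gives $\log(N') \le \log(N) + \log(T) + \log(|E|)$, as required.

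There is no real obstacle. The only points needing care are that $N' \ge 1$ (guaranteed by the positivity hypothesis) and that the same $N'$ works for every $e \in E$. The latter holds because the denominator of the displayed fraction does not depend on $e$.
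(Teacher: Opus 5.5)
Your proof is correct and follows the paper's own argument: it cancels the common factor $\frac{1}{NT}$ in the belief update, takes the $e$-independent integer $\sum_{e'} x_{e'}\, y_{q,a,q',e'}$ as the common denominator, and bounds it by $|E|\cdot N\cdot T$. You also check explicitly that $N' \geq 1$ and $0 \le z \le N'$, which the paper leaves implicit.
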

	\begin{proof}
		Let $(q,a,q') \in Q \times A \times Q$ such that $p[b,q,a,q'] > 0$ and let $e \in E$. We have:
		\begin{align*}
			\lambda[b,q,a,q'](e) = \frac{b(e) \cdot \delta_e(q,a)(q')}{\sum_{e' \in E} b(e') \cdot \delta_{e'}(q,a)(q')} = \frac{\frac{x_e}{N} \cdot \frac{y_{q,a,q',e}}{T}}{\sum_{e' \in E} \frac{x_{e'}}{N} \cdot \frac{y_{q,a,q',e'}}{T}} = \frac{x_e \cdot y_{q,a,q',e}}{\sum_{e' \in E} x_{e'} \cdot y_{q,a,q',e'}}
		\end{align*}
		The denominator $\sum_{e' \in E} x_{e'} \cdot y_{q,a,q',e'}$ does not depend on the environment $e$ considered. Furthermore:
		\begin{equation*}
			\log\left(\sum_{e' \in E} x_{e'} \cdot y_{q,a,q',e'}\right) \leq \log\left(\sum_{e' \in E} N \cdot T\right) = \log(|E| \cdot N \cdot T) = \log(|E|) + \log(N) + \log(T)
		\end{equation*}
	\end{proof}
	
	We can now state and prove the complexity of Algorithm~\ref{algo:MEMDP_parity}. Note that we have established the correction of this algorithm (in Theorem~\ref{thm:correection-algo}) on a slightly modified version, where, in Line 8, the value $v_{q,a,q'}$ that we consider is truncated, with a $(\frac{\varepsilon}{3|E|} \cdot \frac{1}{m(\frac{\varepsilon}{3|E|})})$-approximation, so that it can be written on $\lceil |\log(\frac{\varepsilon}{3|E|} \cdot \frac{1}{m(\frac{\varepsilon}{3|E|})})| \rceil$ bits. We establish the complexity of Algorithm~\ref{algo:MEMDP_parity} with the same modification of Line 8.
	\begin{theorem}
		\label{thm:complexity-algo}
		Consider an MEMDP $\MEMDP$, a parity objective $W$, and an environment belief $b \in \Dist(E)$, and $\varepsilon > 0$. Let $n := |Q|$, $k := |E|$, $r := |A|$, $x := \lceil \frac{1}{\varepsilon} \rceil$, and $l \in \N$ (resp. $h \in \N$) denote the maximum number of bits to write probabilities in $\MEMDP$ (resp. the probabilities in $b$). Then, Algorithm~\ref{algo:MEMDP_parity} (with a truncation on Line 8) with $n := m(\MEMDP,\frac{\varepsilon}{3|E|})$ and $\gamma := \varepsilon$ executes in space $O(poly(n,r,k,h,x,2^l))$.
	\end{theorem}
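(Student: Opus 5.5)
The space bound follows from three estimates: the recursion depth, the number of bits needed to write each belief and each value along the recursion, and the space used by one frame. The usual depth-first evaluation of a recursion tree needs space equal to depth times frame size. Siblings in the for loop of Lines 5--8 are evaluated one after the other, and only their truncated results $v_{q,a,q'}$ are kept. Write $M := m(\MEMDP,\frac{\varepsilon}{3|E|})$. By Theorem~\ref{thm:main_enough_actions_small_belief} (with $\varepsilon$ replaced by $\frac{\varepsilon}{3k}$, so $\lceil\log(3k/\varepsilon)\rceil = O(\log k + \log x)$), we have $M = O(poly(k,2^l,x))$.

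\emph{Depth.} Each recursive call either strictly shrinks the support of the belief (Lines 2--4, or Line 7 when $\Supp(b')\subsetneq\Supp(b)$), in which case the counter is reset to $M$, or keeps the support and decreases the counter by one. The support can shrink at most $k-1$ times. Between two shrinks there are at most $M$ decrements, because at $n=0$ Line 2 forces a truncation. So the depth is at most $k\cdot(M+1)$.

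\emph{Bit size of beliefs.} This is the step that needs the most care. Write all probabilities of $\MEMDP$ over the common denominator $T$, the product of all denominators; then $\log T \le |Q|^2|A||E|\, l$. By Lemma~\ref{lem:updated-belief-cenominator}, one update $\lambda[b,q,a,q']$ adds at most $\log T + \log k$ bits to the common denominator of the belief. $\msf{Truncate}$ keeps the denominator, since it only moves mass $b(e)$ onto another $e'$. Along a branch there are at most $kM$ updates, so every belief that occurs has a common denominator with $O(h + kM(\log T+\log k))$ bits. This is $poly(n,r,k,h,x,2^l)$. Without this common-denominator argument the numerators and denominators could blow up multiplicatively under naive fraction arithmetic, so it is essential. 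Checking $\min_{e\in\Supp(b)} b(e)\le \frac{\varepsilon}{3k}$ is then a comparison of polynomial-size rationals.

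\emph{Frame size.} A frame stores the current belief, the counter $n\le M$ (written in $O(\log M)$ bits), the loop indices, and the values $v_{q,a,q'}$ for all $(q,a,q')\in S(\MEMDP,b)$. There are at most $|Q|^2|A|$ such values. Thanks to the truncation on Line 8, each uses $O(\log(3kM/\varepsilon))$ bits. The frame also builds the MDP of Definition~\ref{def:mdp-from-memdp}, whose transition probabilities $w_{v,b}(q,a)$ are sums of products of the belief, the $\delta_e$ and the $v$-values; these have polynomial bit size by the previous bounds. Finally, $\msf{MDP}$-$\msf{Parity}$ runs in polynomial time, hence in polynomial space, on that MDP (this is the classical result cited as \cite{DBLP:books/daglib/0020348}, via linear programming). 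It therefore uses space polynomial in the MDP's size, which we just showed to be polynomial in $n,r,k,h,x,2^l$.

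Multiplying the depth $O(kM)$ by the polynomial frame size gives space $O(poly(n,r,k,h,x,2^l))$. The main obstacle is the bit-size accounting: we must make sure that neither the beliefs nor the computed values grow beyond polynomial size along a branch of exponential-in-$l$ length. The value side is handled by the Line 8 truncation, and the belief side by the common-denominator argument above.
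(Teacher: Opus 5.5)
Your proposal is correct and follows essentially the same route as the paper's proof. The shared steps are a recursion-depth bound of order $k \cdot m(\MEMDP,\frac{\varepsilon}{3k})$, common-denominator bookkeeping for beliefs via Lemma~\ref{lem:updated-belief-cenominator} (with $\msf{Truncate}$ preserving the denominator), the Line~8 truncation to bound the stored values, and polynomial-time $\msf{MDP}$-$\msf{Parity}$ on the constructed MDP; you only make the depth-times-frame-size accounting more explicit, and there is one harmless slip: putting the prior $b$ over a common denominator may need up to $k h$ bits rather than $h$, which leaves the polynomial bound unchanged.
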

	\begin{proof}
		As argued in the main part of the paper, each recursive call of the algorithm is done either on a belief with smaller support or with a smaller bound on the number of distinguishing state-actions pairs to be visited before truncation. Thus, the recursion depth of the algorithm is bounded by $k \cdot m(\MEMDP,\frac{\varepsilon}{3k})$, with $m(\MEMDP,\frac{\varepsilon}{3k}) = O(poly(k,2^l,x))$, by Theorem~\ref{thm:main_enough_actions_small_belief}. 
		
		Furthermore, the prior belief $b$ can be represented with all probabilities on the same denominator (as in Lemma~\ref{lem:updated-belief-cenominator}), which can be represented with at most $k \cdot h$ bits. The same can be done with all the probabilities in the MEMDP, with a denominator which can be represented with at most $l' := k \cdot n^2 \cdot r \cdot l$ bits (since we need to handle all quadruplets $(q,a,q',e) \in Q \times A \times Q \times E$). Then, the beliefs handle in each recursive calls are represented with the same denominator for all environments. When the belief is truncated, the denominator stays the same (cf. proof of Lemma~\ref{lem:small_belief_change_ok}). Moreover, the belief is updated at most $k \cdot m(\MEMDP,\frac{\varepsilon}{3k})$ times which, by Lemma~\ref{lem:updated-belief-cenominator}, only increases the number of bits to represent the denominator by $(l' + \log(k)) \cdot k \cdot m(\MEMDP,\frac{\varepsilon}{3k})$. Thus, the number of bits the represent belief used in all recursive calls is in $O(poly(n,r,k,h,x,2^l))$.
		
		Furthermore, the values in Line 8 are truncated, and thus can be represented on $
		O(poly(k,x,2^l))$ bits. Therefore, the MDPs built in Line 9, in all (recursive) calls to Algorithm~\ref{algo:MEMDP_parity}, have $n+2$ states, and probabilities that can be represented on $O(poly(n,r,k,h,x,2^l))$ bits (recall Definition~\ref{def:mdp-from-memdp}). Thus, the calls to $\msf{MDP}$-$\msf{Parity}$ are all executed in time polynomial in $O(poly(n,r,k,h,x,2^l))$ (\cite{DBLP:books/daglib/0020348}). Overall, we can conclude that Algorithm~\ref{algo:MEMDP_parity} executes in space $O(poly(n,r,k,h,x,2^l))$.
	\end{proof}
	
	\section{Complements on Section~\ref{sec:relating-the-values}}
	\label{appen:complements-section-values}
	
	\subsection{Proof of Lemma~\ref{lem:vonNeuman}}
	\label{proof:lem_vonBeuman}
	Before we proceed to the proof of the lemma, let us formally define the $\msf{uni}$- and $\msf{pr}$-values of mixed strategies, similarly to how they are defined for strategies $\sigma \in \msf{Strat}(Q,A)$.
	\begin{definition}
		Consider now an MEMDP $\MEMDP = (Q,A,E,(\delta_e)_{e \in E})$. For all states $q \in Q$, beliefs $b \in \Dist(E)$, and mixed strategies $\tau \in \Dist(\msf{Strat}(Q,A))$, we let:
		\begin{align*}
			\msf{val}^{\msf{pr}}_q(\MEMDP,b,\tau,W) & := \sum_{e \in E} b(e) \cdot \Prb_q^\tau[\MEMDP[e],W] = \sum_{e \in E} \sum_{\tau \in \msf{Strat}(Q,A)} b(e) \cdot \tau(\sigma) \cdot \Prb_q^\sigma[\MEMDP[e],W]\\
			\msf{val}^{\msf{uni}}_q(\MEMDP,\tau,W) & := \min_{e \in E} \Prb_q^\tau[\MEMDP[e],W] = \min_{e \in E} \sum_{\sigma \in \msf{Strat}(Q,A)} \tau(\sigma) \cdot \Prb_q^\sigma[\MEMDP[e],W]
		\end{align*}
	\end{definition}
	
	Let us now proceed to the proof of Lemma~\ref{lem:vonNeuman}.
	\begin{proof}
		Consider any two-non empty sets $A,B$ and $f\colon A \times B \to [0,1]$. Let $g\colon \Dist(A) \times \Dist(B) \to [0,1]$ (with probability distributions with countable support) be such that, for all $(\tau_A,\tau_B) \in \Dist(A) \times \Dist(B)$, we have $g(\tau_A,\tau_B) := \sum_{a \in A} \sum_{b \in B} \tau_A(a) \cdot \tau_B(b) \cdot f(a,b)$. Then, if $A$ or $B$ is finite: 
		\begin{equation*}
			\sup_{\tau_a \in \Dist(A)} \inf_{\tau_b \in \Dist(B)} g(\tau_a,\tau_b) = \inf_{\tau_b \in \Dist(B)} \sup_{\tau_a \in \Dist(A)} g(\tau_a,\tau_b)
		\end{equation*}
		This result corresponds to von Neuman's minimax theorem \cite{von1947theory} when $A$ and $B$ are finite, the case where one of the set is infinite (possibly uncountably) is a standard generalization, see e.g. \cite{sion1958general}. Applied to our case, we obtain:
		\begin{equation*}
			\sup_{\tau \in \Dist(\msf{Strat}(Q,A))} \inf_{b \in \Dist(E)} \msf{val}^{\msf{pr}}_q(\MEMDP,b,\tau,W) = \inf_{b \in \Dist(E)} \sup_{\tau \in \Dist(\msf{Strat}(Q,A))} \msf{val}^{\msf{pr}}_q(\MEMDP,b,\tau,W)
		\end{equation*}
		
		Consider any $\tau \in \Dist(\msf{Strat}(Q,A))$. Let $e \in E$ be such that $\msf{val}^{\msf{pr}}_q(\MEMDP,e,\tau,W) = \min_{e' \in E} \msf{val}^{\msf{pr}}_q(\MEMDP,e',\tau,W)$. Then, for all $b \in \Dist(E)$, we have:
		\begin{equation*}
			\msf{val}^{\msf{pr}}_q(\MEMDP,b,\tau,W) = \sum_{e' \in E} b(e') \cdot \msf{val}^{\msf{pr}}_q(\MEMDP,e',\tau,W) \geq \sum_{e' \in E} b(e') \cdot \msf{val}^{\msf{pr}}_q(\MEMDP,e,\tau,W) = \msf{val}^{\msf{pr}}_q(\MEMDP,e,\tau,W)
		\end{equation*}
		Thus: $\inf_{b \in \Dist(E)} \msf{val}^{\msf{pr}}_q(\MEMDP,b,\tau,W) = \min_{e' \in E} \msf{val}^{\msf{pr}}_q(\MEMDP,e',\tau,W) = \msf{val}^{\msf{uni}}_q(\MEMDP,\tau,W)$.
		
		Similarly, consider any $b \in \Dist(E)$ and $\varepsilon > 0$. Let $\sigma \in \msf{Strat}(Q,A)$ such that $\msf{val}^{\msf{pr}}_q(\MEMDP,b,\sigma,W) \geq \sup_{\sigma' \in \msf{Strat}(Q,A)} \msf{val}^{\msf{pr}}_q(\MEMDP,b,\sigma',W) - \varepsilon$. Then, for all $\tau \in \Dist(\msf{Strat}(Q,A))$, we have:
		\begin{align*}
			\msf{val}^{\msf{pr}}_q(\MEMDP,b,\tau,W) & = \sum_{\sigma' \in \msf{Strat}(Q,A)} \tau(\sigma') \cdot \msf{val}^{\msf{pr}}_q(\MEMDP,b,\sigma',W) \leq \sum_{\sigma' \in \msf{Strat}(Q,A)} \tau(\sigma') \cdot \msf{val}^{\msf{pr}}_q(\MEMDP,b,\sigma,W) + \varepsilon \\
			& = \msf{val}^{\msf{pr}}_q(\MEMDP,b,\sigma,W) + \varepsilon \leq \msf{val}^{\msf{pr}}_q(\MEMDP,b,W) + \varepsilon
		\end{align*}
		Since this holds for all $\varepsilon > 0$, we obtain $\sup_{\tau \in \Dist(\msf{Strat}(Q,A))} \msf{val}^{\msf{pr}}_q(\MEMDP,b,\tau,W) = \msf{val}^{\msf{pr}}_q(\MEMDP,b,W)$.
		
		Overall, we have: 
		\begin{equation*}
			\sup_{\tau \in \Dist(\msf{Strat}(Q,A))} \msf{val}^{\msf{uni}}_q(\MEMDP,\tau,W) = \inf_{b \in \Dist(E)} \msf{val}^{\msf{pr}}_q(\MEMDP,b,W)
		\end{equation*}
	\end{proof}
	
	\subsection{Proof of Lemma~\ref{lem:approx_value}}
	\label{appen:mixed_strategies}
	
	Before we proceed to the proof of Lemma~\ref{lem:approx_value}, let us consider how it implies Theorem~\ref{thm:approx_value}.
	\begin{proof}[Proof of Theorem~\ref{thm:approx_value}]
		By Lemmas~\ref{lem:vonNeuman} and~\ref{lem:approx_value}, we have:
		\begin{equation*}
			\msf{val}_q^{\msf{uni}}(\MEMDP,W) = \sup_{\tau \in \Dist(\msf{Strat}(Q,A))} \msf{val}^{\msf{uni}}_q(\MEMDP,\tau,W) = \inf_{b \in \Dist(E)} \msf{val}^{\msf{pr}}_q(\MEMDP,b,W)
		\end{equation*}
		The theorem follows.
	\end{proof}
	
	Let us now proceed to the proof of Lemma~\ref{lem:approx_value}. Note that it is actually similar to the proof of \cite[Lemma 19]{DBLP:journals/corr/abs-2504-15960}.
	\begin{proof}
		Consider an MEMDP $\MEMDP$, a state $q \in Q$, and a mixed strategy $\tau \in \Dist(\msf{Strat}(Q,A))$. For all $\rho \in Q \cdot (A \cdot Q)^*$, $e \in E$ and $\sigma \in \msf{Strat}(Q,A)$, we let $p[\sigma,e,\rho] := \tau(\sigma) \cdot \Prb_q^\sigma[\MEMDP[e],\rho]$ and $p[\tau,e,\rho] := \sum_{\sigma \in \msf{Strat}(Q,A)} p[\sigma,e,\rho] = \Prb_q^\tau[\MEMDP[e],\rho]$. Consider now any $\rho \in Q \cdot (A \cdot Q)^*$. Let us show that, for all $e,e' \in E$ such that $p[\tau,e,\rho],p[\tau,e',\rho] \neq 0$, we have, for all $\sigma \in \msf{Strat}(Q,A)$:
		\begin{equation*} 
			\frac{p[\sigma,e,\rho]}{p[\tau,e,\rho]} = \frac{p[\sigma,e',\rho]}{p[\tau,e',\rho]}
		\end{equation*}
		Indeed, consider any $e \in E$ and $\sigma \in \msf{Strat}(Q,A)$. Let us write $\rho$ as $\rho = q_0 \cdot (a_0,q_1) \cdots (a_{n-1},q_{n})$, with $q_0 := q$. For all $0 \leq i \leq n$, we let $\rho_{\leq i} := q_0 \cdots (a_{i-1},q_{i})$. (In particular, $\rho_{\leq 0} = q_0$.) We have:
		\begin{equation*}
			\Prb_q^\sigma[\MEMDP[e],\rho] = \Pi_{i = 0}^{n-1} \sigma(\rho_{\leq i})(a_{i}) \cdot \delta_e(q_i,a_i)(q_{i+1})
		\end{equation*}
		Therefore:
		\begin{equation*} 
			\frac{p[\sigma,e,\rho]}{p[\tau,e,\rho]} = \frac{\tau(\sigma) \cdot \prod\limits_{i = 0}^{n-1} \sigma(\rho_{\leq i})(a_{i}) \cdot \delta_e(q_i,a_i)(q_{i+1})}{\sum\limits_{\sigma' \in \msf{Strat}(Q,A)} \tau(\sigma') \cdot \prod\limits_{i = 0}^{n-1} \sigma'(\rho_{\leq i})(a_{i}) \cdot \delta_e(q_i,a_i)(q_{i+1})} = \frac{\tau(\sigma) \cdot \prod\limits_{i = 0}^{n-1} \sigma(\rho_{\leq i})(a_{i})}{\sum\limits_{\sigma' \in \msf{Strat}(Q,A)} \tau(\sigma') \cdot \prod\limits_{i = 0}^{n-1} \sigma'(\rho_{\leq i})(a_{i})}
		\end{equation*}
		This does not depend on the environment $e$, thus $\frac{p[\sigma,e,\rho]}{p[\tau,e,\rho]} = \frac{p[\sigma,e',\rho]}{p[\tau,e',\rho]}$. 
		
		Now, let us define a strategy $\sigma \in \msf{Strat}(Q,A)$. For all $\rho \in Q \cdot (A \cdot Q)^*$, we let $\sigma(\rho) \in \Dist(A)$ be such that, for all $a \in A$:
		\begin{align*}
			\sigma(\rho)(a) := \begin{cases}
				\sum\limits_{\sigma' \in \msf{Strat}(Q,A)} \sigma'(\rho)(a) \cdot \frac{p[\sigma',e,\rho]}{p[\tau,e,\rho]} & \text{ for some }e \in E\text{ such that }p[\tau,e,\rho] \neq 0 \text{ if one exists}\\
				\text{is arbitrary} & \text{ if }p[\tau,e,\rho] = 0 \text{ for all }e \in E
			\end{cases}
		\end{align*}
		
		Consider an environment $e \in E$. Let us show by induction on $\rho \in Q \cdot (A \cdot Q)^*$ that $\Prb_q^\sigma[\MEMDP[e],\rho] = p[\tau,e,\rho]$. This clearly holds for all $\rho \in Q$. Assume now that this holds for some $\rho \in Q \cdot (A \cdot Q)^*$ and let $(a,q') \in Q$. If $p[\tau,e,\rho] = \Prb_q^\sigma[\MEMDP[e],\rho] = 0$, then $p[\tau,e,\rho \cdot (a,q')],\Prb_q^\sigma[\MEMDP[e],\rho \cdot (a,q')] = 0$. Thus, we assume that $p[\tau,e,\rho] = \Prb_q^\sigma[\MEMDP[e],\rho] > 0$. We have:
		\begin{align*}
			\Prb_q^\sigma[\MEMDP[e],\rho \cdot (a,q')] & = \Prb_q^\sigma[\MEMDP[e],\rho] \cdot \sigma(\rho)(a) \cdot \delta_e(\head{\rho},a)(q') \\
			& = p[\tau,e,\rho] \cdot \left(\sum\limits_{\sigma' \in \msf{Strat}(Q,A)} \sigma'(\rho)(a) \cdot \frac{p[\sigma',e,\rho]}{p[\tau,e,\rho]}\right) \cdot \delta_e(\head{\rho},a)(q') \\
			& = \sum\limits_{\sigma' \in \msf{Strat}(Q,A)} \sigma'(\rho)(a) \cdot p[\sigma',e,\rho]  \cdot \delta_e(\head{\rho},a)(q') \\
			& = \sum\limits_{\sigma' \in \msf{Strat}(Q,A)} p[\sigma',e,\rho \cdot (a,q')] \\
			& = p[\tau,e,\rho \cdot (a,q')]
		\end{align*}
		Therefore, the property holds for $\rho \cdot (a,q') \in Q \cdot (A \cdot Q)^*$ as well. In fact, it holds for all $\rho \in Q \cdot (A \cdot Q)^*$. It follows that, for all $\rho \in Q \cdot (A \cdot Q)^*$:
		\begin{equation*}
			\Prb_q^\sigma[\MEMDP[e],\cyl(\rho)] = \sum_{\sigma' \in \msf{Strat}(Q,A)} \tau(\sigma') \cdot \Prb_q^{\sigma'}[\MEMDP[e],\cyl(\rho)] = \Prb_q^\tau[\MEMDP[e],\cyl(\rho)]
		\end{equation*}
		By Carathéodory's extension theorem, a probability measure is entirely defined by the measure of cylinder sets. Since $\Prb_q^\sigma[\MEMDP[e],\cdot]\colon \msf{Borel}(Q)$ and $\sum_{\sigma' \in \msf{Strat}(Q,A)} \tau(\sigma') \cdot \Prb_q^{\sigma'}[\MEMDP[e],\cdot]\colon \msf{Borel}(Q)$ are probability measures, we obtain that, for all Borel objectives $W \in \msf{Borel}(Q)$:
		\begin{equation*}
			\Prb_q^\sigma[\MEMDP[e],W] = \Prb_q^\tau[\MEMDP[e],W]
		\end{equation*}
		This holds for all environment $e \in E$, thus:
		\begin{equation*}
			\msf{val}^{\msf{uni}}_q(\MEMDP,\tau,W) = \msf{val}^{\msf{uni}}_q(\MEMDP,\sigma,W) \leq \msf{val}^{\msf{uni}}_q(\MEMDP,W)
		\end{equation*}
		This holds for all $\sigma \in \Dist(\msf{Strat}(Q,A))$. In addition, we have $\msf{val}^{\msf{uni}}_q(\MEMDP,W) \leq \sup\limits_{\tau \in \Dist(\msf{Strat}(Q,A))} \msf{val}^{\msf{uni}}_q(\MEMDP,\tau,W)$. The lemma follows.
	\end{proof}

	\subsection{Proof of Theorem~\ref{thm:complexity_deciding_gap_problem_uni_val}}
	\label{proof:thm_complexity_deciding_gap_problem_uni_val}
	\begin{proof}
		As in the proof sketch, let $N := \lceil \log(\frac{1}{\varepsilon}) \rceil$, $k := |E|$, and $S := \{ b \in \Dist(E) \mid \forall e \in E,\; b(e) \in \{ \frac{x}{k \cdot 2^{N+1}} \mid 0 \leq x \leq k \cdot 2^{N+1}\} \}$. We indeed have $|S| = O(2^{k^2 \cdot N})$, we also have $\varepsilon \geq \frac{1}{2^N}$.
		
		Consider any $b \in \Dist(E)$. Let us define $b' \in S$ such that $\msf{Diff}(b,b') \leq \frac{\varepsilon}{2}$. Consider any $e \in E$. For all $e' \in E \setminus \{e\}$, we let $b'(e') := \frac{x_{e'}}{k \cdot 2^{N+1}}$ for the largest $0 \leq x \leq k \cdot 2^{N+1}$ such that $b'(e) \leq b(e)$. Therefore, we have $0 \leq b(e') - b'(e') \leq \frac{1}{k \cdot 2^{N+1}}$. Furthermore, we have $\sum_{e' \in E \setminus \{e\}} x_{e'} = k \cdot 2^{N+1} \cdot \left(\sum_{e' \in E \setminus \{e\}} b'(e')\right) \leq  k \cdot 2^{N+1} \left(\sum_{e' \in E \setminus \{e\}} b(e')\right) \leq k \cdot 2^{N+1}$. Thus, we can define $b'(e) := \frac{\sum_{e' \in E \setminus \{e\}} x_{e'}}{k \cdot 2^{N+1}}$ and have $b(e) \leq b'(e) \in [0,1]$. We have defined $b' \in \Dist(E)$. As noted in the proof of Lemma~\ref{lem:small_belief_change_ok}, we have:
		\begin{equation*}
			\msf{Diff}(b,b') = \sum_{\substack{\tilde{e} \in E \\ b(\tilde{e}) \geq b'(\tilde{e})}} (b(\tilde{e}) - b'(\tilde{e}))
		\end{equation*}
		Therefore:
		\begin{equation*}
			\msf{Diff}(b,b') = \sum_{\substack{\tilde{e} \in E \\ b(\tilde{e}) \geq b'(\tilde{e})}} (b(\tilde{e}) - b'(\tilde{e})) = \sum_{e' \in E \setminus \{e\}} (b(e') - b'(e')) \leq \sum_{e' \in E \setminus \{e\}} \frac{1}{k \cdot 2^{N+1}} \leq \frac{1}{2^{N+1}} \leq \frac{\varepsilon}{2}
		\end{equation*}
		
		Therefore, by Theorem~\ref{thm:approx_value} and Lemma~\ref{lem:small_belief_change_ok}:
		\begin{equation*}
			\msf{val}_q^{\msf{uni}}(\MEMDP,W) = \inf_{b \in \Dist(E)} \msf{val}_q^{\msf{pr}}(\MEMDP,b,W) \leq \min_{b \in S} \msf{val}_q^{\msf{pr}}(\MEMDP,b,W) \leq \inf_{b \in \Dist(E)} \msf{val}_q^{\msf{pr}}(\MEMDP,b,W) + \frac{\varepsilon}{2} = \msf{val}_q^{\msf{uni}}(\MEMDP,W) + \frac{\varepsilon}{2}
		\end{equation*}
		Thus:
		\begin{equation*}
			|\min_{b \in S} \msf{val}_q^{\msf{pr}}(\MEMDP,b,W) - \msf{val}_q^{\msf{uni}}(\MEMDP,W)| \leq \frac{\varepsilon}{2}
		\end{equation*}

		Furthermore, the number of bits to encode any $b \in S$ is polynomial in $N$ and $k$. Therefore, Algorithm~\ref{algo:MEMDP_parity} executed on any belief $b \in S$ with $\gamma := \varepsilon/2$ and $n := m(\MEMDP,\frac{\varepsilon}{4k})$ computes in exponential space (resp. polynomial space if the probabilities are given in unary) a value $v_b \in [0,1]$ such that $|\msf{val}_q^{\msf{pr}}(\MEMDP,b,W) - v_b| \leq \frac{\varepsilon}{2}$. Letting $b_m \in S$ be such that $\min_{b \in S} \msf{val}_q^{\msf{pr}}(\MEMDP,b,W) = \msf{val}_q^{\msf{pr}}(\MEMDP,b_m,W)$ and $b_s \in S$ be such that $\min_{b \in S} v_b = v_{b_s}$, we have:
		\begin{equation*}
			-\frac{\varepsilon}{2} \leq \msf{val}_q^{\msf{pr}}(\MEMDP,b_m,W) - v_{b_m} \leq \msf{val}_q^{\msf{pr}}(\MEMDP,b_m,W) - v_{b_s} \leq \msf{val}_q^{\msf{pr}}(\MEMDP,b_s,W) - v_{b_s} \leq \frac{\varepsilon}{2}
		\end{equation*}
		Therefore, we have:
		\begin{equation*}
			|\min_{b \in S} \msf{val}_q^{\msf{pr}}(\MEMDP,b,W) - \min_{b \in S} v_b| \leq \frac{\varepsilon}{2}
		\end{equation*}
		
		Hence:
		\begin{equation*}
			|\min_{b \in S} v_b - \msf{val}_q^{\msf{uni}}(\MEMDP,W)| \leq \varepsilon
		\end{equation*}
		
		The environment beliefs in $S$ can be enumerated in polynomial space, thus the value $\min_{b \in S} v_b$ can be computed in exponential space (resp. polynomial space if the probabilities are given in unary), the theorem follows.
	\end{proof}
	
	\subsection{Proof of Proposition~\ref{prop:binary_search}}
	\label{proof:lem_binary_search}
	Let us first introduce the objects that we will use throughout this subsection. We fix an MEMDP $\MEMDP$, with $E := \{ e_1,e_2\}$, and we consider some state $q \in Q$, and a parity objective $W \in \msf{Borel}(Q)$. 
	
	To simplify the notations, we let $S := \msf{Strat}(Q,A)$. Furthermore, since there are only two environments $e_1$ and $e_2$, a distribution $b \in \Dist(E)$ is entirely defined by the value it gives to environment $e_1$. Thus, for all $x \in [0,1]$, we let $b_x \in \Dist(E)$ be such that $b_x(e_1) := x$ and $b_x(e_2) := 1 - x$. We then let $g\colon S \times [0,1] \to [0,1]$ be such that, for all $\sigma \in S$ and $x \in [0,1]$: $g(\sigma,x) := \msf{val}_q^{\msf{pr}}(\MEMDP,b_x,\sigma,W) = x \cdot \Prb_q^\sigma[\MEMDP[e_1],W] + (1-x) \cdot \Prb_q^\sigma[\MEMDP[e_2],W] \in [0,1]$. We also let $f\colon [0,1] \to [0,1]$ be such that, for all $x \in [0,1]$, $f(x) := \msf{val}_q^{\msf{pr}}(\MEMDP,b_x,W) = \sup_{\sigma \in S} g(\sigma,x)$. 
	
	Now, let us show that the function $f$ satisfies the lemma below.
	\begin{lemma}
		\label{lem:sub-convex}
		For all $x < y < z \in [0,1]$, we have:
		\begin{equation*}
			f(y) \leq \min(f(x),f(z))
		\end{equation*} 
		
		Furthermore, consider any $x < z \in [0,1]$. Letting $y := \frac{z + x}{2}$, $\delta := z - x$, and $\eta := \max(|f(y) - f(x)|,|f(z) - f(y)|)$, then:
		\begin{equation*}
			f(y) - \frac{2\eta}{\delta} \leq \inf_{t \in [0,1]} f(t) \leq f(y)
		\end{equation*}
	\end{lemma}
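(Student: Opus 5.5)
The plan rests on two structural facts about $f$. First, $f$ is \emph{convex}. For every fixed $\sigma \in S$, the map $x \mapsto g(\sigma,x) = x \cdot \Prb_q^\sigma[\MEMDP[e_1],W] + (1-x)\cdot \Prb_q^\sigma[\MEMDP[e_2],W]$ is affine. Hence $f = \sup_{\sigma \in S} g(\sigma,\cdot)$ is a pointwise supremum of affine functions, and so it is convex on $[0,1]$. Second, $f$ is $1$-Lipschitz by Lemma~\ref{lem:small_belief_change_ok}, since $\msf{Diff}(b_x,b_{x'}) = |x-x'|$. I would record both facts first and derive everything else from them.

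For the first inequality, take $x<y<z$ and write $y = \lambda x + (1-\lambda) z$ with $\lambda = \frac{z-y}{z-x} \in (0,1)$. Fix $\sigma$. Affinity of $g(\sigma,\cdot)$ gives
\[
g(\sigma,y) = \lambda g(\sigma,x) + (1-\lambda) g(\sigma,z) \leq \lambda f(x) + (1-\lambda) f(z).
\]
Taking the supremum over $\sigma$ yields $f(y) \leq \lambda f(x) + (1-\lambda) f(z)$. To reach the bound $f(y) \leq \min(f(x),f(z))$ as printed, the remaining step is to compare this convex combination with the smaller of $f(x)$ and $f(z)$.

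This comparison is the main obstacle, and I do not expect to close it. A convex combination of two numbers is at most their maximum, but it is at least their minimum. So convexity alone yields only $f(y) \leq \max(f(x),f(z))$ (quasi-convexity), which is the form used in the proof sketch of Proposition~\ref{prop:binary_search}. To obtain the $\min$ bound I would first try to show that $f$ is constant on $[x,z]$, since this is the only case in which the convex-combination bound implies it. I would test that attempt on the two-environment card-game example, where $f$ is easy to compute. If $f$ turns out strictly monotone there, the printed $\min$ claim fails, and the version that can actually be proved is the $\max$ bound above.

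For the second part, set $y = \frac{x+z}{2}$ and $h = \delta/2$. The upper bound $\inf_t f(t) \leq f(y)$ is immediate. For the lower bound, I would use a subgradient $s$ of $f$ at $y$, which exists by convexity. Convexity places $s$ between the left secant slope $\frac{f(y)-f(x)}{h}$ and the right secant slope $\frac{f(z)-f(y)}{h}$. Both slopes have absolute value at most $\eta/h = 2\eta/\delta$, so $|s| \leq 2\eta/\delta$. Then for every $t \in [0,1]$,
\[
f(t) \geq f(y) + s(t-y) \geq f(y) - \frac{2\eta}{\delta}|t-y| \geq f(y) - \frac{2\eta}{\delta},
\]
using $|t-y| \leq 1$. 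Taking the infimum over $t$ concludes this part without further difficulty.
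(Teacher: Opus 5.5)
Your argument is correct for the version of the statement that can be true, and you are right about the first claim: with $\min$ it is false. The intended inequality is $f(y) \le \max(f(x),f(z))$.

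The paper's own proof also only establishes the $\max$ form. In each case it reaches $f(y) - \varepsilon \le f(z)$ (resp.\ $\le f(x)$) and then appends the invalid step $f(z) \le \min(f(x),f(z))$. The $\max$ form is what the proof of Proposition~\ref{prop:binary_search} actually invokes.

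You can settle the question without the card-game test. That test, with $\alpha_0 = 0$, would be inconclusive anyway, since there $f \equiv 1$. The general reason is that the $\min$ inequality for all $x<y<z$ makes every point of $(0,1)$ a global minimiser of $f$. So it can only hold when $f$ is constant on $(0,1)$. For a concrete counterexample, take an MEMDP in which every action from $q$ leads to a winning sink in $e_1$ and to a losing sink in $e_2$. Then $f(x) = x$, so $f(y) = y > x = \min(f(x),f(z))$.

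Your route differs from the paper's in packaging more than in substance. The paper never mentions convexity. It fixes an $\varepsilon$-optimal $\sigma$ at $y$ and works with the single affine map $g(\sigma,\cdot)$, with endpoint values $a_1 := g(\sigma,1)$ and $a_2 := g(\sigma,0)$.

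\begin{itemize}
\item For the first part, the paper splits on the sign of $a_1 - a_2$. This gives $g(\sigma,y) \le g(\sigma,z)$ or $g(\sigma,y) \le g(\sigma,x)$, hence only quasi-convexity. You get full convexity of $f$ as an upper envelope of affine maps.
\item For the second part, the paper uses $g(\sigma,\cdot) \le f$ at $x$ and $z$, together with $g(\sigma,y) \ge f(y)-\varepsilon$, to bound $|a_1-a_2| \le 2(\eta+\varepsilon)/\delta$. It then concludes $\inf_t f(t) \ge \min(a_1,a_2) \ge g(\sigma,y) - |a_1 - a_2|$ and lets $\varepsilon \to 0$. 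This is your subgradient argument, with an $\varepsilon$-approximate supporting line in place of an exact one.
\end{itemize}

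Your version is shorter, avoids the $\varepsilon$ bookkeeping, and gives the sharper bound $f(t) \ge f(y) - \frac{2\eta}{\delta}|t-y|$. The cost is that it relies on the existence of a subgradient of a finite convex function at an interior point. That is valid here, since $y \in (x,z) \subseteq (0,1)$. The paper's version is self-contained and stays in the language of strategies. The $1$-Lipschitz property you record is not needed in either part.
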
 
	\begin{proof}
		Consider first some $x < y < z \in [0,1]$. Let $\varepsilon > 0$ and $\sigma \in S$ such that $g(\sigma,y) \geq f(y) - \varepsilon$. Let $a_1 := g(\sigma,1)$ and $a_2 := g(\sigma,0)$. There are two cases:
		\begin{itemize}
			%\item If $a := a_1 = a_2 \in [0,1]$, then $f(y) - \varepsilon \leq g(\sigma,y) = a = g(\sigma,z) \leq f(z) \leq \min(f(x),f(z))$;
			\item If $a_1 \geq a_2 \in [0,1]$, then $g(\sigma,z) - g(\sigma,y) = (z a_1 + (1-z) a_2) - (y a_1 + (1-y) a_2) = (z-y) (a_1 - a_2) \geq 0$. Therefore: $f(y) - \varepsilon \leq g(\sigma,y) \leq g(\sigma,z) \leq f(z) \leq \min(f(x),f(z))$;
			\item If $a_1 < a_2 \in [0,1]$, then $g(\sigma,x) - g(\sigma,y) = (x a_1 + (1-x) a_2) - (y a_1 + (1-y) a_2) = (x-y) (a_1 - a_2) > 0$. Therefore: $f(y) - \varepsilon \leq g(\sigma,y) \leq g(\sigma,x) \leq f(x) \leq \min(f(x),f(z))$.
		\end{itemize}
		Since this holds for all $\varepsilon > 0$, we have $f(y) \leq \min(f(x),f(z))$.
		
		%By definition, we have:
		%\begin{equation*}
		%	g(\sigma,z) - g(\sigma,y) = z \cdot a_1 + (1-z) \cdot a_2 
		%\end{equation*}
		
		Consider now some $x < z \in [0,1]$, and let $y := \frac{z + x}{2}$, $\delta := z - x$, and $\eta := \max(|f(y) - f(x)|,|f(x) - f(z)|)$. Let $\varepsilon > 0$ and consider some $\sigma \in S$ such that $g(\sigma,y) \geq f(y) - \varepsilon$. Let $a_1 := g(\sigma,1)$ and $a_2 := g(\sigma,0)$. We have: $g(\sigma,y) = g(\sigma,z) + (z - y) (a_2 - a_1)$. Thus
		\begin{equation*}
			f(y) - \varepsilon \leq g(\sigma,y) = g(\sigma,z) + (z - y) (a_2 - a_1) \leq f(z) + \frac{\delta(a_2 - a_1)}{2}
		\end{equation*}
		Thus:
		\begin{equation*}
			a_1 - a_2 \leq \frac{2}{\delta} (f(z) - f(y) + \varepsilon)  \leq \frac{2}{\delta} (\eta + \varepsilon)
		\end{equation*}

		Symmetrically, we have: $g(\sigma,y) = g(\sigma,x) + (y - x) (a_1 - a_2)$. Thus
		\begin{equation*}
			f(y) - \varepsilon \leq g(\sigma,y) = g(\sigma,x) + (y - x) (a_1 - a_2) \leq f(x) + \frac{\delta(a_1 - a_2)}{2}
		\end{equation*}
		Thus:
		\begin{equation*}
			a_2 - a_1 \leq \frac{2}{\delta} (f(x) - f(y) + \varepsilon)  \leq \frac{2}{\delta} (\eta + \varepsilon)
		\end{equation*}
		
		As this holds for all $\varepsilon > 0$, we obtain:
		\begin{equation*}
			|a_1 - a_2| \leq \frac{2(\eta+\varepsilon)}{\delta}
		\end{equation*} 
		Now, let $a := \min(a_1,a_2)$. We have $f(y) - \varepsilon \leq g(\sigma,y) \leq a + \frac{2(\eta+\varepsilon)}{\delta}$. Furthermore, for all $t \in [0,1]$, we have $f(t) \geq g(\sigma,t) \geq a$. Overall, we obtain:
		\begin{equation*}
			f(y) - \varepsilon - \frac{2(\eta+\varepsilon)}{\delta} \leq a \leq \inf_{t \in [0,1]} f(t) \leq f(y)
		\end{equation*}
		As this holds for $\varepsilon > 0$, the lemma follows.
	\end{proof}
	
	\begin{algorithm}[t]
		\caption{$\textsf{Inf-f}(x,t,\varepsilon)$}
		\label{algo:binary-search}
		\textbf{Input}: $x < t \in [0,1]$, $\varepsilon > 0$
		\begin{algorithmic}[1]
			\If{$t - x \leq \frac{\varepsilon}{2}$}
			\Return $\textsf{Compute-f}(x,\frac{\varepsilon}{2})$
			\EndIf
			\State $y \gets \frac{2x+t}{3}$, $a_y \gets \textsf{Compute-f}(y,\frac{\varepsilon^2}{48})$
			\State $z \gets \frac{x+2t}{3}$, $a_z \gets \textsf{Compute-f}(z,\frac{\varepsilon^2}{48})$
			%\If{$a_y < a_z + \frac{\varepsilon^2}{24}$}
			%\Return $\textsf{Inf-f}(x,z,\varepsilon)$
			%\EndIf
			%\If{$a_y > a_z + \frac{\varepsilon^2}{24}$}
			%\Return $\textsf{Inf-f}(y,t,\varepsilon)$
			%\EndIf
			\State $u \gets \frac{y+z}{2}$, $a_u \gets \textsf{Compute-f}(u,\frac{\varepsilon^2}{48})$
			\If{$a_u > a_z + \frac{\varepsilon^2}{24}$ \text{ or }$a_y > a_u + \frac{\varepsilon^2}{24}$}
			\Return $\textsf{Inf-f}(y,t,\varepsilon)$
			\EndIf
			\If{$a_u > a_y + \frac{\varepsilon^2}{24}$ \text{ or }$a_z > a_u + \frac{\varepsilon^2}{24}$}
			\Return $\textsf{Inf-f}(x,z,\varepsilon)$
			\EndIf
			\State
			\Return $a_u$
		\end{algorithmic}
	\end{algorithm}
	
	We can now proceed to the proof of Proposition~\ref{prop:binary_search}.
	\begin{proof}
		%	Let us prove that, given as input $x < t \in [0,1]$ and $\varepsilon > 0$, 
		Consider Algorithm~\ref{algo:binary-search} called $\textsf{Inf-f}$. The algorithm $\textsf{Compute-f}$, called by Algorithm~\ref{algo:binary-search}, on an input $u \in [0,1]$ and some $\varepsilon > 0$, outputs a value $v \in [0,1]$ such that $|v - f(u)| \leq \varepsilon$. In practice, calling $\textsf{Compute-f}(u,\varepsilon)$ could simply amount to calling Algorithm~\ref{algo:MEMDP_parity} with $b := b_u$, $\gamma := \varepsilon$ and $n := m(\frac{\varepsilon}{3|E|})$ (see Theorem~\ref{thm:correection-algo}). In this proof, we view calls to $\textsf{Compute-f}$ as calls to an oracle approximating $f$ (and thus $\msf{val}_q^{\msf{pr}}$).
		
		Let $\varepsilon > 0$. We show by induction on $k \in \N$ that, for all $x < t \in [0,1]$ such that $t - x \leq \frac{\varepsilon}{2} \cdot (\frac{3}{2})^k$, 
		algorithm $\textsf{Inf-f}$ called on $(x,t,\varepsilon)$ outputs a value $v \in [0,1]$, with at most $k-1$ recursive calls (to $\textsf{Inf-f}$), such that $|v - \sup_{u \in [x,t]} f(u)| \leq \varepsilon$. 
		
		Consider first the case $k = 0$ and $x < t \in [0,1]$ such that $t - x \leq \frac{\varepsilon}{2}$. Then, $\textsf{Inf-f}(x,t,\varepsilon)$ outputs the value $v \in [0,1]$, with no recursive calls (to $\textsf{Inf-f}$), returned by $\textsf{Compute-f}(x,\frac{\varepsilon}{2})$, which satisfies $|v - f(x)| \leq \frac{\varepsilon}{2}$. Furthermore, for all $u \in [x,t]$, we have $\msf{Diff}(b_x,b_u) = u-x \leq \frac{\varepsilon}{2}$. Hence, by Lemma~\ref{lem:small_belief_change_ok}, we have $|f(u) - f(x)| \leq \frac{\varepsilon}{2}$. Thus, $|f(x) - \inf_{u \in [x,t]} f(u)| \leq \frac{\varepsilon}{2}$, thus $|v - \sup_{u \in [x,t]} f(u)| \leq \varepsilon$.
		
		Assume now that the inductive property is satisfied for some $k \in \N$ and consider some $x < t \in [0,1]$ such that $t - x \leq \frac{\varepsilon}{2} \cdot (\frac{3}{2})^{k+1}$. If $t - x \leq \frac{\varepsilon}{2}$, this is as above. Otherwise, we have: $y = \frac{2x+t}{3}$, $z = \frac{x+2t}{3}$, $u = \frac{y+z}{2}$, and $|a_y - f(y)| \leq \frac{\varepsilon^2}{48}$, $|a_z - f(z)| \leq \frac{\varepsilon^2}{48}$, $|a_u - f(u)| \leq \frac{\varepsilon^2}{48}$. There are three cases.
		\begin{itemize}
			\item If $a_u > a_z + \frac{\varepsilon^2}{24}$, then $f(u) > f(z)$. We have $y < u < z$, thus by Lemma~\ref{lem:sub-convex}, we have $f(u) \leq \max(f(y),f(z))$, thus $f(y) \geq f(u) > f(z)$. In addition, if $a_y > a_u + \frac{\varepsilon^2}{24}$, then $f(y) > f(u)$. Thus: $f(y) > \min(f(u),f(z))$.
			
			Then, for any $x \leq \alpha < y$, since $\alpha < y < u < z$, we have $f(y) \leq \max(f(\alpha),f(z))$ and $f(y) \leq \max(f(\alpha),f(u))$, thus $f(\alpha) \geq f(y) > \min(f(u),f(z))$. Therefore: $\inf_{\alpha \in [x,t]} f(\alpha) = \inf_{\alpha \in [y,t]} f(\alpha)$. Furthermore, we have: 
			\begin{equation*}
				t - y = t - \frac{2x+t}{3} = \frac{2}{3} (t-x) \leq \frac{\varepsilon}{2} \cdot (\frac{3}{2})^{k}
			\end{equation*}
			Thus, by our induction hypothesis, $\textsf{Inf-f}(y,t,\varepsilon)$ outputs a value $v \in [0,1]$, with at most $k-1$ recursive calls (to $\textsf{Inf-f}$) which satisfies $|v - \inf_{\alpha \in [x,t]} f(\alpha)| = |v - \inf_{\alpha \in [y,t]} f(\alpha)| \leq \varepsilon$. That value $v$ is outputted by $\textsf{Inf-f}(x,t,\varepsilon)$ with at most $k$ recursive calls.
			\item If $a_u > a_y + \frac{\varepsilon^2}{24}$, then $f(u) > f(y)$. We have $y < u < z$, thus by Lemma~\ref{lem:sub-convex}, we have $f(u) \leq \max(f(y),f(z))$, thus $f(z) \geq f(u) > f(y)$. In addition, if $a_z > a_u + \frac{\varepsilon^2}{24}$, then $f(z) > f(u)$. Thus: $f(z) > \min(f(u),f(y))$.
			
			Then, for any $z < \alpha \leq t$, since $y < u < z < \alpha$, we have $f(z) \leq \max(f(y),f(\alpha))$ and $f(z) \leq \max(f(u),f(\alpha))$, thus $f(\alpha) \geq f(z) > \min(f(y),f(u))$. Therefore: $\inf_{\alpha \in [x,t]} f(\alpha) = \inf_{\alpha \in [x,z]} f(\alpha)$. Furthermore, we have: 
			\begin{equation*}
				z - x = \frac{x+2t}{3} - x = \frac{2}{3} (t-x) \leq \frac{\varepsilon}{2} \cdot (\frac{3}{2})^{k}
			\end{equation*}
			Thus, by our induction hypothesis, $\textsf{Inf-f}(x,z,\varepsilon)$ outputs a value $v \in [0,1]$, with at most $k-1$ recursive calls (to $\textsf{Inf-f}$) which satisfies $|v - \inf_{\alpha \in [x,t]} f(\alpha)| = |v - \inf_{\alpha \in [x,z]} f(\alpha)| \leq \varepsilon$. That value $v$ is outputted by $\textsf{Inf-f}(x,t,\varepsilon)$ with at most $k$ recursive calls.
			\item If none of the above apply, then $|a_u - a_y| \leq \frac{\varepsilon^2}{24}$ and $|a_u - a_z| \leq \frac{\varepsilon^2}{24}$. Therefore: $|f(u) - f(y)| \leq |f(u) - a_u| + |f(y) - a_y| + |a_u - a_y| \leq \frac{\varepsilon^2}{48} + \frac{\varepsilon^2}{48} + \frac{\varepsilon^2}{24} = \frac{\varepsilon^2}{12}$, and similarly $|f(u) - f(z)| \leq \frac{\varepsilon^2}{12}$. Hence, by Lemma~\ref{lem:sub-convex}, we have:
			\begin{equation*}
				|f(u) - \inf_{\alpha \in [x,t]} f(t)| \leq \frac{2 \cdot \frac{\varepsilon^2}{12}}{z - y} = \frac{\varepsilon^2}{2(t-x)} \leq \frac{\varepsilon^2}{\varepsilon} = \varepsilon
			\end{equation*}
			This last two inequalities come from the fact that $z - y = \frac{t - x}{3}$ and $t - x \geq \frac{\varepsilon}{2}$. 
		\end{itemize}
		Overall, this induction hypothesis holds for all $k \in \N$. 
		
		Now, let $k := \lceil \frac{1}{\log(\frac{3}{2})} \cdot \log(\frac{2}{\varepsilon}) \rceil \in \N$. Note that $k = O(N)$ and $\frac{\varepsilon}{2} \cdot (\frac{3}{2})^k \geq 1$. Then, $\textsf{Inf-f}(0,1,\varepsilon)$ outputs a value $v \in [0,1]$, with at most $k-1$ recursive calls, such that $|v - \msf{val}_q^{\msf{uni}}(\MEMDP,W)| = |v - \sup_{u \in [0,1]} f(u)| \leq \varepsilon$ (by Theorem~\ref{thm:approx_value}, since for all $u \in [0,1]$, we have $f(u) = \msf{val}_q^{\msf{pr}}(\MEMDP,b_u,W)$). Furthermore, each one of these recursive calls involves at most three calls to $\textsf{Compute-f}$, with a precision either $\frac{\varepsilon}{2}$ or $\frac{\varepsilon^2}{48}$, which are both polynomial in $\varepsilon$%(and thus to Algorithm~\ref{algo:binary-search})
		. In addition, the values in $[0,1]$ on which $\textsf{Compute-f}$ is called are in the set $X := \{ \frac{x}{3^k} \mid 0 \leq x \leq 3^k \}$, or are the average of two values in that set $X$; thus the corresponding beliefs are described with a number of bits polynomial in $N$. The proposition follows.		
	\end{proof}
	
	\subsection{Proof of Corollary~\ref{coro:complexity_deciding_gap_problem_uni_val_hardness}}
	\label{proof:cor_complexity_deciding_gap_problem_uni_val_hardness}
	\begin{proof}
		It was shown in \cite[Theorem 26]{DBLP:journals/corr/RaskinS14} that the gap problem (P1) with $\msf{uni}$-values and reachability objectives\footnote{They can straightforwardly be encoded into parity objectives.} in two-environment MEMDPs with probabilities written in unary (and precision in binary), is $\msf{NP}$-hard. Now, consider the gap problem (P2) with $\msf{pr}$-values and parity objectives in two-environment MEMDPs with probabilities written in unary, with the belief and precision given in binary. If (P2) can solved in polynomial time, then we can compute $\gamma$-approximation of $\msf{pr}$-values (with $\gamma$ given in binary) with a binary search (it suffices to consider a precision for (P2) smaller than $\gamma$). However, by Proposition~\ref{prop:binary_search}, if we can compute $\gamma$-approximation of $\msf{pr}$-values in polynomial time, then problem (P1) can be solved in polynomial time. Therefore, if (P2) can be solved in polynomial time, so can (P1), and $\msf{P} = \msf{NP}$.
	\end{proof}
	
	\section{Complements on Section~\ref{sec:entropy}}
	\label{appen:entropy_POMDP}	
	
	\subsection{POMDPs induced by MEMDPs are Dirac-preserving}
	\begin{proposition}
		For all MEMDPs $\MEMDP$, the POMDP $\POMDP(\MEMDP)$ is Dirac-preserving. 
	\end{proposition}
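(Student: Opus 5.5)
The proof unfolds the two definitions and checks them directly. Take an MEMDP $\MEMDP$ and its induced POMDP $\POMDP(\MEMDP) = (Q \times E, A, \delta, Q, O)$. Fix a state $(q,e) \in Q \times E$, viewed as the Dirac belief $b_{(q,e)}$, and an action $a \in A$.

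The first step computes the likelihood of an observation $o \in Q$. By definition,
\[
p(b_{(q,e)},a,o) = \sum_{(q'',e'') \in O^{-1}(o)} \delta((q,e),a)((q'',e'')) .
\]
The transition function of $\POMDP(\MEMDP)$ only assigns positive probability to pairs whose environment component is unchanged. Also, $O^{-1}(o) = \{o\} \times E$. Hence $p(b_{(q,e)},a,o) = \delta_e(q,a)(o)$, and $o \in \msf{Comp}(b_{(q,e)},a)$ iff $\delta_e(q,a)(o) > 0$.

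The second step computes the updated belief for such a compatible $o$. For $(q'',e'') \in Q \times E$, the definition gives $\lambda[b_{(q,e)},a,o]((q'',e'')) = 0$ whenever $q'' \neq o$. When $q'' = o$, it equals
\[
\frac{\delta((q,e),a)((o,e''))}{p(b_{(q,e)},a,o)} ,
\]
and the numerator vanishes unless $e'' = e$. So the support of the updated belief is contained in $\{(o,e)\}$. Since it is a distribution, its support is exactly $\{(o,e)\}$, with value $\delta_e(q,a)(o)/\delta_e(q,a)(o) = 1$. Therefore $|\Supp(\lambda[b_{(q,e)},a,o])| = 1$ for every compatible $o$, which is the Dirac-preserving condition.

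There is no real obstacle here. The only point requiring care is the observation that the induced transition function never changes the environment coordinate, so both the likelihood and the posterior collapse onto the single pair $(o,e)$.
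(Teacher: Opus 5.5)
Your proof is correct and follows the same route as the paper's: the induced transition function never changes the environment coordinate, and the observation is the $Q$-component, so every compatible observation $o$ yields the Dirac belief on $(o,e)$. You also write out the likelihood and posterior computations explicitly, which the paper's proof leaves implicit.
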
                                          
	\begin{proof}
		Consider any $(q,e) \in Q \times E$ and action $a \in A$. For all $(q',e') \in Q \times E$, we have that $\delta((q,e),a)((q',e')) > 0$ implies $e' = e$; and $O((q',e')) = q'$. Therefore, for all compatible observations $q' \in \msf{Comp}((q,e),a) \subseteq Q$, we have $\lambda((q,e),a,q')((q',e)) = 1$, and thus $H(\lambda((q,e),a,q')) = 0$.
	\end{proof}
	
	\subsection{Proof of Proposition~\ref{prop:entropy}}
	\label{proof:prop_entropy}
	Let us first quickly introduce a notation on POMDPs.
	\begin{definition}
		\label{def:notation_pomdp}
		Consider a Dirac-preserving POMDP $\POMDP = (Q,A,\delta,\Omega,O)$. For all $q \in Q$, for all $a \in A$, and for all $o \in \msf{Comp}(q,a)$, we let $\msf{sc}(q,a,o) \in Q$ be such that $\lambda(q,a,o)(\msf{sc}(q,a,o)) = 1$. We also let $\msf{Prec}(q,a,o) := \{ q' \in Q \mid q = \msf{sc}(q',a,o) \}$. 
	\end{definition}
	
	Let us now proceed to the proof of Proposition~\ref{prop:entropy}.
	\begin{proof}
		Let us first assume that $\POMDP$ has non-increasing entropy. Consider any state $q \in Q$ and action $a \in A$. We have:
		\begin{equation*}
			0 = H(q) \geq \sum_{o \in \msf{Comp}(q,a)} p(q,a,o) \cdot H(\lambda(q,a,o)) \geq 0
		\end{equation*}
		Then, for all $o \in \msf{Comp}(q,a)$, since $p(q,a,o) > 0$, we necessarily have $H(\lambda(q,a,o)) = 0$. Hence, $\POMDP$ is Dirac-preserving.
		
		Let us now assume that $\POMDP$ is Dirac-preserving. 
		
		For all $(q,a) \in Q \times A$, $o \in \msf{Comp}(q,a)$, and $q' \in O^{-1}(o)$, we have:
		\begin{align*}
			\delta(q,a)(q') = \begin{cases}
				p(q,a,o) > 0 & \text{ if }q' = \msf{sc}(q,a,o) \\
				0 & \text{ otherwise }
			\end{cases}
		\end{align*}
		Therefore, we have:
		\begin{align*}
			\sum_{o \in \msf{Comp}(q,a)} p(q,a,o) & = \sum_{o \in \msf{Comp}(q,a)} \delta(q,a)(\msf{sc}(q,a,o)) = \sum_{o \in \msf{Comp}(q,a)} \sum_{q' \in O^{-1}(o)}  \delta(q,a)(q') = \sum_{q' \in Q} \delta(q,a)(q') = 1
		\end{align*}
		
		In addition, for all beliefs $b \in \Dist(Q)$, actions $a \in A$, and observations $o \in \msf{Comp}(b,a)$:
		\begin{align*}
			p(b,a,o) & = \sum_{q \in O^{-1}(o)} \sum_{q' \in Q} b(q') \cdot \delta(q',a)(q) = \sum_{q \in O^{-1}(o)} \sum_{q' \in \msf{Prec}(q,a,o)} b(q') \cdot \delta(q',a)(q) \\
			& = \sum_{q \in O^{-1}(o)} b(q) \cdot \delta(q,a)(\msf{sc}(q,a,o)) = \sum_{q \in O^{-1}(o)} b(q) \cdot p(q,a,o)
		\end{align*}
		
		Now, we define a belief define $\lambda'(b,a,o) \in \Dist(Q)$ as follows, for all $q \in Q$:
		\begin{align*}
			\lambda'(b,a,o)(q) := \begin{cases}
				0 & \text{ if }O(q) \neq o \\
				%0 & \text{ if }o \notin \msf{Comp}(q,a) \\
				%\frac{1}{p(b,a,o)} \cdot b(q) \cdot \delta(q,a)(\msf{sc}(q,a,o)) = 
				\frac{1}{p(b,a,o)} \cdot b(q) \cdot p(q,a,o) & \text{ otherwise }
			\end{cases}
		\end{align*}
		
		Let us show the two following facts, for all beliefs $b \in \Dist(Q)$ and actions $a \in A$:
		\begin{equation}
			\label{eqn:p'_geq_p}
			\forall o \in \msf{Comp}(b,a):\; H(\lambda'(b,a,o)) \geq H(\lambda(b,a,o))
		\end{equation}
		\begin{equation}
			\label{eqn:new_expected_value}
			H(b) \geq \sum_{o \in \msf{Comp}(b,a)} p(b,a,o) \cdot H(\lambda'(b,a,o))
		\end{equation}
		
		Let us start by showing Equation~(\ref{eqn:p'_geq_p}). For all beliefs $b \in \Dist(Q)$, actions $a \in A$, compatible observations $o \in \msf{Comp}(b,a)$, and $q \in O^{-1}(o)$, we have:
		\begin{align*}
			\lambda(b,a,o)(q) & = \frac{1}{p(b,a,o)} \sum_{q' \in Q} b(q') \cdot \delta(q',a)(q) = \sum_{q' \in \msf{Prec}(q,a,o)} \frac{b(q') \cdot p(q',a,o)}{p(b,a,o)} = \sum_{q' \in \msf{Prec}(q,a,o)} \lambda'(b,a,o)(q')
		\end{align*}
		Therefore, we have:
		\begin{align*}
			H(\lambda(b,a,o)) & = -\sum_{q \in O^{-1}(o)} \lambda(b,a,o)(q) \cdot \log(\lambda(b,a,o)(q)) \\
			& = -\sum_{q \in O^{-1}(o)} \left(\sum_{q' \in \msf{Prec}(q,a,o)} \lambda'(b,a,o)(q')\right) \cdot \log\left(\sum_{q' \in \msf{Prec}(q,a,o)} \lambda'(b,a,o)(q')\right) \\
			& = -\sum_{q \in O^{-1}(o)} \sum_{q' \in \msf{Prec}(q,a,o)} \left(\lambda'(b,a,o)(q') \cdot \log\left(\sum_{q'' \in \msf{Prec}(q,a,o)} \lambda'(b,a,o)(q'')\right)\right) \\
			& \leq -\sum_{q \in O^{-1}(o)} \sum_{q' \in \msf{Prec}(q,a,o)} \lambda'(b,a,o)(q') \cdot \log\left(\lambda'(b,a,o)(q')\right) \\
			& = - \sum_{\substack{q' \in Q \\ o \in \msf{Comp}(q',a)}} \lambda'(b,a,o)(q') \cdot \log\left(\lambda'(b,a,o)(q')\right) \\
			& = - \sum_{q' \in Q} \lambda'(b,a,o)(q') \cdot \log\left(\lambda'(b,a,o)(q')\right) = H(\lambda'(b,a,o))
		\end{align*}
		The last but least equality comes from the fact that $\{ q' \in \msf{Prec}(q,a,o) \mid q \in O^{-1}(o) \} = \{ q' \in Q \mid \msf{sc}(q',a,o) = q,\; q \in O^{-1}(o) \} = \{ q' \in Q \mid \msf{sc}(q',a,o) \in O^{-1}(o) \} = \{ q' \in Q \mid o \in \msf{Comp}(q',a) \}$. The last equality comes from the fact that, for all $q' \in Q$ such that $o \notin \msf{Comp}(q',a)$, we have $p(q',a,o) = 0$ and thus $\lambda'(b,a,o)(q') = 0$. This proves Equation~(\ref{eqn:p'_geq_p}). 
		
		Let us now turn to Equation~(\ref{eqn:new_expected_value}). %For readability, we will omit $b$ and $a$ when writing $\msf{Comp}(b,a)$, $p(b,a,o)$, $\lambda'(b,a,o)$, $\msf{sc}(q,a,o)$, $\delta(q,a)$, and $O(q,a)$, and thus write $\msf{Comp}$, $p(o)$, $\lambda(o)$, $\msf{sc}(q,o)$, $\delta(q)$, and $O(q)$ instead. 
		Letting $x := \sum_{o \in \msf{Comp}(b,a)} p(b,a,o) \cdot H(\lambda'(b,a,o))$, and by Jensen inequality, since the $\log$ is concave, we have:
		\begin{align*}
			x & = \sum_{o \in \msf{Comp}(b,a)} p(b,a,o) \cdot \left(-\sum_{q \in Q} \lambda'(b,a,o)(q) \cdot \log(\lambda'(b,a,o)(q))\right) \\
			& = - \sum_{o \in \msf{Comp}(b,a)} \sum_{q \in Q} \left[b(q) \cdot p(q,a,o) \cdot \log\left(\frac{b(q) \cdot p(q,a,o)}{p(b,a,o)} \right)\right] \\
			& = - \sum_{o \in \msf{Comp}(b,a)} \sum_{q \in Q} b(q) \cdot p(q,a,o) \cdot \log\left(b(q)\right) - \sum_{o \in \msf{Comp}(b,a)} \sum_{q \in Q} \left[b(q) \cdot p(q,a,o) \cdot \log\left(\frac{p(q,a,o)}{p(b,a,o)} \right)\right] \\
			& = - \sum_{q \in Q} b(q) \cdot \log\left(b(q)\right) + \sum_{o \in \msf{Comp}(b,a)} p(b,a,o) \cdot \left[\sum_{q \in Q} \frac{b(q) \cdot p(q,a,o)}{p(b,a,o)} \cdot \log\left(\frac{p(b,a,o)}{p(q,a,o)} \right)\right] \\ \\
			& \leq H(b) + \sum_{o \in \msf{Comp}(b,a)} p(b,a,o) \cdot \log\left(\sum_{q \in Q} \frac{b(q) \cdot p(q,a,o)}{p(b,a,o)} \cdot \frac{p(b,a,o)}{p(q,a,o)}\right) \\
			& = H(b) + \sum_{o \in \msf{Comp}(b,a)} p(b,a,o) \cdot \log\left(\sum_{q \in Q} b(q)\right) = H(b)
		\end{align*}
		Thus, Equation~(\ref{eqn:new_expected_value}) holds. From Equations~(\ref{eqn:p'_geq_p}) and~(\ref{eqn:new_expected_value}), it straightforwardly follows that $H(b) \geq \sum_{o \in \msf{Comp}(b,a)} p(b,a,o) \cdot H(\lambda(b,a,o))$. Hence, the POMDP $\POMDP$ has non-increasing entropy.
	\end{proof}
	
	\subsection{Proof of Theorem~\ref{thm:Dirac-preserving-pomdp-are-memdps}}
	\label{proof:thm-Dirac-preserving-pomdp-are-memdps}
	We introduce below the definitions that we will use to define an MEMDP from a Dirac-preserving POMDP. Note that we will use the notation introduced in Definition~\ref{def:notation_pomdp}.
	\begin{definition}
		\label{def:equivalence relation on sequences}
		Consider a Dirac-preserving POMDP $\POMDP = (Q,A,\delta,\Omega,O)$. For all $S \subseteq Q$, we let $T_S := \{ (t_s)_{s \in S} \in (Q \cup \{\bot\})^S \mid \exists o \in O, \forall s \in S:\; t_s = \bot \text{ or }O(t_s) = o\}$. We let $\bot_S := (\bot_s)_{s \in S} \in T_S$. For all $t \in T_S$ and $s \in S$, we let $t[s] \in Q \cup \{\bot\}$ denote $s$-indexed element of $t$. We also let $O(t) \in \Omega$ denote the observation common to all elements in $t$ (if $t = \bot_S$, $O(t) \in \Omega$ is defined arbitrarily).
		
		For all $t \in T_s$, $a \in A$, and $o \in O$, we let $\msf{sc}(t,a,o) \in T_S$ be such that, for all $s \in S$:
		\begin{align*}
			\msf{sc}(t,a,o)[s] := \begin{cases}
				\msf{sc}(t[s],a,o) & \text{ if }o \in \msf{Comp}(t[s],a) \\
				\bot & \text{ otherwise }
			\end{cases}
		\end{align*}
		%Note that $O(\msf{sc}(t,a,o)) = o$ since, for all $q \in Q$ and $o \in \msf{Comp}(q,a)$, $O(\msf{sc}(q,a,o)) = o$.
		%Finally, for all $q \in Q$ and $a \in A$, we let $d[q,a] \in \Dist(O)$ be such that, for all $o \in O$: $d[q,a](o) := \delta(q,a)(\msf{sc}(q,a,o)) = p[q,a,o]$.
		
		Finally, for all $\rho = \rho_0 \cdots \rho_n \in (T_S)^+$, and $s \in S$, we let $\rho[s] := \rho_0[s] \cdots \rho_n[s] \in (Q \cup \{\bot\})^+$. This is done similarly for $\rho \in (T_S)^\omega,T_S \cdot (A \cdot T_S)^*,(T_S \cdot A)^\omega$ with $\rho[s] \in (Q \cup \{\bot\})^\omega,(Q \cup \{\bot\}) \cdot (A \cdot (Q \cup \{\bot\}))^*,((Q \cup \{\bot\}) \cdot A)^\omega$ respectively. 
		
		Let $s \in S$. For all $W \in \msf{Borel}(Q)$, we let $W_s \in \msf{Borel}(T_S)$ be defined by $\{ \rho \in (T_S)^\omega \mid \rho[s] \in W\}$. We also let $\msf{Incomp}_s := \{ \rho \in (T_S)^\omega \mid \exists i \in \N,\; \rho_i[s] = \bot \}$.
	\end{definition}
	
	Let us now define an MEMDP derived from a Dirac-preserving POMDP.
	\begin{definition}
		\label{def:MEMDP_from_POMDP}
		Consider a Dirac-preserving POMDP $\POMDP = (Q,A,\delta,\Omega,O)$ and an \emph{observation-compatible} $S \subseteq Q$, i.e. such that $O(S) = \{o_S\}$ for some $o_S \in \Omega$. We let $\MEMDP(\POMDP)_S = (Q',A,E,(\delta_e)_{e \in E})$ be an MEMDP defined by:
		\begin{itemize}
			\item $Q' := T_S$;
			\item $E := S$;
			\item for all $t \in T_S$, $a \in A$ and $s \in E = S$, we let $\delta_s(t,a) \in \Dist(Q') = \Dist(T_S)$ be such that if $t[s] = \bot$, then $\delta_s(t,a)(\bot_S) := 1$; otherwise, for all $o \in O$:
			\begin{equation*}
				\delta_s(t,a)(\msf{sc}(t,a,o)) := \delta(t[s],a)(\msf{sc}(t[s],a,o)) = p[t[s],a,o]
			\end{equation*} 
		\end{itemize}
		We let $t_S^{\msf{init}} := (s)_{s \in S} \in T_S$ (since $S$ is observation-compatible).
		
		Consider now an observation-compatible parity objective $W \in \msf{Borel}(Q)$ defined by a function $f\colon Q \to \N$, and $S \subseteq Q$. We let $f_S: T_S \to \N$ be such that, for all $t \in T_S \setminus \{ \bot_S \}$, $f_S(t) := f(q)$, for any $q \in Q$ such that $O(q) = O(t)$ (recall that $f$ is observation-compatible); $f_S(\bot_S) := 1$. We let $W_S \in \msf{Borel}(T_S)$ denote the parity objective induced by $f_S$.
	\end{definition}
	
	This definition satisfies the following properties.
	\begin{lemma}
		\label{lem:memdp-from-pomdp}
		Consider a Dirac-preserving POMDP $\POMDP = (Q,A,\delta,\Omega,O)$ and an observation-compatible subset $S \subseteq Q$.
		\begin{enumerate}
			\item For all observation-compatible parity objectives $W \in \msf{Borel}(Q)$ and $s \in E = S$:%, for all $\sigma \in \msf{Strat}(T_S,A)$:
			\begin{equation*}
				%\Prb_{t_S^{\msf{init}}}^{\sigma}[\MEMDP(\POMDP)_S[s],W_s] = \Prb_{t_S^{\msf{init}}}^{\sigma}[\MEMDP(\POMDP)_S[s],W_S \setminus \msf{Incomp}_s]
				W_s = W_S \setminus \msf{Incomp}_s
			\end{equation*}
			\item For all $\sigma \in \msf{Strat}(T_S,A)$, for all $s \in S = E$, we have:
			\begin{equation*}
				\Prb_{t_S^{\msf{init}}}^{\sigma}[\MEMDP(\POMDP)_S[s],\msf{Incomp}_s] = 0
			\end{equation*}
			\item For all $\kappa \in \msf{Strat}(\Omega,A)$, there is $\sigma(\kappa) \in \msf{Strat}(T_S,A)$ and for all $\sigma \in \msf{Strat}(T_S,A)$, there is $\kappa(\sigma) \in \msf{Strat}(\Omega,A)$ such that, for all Borel sets $W \in \msf{Borel}(Q)$, and $s \in S = E$:
			\begin{equation*}
				\Prb_{s}^\kappa[\POMDP,W] =  \Prb_{t_S^{\msf{init}}}^{\sigma(\kappa)}[\MEMDP(\POMDP)_S[s],W_s] \text{ and } \Prb_{q}^{\kappa(\sigma)}[\POMDP,W] =  \Prb_{t_S^{\msf{init}}}^{\sigma}[\MEMDP(\POMDP)_S[s],W_s]
			\end{equation*}
		\end{enumerate}
	\end{lemma}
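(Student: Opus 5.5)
The three items are proved in increasing order of difficulty, all by unfolding Definition~\ref{def:MEMDP_from_POMDP}.

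\textbf{Item 1.} Take $\rho \in (T_S)^\omega$. If $\rho \notin \msf{Incomp}_s$, every $\rho_i[s]$ is a state of $Q$ and $O(\rho_i[s]) = O(\rho_i)$. Since $f$ is observation-compatible, $f(\rho_i[s]) = f_S(\rho_i)$ for all $i$. Hence $\rho[s] \in W$ iff $\rho \in W_S$, and $W_s \setminus \msf{Incomp}_s = W_S \setminus \msf{Incomp}_s$.

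For the equality exactly as stated, we handle runs in $\msf{Incomp}_s$ on both sides:
\begin{itemize}
\item By the dynamics of $\delta_s$, once $t[s]=\bot$ the run jumps to $\bot_S$ and loops there. Since $f_S(\bot_S)=1$, such runs lie outside $W_S$.
\item On the $W_s$ side, $\rho[s]$ contains $\bot$ and so is not in $Q^\omega$; hence $\rho \notin W_s$, read as a Borel subset of $Q^\omega$.
\end{itemize}
If that convention is deemed unclear, items 2 and 3 only need the equality up to null sets.

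\textbf{Item 2.} Under $\delta_s$ from a tuple $t$ with $t[s]\neq\bot$, every successor has the form $\msf{sc}(t,a,o)$ with $p[t[s],a,o]>0$, that is, $o \in \msf{Comp}(t[s],a)$. For such $o$, $\msf{sc}(t,a,o)[s]=\msf{sc}(t[s],a,o)\neq\bot$. Since $t^{\msf{init}}_S[s]=s\neq\bot$, induction on the length of finite runs shows that every positive-probability run keeps its $s$-component in $Q$. Therefore $\msf{Incomp}_s$, a countable union of cylinders of probability $0$, has probability $0$.

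\textbf{Item 3.} This is the main work. The key observation is that in $\MEMDP(\POMDP)_S$ the state sequence is a deterministic function of the initial tuple and the action-observation history. Conversely, each state $t$ determines its observation $O(t)$. We therefore set up a bijection between histories $\rho \in T_S\cdot(A\cdot T_S)^*$ reachable from $t^{\msf{init}}_S$ and observation histories $o_S\cdot(a_1,o_1)\cdots(a_n,o_n)$, obtained by mapping $\rho_i \mapsto O(\rho_i)$ and inverting via iterated $\msf{sc}(\cdot,a_i,o_i)$.

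The two strategies are then defined as follows:
\begin{itemize}
\item $\sigma(\kappa)(\rho) := \kappa(O(\rho))$;
\item $\kappa(\sigma)(\pi) := \sigma(\text{the tuple history reconstructed from } \pi)$, and arbitrary on observation histories not starting with $o_S$.
\end{itemize}

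Next we show by induction on $n$ that for each $s$ and each history $\rho$ with $\rho[s]\in Q\cdot(A\cdot Q)^*$, the probability of $\rho$ in $\MEMDP(\POMDP)_S[s]$ equals the probability in $\POMDP$ from $s$ of the state-action history $\rho[s]$. Each step multiplies both sides by the same action probability and by $\delta(\rho_n[s],a)(\msf{sc}(\rho_n[s],a,o)) = p[\rho_n[s],a,o]$. This uses that, by Dirac preservation, the only successor of $\rho_n[s]$ with observation $o$ is $\msf{sc}(\rho_n[s],a,o)$, so state histories in $\POMDP$ from $s$ are in bijection with observation histories.

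The cylinder measures thus agree under the projection $\rho\mapsto\rho[s]$. By item 2 this projection is defined almost surely. Carath\'eodory's extension, as in the proof of Lemma~\ref{lem:approx_value}, then gives $\Prb^\kappa_s[\POMDP,W] = \Prb^{\sigma(\kappa)}_{t^{\msf{init}}_S}[\MEMDP(\POMDP)_S[s],W_s]$ for every Borel $W$, and symmetrically for $\kappa(\sigma)$ (reading the $q$ in the statement as $s$).

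The main obstacle is making the history bijection precise. It requires checking that the reconstructed tuple history is consistent simultaneously for all components $s$, so that a single $\kappa(\sigma)$ works for every environment; this is where Dirac preservation is essential.
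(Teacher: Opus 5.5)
Your route is the paper's. Your reconstruction of tuple histories by iterated $\msf{sc}$ is the paper's map $\msf{sc}^*_S$, items 1 and 2 are argued the same way, and item 3 ends with the same Carath\'eodory argument. Your $\sigma(\kappa)(\rho):=\kappa(O(\rho))$ is in fact a little cleaner than the paper's definition, which inverts $\msf{sc}^*_S$ on compatible observation histories and therefore needs an injectivity check.

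However, the inductive claim at the core of item 3 is false as stated. Take $s'\neq s$ in $S$, an action $a$, and $o\in\msf{Comp}(s,a)\cap\msf{Comp}(s',a)$. Let $t'\in T_S$ have $t'[s]=\msf{sc}(s,a,o)$ and all other components equal to $\bot$, and set $\rho=t^{\msf{init}}_S\cdot(a,t')$. Then $\rho[s]\in Q\cdot(A\cdot Q)^*$, and $\Prb^\kappa_s[\POMDP,\rho[s]]=\kappa(o_S)(a)\cdot\delta(s,a)(\msf{sc}(s,a,o))$ can be positive. Yet $\rho$ has probability $0$ in $\MEMDP(\POMDP)_S[s]$. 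The only successor of $t^{\msf{init}}_S$ under $a$ whose $s$-component has observation $o$ is $\msf{sc}(t^{\msf{init}}_S,a,o)$, and its $s'$-component is $\msf{sc}(s',a,o)\neq\bot$. So ``the cylinder measures thus agree'' does not follow from your identity; summed over all $\rho$ with $\rho[s]=\pi$, it would overcount.

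What is missing is the lifting step that the paper proves explicitly:
\begin{enumerate}
\item Restrict the induction to histories generated from $t^{\msf{init}}_S$ by iterated $\msf{sc}$. Every other history whose $s$-component avoids $\bot$ has probability $0$ in $\MEMDP(\POMDP)_S[s]$.
\item For fixed $\pi$ starting at $s$, at most one generated $\rho$ has $\rho[s]=\pi$, namely the lift of $O(\pi)$. This holds because $O(\rho)=O(\rho[s])$ whenever the $s$-component avoids $\bot$.
\item Use Dirac preservation to show that $\Prb^\kappa_s[\POMDP,\pi]>0$ implies that the $s$-component of this lift is exactly $\pi$.
\end{enumerate}
Only then does the total $\MEMDP(\POMDP)_S[s]$-probability of the histories $\rho$ with $\rho[s]=\pi$ equal $\Prb^\kappa_s[\POMDP,\pi]$.

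You already state the needed determinism, so this repairs a misformulated claim rather than requiring a new idea. But this lifting step is the real content of item 3, not the ``simultaneous consistency over all components'' you single out. That consistency is automatic: $\msf{sc}(t,a,o)$ is defined componentwise, and neither of your strategies depends on $s$.

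Finally, in item 1 your first bullet is unnecessary and, as a statement about sets, wrong: membership in $W_S$ has nothing to do with the dynamics of $\delta_s$. Your second bullet, $W_s\cap\msf{Incomp}_s=\emptyset$, is all that is needed, exactly as in the paper.
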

	\begin{proof}
		\begin{enumerate}
			\item For all $\rho \in W_s$, we have $\rho[s] \in W$. This implies that for all $i \in \N$, we have $\rho_i[s] \neq \bot$, thus $\rho \notin \msf{Incomp}_s$. In fact, $W_s \cap \msf{Incomp}_s = \emptyset$. 
			                                                                        
			Consider any $\rho \in (T_S)^\omega \setminus \msf{Incomp}_s$. For all $i \in \N$, we have $\rho_i[s] \neq \bot$; thus $O(\rho_i) = O(\rho_i[s])$ and $f_S(\rho_i) = f(\rho_i[s])$. Thus, $\rho \in W_S$ if and only if $\rho[s] \in W$ if and only if $\rho \in W_s$. Hence: $W_s = W_S \setminus \msf{Incomp}_s$. 
			\iffalse
			Let $\rho \in W_s$. By definition, this means that $\rho[s] \in W$. in particular, this implies that for all $i \in \N$, we have $\rho_i[s] \neq \bot$, thus $\rho \notin \msf{Incomp}_s$. Furthermore, for all $i \in \N$, we have $O(\rho_i) = O(\rho_i[s])$, thus $f(\rho_i) = f_S(\rho_i[s])$. Hence, $\rho \in W_S$. Hence: $W_s \subseteq W_S \setminus \msf{Incomp}_s$
			
			Conversely, let $\rho \in W_S \setminus \msf{Incomp}_s$. For all $i \in \N$, we have $\rho_i[s] \neq \bot$; thus $O(\rho_i) = O(\rho_i[s])$ and $f(\rho_i) = f_S(\rho_i[s])$. 
			
			Furthermore, let $\msf{FullIncomp}_s := \{ \rho \in (T_S)^\omega \mid \exists j \in \N,\; \forall i \geq j,\; \rho_i[s] = \bot \}$. By construction, for all $a \in A$ and $t \in T_S$ such that $t[s] = \bot$, we have $\delta_s(t,a)(\bot_S) = 1$. Therefore, for all $\sigma \in \msf{Strat}(T_S,A)$, we have:
			\begin{equation*}
				\Prb_{t_S^{\msf{init}}}^{\sigma}[\MEMDP(\POMDP)_S[s],\msf{Incomp}_s \cap \msf{FullIncomp}_s] = \Prb_{t_S^{\msf{init}}}^{\sigma}[\MEMDP(\POMDP)_S[s],\msf{Incomp}_s]
			\end{equation*}
			In addition, $W_S \cap \msf{FullIncomp}_s = \emptyset$ since $f(\bot_S)$ is odd. 
			\fi
			\item Let $\sigma \in \msf{Strat}(T_S,A)$ and $s \in S = E$. Let us show by induction on $\rho \in T_S \cdot (A \cdot T_S)^*$ that if $\Prb_{t_S^{\msf{init}}}^{\sigma}[\MEMDP(\POMDP)_S[s],\rho] > 0$, then $\head{\rho}[s] \neq \bot$. This clearly holds for $\rho \in T_S$. Assume now that this holds for some $\rho \in T_S \cdot (A \cdot T_S)^*$ and let $(a,t) \in A \times T_S$ such that $\Prb_{t_S^{\msf{init}}}^{\sigma}[\MEMDP(\POMDP)_S[s],\rho \cdot (a,t)] > 0$, with: 
			\begin{equation*}
				\Prb_{t_S^{\msf{init}}}^{\sigma}[\MEMDP(\POMDP)_S[s],\rho \cdot (a,t)] = \Prb_{t_S^{\msf{init}}}^{\sigma}[\MEMDP(\POMDP)_S[s],\rho] \cdot \sigma(\rho)(a) \cdot \delta_s(\head{\rho},a)(t)
			\end{equation*}
			We have $\Prb_{t_S^{\msf{init}}}^{\sigma}[\MEMDP(\POMDP)_S[s],\rho] > 0$, thus $\head{\rho}[s] \neq \bot$. Furthermore, $\delta_s(\head{\rho},a)(t) > 0$ means that there is $o \in O$ such that $t = \msf{sc}(\head{\rho},a,o)$. In that case, we have $\delta_s(\head{\rho},a)(t) = p[\head{\rho}[s],a,o]$, which is equal to 0 if $o \notin \msf{Comp}(\head{\rho}[s],a)$. Thus, $o \in \msf{Comp}(\head{\rho}[s],a)$, and $t[s] = \msf{sc}(\head{\rho},a,o)[s] = \msf{sc}(\head{\rho}[s],a,o) \in Q$. Thus, the inductive property holds for $\rho \cdot (a,t)$ as well. In fact, it holds for all $\rho \in T_S \cdot (A \cdot T_S)^*$. Thus, we have:
			\begin{equation*}
				\Prb_{t_S^{\msf{init}}}^{\sigma}[\MEMDP(\POMDP)_S[s],\msf{Incomp}_s] = \sum_{\substack{\rho \in T_S \cdot (A \cdot T_S)^* \\ \head{\rho} = \bot}} \Prb_{t_S^{\msf{init}}}^{\sigma}[\MEMDP(\POMDP)_S[s],\rho] = 0
			\end{equation*}
			\item Let $o_S \in O$ be such that $O(s) = o_S$ for all $s \in S$. 
			
			For all $s \in S = E$, we define by induction a function $\msf{sc}^*_S\colon \Omega \cdot (A \times \Omega)^* \to T_S \cdot (A \cdot T_S)^*$ as follows: for all $\rho \in \Omega \cdot (A \times \Omega)^* \setminus o_S \cdot (A \times \Omega)^*$, $\msf{sc}^*_S(\rho) := \bot_S$; furthermore, $\msf{sc}^*_S(o_S) := t_S^{\msf{init}}$ and, for all $\rho \in o_S \cdot (A \times \Omega)^*$ and $(a,o) \in A \times \Omega$:
			\begin{equation*}
				\msf{sc}^*_S(\rho \cdot (a,o)) := \msf{sc}^*_S(\rho) \cdot (a,\msf{sc}(\head{\msf{sc}^*_S(\rho)},a,o)) \in T_S \cdot (A \cdot T_S)^*
			\end{equation*}
			For all $s \in S$, we let $\msf{Compatible}_s(A,\Omega) := \{ \rho \in o_s \cdot (A \times \Omega)^* \mid \head{\msf{sc}^*_S(\rho)[s]} \neq \bot \}$. We let $\msf{Compatible}_S(A,\Omega) := \cup_{s \in S} \msf{Compatible}_s(A,\Omega)$. For all $\pi \in Q \cdot (A \cdot Q)^*$, we let $O(\pi) \in \Omega \cdot (A \cdot \Omega)^*$ be the sequence of actions and observations associated to $\pi$. In particular, for all $s \in S$ and $\rho \in \msf{Compatible}_s(A,\Omega)$, we have $O(\msf{sc}^*_S(\rho)[s]) = \rho$.
			
			We show several intermediary results.
			\begin{enumerate}
				\item Let $s \in S$ and $\kappa \in \msf{Strat}(\Omega,A)$. As a first step, we show by induction on $\pi \in s \cdot (A \cdot Q)^*$ that if $\Prb_{s}^\kappa[\POMDP,\pi] > 0$, then $\msf{sc}^*_S(O(\pi))[s] = \pi$. This holds for $\pi = s$. Assume now that it holds for some $\pi \in s \cdot (A \cdot Q)^*$, and let $(a,q) \in A \times Q$. Assume that $\Prb_{s}^\kappa[\POMDP,\pi \cdot (a,q)] > 0$. We have:
				\begin{equation*}
					\Prb_{s}^\kappa[\POMDP,\pi \cdot (a,q)]¨ = \Prb_{s}^\kappa[\POMDP,\pi] \cdot \kappa(O(\pi))(a) \cdot \delta(\head{\pi},a)(q) > 0
				\end{equation*}
				Hence, $\delta(\head{\pi},a)(q) > 0$ and thus, for $o := O(q) \in \Omega$, $o \in \msf{Comp}(\head{\pi},a)$ and $\msf{sc}(\head{\pi},a,o) = q$. Therefore:
				\begin{align*}
					\msf{sc}^*_S(O(\pi \cdot (a,q)))[s] & = \msf{sc}^*_S(O(\pi) \cdot (a,o))[s] = (\msf{sc}^*_S(O(\pi)) \cdot (a,\msf{sc}(\head{\msf{sc}^*_S(O(\pi))},a,o)))[s] \\
					& = \pi \cdot (a,\msf{sc}(\head{\msf{sc}^*_S(O(\pi))},a,o)[s]) = \pi \cdot (a,\msf{sc}(\head{\msf{sc}^*_S(O(\pi))[s]},a,o)) \\
					& = \pi \cdot (a,\msf{sc}(\head{\pi},a,o)) = \pi \cdot (a,q)
				\end{align*} 
				Hence, the equality holds for all $\pi \in s \cdot (A \cdot Q)^*$ such that $\Prb_{s}^\kappa[\POMDP,\pi] > 0$.
				\item Let $s \in S$ and $\kappa \in \msf{Strat}(\Omega,A)$. Let us show that for all $\pi \in s \cdot (A \cdot Q)^*$ such that
				$O(\pi) \notin \msf{Compatible}_s(A,\Omega)$, we have $\Prb_{s}^\kappa[\POMDP,\pi] = 0$. Since $O(s) = o_S \in \msf{Compatible}_s(A,\Omega)$, this holds for $\pi = s$. Consider now some $\pi \in s \cdot (A \cdot Q)^*$ and $(a,q) \in A \times Q$. Assume that $O(\pi \cdot (a,q)) \notin \msf{Compatible}_s(A,\Omega)$. We have:
				\begin{equation*}
					\Prb_{s}^\kappa[\POMDP,\pi \cdot (a,q)] = \Prb_{s}^\kappa[\POMDP,\pi] \cdot \kappa(O(\pi))(a) \cdot \delta(\head{\pi},a)(q) 
				\end{equation*}
				If $\Prb_{s}^\kappa[\POMDP,\pi] = 0$, then $\Prb_{s}^\kappa[\POMDP,\pi \cdot (a,q)] = 0$. Thus, assume that $\Prb_{s}^\kappa[\POMDP,\pi] > 0$. By item (a), this implies that $\head{\msf{sc}^*_S(O(\pi))[s]} = \head{\pi} \in Q$. In addition, since $O(\pi \cdot (a,q)) \notin \msf{Compatible}_s(A,\Omega)$, we have $\msf{sc}(\head{\msf{sc}^*_S(O(\pi))[s]},a,o) = \head{\msf{sc}^*_S(O(\pi \cdot (a,q)))[s]} = \bot$. That is, $o \notin \msf{Comp}(\head{\msf{sc}^*_S(O(\pi))[s]},a) = \msf{Comp}(\head{\pi},a)$. Since $O(q) = o$, this implies $\delta(\head{\pi},a)(q) = 0$, and thus $\Prb_{s}^\kappa[\POMDP,\pi \cdot (a,q)] = 0$.
				\item Consider now any strategy $\sigma \in \msf{Strat}(T_S,A)$. Let $s \in S$ and $\pi \in s \cdot (A \cdot Q)^*$. By definition, we have:
				\begin{equation*}
					\cyl(\pi)_s = \{ \rho \in (T_S \cdot A)^\omega \mid \rho[s] \in \cyl(\pi)  \} = \{ \rho \in (T_S \cdot A)^\omega \mid \rho[s] \in \pi \cdot (A \cdot Q)^\omega \}
				\end{equation*}
				We define $\cyl(\pi)_s^\bot \in \msf{Borel}(T_S \cdot A)$ as follows:
				\begin{equation*}
					\cyl(\pi)_s^\bot = \{ \rho \in (T_S \cdot A)^\omega \mid \rho[s] \in \pi \cdot (A \cdot (Q \cup \{\bot\}))^\omega \}
				\end{equation*}
				We have:
				\begin{equation*}
					\cyl(\pi)_s \subseteq \cyl(\pi)_s^\bot \subseteq \cyl(\pi)_s \cup \msf{Incomp}_s
				\end{equation*}
				Therefore, by item (b), we have:
				\begin{equation*}
					\Prb_{t_S^{\msf{init}}}^{\sigma}[\MEMDP(\POMDP)_S[s],\cyl(\pi)_s] = \Prb_{t_S^{\msf{init}}}^{\sigma}[\MEMDP(\POMDP)_S[s],\cyl(\pi)_s^\bot]
				\end{equation*}
				In addition, we have:
				\begin{equation*}
					\Prb_{t_S^{\msf{init}}}^{\sigma}[\MEMDP(\POMDP)_S[s],\cyl(\pi)_s^\bot] = \sum_{\substack{\rho \in T_S \cdot (A \cdot T_S)^* \\ \rho[s] = \pi}} \Prb_{t_S^{\msf{init}}}^{\sigma}[\MEMDP(\POMDP)_S[s],\rho]
				\end{equation*}
				Let us show by induction on $\pi \in s \cdot (A \cdot Q)^*$ that, for all $\rho \in T_S \cdot (A \cdot T_S)^*$ such that $\rho[s] = \pi$, if $\Prb_{t_S^{\msf{init}}}^{\sigma}[\MEMDP(\POMDP)_S[s],\rho] > 0$, then $\rho = \msf{sc}^*_S(O(\pi))$. This clearly holds if $\pi = s$. Now, assume that it holds for some $\pi \in s \cdot (A \cdot Q)^*$ and let $(a,q) \in A \times Q$. Consider any $\rho \cdot (a,t) \in T_S \cdot (A \cdot T_S)^+$ such that $\rho \cdot (a,t)[s] = \pi \cdot (a,q)$ and assume that $\Prb_{t_S^{\msf{init}}}^{\sigma}[\MEMDP(\POMDP)_S[s],\rho \cdot (a,t)] > 0$. We have:
				\begin{equation*}
					\Prb_{t_S^{\msf{init}}}^{\sigma}[\MEMDP(\POMDP)_S[s],\rho \cdot (a,t)] = \Prb_{t_S^{\msf{init}}}^{\sigma}[\MEMDP(\POMDP)_S[s],\rho] \cdot \sigma(\rho)(a) \cdot \delta_s(\head{\rho},a)(t) > 0
				\end{equation*}
				Since $\Prb_{t_S^{\msf{init}}}^{\sigma}[\MEMDP(\POMDP)_S[s],\rho] > 0$, by our induction hypothesis, we have $\rho = \msf{sc}^*_S(O(\pi))$. Furthermore, since $\delta_s(\head{\rho},a)(t) > 0$, there is $o \in \Omega$ such that $t = \msf{sc}(\head{\rho},a,o)$, with $q = t[s] = \msf{sc}(\head{\rho}[s],a,o)$, and thus $O(q) = O(t[s]) = o$. Therefore, we have:
				\begin{align*}
					\msf{sc}^*_S(O(\pi \cdot (a,q))) & = \msf{sc}^*_S(O(\pi) \cdot (a,o)) = \msf{sc}^*_S(O(\pi)) \cdot (a,\msf{sc}(\head{\msf{sc}^*_S(O(\pi))},a,o)) \\
					& = \rho \cdot (a,\msf{sc}(\head{\rho},a,o)) = \rho \cdot (a,t)
				\end{align*}
				In fact, our induction hypothesis holds for all $\pi \in s \cdot (A \cdot Q)^*$. We deduce that, for all $\pi \in s \cdot (A \cdot Q)^*$:
				\begin{align*}
					\Prb_{t_S^{\msf{init}}}^{\sigma}[\MEMDP(\POMDP)_S[s],\cyl(\pi)_s] = %\Prb_{t_S^{\msf{init}}}^{\sigma}[\MEMDP(\POMDP)_S[s],\cyl(\pi)_s^\bot] = 
					\begin{cases}
						0 & \text{if }\msf{sc}^*_S(O(\pi))[s] \neq \pi \\
						\Prb_{t_S^{\msf{init}}}^{\sigma}[\MEMDP(\POMDP)_S[s],\msf{sc}^*_S(O(\pi))] & \text{otherwise}
					\end{cases}
				\end{align*}
				\item Now, consider two strategies $\kappa \in \msf{Strat}(\Omega,A)$ and $\sigma \in \msf{Strat}(T_S,A)$, and assume that they are \emph{similar}, i.e. for all $\rho \in \msf{Compatible}_S(A,\Omega)$, we have $\kappa(\rho) = \sigma(\msf{sc}^*_S(\rho)) \in \Dist(A)$. We show by induction on $\pi \in s \cdot (A \cdot Q)^*$ that:
				\begin{equation*}
					\Prb_{s}^\kappa[\POMDP,\pi] =  \Prb_{t_S^{\msf{init}}}^{\sigma}[\MEMDP(\POMDP)_S[s],%\msf{sc}^*_S(O(\pi))
					\cyl(\pi)_s]
				\end{equation*}
				This clearly holds for $\pi = s$. Assume now that it holds for some $\pi \in s \cdot (A \cdot Q)^*$ and let $(a,q) \in A \times Q$. Let $o := O(q) \in \Omega$. We have:
				\begin{equation*}
					\msf{sc}^*_S(O(\pi \cdot (a,q)) = \msf{sc}^*_S(O(\pi) \cdot (a,o)) = \msf{sc}^*_S(O(\pi)) \cdot (a,\msf{sc}(\head{\msf{sc}^*_S(O(\pi))},a,o))
				\end{equation*}
				If $\Prb_{s}^\kappa[\POMDP,\pi] = 0$, then $\Prb_{s}^\kappa[\POMDP,\pi \cdot (a,q)] = 0$, $\Prb_{t_S^{\msf{init}}}^{\sigma}[\MEMDP(\POMDP)_S[s],\msf{sc}^*_S(O(\pi))] = 0$ and thus $\Prb_{t_S^{\msf{init}}}^{\sigma}[\MEMDP(\POMDP)_S[s],\msf{sc}^*_S(O(\pi \cdot (a,q)))] = 0$. Assume now that $\Prb_{s}^\kappa[\POMDP,\pi] > 0$. By item (b), this implies $O(\pi) \in \msf{Compatible}_s(A,\Omega)$; by item (a), letting $\rho := \msf{sc}^*_S(O(\pi))$, this implies $\rho[s] = \pi$. Now, if $\delta(\head{\pi},a)(q) = \delta(\head{\rho}[s],a)(q) = 0$, then $\Prb_{s}^\kappa[\POMDP,\pi \cdot (a,q)] = 0$ and either $o = O(q) \notin \msf{Comp}(\head{\rho}[s],a)$ or $o \in \msf{Comp}(\head{\rho}[s],a)$ and $\msf{sc}(\head{\rho}[s],a,o) \neq q$. In both cases, we have $\head{\msf{sc}^*_S(O(\pi \cdot (a,q))} \neq q$, and thus $	\msf{sc}^*_S(O(\pi \cdot (a,q))[s] \neq \pi \cdot (a,q)$, which implies $\Prb_{t_S^{\msf{init}}}^{\sigma}[\MEMDP(\POMDP)_S[s],\cyl(\pi \cdot (a,q))_s] = 0$. Assume now that $\delta(\head{\pi},a)(q) > 0$, and therefore $o \in \msf{Comp}(\head{\pi},a)$ and $q = \msf{sc}(\head{\pi},a,o) = \msf{sc}(\head{\rho}[s],a,o)$, which implies $\msf{sc}^*_S(O(\pi \cdot (a,q))[s] = \pi \cdot (a,q)$. Then, we have:
				\begin{align*}
					\Prb_{s}^\kappa[\POMDP,\pi \cdot (a,q)] & = \Prb_{s}^\kappa[\POMDP,\pi] \cdot \kappa(O(\pi))(a) \cdot \delta(\head{\pi},a)(q) \\ 
					& = \Prb_{t_S^{\msf{init}}}^{\sigma}[\MEMDP(\POMDP)_S[s],\rho] \cdot \sigma(\rho)(a) \cdot \delta(\head{\rho}[s],a)(q) \\ 
					& = \Prb_{t_S^{\msf{init}}}^{\sigma}[\MEMDP(\POMDP)_S[s],\rho] \cdot \sigma(\rho)(a) \cdot \delta_s(\head{\rho},a)(\msf{sc}(\head{\rho},a,o)) \\ 
					& = \Prb_{t_S^{\msf{init}}}^{\sigma}[\MEMDP(\POMDP)_S[s],\rho \cdot (a,\msf{sc}(\head{\rho},a,o))] \\
					& = \Prb_{t_S^{\msf{init}}}^{\sigma}[\MEMDP(\POMDP)_S[s],\msf{sc}^*_S(O(\pi \cdot (a,q)))] \\
					& = \Prb_{t_S^{\msf{init}}}^{\sigma}[\MEMDP(\POMDP)_S[s],\cyl(\pi \cdot (a,q))_s]
				\end{align*}
				Thus the property holds for $\pi \cdot (a,q) \in s \cdot (A \cdot Q)^+$. In fact, it holds for all $\pi \in s \cdot (A \cdot Q)^*$.
				\item Let $s \in S$. Consider two similar strategies $\kappa \in \msf{Strat}(\Omega,A)$ and $\sigma \in \msf{Strat}(T_S,A)$. We let $\Prb_{\kappa}\colon \msf{Borel}(Q) \to [0,1]$ such that, for all $W \in \msf{Borel}(Q)$, we have $\Prb_{\kappa}(W) := \Prb_s^\kappa[\POMDP,W] \in [0,1]$. We also let $\Prb_{\sigma}\colon \msf{Borel}(Q) \to [0,1]$ such that, for all $W \in \msf{Borel}(Q)$, we have $\Prb_{\sigma}(W) := \Prb_{t_S^{\msf{init}}}^{\sigma}[\MEMDP(\POMDP)_S[s],W_s]$. Clearly, $\Prb_{\kappa}$ is a probability measure. This is also the case of $\Prb_{\sigma}$ since $\Prb_{\sigma}[\emptyset] = 0$; $\Prb_{\sigma}[Q^\omega] = \Prb_{t_S^{\msf{init}}}^{\sigma}[\MEMDP(\POMDP)_S[s],(T_S \cdot A)^\omega \setminus W_s] = 1$ (by item (2)); and for all $(W^n)_{n \in \N} \in \msf{Borel}(Q)^{\N}$ pairwise disjoint, we have $(\cup_{n \in \N} W^n)_s = \cup_{n \in \N} W^n_s$, thus $\Prb_{\sigma}[\cup_{n \in \N} W_n] = \sum_{n \in \N} \Prb_{\sigma}[W_n]$. For all $\rho = s \cdot q_1 \cdots q_n \in s \cdot Q^*$, we let $\msf{Proj}(\rho) := \{ s \cdot (a_1,q_1) \cdots (a_{n},q_n) \in Q \cdot (A \cdot Q)^* \mid a_1,\ldots,a_{n} \in A \}$. Then, for all $\rho \in s \cdot Q^*$, we have, by (d):
				\begin{align*}
					\Prb_{\kappa}[\cyl(\rho)] & = \Prb_s^\kappa[\POMDP,\cyl(\rho)] = \sum_{\pi \in \msf{Proj}(\rho)} \Prb_s^\kappa[\POMDP,\cyl(\pi)] = \sum_{\pi \in \msf{Proj}(\rho)} \Prb_s^\kappa[\POMDP,\pi] \\
					& %= \sum_{\pi \in \msf{Proj}(\rho)} \Prb_{t_S^{\msf{init}}}^{\sigma}[\MEMDP(\POMDP)_S[s],\msf{sc}^*_S(O(\pi))] 
					= \sum_{\pi \in \msf{Proj}(\rho)} \Prb_{t_S^{\msf{init}}}^{\sigma}[\MEMDP(\POMDP)_S[s],\cyl(\pi)_s] = \Prb_{t_S^{\msf{init}}}^{\sigma}[\MEMDP(\POMDP)_S[s],\cyl(\rho)_s] = \Prb_{\sigma}[\cyl(\rho)]
				\end{align*}
				Furthermore, for all $\rho \in Q^+ \setminus s \cdot Q^*$, we have $\Prb_{\kappa}[\cyl(\rho)] = 0 = \Prb_{\sigma}[\cyl(\rho)]$. Therefore, the two probability measures $\Prb_{\kappa}$ and $\Prb_{\sigma}$ coincide on all cylinder sets, and thus, by Caratheodory's extension theorem, they are equal: $\Prb_{\kappa} = \Prb_{\sigma}$, that is, for all $W \in \msf{Borel}(Q)$:
				\begin{equation*}
					\Prb_s^\kappa[\POMDP,W] = \Prb_{t_S^{\msf{init}}}^{\sigma}[\MEMDP(\POMDP)_S[s],W_s]
				\end{equation*}
				\item For all $\sigma \in \msf{Strat}(T_S,A)$, we let $\kappa(\sigma) \in \msf{Strat}(\Omega,A)$ be such that, for all $\rho \in \Omega \cdot (A \cdot \Omega)^*$: $\kappa(\sigma)(\rho) := \sigma(\msf{sc}^*_S(\rho))$. Then, the strategies $\kappa(\sigma)$ and $\sigma$ are similar, thus, by the previous item, for all $W \in \msf{Borel}(Q)$, we have:
				\begin{equation*}
					\Prb_s^{\kappa(\sigma)}[\POMDP,W] = \Prb_{t_S^{\msf{init}}}^{\sigma}[\MEMDP(\POMDP)_S[s],W_s]
				\end{equation*}
				
				Conversely, consider any $\kappa \in \msf{Strat}(\Omega,A)$. Our goal is to define $\sigma(\kappa) \in \msf{Strat}(T_S,A)$ such that the strategies $\sigma(\kappa)$ and $\kappa$ are similar. Let us argue that the function $\msf{sc}^*_S\colon \Omega \cdot (A \times \Omega)^* \to T_S \cdot (A \cdot T_S)^*$ is injective on $\msf{Compatible}_S(A,\Omega)$. Indeed, consider any $\rho,\rho' \in \Omega \cdot (A \times \Omega)^* \cap \msf{Compatible}_S(A,\Omega)$ such that $\msf{sc}^*_S(\rho) = \msf{sc}^*_S(\rho')$. There is some $s \in S$ such that $\rho,\rho' \in \Omega \cdot (A \times \Omega)^* \cap \msf{Compatible}_s(A,\Omega)$. Hence, we have $\rho = O(\msf{sc}^*_S(\rho)[s]) = O(\msf{sc}^*_S(\rho')[s]) = \rho'$. Then, for all $\rho \in T_S \cdot (A \cdot T_S)^*$, there is at most one $\pi_\rho \in \msf{Compatible}_S(A,\Omega)$ such that $\rho = \msf{sc}^*_S(\pi_\rho)$, in which case we let $\sigma(\kappa)(\rho) := \kappa(\pi_\rho)$; otherwise $\sigma(\kappa)(\rho)$ is defined arbitrarily. That way, the strategies $\kappa$ and $\sigma(\kappa)$ are similar, thus, by the previous item, for all $W \in \msf{Borel}(Q)$, we have:
				\begin{equation*}
					\Prb_s^{\kappa}[\POMDP,W] = \Prb_{t_S^{\msf{init}}}^{\sigma(\kappa)}[\MEMDP(\POMDP)_S[s],W_s]
				\end{equation*}
				This proves the lemma.
			\end{enumerate}
		\end{enumerate}
	\end{proof}
	
	We can now proceed to the proof of Theorem~\ref{thm:Dirac-preserving-pomdp-are-memdps}.
	\begin{proof}
		From a Dirac-preserving POMDP $\POMDP = (Q,A,\delta,\Omega,O)$, a belief $b \in \Dist(Q)$, and an observation-compatible parity objective $W \in \msf{Borel}(Q)$, it is clear that we can build in exponential time the MEMDP $\MEMDP(\POMDP)_S = (Q',A,E,(\delta_e)_{e \in E})$ with $S := \Supp(b)$\footnote{Without loss of generality, we assume that that $\Supp(b)$ is observation-compatible, since we gather an observation on the initial state when playing for the first time in the POMDP.} and the parity objective $W_S$, with $|Q'|  = |T_S| \leq (n+1)^n$, for $n := |Q|$. We use $b$ itself as the initial environment belief in $\MEMDP(\POMDP)_S$.
		
		Furthermore, by Lemma~\ref{lem:memdp-from-pomdp}, for all $\sigma \in \msf{Strat}(T_S,A)$, for all $s \in S$, we have:
		\begin{align*}
			\Prb_{t_S^{\msf{init}}}^{\sigma}[\MEMDP(\POMDP)_S[s],W_s]
			& = \Prb_{t_S^{\msf{init}}}^{\sigma}[\MEMDP(\POMDP)_S[s],W_S \setminus \msf{Incomp}_s] = \Prb_{t_S^{\msf{init}}}^{\sigma}[\MEMDP(\POMDP)_S[s],W_S]
		\end{align*}
		
		In addition, again by Lemma~\ref{lem:memdp-from-pomdp}, for all $\kappa \in \msf{Strat}(\Omega,A)$, there is $\sigma(\kappa) \in \msf{Strat}(T_S,A)$ such that:
		\begin{align*}
			\sum_{s \in S} b(s) \cdot \Prb_{s}^\kappa[\POMDP,W] = \sum_{s \in S} b(s) \cdot  \Prb_{t_S^{\msf{init}}}^{\sigma(\kappa)}[\MEMDP(\POMDP)_S[s],W_S] \leq \msf{val}^{\msf{pr}}_{t_S^{\msf{init}}}(\MEMDP(\POMDP)_S,b,W_S)
		\end{align*}
		
		Furthermore, for all $\sigma \in \msf{Strat}(T_S,A)$, there is $\kappa(\sigma) \in \msf{Strat}(\Omega,A)$ such that:
		\begin{align*}
			\sum_{s \in S} b(s) \cdot \Prb_{t_S^{\msf{init}}}^{\sigma}[\MEMDP(\POMDP)_S[s],W_S] = 
			\sum_{s \in S} b(s) \cdot \Prb_{s}^{\kappa(\sigma)}[\POMDP,W] \leq \msf{val}(\POMDP,b,W)
		\end{align*}
		
		In fact, we have:
		\begin{equation*}
			\msf{val}(\POMDP,b,W) = \msf{val}^{\msf{pr}}_{t_S^{\msf{init}}}(\MEMDP(\POMDP)_S,b,W_S)
		\end{equation*}
		
		The theorem follows.
	\end{proof}
\end{document}